\newcommand*{\backrefalt}[4]{%
    \ifcase #1 \footnotesize{(Not cited.)}%
    \or        \footnotesize{(Cited on page~#2.)}%
    \else      \footnotesize{(Cited on pages~#2.)}%
    \fi}
\newtheorem{assumption}{Assumption}
\newtheorem{lemma}{Lemma}
\newtheorem{theorem}{Theorem}
\newtheorem{proposition}{Proposition}
\theoremstyle{definition}
\newtheorem{definition}{Definition}
\newtheorem{remark}{Remark}
\newtheorem*{condition}{Condition}
\providecommand{\customgenericname}{}
\newcommand{\newcustomtheorem}[2]{%
  \newenvironment{#1}[1]
  {%
   \renewcommand\customgenericname{#2}%
   \renewcommand\theinnercustomgeneric{##1}%
   \innercustomgeneric
  }
  {\endinnercustomgeneric}
}
\DeclareMathOperator*{\argmax}{arg\,max}
\DeclareMathOperator*{\argmin}{arg\,min}
\long\def\comment#1{}
\newcommand{\Ecal}{\ensuremath{\mathcal{E}}}
\newcommand{\Tcal}{\ensuremath{\mathcal{T}}}
\newcommand{\Ocal}{\ensuremath{\mathcal{O}}}
\newcommand{\Ebb}{\ensuremath{\mathbb{E}}}
\newcommand{\Gcal}{\ensuremath{\mathcal{G}}}
\newcommand{\Xcal}{\ensuremath{\mathcal{X}}}
\newcommand{\Pcal}{\ensuremath{\mathcal{P}}}
\newcommand{\Rbb}{\ensuremath{\mathbb{R}}}
\newcommand{\Nbb}{\ensuremath{\mathbb{N}}}
\newcommand{\Pbb}{\ensuremath{\mathbb{P}}}
\newcommand{\divclus}{\mathsf{d}}
\newcommand{\divergence}{\mathcal{D}}
\newcommand{\norm}[1]{\left\|#1\right\|}
\begin{document}

\begin{center}

% {\bf{\LARGE{Learning hierarchical clustering trees \\ \vspace{.1cm}
% via latent mixing measures}}}

{\bf{\LARGE{Dendrogram of mixing measures: Hierarchical clustering
\\ \vspace{.05in} 
and model selection for finite
mixture models}}}

\vspace*{.2in}
{\large{
\begin{tabular}{cc}
Dat Do$^{\diamond}$, Linh Do$^{\ddagger}$, Scott McKinley$^{\ddagger}$\\
Jonathan Terhorst$^{\diamond}$, XuanLong Nguyen$^{\diamond}$
\end{tabular}
}}
\vspace*{.1in}

\begin{tabular}{c}
University of Michigan, Ann Arbor$^{\diamond}$; Tulane University$^{\ddagger}$
\end{tabular}

\vspace{.1in}

\today

\vspace*{.2in}

\begin{abstract} 
We present a new way to summarize and select mixture models via the hierarchical clustering tree (dendrogram) constructed from an overfitted latent mixing measure. Our proposed method bridges agglomerative hierarchical clustering and mixture modelling. The dendrogram's construction is derived from the theory of convergence of the mixing measures, and as a result, we can both consistently select the true number of mixing components and obtain the pointwise optimal convergence rate for parameter estimation from the tree, even when the model parameters are only weakly identifiable. In theory, it explicates the choice of the optimal number of clusters in hierarchical clustering. In practice, the dendrogram reveals more information on the hierarchy of subpopulations compared to traditional ways of summarizing mixture models. 
Several simulation studies are carried out to support our theory. We also illustrate the methodology with an application to single-cell RNA sequence analysis.
\end{abstract}
\end{center}

\textbf{Keywords:} Finite mixture models; hierarchical clustering; dendrogram; model selection; convergence rate; optimal transport

\section{Introduction}
% Clustering -> Hierarchical clustering -> no theory? -> Model-based -> ...
 
In modern data analysis, it is often useful to reduce the complexity of a large dataset by clustering the observations into a small and interpretable collection of subpopulations. Broadly speaking, there are two major approaches.
In ``model-based'' clustering, the data are assumed to be generated by a (usually small) collection of simple probability distributions such as normal distributions, and clusters are inferred by fitting a probabilistic mixture model. Because of their transparent probabilistic assumptions, the statistical properties of mixture models are well-understood. In particular, if there is no model misspecification, i.e., the data truly come from a mixture distribution, then the subpopulations can be consistently estimated. Unfortunately, this appealing asymptotic guarantee is somewhat at odds with what is often observed in practice, whereby mixture models fitted to complex datasets often return an uninterpretably large number of components, many of which are quite similar to each other. 

The tendency of mixture models to overfit on real data leads many analysts to employ ``model-free'' clustering methods instead. A well-known example is hierarchical clustering,  which organizes the data into a nested sequence of partitions at different resolutions. It is particularly useful for data exploration as it does not require fixing a number of subpopulations \emph{a priori} and can be visualized using a dendrogram. Of course, one drawback of model-free clustering is that it does not categorize the data into subpopulations, which is desirable in many scientific applications. 
For example, in single-cell RNA analysis, it is scientifically meaningful to have an estimate of the number of different cell types. 
Since there is no model, developing statistical theory for algorithms like agglomerative hierarchical clustering is challenging. Consequently, inferences derived from hierarchical clustering are guided by intuition and pragmatism rather than theory. 

% On the other hand, because of clear probabilistic assumptions, one can often consistently choose the number of subpopulations in the mixture modeling approach (e.g., via some information criterion). But making inferences therein is very \emph{discrete}: We often choose a \emph{single} number of components and report the components' parameters. In many cases where there are hierarchical structures in data, it is not possible to capture the hierarchy with this flat clustering fashion of summarizing mixture models. 

In this paper, we present a method that bridges model-based and model-free clustering by constructing a hierarchical clustering tree that is guided by mixture modelling. At a high level, the idea is to fit the data by a mixture model to get its parameters, i.e., a latent mixing measure, and then visualize this measure using a dendrogram. The outcome, which we term ``dendrogram of mixing measures'', combines the best benefits of both approaches. It infers a hierarchical clustering structure of data populations and provides theoretical support for the learned structure. 
  
\paragraph{Modeling assumption.} 
Assume that data are generated by a mixture distribution:
\begin{equation}\label{eq:true-model}
x_1, \dots, x_n \overset{\text{iid}}{\sim} p_{G_0}(x) := \int f(x|\theta) dG_0(\theta) = \sum_{i=1}^{k_0} p_i^0 f(x|\theta_i^0),
\end{equation}
where $f$ is a given density function, and $k_0$ is the number of components (order), which is often not given in reality. The \emph{(latent) mixing measure} $G_0 = \sum_{i=1}^{k_0} p_i^0 \delta_{\theta_i^0}$ encodes all the model's parameters that one wants to estimate, including the mixing weights $(p^0_1, \dots, p^0_{k_0})\in \Delta^{k_0-1}$ and the component-wise parameters $\theta^0_1, \dots, \theta^0_{k_0} \in \Theta \subset \Rbb^d$, where $\Theta$ is a known parameter space. 

Given a dataset, there are two main challenges in fitting mixture models: (i) estimating the number of components and (ii) estimating mixing measure $G_0$. 
Regarding challenge (i), because $k_0$ is not given in practice, practitioners often fit data with mixture models of various orders and report an estimate of the number of components. However, for complex datasets possessing hierarchical structures, the inferred number of components may become too large to allow a meaningful interpretation. As a concrete example, when fitting mixtures of location-scale Gaussian distributions to MNIST (handwriting digits data)\footnote{We import image data from the package sklearn in Python, then project it into 95\% variance PCA subspace and fit with Gaussian mixture models.}, the Bayesian Information Criterion (BIC) score gives the best number of components being around 100. Given the fact that there are ten digits in total, how can we make sense of the approximately 100 clusters and interpret them? Several inferred parameter $\theta_i$'s must be close to each other and form different styles of writing the same digit. This motivates us to find a representation that can link nearby components and produce succinct and interpretable approximations of the inferred mixing measure.  

For challenge (ii), if the mixture model is overfitted with a large number of components $k > k_0$, it is known that the mixing measure $G_0$ still can be estimated consistently given suitable identifiability conditions, albeit at (extremely) slow convergence rates~\cite{nguyen2013convergence, Ho-Nguyen-Ann-16,ho2019singularity,Chen1992}. The mathematical reason is that many redundant components “compete" to approximate a common true component so that their moments are cancelled out;
a phenomenon reflected in the asymptotic analysis that requires a suitably higher-order Taylor expansion around $p_{G_0}$ to derive correct rates of convergence ~\cite{heinrich2018strong,ho2019singularity}. If those redundant components can be combined in an algorithmic way, one may hope to recover a fast (and optimal) convergence rate for the relevant parameters which define true $G_0$.  
% The first requirement for consistency is identifiability, and it is shown that identifiability is satisfied for all common kernels in mixture modeling \cite{teicher1963identifiability}. 
% Under the strong identifiability condition~\cite{Chen1992, Ho-Nguyen-EJS-16} (i.e., kernels with their first two derivatives are linearly independent), we can even show $\widehat{G}_n$ converge to $G_0$ with a point-wise rate $n^{-1/4}$ (up to a logarithmic factor) in a Wasserstein distance. Unfortunately, the location-scale mixture of Gaussians, arguably the most popular mixture model, does not satisfy the strong identifiability condition. Due to this singularity structure, the convergence rate of parameters in this model is shown to be as slow as $n^{-1/8}$ when we overfit $\widehat{G}_n$ with $k = k_0 + 1$ and $n^{-1/12}$ when we overfit with $k = k_0 + 2$; the rate for higher over-fitting levels is determined by solutions of a set of algebraic equations as described in~\cite{Ho-Nguyen-Ann-16}. Recently, under strong identifiability conditions with higher orders, \cite{heinrich2018strong, wu2020optimal} proved that the minimax optimal rate to estimate a mixing measure with $k$ components, which is close to a mixing measure with $k_0 < k$ components, is $n^{-(4(k-k_0)+2)}$. Those rates get worse as the overfitting order increases.
  
\paragraph{Dendrogram of mixing measures.} 
The foregoing highlights the two major and intertwined obstacles that arise when fitting a mixture model to highly heterogeneous data: that obtaining the single number $k_0$ may be inadequate for the purpose of interpretation and robust inference when the data structure is complex, and there is a real danger of misspecifying the number of components, which results in slow convergence rates for parameter estimation, even if mixture models are a sufficiently rich modelling device. To overcome these obstacles, the approach developed in this paper is a statistically and computationally efficient procedure that outputs a hierarchical clustering tree structure, a.k.a. dendrogram. We will show, both in theory and practice, that the dendrogram is a more meaningful device for data summary, interpretation and robust inference. Since it will be suitably constructed from the mixing measure $G$ overfitted with a large number of redundant components, the dendrogram comes with strong theoretical support when the mixture modelling assumption holds.
%
%In this paper, we aim to solve both problems by connecting and inspecting mixing measures with varying numbers of components via a hierarchical clustering tree, a.k.a. dendrogram, which is obtained from the mixing measure $G$ estimated via a mixture system with a large number of components. 
%
Recall the representation for mixing measure $G = \sum_{i=1}^{k} p_i \delta_{\theta_i}$. Each leaf of the dendrogram is then an atomic measure $p_i \delta_{\theta_i}$, where the $\theta_i$'s are the mixing measure's atoms, and $p_i$'s the corresponding mass. %which captures its corresponding subpopulation's proportion and summary parameter, and 
We construct the dendrogram by recursively merging components in a manner similar to an agglomerative hierarchical clustering algorithm, taking into account both the atoms and associated mass. In each step, we choose the nearest pair of atoms to merge and form a new atom. Therefore, the number of components in the mixing measure decreases by one after every step. 

The outcome of the procedure is a binary tree representation of the data population.
In practice, the dendrogram gives a sequence of mixing measures with different numbers
of components, enabling the practitioner to visualize the data population’s underlying
heterogeneity at varying levels of granularity. This is particularly useful when one is uncertain about the ``true'' number of mixture components. In theory, we can establish consistent estimates of the number of components and the model parameters (provided the mixture modelling assumption holds). Moreover, we can simultaneously address model selection and derive the optimal root-$n$ rates of parameter estimation when the number of mixture components is unknown. This is remarkable because it has been shown that standard MLE of weakly identifiable mixture models exhibits very slow rates of parameter estimation \cite{Ho-Nguyen-Ann-16}. It is also interesting to note that the procedure's outcome, i.e., the binary tree presentation of subpopulations, circles us back to the original development of mixture models by Pearson in his analysis of evolution biology \cite{pearson1894contributions}, who wrote that “a family probably breaks up first into two species,
rather than three or more, owing to the pressure at a given time of some particular form of natural
selection.” 
  
\paragraph{Contributions.} In summary, we make the following contributions in this paper. First, we develop a method to construct a dendrogram for a given mixing measure, which satisfies a variational characterization with respect to an optimal transport distance. The proposed method is similar to (but not the same as) the centroid linkage method in the usual hierarchical clustering for data points and is useful for capturing the hierarchy of subpopulations in the data structure. Second, we investigate the asymptotic behaviour of the topology of the dendrogram inferred from the data initially fitted with an overparameterized finite mixture model. In particular, the convergence rates of mixing measures, heights, and likelihoods at each level of the dendrogram are derived. Interestingly, the mixing measures on the dendrogram possess a pointwise optimal convergence rate to the true mixing measure despite being constructed from a slowly converging overfitted (i.e., overparameterized) mixing measure. Third, from the developed theory, we propose a novel consistent model selection method named Dendrogram Information Criterion (DIC). Via simulation studies, we demonstrate that DIC is comparable with other well-known information criteria when the model is well-specified, but it is considerably more robust when the model is misspecified. A reason is that DIC also takes the relative distances between fitted components and the magnitude of weights into account and penalizes if it is too small. The usefulness of dendrograms is then confirmed with an application to single-cell RNA sequence data.

\paragraph{Related work.} 
The statistical foundation of clustering algorithms remains underdeveloped in the literature, with many open questions. In early works \cite{hartigan1977distribution, hartigan1985statistical}, Hartigan compared hierarchical clustering to the high-density clusters method. He pointed out that most hierarchical clustering methods do not consistently find high-density clusters, raising questions about applying hierarchical clustering in practice. Asymptotic classification error of “flat" clustering methods such as the $k$-means algorithms for data generated from mixture models was studied in~\cite{dougherty2004probabilistic}. Nonetheless, generalization bounds in statistical learning theory were considered unsuitable for analyzing clustering methods, see, e.g., ~\cite{von2005towards}. Instead, it suggested focusing on convergence behaviour and stability of clustering, which is compatible in spirit to what we will pursue in this paper.
Despite the lack of statistical guarantees, there are intuitive and interesting ways of finding an optimal number of clusters in hierarchical clustering. A popular technique known as the ``elbow method'' is to plot some desired loss function against the corresponding number of clusters and choose the point that looks like a change point in this graph, where the loss function decreases sharply before and becomes flattened after this point. This intuition was made precise by the ``gap statistics'' in \cite{tibshirani2001estimating}. 
% A similar method is proposed in \cite{salvador2004determining}.   

In contrast to the algorithmic literature on clustering, there is an extensive body of work on determining the number of components (order) in mixture models with consistency guarantees. The most popular method might be the Information Criterion \cite{schwarz1978estimating}. Many tests for the order of mixture models were developed, including those using the likelihood-based procedure \cite{liu2003asymptotics} and EM algorithm-based tests \cite{chen2009hypothesis, li2010testing}. Another class of consistent frequentist methods fall under the minimum distance-matching estimators \cite{james2001consistent, ho2020robust, heinrich2018strong, wei2023minimum}. In the Bayesian setting, one can select the number of components by placing a prior on this quantity of interest~\cite{richardson1997bayesian, miller2018mixture} and performing posterior inference. The resulting model, namely Mixture of Finite Mixture (MFM), was shown to produce a consistent estimate for the true number of components~\cite{guha2021posterior, miller2018mixture}. However, because most of these methods focus on estimating the number of components but do not take the mixing measure into account (such as relative distances between components and magnitude of weights), they cannot discover the hierarchy in the data. Moreover, they may be brittle to model misspecification~\cite{guha2021posterior,cai2021finite}.  
% and Dirichlet Process mixtures method \cite{}.

Model selection by merging procedures is a relatively recent technique. We highlight the Merge-Truncate-Merge (MTM) procedure \cite{guha2021posterior}, Group-Sort-Fure (GSF) algorithm \cite{manole2021estimating}, and Fusing of Localized Densities (FOLD) \cite{dombowsky2023bayesian}, where the core technique is to overfit then merge down. Our proposed method differs in several significant ways that we will now discuss. The MTM procedure is specialized to do model selection for nonparametric Bayesian mixtures, but it cannot improve the intrinsic slow nonparametric rate of parameter estimation after merging. GSF merges nearby components of overfitted mixing measures using a Lasso-like penalty term and can produce a tree from inferred components. However, it requires fitting the model with several different levels of penalized parameters, resulting in computational inefficiency. FOLD computes the Hellinger distances between components' density to merge, but it also requires tuning a hyperparameter that controls clusters' separation. Recently, \cite{aragam2020identifiability} studied a framework to combine single-linkage hierarchical clustering with overfitted mixing measures to construct non-parametric mixture components. In this paper, we consider the parametric setting and provide the convergence rate of the hierarchical clustering tree. Notably, our procedure is developed in a way that provably improves upon the convergence rate of the overfitted mixing measure.
Moreover, it builds a hierarchical tree of components without re-fitting the model with different tuning parameters. Thus, the method is both statistically and computationally efficient. Finally, it is worth emphasizing that our theory and methods also apply to weakly identifiable families, such as location-scale Gaussians (cf. \cite{Ho-Nguyen-Ann-16}), which are not addressable using the aforementioned methods.

\paragraph{Organization.} 
Section~\ref{sec:prelim} gives a brief review of the convergence behaviour of mixture models, the theoretical underpinning of which provided motivation for our proposed procedure of dendrogram construction. In Section~\ref{sec:dendrogram-strong}, we present the construction and model selection methods based on the dendrogram of mixing measures and the asymptotic properties thereof. With a more refined merging scheme, this strategy is extended to accommodate weakly identifiable families of mixtures such as location-scale Gaussians; see Section~\ref{sec:merge-weak}. Section~\ref{sec:experiments} provides several experiments and applications.

\paragraph{Notation.} 
Let $\mathcal{X}$ be the data space and $\Theta$ be the parameter space. We always assume $\Theta$ is a compact and convex subset of $\Rbb^{d}$. For a natural number $k$, we denote $[k] = \{1, \dots, k\}$,  $\Ecal_{k}(\Theta)$ the space of discrete distributions on $\Theta$ with exactly $k$ atoms, and $\Ocal_{k}(\Theta) = \cup_{k'\leq k} \Ecal_{k'}(\Theta)$ the space of discrete distributions on $\Theta$ with no more than $k$ atoms. (The letters $\Ecal$ and $\Ocal$ stand for \textbf{e}xact-fitted and \textbf{o}ver-fitted, respectively.) We drop $\Theta$ in $\Ecal_k$ and $\Ocal_{k}$ when there is no confusion. For a mixing measure $G = \sum_{i=1}^{k} p_i \delta_{\theta_i}$, we abuse the notation by calling each term $p_i \delta_{\theta_i}$ an “atom". (Hence, the so-called atoms in this paper embody both proportion $p_i$ and parameter $\theta_i$.) For two sequences $(a_n)_{n=1}^{\infty}$ and $(b_n)_{n=1}^{\infty}$, we write $a_n\lesssim b_n$ (or $a_n = O(b_n)$) if $a_n \leq C b_n$ where $C$ is a constant not depending on $n$. We write $a_n \gtrsim b_n$ when $b_n \lesssim a_n$, and $a_n \asymp b_n$ if $a_n \gtrsim b_n$ and $b_n \lesssim a_n$. We write $a_n \ll b_n$ (or $a_n = o(b_n)$) if $a_n / b_n\to 0$ as $n\to \infty$. For two densities $p$ and $q$, we denote by $V(p, q) = \dfrac{1}{2}\displaystyle\int |p(x) - q(x)|dx$ the Total Variation distance, $h^2(p, q) = \dfrac{1}{2} \displaystyle\int \left(\sqrt{p(x)} - \sqrt{q(x)}\right)^2 dx$ the square Hellinger distance, and $KL(p \| q) = \displaystyle \int p(x) \log \dfrac{p(x)}{q(x)} dx$ the Kullback-Leibler divergence between $p$ and $q$.  

Throughout the paper, we employ Wasserstein distances \cite{villani2009optimal} to quantify differences between mixing measures. For two mixing measures $G = \sum_{i=1}^{k} p_i \delta_{\theta_i}$ and $G' = \sum_{j=1}^{k'} p'_j \delta_{\theta'_j}$, the Wasserstein$-r$ distance (for $r\geq 1$) between $G$ and $G'$ is defined as $$W_r(G, G') = \left(\inf_{q \in \Pi(p, p')} \sum_{i, j = 1}^{k, k'} q_{ij} \norm{\theta_i - \theta_j'}^{r} \right)^{1/r},$$
where $\Pi(p, p')$ is the set of all couplings between $p = (p_1, \dots, p_k)$ and $p' = (p'_1, \dots, p'_{k'})$, i.e., $\Pi(p, p') = \{q \in \Rbb_{+}^{k\times k'} : \sum_{i=1}^{k}q_{ij} = p'_j , \sum_{j=1}^{k'}q_{ij} = p_i \forall i\in [k], j\in [k']\}$.

\section{Preliminary}\label{sec:prelim}

The methods proposed in this paper consist of a dendrogram construction procedure and model selection techniques associated with finite mixture models. These methods are motivated and partially derived from theory, particularly the theoretical understanding of the convergence behaviour of mixing measures when an overparameterized mixture model is fitted to data, which we present now.
  
\subsection{Convergence of mixture densities} \label{subsec:density-rate}
Given $n$ data $x_1, \dots, x_n \overset{iid}{\sim} p_{G_0}(x)$, where the density function $p_{G_0}(x) = \sum_{i=1}^{k_0} p_i^0 f(x | \theta_i^0)$, and the true latent mixing measure $G_0 = \sum_{i=1}^{k_0} p_i^0 \delta_{\theta_i^0} \in \Ecal_{k_0}$. Suppose the true number of components $k_0$ is unknown, but an upper bound $k\geq k_0$ is given. It is common practice to overfit the data using the $k$-mixture, i.e., mixture with at most $k$ components, yielding the MLE
$\widehat{G}_n \in \argmax_{G \in \Ocal_k} \sum_{i=1}^{n} \log p_{G}(x_i)$. Despite the overfitting, mixture models enjoy fast convergence as a density estimation device. A standard technique to derive the density convergence rate is to employ empirical process theory~\cite{Vandegeer}. Denote $\mathcal{P}_k = \{p_{G} : G\in \Ocal_{k}\}$ and $H_B(\epsilon, \Pcal_{k}, h)$ its $\epsilon$-bracketing entropy number in Hellinger distance \cite{Vandegeer}. We first provide a sufficient condition for establishing the convergence rate for density estimation, then show that it holds in many popular settings.

\begin{condition}[$\textbf{B}$.]
    There exists a constant $C > 0$ depending on $\Theta, k$, and $f$ such that $H_B(\epsilon, \Pcal_{k}, h) \leq C \log(1/\epsilon)\forall \epsilon > 0$.
\end{condition}

% \paragraph{Condition $\text{B}(k)$.} There exists a constant $C > 0$ depending on $\Theta, k$, and $f$ such that $H_B(\epsilon, \Pcal_{k}, h) \leq C \log(1/\epsilon)\forall \epsilon > 0$.
% For each $G\in \Ocal_{k}$, we denote $\overline{p}_{G} = (p_G + p_{G_0})/2$. For the parameter space $\Theta$ and a real $\epsilon > 0$, define the class:
% \begin{equation*}
%     \overline{\Pcal}^{1/2}(\Theta, \epsilon) = \left\{\overline{p}_G^{1/2} : G\in \Ocal_k, h(\overline{p}_G, p_{G_0})\leq \epsilon \right\}.
% \end{equation*}
% The usual technique to derive the density estimation rate is by bounding the bracketing entropy integral:
% \begin{equation*}
%     \mathcal{J}(\epsilon, \overline{\Pcal}^{1/2}(\Theta, \epsilon), \nu) = \int_{\epsilon^2/2^{13}}^{\epsilon} H_B^{1/2}(u, \overline{\Pcal}^{1/2}(\Theta, \epsilon), \nu) du \vee \epsilon,
% \end{equation*}
% where $H_B(\epsilon, {\Pcal}, \nu)$ is the $\epsilon$-bracketing entropy number of a set ${\Pcal}$ with respect to $L^2(\nu)$ metric \cite{Vandegeer}. Assume that:

% \begin{enumerate}
%     \item[$\textbf{B}(k)$] There are constants $J > 0$ and $L > 0$ depending on $\Theta$ and $k$ such that 
%     \begin{equation*}
%         \mathcal{J}(\epsilon, \overline{\Pcal}^{1/2}(\Theta, \epsilon), \nu) \leq J \sqrt{n} \epsilon^2,
%    \end{equation*}
%    for all $\epsilon \geq L(\log n/n)^{1/2}$.
% \end{enumerate}

\begin{proposition}\label{prop:density-rate}
    Suppose that condition (\textbf{B}.) holds. 
    Then there exist universal constants $c_1, c_2$ and so that it happens with probability at least $1-c_1 n^{-c_2}$ that
    \setlength{\abovedisplayskip}{0pt}
    \begin{equation*}
        h(p_{\widehat{G}_n}, p_{G_0}) \lesssim \left(\dfrac{\log(n)}{n}\right)^{1/2} \quad \forall n\in \Nbb,
    \end{equation*}
\setlength{\belowdisplayskip}{0pt}
    where the multiplicative constant in this inequality only depends on $\Theta$ and $k$.
\end{proposition}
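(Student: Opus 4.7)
The plan is to invoke a standard empirical-process-theoretic theorem for maximum likelihood estimators in terms of bracketing entropy (e.g., Theorem 7.4 of van de Geer's \emph{Empirical Processes in M-Estimation}, or the Wong--Shen theorem), which converts a bracketing entropy bound on the class of candidate densities into a Hellinger convergence rate for the MLE. Since $G_0 \in \Ecal_{k_0} \subset \Ocal_k$, we have $p_{G_0} \in \Pcal_k$, so $\widehat{G}_n \in \argmax_{G\in\Ocal_k} \sum_i \log p_G(x_i)$ is an MLE within a class that contains the true density.

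The key quantitative step is to bound the bracketing entropy integral. Condition (\textbf{B}.) gives $H_B(\epsilon, \Pcal_k, h)\leq C\log(1/\epsilon)$, whence
\begin{equation*}
\Psi(\delta) \;:=\; \int_{\delta^2/c}^{\delta} \sqrt{H_B(u, \Pcal_k, h)}\, du \;\lesssim\; \delta\,\sqrt{\log(1/\delta)}.
\end{equation*}
The MLE rate $\delta_n$ is determined by solving $\sqrt{n}\,\delta_n^2 \gtrsim \Psi(\delta_n)$, i.e., $\sqrt{n}\,\delta_n \gtrsim \sqrt{\log(1/\delta_n)}$, which yields $\delta_n \asymp \sqrt{\log(n)/n}$. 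Plugging back into the chosen theorem gives $h(p_{\widehat{G}_n}, p_{G_0}) \lesssim \delta_n$ on an event whose complement has probability at most $c_1 \exp(-c_2 n \delta_n^2)$; with $n\delta_n^2 \asymp \log n$, this tail is of order $n^{-c_2'}$ for a new constant $c_2' > 0$, matching the stated high-probability bound.

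A small bit of bookkeeping is needed to turn the statement into the exact form given: one has to verify the mild regularity hypotheses of the invoked theorem (measurability of $\Pcal_k$, existence of envelopes, the $L_2$-vs-Hellinger comparison for densities), all of which are standard for parametric mixture classes over compact $\Theta$. The multiplicative constants depend only on $\Theta$, $k$, and $f$, consistent with the proposition's claim, since $C$ in Condition (\textbf{B}.) and the constants coming out of the empirical process theorem share this dependence.

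The main obstacle is purely bookkeeping — matching the concentration constants in the abstract MLE theorem so that the exponent $c_2$ in $n^{-c_2}$ is universal as claimed; in fact, by choosing $\delta_n = M\sqrt{\log(n)/n}$ with $M$ large enough (depending on $\Theta, k, f$), the exponent $c_2 n \delta_n^2/\log n$ can be made to exceed any fixed universal constant, so ``universality'' of $c_1,c_2$ is a matter of fixing $M$ after the other constants. No genuinely new ideas are required beyond a careful application of the entropy-based MLE rate theorem.
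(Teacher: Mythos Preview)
Your proposal is correct and follows essentially the same route as the paper: both invoke van de Geer's Theorem~7.4, use Condition~(\textbf{B}.) to bound the bracketing entropy integral by $\Psi(\epsilon)\lesssim \epsilon\sqrt{\log(1/\epsilon)}$, solve $\sqrt{n}\,\epsilon_n^2 \gtrsim \Psi(\epsilon_n)$ to get $\epsilon_n\asymp\sqrt{\log(n)/n}$, and read off the $c_1 n^{-c_2}$ tail from $\exp(-n\epsilon_n^2/c^2)$. The paper is a bit more explicit about the passage from $H_B(\cdot,\Pcal_k,h)$ to the bracketing entropy of the averaged class $\overline{\Pcal}_k^{1/2}$ in $L^2(\nu)$ (the ``$L_2$-vs-Hellinger comparison'' you allude to), but otherwise the arguments coincide.
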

% On the condition (\textbf{B}.), we note that
\begin{proposition}\label{prop:Bk-verify}
    Suppose that $\sup_{\theta\in \Theta}\norm{f(\cdot | \theta)}_{\infty}$ is bounded, $\norm{f(\cdot|\theta) - f(\cdot|\theta')}_{\infty} \lesssim \norm{\theta-\theta'}$ for all $\theta, \theta'\in \Theta$, and $f(x|\theta)$ has uniformly light tails, i.e., there exist constants $D$, $d_1, d_2$, and $d_2$ so that
    $f(x | \theta) \leq d_1 \exp(-d_2 \norm{x}^{d_3}) \forall \norm{x}\geq D, \theta\in \Theta,$
    then condition (\textbf{B}.) holds. 
\end{proposition}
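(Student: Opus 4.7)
The plan is to follow the classical route for bounding bracketing entropies of parametric mixture classes: truncate the tails using the light-tail hypothesis, cover the parameter space with a net of controlled cardinality, and use the Lipschitz and $L^\infty$ bounds on $f(\cdot\mid\theta)$ to turn parameter closeness into $L^1$ (hence Hellinger) bracket closeness.

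First, I would show that an envelope for $\Pcal_k$ is available: since $p_G(x) = \sum p_i f(x\mid\theta_i)$ and $f(x\mid\theta)\leq d_1\exp(-d_2\|x\|^{d_3})$ for $\|x\|\geq D$, the whole class is dominated by $F(x) = \max\bigl(\sup_{\theta}\|f(\cdot\mid\theta)\|_\infty,\; d_1\exp(-d_2\|x\|^{d_3})\bigr)$. Integration of the tail (Gamma-type asymptotics) yields $\int_{\|x\|\geq R} F(x)\,dx \lesssim R^{d-d_3}\exp(-d_2 R^{d_3})$, so choosing $R = R(\epsilon) \asymp (\log(1/\epsilon))^{1/d_3}$ makes this tail mass $\leq \epsilon^2$. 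Brackets on the tail are then taken to be $[0,F]$, contributing negligible $L^1$ mass.

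Second, on the truncated region $B_R = \{\|x\|\leq R\}$, I would construct brackets by discretizing the parameters. The parameter space $\Delta^{k-1}\times\Theta^k$ is compact of dimension $\leq k(d+1)$, so it admits an $\eta$-net $\{(\tilde p^{(j)},\tilde\theta^{(j)})\}$ of cardinality $N(\eta)\leq (C/\eta)^{k(d+1)}$ in the sup-metric. The key pointwise comparison uses both hypotheses: for any $G$ approximated by $\tilde G$ with $\max_i|p_i-\tilde p_i|\leq\eta$ and $\max_i\|\theta_i-\tilde\theta_i\|\leq\eta$,
\begin{equation*}
\bigl|p_G(x)-p_{\tilde G}(x)\bigr|\ \leq\ \sum_i |p_i-\tilde p_i|\,\|f(\cdot\mid\theta_i)\|_\infty + \sum_i \tilde p_i\,\|f(\cdot\mid\theta_i)-f(\cdot\mid\tilde\theta_i)\|_\infty\ \leq\ C_0\,\eta,
\end{equation*}
uniformly in $x$. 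For each net point $\tilde G$ define $u(x)=\min\!\bigl(p_{\tilde G}(x)+C_0\eta,\,F(x)\bigr)\mathbb{1}_{B_R}+F(x)\mathbb{1}_{B_R^c}$ and $l(x)=\max(0,p_{\tilde G}(x)-C_0\eta)\mathbb{1}_{B_R}$, which form a valid bracket containing every nearby $p_G$.

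Third, I would bound the bracket size. Using $h^2(l,u)\leq\tfrac12\|u-l\|_1$ whenever $l\leq u$,
\begin{equation*}
\|u-l\|_1\ \leq\ \int_{B_R} 2C_0\eta\,dx\ +\ 2\int_{\|x\|\geq R} F(x)\,dx\ \lesssim\ \eta R^d + \epsilon^2.
\end{equation*}
Choosing $\eta\asymp \epsilon^2/R^d \asymp \epsilon^2/(\log(1/\epsilon))^{d/d_3}$ makes $h(l,u)\leq\epsilon$ while keeping $\log N(\eta)\lesssim k(d+1)\log(1/\eta)\lesssim \log(1/\epsilon)$, which is condition (\textbf{B}.). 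The main technical nuisance I expect is just bookkeeping: verifying the tail integral estimate carefully and checking that the $L^\infty$ comparison above does not blow up when combined with the non-integrable $\sup_\theta\|f(\cdot\mid\theta)\|_\infty$ term — but this is handled precisely by confining the $C_0\eta$ perturbation to the compact set $B_R$, while the tail is absorbed into the envelope $F$.
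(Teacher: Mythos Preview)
Your proposal is correct and follows essentially the same approach as the paper: construct a sup-norm $\eta$-net on the parameters via compactness, use the Lipschitz and uniform-boundedness hypotheses to control $|p_G-p_{\tilde G}|$ by $C_0\eta$, truncate at radius $R\asymp(\log(1/\epsilon))^{1/d_3}$ so the envelope tail has $L^1$ mass $\lesssim\epsilon^2$, and then pass from $L^1$ brackets to Hellinger brackets via $h^2\lesssim\|\cdot\|_1$. The paper separates this into two explicit steps (first bounding $\log N(\epsilon,\Pcal_k,\|\cdot\|_\infty)$, then transferring to $H_B(\epsilon,\Pcal_k,h)$), whereas you fold them together, but the ingredients and the balancing $\eta\asymp\epsilon^2/R^d$ are identical.
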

In particular, one can easily check that popular kernels such as Poisson, Gamma, and Gaussian (with bounded parameter space) satisfy these conditions. For Gaussian kernel, it further requires eigenvalues of the covariance to be bounded below by a positive constant.

\subsection{Convergence of mixing measures}\label{subsec:parameter-rate}
 
A useful metric to quantify the convergence behaviour of the mixing measures arising in mixture models is the Wasserstein distances \cite{nguyen2013convergence}, partly because it avoids the label-switching problem and can be computed between mixing measures with different numbers of atoms. For a sequence of estimates $(G_n)_{n\in \Nbb} \subset \Ocal_{k}$ and a true mixing measure $G_0\in \Ecal_{k_0}$, it is of interest to derive the convergence rate of $W_r(G_n, G_0)$ for a suitable order $r$~\cite{Ho-Nguyen-Ann-16,heinrich2018strong,wei2023minimum}. To deduce this rate from the available density estimation rate, we aim to develop the so-called \emph{inverse bounds} that have the form $V(p_{G}, p_{G_0}) \gtrsim W_r^{r}(G, G_0)$,
as $V(p_{G}, p_{G_0}) \to 0$. Note that the name inverse bounds comes from the fact that we lower bound the distance between density (on the data space) by the distance between parameters. Obtaining inequalities in the other direction (i.e., upper bound $V$ in terms of $W_r$) is relatively simpler (see, e.g.,~\cite{nguyen2013convergence, nguyen2016borrowing}). Proving inverse bounds requires a refinement of identifiability conditions.

\begin{definition}[Strong identifiability] The family of kernels $\{f(x|\theta) : \theta \in \Theta\}$ (in short, $f$) is said to be $r$-th order strongly identifiable if for every distinct $\theta_1, \dots, \theta_k\in \Theta$, if there exists $(a_{i\alpha}) \subset \Rbb$, for $i\in [k]$ and tuple $\alpha = (\alpha_1, \dots, \alpha_d)\in \Nbb^{d}$ having $|\alpha| = \sum_{j=1}^{d} \alpha_j \leq r$, such that $\sum_{i=1}^{k} \sum_{|\alpha|\leq r}  a_{i\alpha} \dfrac{\partial^{|\alpha|}}{\partial \theta^{\alpha}} f(x|\theta_i) = 0$ for almost all $x$, 
then $a_{i\alpha} = 0$ for all $i \in [k]$ and $\alpha$. 
\end{definition}
In plain words, the $r$-th strong identifiability condition requires kernel $f$ and its derivatives (up to the $r$-th order) with respect to distinct parameters to be linearly independent. Given the first-order strong identifiability, one can obtain the inverse bound $V(p_{G}, p_{G_0})\gtrsim W_1(G, G_0)$ for all $G\in \Ecal_{k_0}$. Given the second-order strong identifiability and $k > k_0$, then $V(p_{G}, p_{G_0})\gtrsim  W_2^2(G, G_0)$ for all $G\in \Ocal_{k}$ (see, e.g.,~\cite{Chen1992, Ho-Nguyen-EJS-16}). Combining with Proposition~\ref{prop:density-rate}, we have that the convergence rate for $W_1(G, G_0)$ is $n^{-1/2}$ in the exact-fitted setting and rate for $W_2(G, G_0)$ is $n^{-1/4}$ in the overfitted setting (up to a logarithmic factor).  
% Similar to the usual technique for proving asymptotic behaviour of parameter estimation, the proof of these inverse bounds involves Taylor expansions of $p_{G}$ around $p_{G_0}$. In the overfitted setting, because multiple components of $G$ can converge to a common component of $G_0$, coefficients of their first-order derivatives can be combined and canceled out. Therefore, a second-order Taylor expansion is needed, leading to the lower bound by a squared Wasserstein distance. It also explains why second-order strong identifiability, which is stronger than the non-singular Fisher information condition, is required. 

The following fact clarifies the relationship between the convergence in Wasserstein distances of mixing measures and the convergence of components (see, e.g.,~\cite{ho2019singularity}): Fix $G_0 = \sum_{j=1}^{k_0} p^0_j \delta_{\theta^0_j}\in \Ecal_{k_0}$, and consider $G = \sum_{j=1}^{k} p_j \delta_{\theta_j}$ ranging in $\Ocal_{k}$ such that $W_r(G, G_0) \to 0$,  we have
\begin{equation}\label{eq:asymp-Wasserstein}
        W_r^r(G, G_0) \asymp \sum_{i=1}^{k_0} \left(\left|\sum_{j\in V_i} p_j - p_{i}^0 \right| + \sum_{j\in V_i} p_j \norm{\theta_j - \theta_i^0}^{r}\right),
    \end{equation}
where $V_i = V_i(G) = \{j\in [k]: \|\theta_j - \theta_i^0\| \leq \|\theta_j - \theta_{i'}^0\| \forall i'\neq i\}$ is the set of all indices $j$ such that $\theta_j$ belongs to the Voronoi cell of $\theta_i^0$ in $\Theta$.  
Hence, when $W_r^r(G, G_0) \lesssim \left({\log n}/{n}\right)^{1/2}$, it implies that there are atoms that converge to a true atom at the rate as slow as $\left({\log n}/{n}\right)^{1/2r}$. Besides, we see that for each overfitted atom, we have $p_j \norm{\theta_j - \theta_i^0}^r \lesssim \left({\log n}/{n}\right)^{1/2}$. Note that the roles of $p_j$ and $\norm{\theta_j - \theta_i^0}^{r}$ are coupled; they entail the two types of behaviour for atoms of overfitted latent mixing measures: 
\begin{enumerate}
    \item[(i)] \textbf{Redundant components:} There might be several $\theta_j$'s that converge to the same $\theta_i^0$. The sum of their proportions $p_j$ tends to $p_i^0$ but the convergence of each $p_j$ is not known;
    \item[(ii)] \textbf{Excess mass:} 
there may also atoms $\theta_j^{n}$ which “wander" anywhere on the parameter space. The probability mass associated with such atoms $p_j\lesssim (\log n / n)^{1/2} \to 0$, i.e., vanishes at a fast rate.
\end{enumerate}
Hence, not only does parameter estimation in overfitted mixtures suffer from a slow convergence rate, but it also makes the inference difficult due to the two behaviours above. The situation is even worse when the strong identifiability condition is violated, such as the popular mixture of location-scale Gaussians~\cite{Ho-Nguyen-Ann-16}. A primary mathematical reason is the phenomenon of “cancellation" -- as multiple mixture components compete to approximate the same true component, they cancel one another's effect, resulting in learning inefficiency. In theory, the cancellation is dealt with by considering higher derivatives of kernel $f$ to establish inverse bounds for the mixing measures~\cite{ho2019singularity,heinrich2018strong}. However, an inverse bound with relatively higher order $r$ reflects a relatively slower convergence of atoms. This observation suggests that merging redundant atoms can mitigate the inefficiency due to cancellation among redundant parameters and thus help recover a good convergence behaviour for the mixing measures. Indeed, we will show next that it is possible to simultaneously perform model selection and improve the convergence from overfitted mixing measures, including the situations of weak identifiability.

\section{Dendrogram for strongly identifiable mixtures}\label{sec:dendrogram-strong}
  
\subsection{Dendrogram of mixing measures}\label{subsec:tree}
 
The discussion in the previous section motivates a procedure to mitigate the cancellation phenomenon among redundant parameters: we can merge them in a controlled manner. Interestingly, such an incremental merging procedure results in a hierarchical tree structure similar to the dendrogram produced by an agglomerative clustering procedure, with the distinction here being that we will obtain a dendrogram from the latent mixing measure.
%We first describe the construction of a dendrogram of a mixing measure, then explain why it is particularly useful of mixing measures learned from an overfitted mixture model. 
Given a discrete mixing measure $G = \sum_{i=1}^{k} p_i \delta_{\theta_i} \in \Ecal_{k}$, the output of the procedure is a binary tree that captures a hierarchy of $G$'s atoms obtained in an iterative fashion. Starting from $G = G^{(k)}$, we sequentially merge two atoms to derive mixing measure $G^{(k-1)}$ which has one less number of atoms. As a result, a sequence of $k$ mixing measures $G^{(k)}, G^{(k-1)}, \dots, G^{(1)}$ is obtained, whereby each $G^{(\kappa)}$ has exactly $\kappa$ atoms, for $\kappa\in [k]$. %Step-wise, it is desirable to build $G^{(k-1)}$ so that it is the best approximation of $G^{(k)}$ in some sense. We will soon show in Proposition~\ref{prop:Wasserstein-variational} this variational characterization after describing the construction rules. 
Specifically, define the following dissimilarity between two atoms $p \delta_{\theta}$ and $\pi \delta_{\eta}$:
\begin{equation}\label{eq:dissimilarity-cluster}
    \divclus(p \delta_{\theta}, \pi \delta_{\eta}) = \dfrac{1}{p^{-1} + \pi^{-1}} \norm{\theta - \eta}^2.
\end{equation}
From $G$ having $k$ atoms, we choose two atoms minimizing dissimilarity $\divclus$:
\begin{equation}\label{eq:min-pair-cluster}
    \divclus(p_{i} \delta_{\theta_{i}}, p_{j} \delta_{\theta_{j}}) = \min_{k_1 \neq k_2 \in [k]} \divclus(p_{k_1} \delta_{\theta_{k_1}}, p_{k_2} \delta_{\theta_{k_2}}),
\end{equation}
then we merge $p_{i} \delta_{\theta_{i}}$ and $p_{j} \delta_{\theta_{j}}$ together to get a new cluster $p_* \delta_{\theta_*}$, where
\begin{equation}\label{eq:merge-cluster}
    p_{*} = p_{i} + p_{j}, \quad \theta_{*} = \dfrac{p_{i}}{p_{*}} \theta_{i} + \dfrac{p_{j}}{p_{*}} \theta_{j}.
\end{equation}
Finally, we obtain a mixing measure $G^{(k-1)} = p_{*} \delta_{\theta_{*}} + \sum_{r\neq i, j} p_{r} \delta_{\theta_{r}}$ having $(k-1)$ atoms. A description of the whole procedure can be seen in Algorithm~\ref{alg:merge-atom}. The choice of merging atoms and deriving the new atom (equations~\eqref{eq:min-pair-cluster} and~\eqref{eq:merge-cluster}) are in particular faithful to hierarchical clustering and $k$-means algorithms. With the dissimilarity $\divclus$ defined in equation~\eqref{eq:dissimilarity-cluster}, we can prove the following variational characterization of the merging procedure:

\begin{algorithm}[t]
\caption{Merging of atoms}\label{alg:merge-atom}
\begin{algorithmic}[1]
\Require A mixing measure $G^{(k)} = \sum_{i=1}^{k} p_i \delta_{\theta_i}$;
\State Choose $i, j = \argmin_{k_1 \neq k_2 \in [k]} \divclus(p_{k_1} \delta_{\theta_{k_1}}, p_{k_2} \delta_{\theta_{k_2}})$;
\State Compute merged atoms' parameters: $p_{*} = p_{i} + p_{j}$ and $\theta_{*} = \dfrac{p_{i}}{p_{*}} \theta_{i} + \dfrac{p_{j}}{p_{*}} \theta_{j}$.
\State \textbf{return} A mixing measure $G^{(k-1)} = p_{*} \delta_{\theta_{*}} + \sum_{r\neq i, j} p_{r} \delta_{\theta_{r}}$ with $k-1$ atoms.
\end{algorithmic}
\end{algorithm}

\begin{algorithm}[t]
\caption{Dendrogram Inferred Clustering (DIC) algorithm}\label{alg:phylo-tree}
\begin{algorithmic}[1]
\Require A mixing measure $G^{(k)} = \sum_{i=1}^{k} p_i \delta_{\theta_i}$. 
\State Initiate $\Tcal(G) = (V, E, d)$, where the $k$-th level of $V$ contains all atoms of $G^{(k)}$, $E = \varnothing$, and $d = (d^{(k)}, d^{(k-1)}, \dots, d^{(2)})$ is an array of length $k-1$. 
\For{$\kappa$ from $k$ to $2$} 
    \State Apply Algorithm~\ref{alg:merge-atom} to $G^{(\kappa)}\in \Ecal_{\kappa }$ to get $G^{(\kappa-1)} \in \Ecal_{\kappa-1}$;
    \State Add all atoms of $G^{(\kappa-1)}$ as vertices of the $(\kappa-1)$-th level of $V$;
    \State Add two edges to $E$ connecting the two atoms of $G^{(\kappa)}$ merged into that of $G^{(\kappa-1)}$;
    \State Record $d^{(\kappa)} = \divclus(p \delta_{\theta}, \pi \delta_{\nu})$, where $p \delta_{\theta}$ and $\pi \delta_{\nu}$ are two merged atoms.
\EndFor
\State \textbf{return} $\Tcal(G) = (V, E, d)$, and $\{G^{(j)}\}_{j=1}^{k}$.
\end{algorithmic}
\end{algorithm}

\begin{proposition}\label{prop:Wasserstein-variational}
    For $k\geq 2$ and a measure $G^{(k)} \in \Ecal_{k}$, then $G^{(k-1)}$ constructed in Algorithm~\ref{alg:merge-atom} has exactly $k-1$ atoms and is the “Wasserstein projection" of $G^{(k)}$ onto $\Ocal_{k-1}$, i.e., 
    \begin{equation*}
        G^{(k-1)} = \argmin_{G\in \Ocal_{k-1}} W_2^2(G, G^{(k)}).
    \end{equation*}
    Moreover, the dissimilarity between two merged atoms $p_i \delta_{\theta_i}$ and $p_j \delta_{\theta_j}$ is the squared length of the projection, i.e.,  $\divclus(p_i \delta_{\theta_i}, p_j \delta_{\theta_j}) = W_2^2(G^{(k)}, G^{(k-1)})$.
\end{proposition}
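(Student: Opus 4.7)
The first assertion---that $G^{(k-1)}$ has $k-1$ atoms---is immediate since Algorithm~\ref{alg:merge-atom} replaces two atoms of $G^{(k)}$ by a single merged atom, reducing the count by exactly one. For the variational characterization, my plan is to establish matching upper and lower bounds on $W_2^2(G^{(k)}, G)$ as $G$ ranges over $\Ocal_{k-1}$, both equal to $\divclus(p_i\delta_{\theta_i}, p_j\delta_{\theta_j})$, where $(i,j)$ is the min-pair selected by the algorithm. The upper bound follows from a direct construction: use the transport plan that fixes every unmerged atom $p_r\delta_{\theta_r}$ and splits the merged mass $p_*\delta_{\theta_*}$ back into $p_i$ units sent to $\theta_i$ and $p_j$ units sent to $\theta_j$. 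Substituting $\theta_* = (p_i\theta_i + p_j\theta_j)/(p_i+p_j)$ into $p_i\|\theta_i-\theta_*\|^2 + p_j\|\theta_j-\theta_*\|^2$ collapses, after a short algebraic manipulation, to $\frac{p_ip_j}{p_i+p_j}\|\theta_i-\theta_j\|^2 = \divclus(p_i\delta_{\theta_i}, p_j\delta_{\theta_j})$.

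For the matching lower bound, fix any $G' = \sum_{l=1}^{m}q_l\delta_{\eta_l}\in\Ocal_{k-1}$ with $m\leq k-1$ and all $q_l>0$, and any coupling $\pi=(\pi_{il})$ between $G^{(k)}$ and $G'$. The key reduction is to assign each source index $i\in[k]$ to its nearest $\eta$-atom via $\sigma(i) := \argmin_{l}\|\theta_i-\eta_l\|$. Combining the pointwise inequality $\|\theta_i-\eta_l\|^2 \geq \|\theta_i-\eta_{\sigma(i)}\|^2$ with the marginal constraint $\sum_l\pi_{il}=p_i$ yields $\sum_{i,l}\pi_{il}\|\theta_i-\eta_l\|^2 \geq \sum_i p_i\|\theta_i-\eta_{\sigma(i)}\|^2$. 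Since $k>m$, pigeonhole forces some Voronoi cell $\sigma^{-1}(l^\ast)$ to contain two distinct indices $a,b$; restricting the bound to just those two summands and minimizing the quadratic $p_a\|\theta_a-\eta\|^2 + p_b\|\theta_b-\eta\|^2$ over $\eta$ (its minimizer is the pair-centroid) gives $\frac{p_ap_b}{p_a+p_b}\|\theta_a-\theta_b\|^2 = \divclus(p_a\delta_{\theta_a}, p_b\delta_{\theta_b}) \geq \min_{a'\neq b'}\divclus(p_{a'}\delta_{\theta_{a'}}, p_{b'}\delta_{\theta_{b'}})$. Taking the infimum over $\pi$ and then over $G'\in\Ocal_{k-1}$, and combining with the upper bound, simultaneously identifies $G^{(k-1)}$ as a minimizer and equates the projection length with $\divclus(p_i\delta_{\theta_i}, p_j\delta_{\theta_j})$.

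The main obstacle is the lower bound, because an arbitrary coupling $\pi$ may freely split each source atom's mass among several $\eta_l$'s, so a hard partition of $[k]$ cannot simply be read off from $\pi$. The Voronoi inequality overcomes this by dominating the split-coupling cost by the cost of a hard nearest-neighbor assignment; the subsequent pair-centroid inequality then extracts just two representatives from the pigeonholed over-filled cell, which sidesteps any need to argue that the optimal coupling is itself hard, or that the Voronoi cells of $G^{(k-1)}$ coincide with those of $G'$.
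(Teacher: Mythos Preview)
Your argument is correct and uses the same three ingredients as the paper---the nearest-neighbor (Voronoi) lower bound on the coupling cost, pigeonhole to force two source atoms into one cell, and the pair-centroid identity $p_a\|\theta_a-\eta\|^2+p_b\|\theta_b-\eta\|^2\ge \frac{p_ap_b}{p_a+p_b}\|\theta_a-\theta_b\|^2$---only organized as a matching upper/lower bound instead of the paper's direct characterization of the minimizer. One small point: ``immediate'' for the $k{-}1$-atom claim glosses over the possibility that $\theta_*$ coincides with some $\theta_r$, $r\neq i,j$; this cannot occur for the minimizing pair (if $\theta_r$ lies on the segment $[\theta_i,\theta_j]$ then $\divclus(p_i\delta_{\theta_i},p_r\delta_{\theta_r})<\divclus(p_i\delta_{\theta_i},p_j\delta_{\theta_j})$), but a one-line check is needed.
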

Having presented the algorithm to choose and merge a mixing measure with $k$ atoms to $k-1$ atoms, we now describe the dendrogram of $G$ that emerges by repeatedly applying the merging procedure. Starting from $G = G^{(k)}$, at every step $\kappa$ from $k$ to $2$ (backward), we apply Algorithm~\ref{alg:merge-atom} to $G^{(\kappa)}$ to get $G^{(\kappa-1)}$ having $\kappa-1$ atoms.  

\begin{definition}[Dendrogram]\label{def:dendrogram-mixing-measure}
    The dendrogram of a mixing measure $G$ is a tuple $\Tcal(G) = (V, E, d)$, where set $V$ contains $k$ levels in which the $\kappa$-th level contains $\kappa$ atoms of $G^{(\kappa)}$ as vertices, set $E$ contains the edges specifying the merged vertices, and $d = (d^{(k)}, \dots, d^{(2)})$ where $d^{(\kappa)}$ is the minimal of atoms' dissimilarity $\divclus$ over all pairs of atoms of $G^{(\kappa)}$.
\end{definition} 
When we represent $\Tcal(G)$ on a graph (specifically a hierarchical tree), $d^{(\kappa)}$ is the height between $\kappa$-th level and $(\kappa-1)$-th level. The procedure to construct the dendrogram of $G$ is given by the Dendrogram Inferred Clustering (DIC) algorithm (Algorithm~\ref{alg:phylo-tree}). Figure~\ref{fig:tree-demonstration} shows an example of the dendrogram constructed from an overfitted mixing measure with seven atoms learned from $n$ i.i.d. data generated from a true mixture model with three components. Because of the overfitting, there may exist many redundant atoms in this mixing measure that estimate the same one among the true three atoms. Those atoms are merged along the levels of the dendrogram --- we will show later that this procedure turns out to possess an improved parameter estimation behaviour compared to the original mixing measure.
Moreover, as the overfitted mixing measure is a consistent estimate of the true mixing measure, even if at a slow rate, we expect that the dendrogram also has a limit in a precise sense. Indeed, it will be shown that the height of the dendrogram will be of the order $O(n^{-1/2})$ at all the overfitted levels, while it is $O(1)$ at the exact-fitted and all under-fitted levels. This asymptotic behaviour will be utilized to devise a cut through the dendrogram, leading to a consistent model selection scheme. 

\begin{figure}
    \centering
    {\includegraphics[width = 0.8\textwidth]{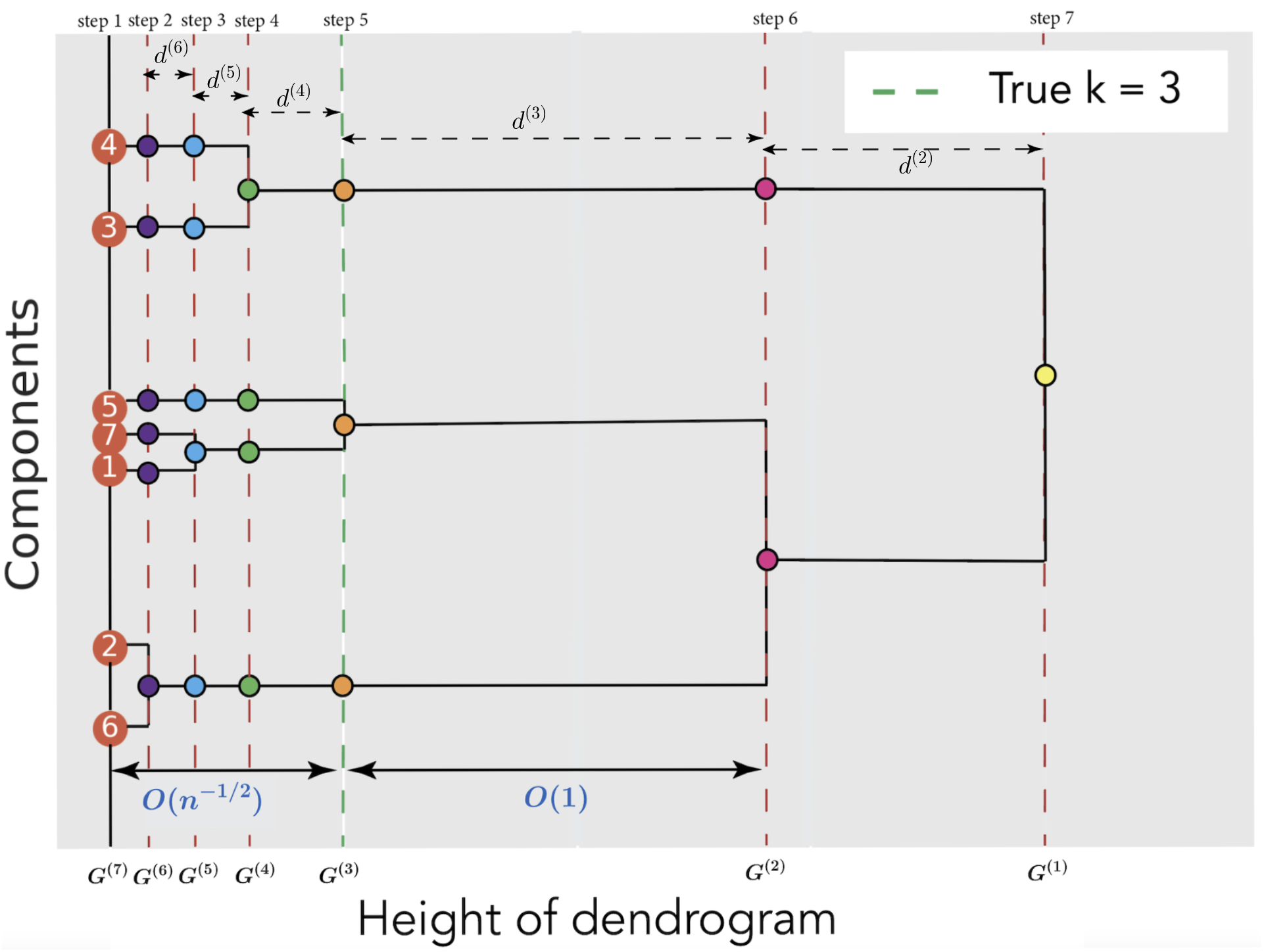}}
    \caption{\centering Dendrogram for the overfitted mixing measure}
    \label{fig:tree-demonstration}
\end{figure}

By merging atoms using dissimilarity $\divclus$ as in equation~\eqref{eq:dissimilarity-cluster}, our dendrogram of mixing measure is most similar to the dendrogram using centroid linkage in Agglomerative Hierarchical Clustering. A key distinction lies in the definition of $\divclus$, which is a product of two terms. The first term represents the harmonic mean of atoms' proportions, which helps to merge the subpopulation with a small proportion into nearby subpopulations. This is meaningful to the mixing measure estimate arising in overfitted mixture models because it can eliminate their excess masses. The second term in $\divclus$ is simply the usual distance metric between two centroids, which can be used to merge redundant atoms. Together, they are useful to post-process the overfitted mixing measures. We would like to highlight that other linkages, such as the single linkage, can also work in a similar manner. However, it requires more effort to deal with excess mass and redundant atoms separately. See Appendix~\ref{sec:single-linkage} for a discussion.

\subsection{Asymptotic properties of the dendrogram}
 
Now, we investigate the asymptotic behaviour of the dendrogram of exact-fitted and overfitted mixing measures in finite mixture models. We are concerned specifically with \textit{convergence rate of mixing measures}, the \textit{height}, and the \textit{likelihood of the model} at each level of the tree. Recall that the core of this theory is the inverse bounds that link distances between densities to those of mixing measures, where we lower bound $V(p_G, p_{G_0})$ by $W_1(G, G_0)$ in the exact-fitted setting, and by $W_2^2(G, G_0)$ in the overfitted setting. Because the cardinality of the support of the mixing measures in the tree varies as we merge, we need a more refined metric that captures both the behaviours of $W_1$ and $W_2^2$. For a mixing measure $G = \sum_{j=1}^{k} p_j \delta_{\theta_j} \in \Ocal_{k}$ and $G_0 \in \Ecal_{k_0}$, we define the following divergence:
\begin{equation}\label{eq:distace-Voronoi}
\divergence(G, G_0) = \sum_{i=1}^{k_0} \left(\left|\sum_{j\in V_i} p_j - p_{i}^0 \right| + \left\|\sum_{j\in V_i} p_j (\theta_j - \theta_i^0) \right\| + \sum_{j\in V_i} p_j \norm{\theta_j - \theta_i^0}^2\right),
\end{equation}
where $V_i = \{j\in [k]: \|\theta_j - \theta_i^0\| \leq \|\theta_j - \theta_{i'}^0\| \forall i'\neq i\}$. Because of the asymptotic relationship~\eqref{eq:asymp-Wasserstein}, we can see that $W_2^2(G, G_0)\lesssim \divergence(G, G_0) \lesssim W_1(G, G_0),$ as any of these goes to 0. Besides, $\divergence(G, G_0) \asymp W_1(G, G_0)$ when $G\in \Ecal_{k_0}$.
% Specifically, we have
% $$\divergence(G, G_0) \asymp \begin{cases}
%     W_2^2(G, G_0) & \text{for } G\in \Ecal_{k}, k > k_0, \\
%     W_1(G, G_0) & \text{for } G\in \Ecal_{k_0}. 
% \end{cases}$$
Hence, convergence in $\divergence$ is stronger than the typical convergence in $W_2^2$ seen in the literature. At the heart of our analysis is the following inverse bound:
\begin{lemma}\label{lem:inv-bound} Fix $G_0 \in \Ecal_{k_0}$. Suppose the family $f(\cdot |\theta)$ is second-order identifiable. Then, for $k \geq k_0$ and $G\in \Ocal_{k}$, as $V(p_{G}, p_{G_0}) \to 0$, we have $V(p_G, p_{G_0}) \gtrsim \divergence(G, G_0),$
where the multiplicative constant in the inequality depends only on $G_0, \Theta$, and $k$.
\end{lemma}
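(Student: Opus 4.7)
The plan is a proof by contradiction of the standard flavor used in \cite{Ho-Nguyen-Ann-16, Chen1992}, but tailored to the stronger divergence $\divergence$. Suppose the bound fails: there exists a sequence $G_n = \sum_{j=1}^{k_n} p_j^{(n)} \delta_{\theta_j^{(n)}} \in \Ocal_k$ with $V(p_{G_n}, p_{G_0}) \to 0$ and $V(p_{G_n}, p_{G_0}) / \divergence(G_n, G_0) \to 0$. Since $V(p_{G_n}, p_{G_0}) \to 0$ combined with second-order identifiability already implies $W_2(G_n, G_0) \to 0$ (by the weaker inverse bound), every atom of $G_n$ eventually falls in the Voronoi cell of a unique atom of $G_0$ and $\theta_j^{(n)} \to \theta_i^0$ whenever $j \in V_i$. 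In particular, after passing to a subsequence we may assume $k_n = k$ and a fixed assignment of indices to Voronoi cells.

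Next I would perform a second-order Taylor expansion of $p_{G_n} - p_{G_0}$ about each $\theta_i^0$. Grouping atoms in the Voronoi cell $V_i$ yields the representation
\begin{equation*}
p_{G_n}(x) - p_{G_0}(x) = \sum_{i=1}^{k_0} \Big[ A_i^{(n)} f(x|\theta_i^0) + \langle B_i^{(n)}, \nabla_\theta f(x|\theta_i^0)\rangle + \langle C_i^{(n)}, \nabla^2_\theta f(x|\theta_i^0)\rangle \Big] + R_n(x),
\end{equation*}
where $A_i^{(n)} = \sum_{j\in V_i} p_j^{(n)} - p_i^0$, $B_i^{(n)} = \sum_{j\in V_i} p_j^{(n)} (\theta_j^{(n)} - \theta_i^0)$, $C_i^{(n)} = \tfrac{1}{2}\sum_{j\in V_i} p_j^{(n)} (\theta_j^{(n)}-\theta_i^0)(\theta_j^{(n)}-\theta_i^0)\tp$, and the remainder satisfies $\int |R_n(x)|\,dx = o(D_n)$ with $D_n := \divergence(G_n,G_0)$ (this uses the uniform integrability of the third-order derivatives of $f$, which I would verify from the second-order identifiability assumption plus standard dominated-convergence arguments as in \cite{Ho-Nguyen-Ann-16}).

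Now I would normalize by $D_n$. The key observation, which turns this into a proof about the refined divergence $\divergence$ rather than just $W_2^2$, is that each of the three groups of Taylor coefficients controls a separate term of $\divergence$: clearly $|A_i^{(n)}|$ matches the first term; $\|B_i^{(n)}\|$ matches the second term; and since $C_i^{(n)}$ is a sum of positive semidefinite rank-one matrices, its trace norm $\tr(C_i^{(n)}) = \tfrac12 \sum_{j\in V_i} p_j^{(n)} \|\theta_j^{(n)} - \theta_i^0\|^2$ is exactly (half of) the third term, and hence $\|C_i^{(n)}\|_F \gtrsim \tr(C_i^{(n)})$. Consequently $\sum_i (|A_i^{(n)}| + \|B_i^{(n)}\| + \|C_i^{(n)}\|_F) \asymp D_n$, so after dividing by $D_n$ and extracting a further subsequence we get coefficient limits $(\bar A_i, \bar B_i, \bar C_i)$, not all zero, with $\sum_i (|\bar A_i| + \|\bar B_i\| + \|\bar C_i\|_F) \geq c > 0$. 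The integrated triangle inequality together with $V(p_{G_n},p_{G_0})/D_n \to 0$ then forces
\begin{equation*}
\sum_{i=1}^{k_0} \Big[\bar A_i f(x|\theta_i^0) + \langle \bar B_i, \nabla_\theta f(x|\theta_i^0)\rangle + \langle \bar C_i, \nabla^2_\theta f(x|\theta_i^0)\rangle\Big] = 0 \quad \text{for a.e. } x.
\end{equation*}
Second-order strong identifiability at the distinct atoms $\theta_1^0, \dots, \theta_{k_0}^0$ forces every $\bar A_i$, $\bar B_i$, $\bar C_i$ to vanish, contradicting $\sum_i (|\bar A_i| + \|\bar B_i\| + \|\bar C_i\|_F) \geq c$.

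The main obstacle is the third step: ensuring that normalization by $\divergence$ (which is weaker than a pure $W_2^2$ normalization in the sense that it treats weight/first-moment/second-moment contributions separately) still keeps the Taylor coefficients bounded and bounded away from zero in aggregate. The clean PSD-trace identity for $C_i^{(n)}$ is what makes this work; a secondary technical point is controlling the Taylor remainder $R_n$ in $L^1$ uniformly in the location of atoms within each Voronoi cell, which is handled by standard dominated-convergence estimates under the compactness of $\Theta$ and regularity of $f$.
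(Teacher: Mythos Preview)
Your overall architecture is right and matches the paper's argument closely, including the PSD/trace observation that makes the second-order coefficient control the quadratic term of $\divergence$. However, there is a genuine gap in Step~1 of your setup. From $W_2(G_n,G_0)\to 0$ you conclude that ``$\theta_j^{(n)}\to\theta_i^0$ whenever $j\in V_i$.'' This is false: an atom $\theta_j^{(n)}$ whose weight $p_j^{(n)}\to 0$ may converge (along a subsequence) to any point $\tilde\theta\in V_i$ with $\tilde\theta\neq\theta_i^0$, and Wasserstein convergence does not prevent this. For such a wandering atom the second-order Taylor remainder of $f(\cdot\mid\theta_j^{(n)})-f(\cdot\mid\theta_i^0)$ around $\theta_i^0$ is $O(\|\theta_j^{(n)}-\theta_i^0\|^3)$, not $o(\|\theta_j^{(n)}-\theta_i^0\|^2)$, because $\|\theta_j^{(n)}-\theta_i^0\|\not\to 0$. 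Its weighted contribution to $R_n$ is then of order $p_j^{(n)}$, while its contribution to $D_n$ is also of order $p_j^{(n)}$ (since $\|\theta_j^{(n)}-\theta_i^0\|$ is bounded away from zero). Hence $\int|R_n|\,dx=o(D_n)$ fails, and the normalized expansion does not converge to the clean linear combination you wrote.

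The paper's proof fixes exactly this: after passing to a subsequence it splits the atoms into those genuinely converging to some $\theta_i^0$ and ``excess mass'' atoms with $\tilde p_{ijt}^n\to 0$, $\tilde\theta_{ijt}^n\to\tilde\theta_{ij}^0\neq\theta_i^0$. It Taylor-expands only the first kind and leaves the second kind as $\tilde p_{ijt}^n f(x\mid\tilde\theta_{ijt}^n)$ unexpanded. One then checks $\tilde p_{ijt}^n\lesssim D_n$, includes these weights among the coefficients being normalized, and in the limit obtains additional zero-order terms $\pi_{ijt}f(x\mid\tilde\theta_{ij}^0)$ at the \emph{new} distinct parameters $\tilde\theta_{ij}^0$. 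The contradiction then invokes second-order strong identifiability at the enlarged set $\{\theta_i^0\}\cup\{\tilde\theta_{ij}^0\}$, not just at $\{\theta_i^0\}$. Your argument would go through verbatim if the lemma were restricted to $\Ocal_{k,c_0}$; for $\Ocal_k$ you must add this wandering-atom bookkeeping.
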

This inverse bound indicates that if we correctly merge all the redundant atoms in each Voronoi cell together by
$p_i^* = \sum_{j\in V_i} p_j$ and $\theta_i^* = \dfrac{1}{p_i^*}\sum_{j\in V_i} p_j \theta_j$ for all 
$i\in [k_0]$, then the mixing measure $G_* = \sum_{i=1}^{k_0} p_i^* \delta_{\theta_i^*}$ satisfies $V(p_{G_*}, p_{G_0})\gtrsim \divergence(G_*, G_0) \asymp W_1(G_*, G_0)$, which leads to the fast convergence rate for $G_*$. But this is not possible because $(V_i)_{i=1}^{k_0}$ and $k_0$ are actually unknown. However, we will show in Theorem \ref{thm:asymptotic-dendrogram} that our sequential merging scheme (Algorithm~\ref{alg:merge-atom}) achieves exactly this behaviour in an asymptotic sense.
For a mixing measure $G \in \Ecal_{k}$, denote $G^{(k)}, \dots, G^{(1)}$ by the latent mixing measures induced from the dendrogram (Algorithm~\ref{alg:phylo-tree}) with $G^{(\kappa)}$ having $\kappa$ atoms, for $\kappa = 1, \dots, k$. We first establish a desirable property:
\begin{lemma}\label{lem:order-mix-measure}
    As $\divergence(G^{(k)}, G_0) \to 0$, we have
        $\divergence(G^{(k)}, G_0) \gtrsim \divergence(G^{(k-1)}, G_0) \gtrsim \dots \gtrsim \divergence(G^{(k_0)}, G_0)$,
    where the multiplicative constants depend only on $G_0, \Theta$, and $k$.
\end{lemma}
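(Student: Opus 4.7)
The plan is to induct on $\kappa$ from $k$ down to $k_0+1$ and show that, provided $\delta_\kappa:=\divergence(G^{(\kappa)},G_0)$ is small enough, the merging step produces $G^{(\kappa-1)}$ with $\delta_{\kappa-1}\leq C\,\delta_\kappa$ for a constant $C$ depending only on $G_0,\Theta,k$. Iterating this bound $k-k_0$ times yields the chain of $\gtrsim$ inequalities in the lemma, and the hypothesis $\delta_k\to 0$ ensures the required smallness is eventually met at every level.

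At the start of each inductive step, the definition of $\divergence$ forces $|m_i-p_i^0|\leq \delta_\kappa$ for each Voronoi cell $V_i=V_i(G^{(\kappa)})$, so every cell is non-empty for small $\delta_\kappa$; because $\kappa>k_0$, pigeonhole yields a cell $V_{i_0}$ containing at least two atoms $a,b$. The estimate $\|\theta_a-\theta_b\|^2\leq 2(\|\theta_a-\theta_{i_0}^0\|^2+\|\theta_b-\theta_{i_0}^0\|^2)$ combined with $\frac{p_a p_b}{p_a+p_b}\leq\min(p_a,p_b)$ yields $\divclus(a,b)\leq 4\delta_\kappa$, so the pair $(a^*,b^*)$ chosen by Algorithm~\ref{alg:merge-atom} satisfies $\divclus(a^*,b^*)\leq 4\delta_\kappa$. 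If $a^*,b^*$ happen to lie in the same Voronoi cell $V_{i_0}$, convexity of $V_{i_0}$ keeps $\theta_*=(p_{a^*}\theta_{a^*}+p_{b^*}\theta_{b^*})/p_*$ inside it; the identities $p_{a^*}+p_{b^*}=p_*$ and $p_{a^*}\theta_{a^*}+p_{b^*}\theta_{b^*}=p_*\theta_*$ leave the mass and the centered first-moment terms of $\divergence$ unchanged in every cell, while convexity of $\|\cdot\|^2$ gives $p_*\|\theta_*-\theta_{i_0}^0\|^2\leq p_{a^*}\|\theta_{a^*}-\theta_{i_0}^0\|^2+p_{b^*}\|\theta_{b^*}-\theta_{i_0}^0\|^2$, so the second-moment term cannot grow; hence $\delta_{\kappa-1}\leq\delta_\kappa$.

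The delicate case is a cross-cell merge $a^*\in V_i$, $b^*\in V_j$ with $i\neq j$. Writing $c_0:=\min_{i'\neq j'}\|\theta_{i'}^0-\theta_{j'}^0\|>0$ and $D:=\mathrm{diam}(\Theta)$, I first argue $\min(p_{a^*},p_{b^*})=O(\delta_\kappa)$ via two sub-cases: if $\|\theta_{a^*}-\theta_{b^*}\|\geq c_0/2$, the bound $\divclus(a^*,b^*)\leq 4\delta_\kappa$ combined with $\frac{p_ap_b}{p_a+p_b}\geq\min(p_a,p_b)/2$ yields it directly; if $\|\theta_{a^*}-\theta_{b^*}\|<c_0/2$, the triangle inequality $\|\theta_{a^*}-\theta_i^0\|+\|\theta_{b^*}-\theta_j^0\|\geq c_0-\|\theta_{a^*}-\theta_{b^*}\|\geq c_0/2$ forces at least one of $\|\theta_{a^*}-\theta_i^0\|,\|\theta_{b^*}-\theta_j^0\|$ to be $\geq c_0/4$, and the second-moment bound $p\|\theta-\theta^0\|^2\leq\delta_\kappa$ then gives a small weight. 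Taking $p_{a^*}\leq p_{b^*}$ with $p_{a^*}\leq C_1\delta_\kappa$, I then split on $p_{b^*}$: (a) if $p_{b^*}\geq L\delta_\kappa$ for a sufficiently large constant $L=L(G_0,\Theta)$, then $\|\theta_*-\theta_j^0\|\leq p_{a^*}D/p_{b^*}+\sqrt{\delta_\kappa/p_{b^*}}\leq C_1 D/L+1/\sqrt L<c_0/2$, so $\theta_*\in V_j$ and only the mass $p_{a^*}=O(\delta_\kappa)$ crosses a cell boundary, making each of the three terms of $\divergence(G^{(\kappa-1)},G_0)$ exceed its counterpart in $\divergence(G^{(\kappa)},G_0)$ by at most $O(\delta_\kappa)$; (b) if $p_{b^*}<L\delta_\kappa$, then $p_*=O(\delta_\kappa)$, and irrespective of where $\theta_*$ lands each of the three terms changes by at most $O(p_*D^2)=O(\delta_\kappa)$. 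Either way $\delta_{\kappa-1}\leq(1+C')\delta_\kappa$.

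The main obstacle is the cross-cell case. Making it benign requires both the harmonic-mean structure of $\divclus$, which forces $\min(p_{a^*},p_{b^*})=O(\delta_\kappa)$ whenever such a merge is selected, and a careful tracking of which cell the new atom $\theta_*$ lands in as a function of $p_{b^*}$. All the remaining steps reduce to convexity of Euclidean Voronoi cells and of $\|\cdot\|^2$, together with the pigeonhole observation that guarantees a within-cell pair with $\divclus\leq 4\delta_\kappa$ exists in the first place.
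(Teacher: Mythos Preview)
Your proof is correct and takes a genuinely different route from the paper's own argument. The paper establishes each step $\divergence(G^{(\kappa)},G_0)\gtrsim\divergence(G^{(\kappa-1)},G_0)$ by contradiction: it assumes a sequence $G_n$ along which the ratio tends to zero, passes to subsequences so that every atom has a limit (allowing for vanishing-weight ``excess mass'' atoms with their own limit points), and then performs a case analysis on which pair Algorithm~\ref{alg:merge-atom} selects. In the same-cell cases, convexity of $\|\cdot\|^2$ gives $\divergence(G_n^{(k-1)},G_0)\leq\divergence(G_n,G_0)$ outright, contradicting the ratio hypothesis; in the cross-cell cases, the paper exhibits a strictly better within-cell pair, so the algorithm could not have chosen a cross-cell merge in the first place. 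Your argument is instead direct and quantitative: you never exclude cross-cell merges, but show that whenever one is selected the harmonic-mean form of $\divclus$ forces the lighter weight to be $O(\delta_\kappa)$, and then track explicitly the $O(\delta_\kappa)$ perturbation this induces on each of the three terms of $\divergence$, splitting further on whether the heavier weight is large (so $\theta_*$ stays in its cell) or also $O(\delta_\kappa)$ (so any relocation of $\theta_*$ costs only $O(\delta_\kappa)$). Both proofs rest on the same two ingredients---convexity of $\|\cdot\|^2$ for the within-cell case and the harmonic-mean structure of $\divclus$ for the cross-cell case---but the paper uses the latter to \emph{exclude} cross-cell merges asymptotically, whereas you use it to \emph{absorb} their cost. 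Your route yields constructive constants and avoids compactness/subsequence machinery; the paper's route is somewhat shorter once one accepts the contradiction framework.
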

% Hence, at the exact-fitted level, we have $V(p_{G}, p_{G_0})\gtrsim $
% Lemma~\ref{lem:inv-bound} and~\ref{lem:order-mix-measure} together imply that the convergence rates of mixing measures on the dendrogram are getting better as we merge. 
% {\color{blue} add part (b) to lemma 3 to talk about the variance of $(G^{(\kappa)})_{\kappa=1}^{k}$.}
  
\paragraph{Fast convergence of mixing measures arising in the dendrogram.} Now let $\widehat{G}_n\in \Ecal_k$ be the MLE of $G_0\in \Ecal_{k_0}$ based on $n$ i.i.d. samples from $p_{G_0}$, where $k \geq k_0$. By combining the results above, we have the asymptotic behaviour of all latent mixing measures in the dendrogram of $\widehat{G}_n$. Denote by $\widehat{G}_n = \widehat{G}_n^{(k)}, \widehat{G}_n^{(k-1)}, \dots, \widehat{G}_n^{(1)}$ the mixing measures in the dendrogram of $\widehat{G}_n$ and $G_0 = G_0^{(k_0)}, G_0^{(k_0-1)}, \dots, G_0^{(1)}$ on the dendrogram of true $G_0$. 

\begin{theorem}\label{thm:asymptotic-dendrogram}
    Suppose that $f(x|\theta)$ is second-order strongly identifiable and satisfies condition (\textbf{B}.) Then, there exist universal constants $c_1, c_2>0$ such that with probability at least $1 - c_1 n^{-c_2}$, we have $
   \divergence(\widehat{G}_n^{(\kappa)}, G_0) \lesssim \left(\dfrac{\log n}{n}\right)^{1/2}$ for all $\kappa\in [k_0, k]$, where constant in this inequality depends on $G_0, \Theta$ and $k$. In particular, with the same probability, we have
    \begin{equation}
        W_2(\widehat{G}_n^{(\kappa)}, G_0) \lesssim \left(\dfrac{\log n}{n}\right)^{1/4}, \quad \text{and}\quad W_1(\widehat{G}_n^{(\kappa')}, G_0^{(\kappa')}) \lesssim \left(\dfrac{\log n}{n}\right)^{1/2}
    \end{equation}
    for all $\kappa\in [k_0 + 1, k]$ and $\kappa'\in [k_0]$.
\end{theorem}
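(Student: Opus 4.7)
My plan is to build a chain of inequalities that ties the density rate to rates for the mixing measures at every level of the dendrogram. The starting point is Proposition~\ref{prop:density-rate}: on the high-probability event it supplies, $V(p_{\widehat G_n}, p_{G_0}) \leq \sqrt{2}\, h(p_{\widehat G_n}, p_{G_0}) \lesssim (\log n / n)^{1/2}$. In particular this vanishes, so the inverse bound of Lemma~\ref{lem:inv-bound} applies and gives $\divergence(\widehat G_n^{(k)}, G_0) \lesssim V(p_{\widehat G_n^{(k)}}, p_{G_0}) \lesssim (\log n / n)^{1/2}$. Since the right-hand side also vanishes, I then invoke Lemma~\ref{lem:order-mix-measure} rung by rung from $\kappa = k$ down to $\kappa = k_0$; this transports the $\divergence$ bound without loss and yields the central conclusion
\begin{equation*}
\divergence(\widehat G_n^{(\kappa)}, G_0) \lesssim (\log n / n)^{1/2} \qquad \text{for every } \kappa \in [k_0, k].
\end{equation*}

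The Wasserstein bounds follow from the comparisons $W_2^2(G, G_0) \lesssim \divergence(G, G_0) \lesssim W_1(G, G_0)$ noted after equation~\eqref{eq:distace-Voronoi}. For strictly overfitted $\kappa \in [k_0+1, k]$ the first inequality directly produces $W_2(\widehat G_n^{(\kappa)}, G_0) \lesssim (\log n / n)^{1/4}$. At level $\kappa = k_0$, Proposition~\ref{prop:Wasserstein-variational} guarantees that $\widehat G_n^{(k_0)} \in \Ecal_{k_0}$, i.e.\ it has exactly $k_0$ distinct atoms. A pigeonhole argument from the $\divergence$ bound then forces these $k_0$ atoms to populate one Voronoi cell of $G_0$ each: if any cell $V_i$ were empty, the term $|\sum_{j \in V_i} \hat p_j - p_i^0| = p_i^0$ would appear in $\divergence$, contradicting the bound since $p_i^0$ is bounded below. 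This places us in the exact-fit regime where $\divergence \asymp W_1$, delivering $W_1(\widehat G_n^{(k_0)}, G_0) \lesssim (\log n / n)^{1/2}$.

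For underfitted $\kappa' < k_0$, Lemma~\ref{lem:order-mix-measure} is silent and I must track the merging procedure through the remainder of the tree. Both dendrograms run the same deterministic algorithm on their respective inputs, and the merge map $(p_i, p_j, \theta_i, \theta_j) \mapsto (p_*, \theta_*)$ together with $\divclus$ are Lipschitz on compacta where proportions stay bounded below --- precisely the regime we are in after level $k_0$, since excess-mass atoms have already been absorbed. Under a generic-position hypothesis on $G_0$ (at each of its merge steps, the minimiser in~\eqref{eq:min-pair-cluster} is unique with $\Omega(1)$ gap to the next pair), an $O((\log n / n)^{1/2})$ perturbation cannot reshuffle the combinatorial choice. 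A downward induction on $\kappa'$, using the base case $\kappa' = k_0$ just established, then propagates $W_1(\widehat G_n^{(\kappa')}, G_0^{(\kappa')}) \lesssim (\log n / n)^{1/2}$ at the claimed Lipschitz rate.

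I expect the delicate step to be the $\kappa = k_0$ transition: the $\divergence$ bound only controls quantities summed over Voronoi cells, and it is not a priori obvious that the greedy merges executed on $\widehat G_n$ actually coalesce within cells before jumping across. The pigeonhole above sidesteps the issue for the end state, but its validity leans on the product structure of $\divclus$. Within a Voronoi cell, either two competing atoms both carry $\Omega(1)$ mass (so their distance is $O((\log n / n)^{1/4})$ and $\divclus \lesssim (\log n / n)^{1/2}$), or one has mass $\lesssim (\log n / n)^{1/2}$ (so the harmonic-mean factor alone forces $\divclus \lesssim (\log n / n)^{1/2}$); across cells, the separation $\|\theta_i^0 - \theta_{i'}^0\| \gtrsim 1$ keeps $\divclus$ between two $\Omega(1)$-mass atoms bounded below. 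Verifying that the sequential minima in~\eqref{eq:min-pair-cluster} respect this dichotomy until exactly $k_0$ atoms remain is the most combinatorially involved piece, and is where I anticipate having to spend the most proof-technical effort.
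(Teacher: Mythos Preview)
Your proposal is correct and follows the paper's approach: density rate (Proposition~\ref{prop:density-rate}) $\to$ inverse bound (Lemma~\ref{lem:inv-bound}) $\to$ descent via Lemma~\ref{lem:order-mix-measure} for the overfitted levels, then Lipschitz stability of the merge map plus induction for the underfitted levels. One remark: the combinatorial worry in your final paragraph is already resolved inside the proof of Lemma~\ref{lem:order-mix-measure}, whose case analysis rules out cross-cell merges while $\kappa > k_0$, so you need not re-verify this; and the generic-position hypothesis you flag for the underfitted part (unique argmin of $\divclus$ at each step of $G_0$'s dendrogram) is indeed required and is used tacitly by the paper as well.
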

This theorem establishes that the latent mixing measure obtained from the overfitted $\widehat{G}_n$ at the level $\kappa\leq k_0$ will have the root-$n$ convergence rate to the corresponding measure obtained from $G_0$, even though the initial mixing measure $\widehat{G}_n$ is overfitted and has a slower $n^{-1/4}$ convergence rate, and there is no need to re-fit the model with varying numbers of components from data. Later in Section \ref{sec:merge-weak} we will see even more substantial efficiency gain for weakly identifiable overfitted mixture models.

\paragraph{Heights of the dendrogram.} We now study the asymptotic property of the heights $(d^{(\kappa)})_{\kappa=1}^{k}$. We will show that the heights of the dendrogram of the estimated mixing measure converge to those of the true estimator at the root-$n$ rate. Denote the sequence of heights by \begin{equation}\label{eq:dn-kappa}
d_n^{(\kappa)} = \min \dfrac{1}{(\widehat{p}_i)^{-1} + (\widehat{p}_j)^{-1}} \norm{\widehat{\theta}_i - \widehat{\theta}_j}^2 > 0,
\end{equation}
where the minimum is taken over all pairs of atoms $(\widehat{p}_i \delta_{\widehat{\theta}_i}, \widehat{p}_j \delta_{\widehat{\theta}_j})$ of $\widehat{G}_n^{(\kappa)}$, for $\kappa \in [k]$. The corresponding heights of the dendrogram of true mixing measure $G_0$ are denoted by
$d_0^{(\kappa)} =  \min \dfrac{1}{p_i^{-1} + p_j^{-1}} \norm{\theta_i - \theta_j}^2 > 0,$
where the minimum is taken over all pairs of atoms $(p_i \delta_{\theta_i}, p_j \delta_{\theta_j})$ of $G_0^{(\kappa)}$, for $\kappa \in [k_0]$.
\begin{theorem}\label{thm:asymptotic-height}
    With the same condition and probability as in Theorem~\ref{thm:asymptotic-dendrogram}, we have
    \begin{equation}
        d_n^{(\kappa)} \lesssim \left(\dfrac{\log n}{n}\right)^{1/2}, \quad \text{and}\quad
        \left|d_n^{(\kappa')} - d_0^{(\kappa')} \right|
        \lesssim \left(\dfrac{\log n}{n}\right)^{1/2},
    \end{equation}
    for all $\kappa\in [k_0 + 1, k]$, $\kappa'\in [k_0]$. The multiplicative constants depend on $G_0, \Theta$ and $k$.
\end{theorem}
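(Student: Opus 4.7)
The plan is to prove the two bounds separately, both leveraging Theorem~\ref{thm:asymptotic-dendrogram} on the same high-probability event. The overfitted levels $\kappa \in [k_0+1, k]$ are handled by pigeonholing atoms of $\widehat{G}_n^{(\kappa)}$ into the $k_0$ Voronoi cells induced by $G_0$, while the under- and exact-fitted levels $\kappa' \in [k_0]$ are handled by a stability/bijection argument driven by the $W_1$ rate.

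For $\kappa > k_0$, the key observation is that $\widehat{G}_n^{(\kappa)}$ has $\kappa > k_0$ atoms, so by the pigeonhole principle at least one Voronoi cell $V_{i_0}$ of some $\theta_{i_0}^0$ contains two distinct atoms $\widehat{p}_a\delta_{\widehat{\theta}_a}$ and $\widehat{p}_b\delta_{\widehat{\theta}_b}$ of $\widehat{G}_n^{(\kappa)}$. Since $d_n^{(\kappa)}$ is a minimum over pairs, I can bound it from above by the $\divclus$ of this particular pair. Using the elementary inequality $\tfrac{pq}{p+q} \le \min(p,q)$ together with $\|u-v\|^2 \le 2\|u\|^2+2\|v\|^2$, I obtain
\[
d_n^{(\kappa)} \le \divclus(\widehat{p}_a\delta_{\widehat{\theta}_a},\widehat{p}_b\delta_{\widehat{\theta}_b})
\le 2\bigl(\widehat{p}_a\|\widehat{\theta}_a-\theta_{i_0}^0\|^2 + \widehat{p}_b\|\widehat{\theta}_b-\theta_{i_0}^0\|^2\bigr) \le 2\,\divergence(\widehat{G}_n^{(\kappa)}, G_0),
\]
where the final step uses that both summands appear in the formula~\eqref{eq:distace-Voronoi} for $\divergence$. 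Theorem~\ref{thm:asymptotic-dendrogram} then gives the claimed rate $(\log n/n)^{1/2}$.

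For $\kappa' \in [k_0]$, I would start from the bound $W_1(\widehat{G}_n^{(\kappa')}, G_0^{(\kappa')}) \lesssim (\log n/n)^{1/2}$ supplied by Theorem~\ref{thm:asymptotic-dendrogram}. Writing $G_0^{(\kappa')} = \sum_i q_i^0\delta_{\xi_i^0}$, the first step is to produce a bijection between atoms: on the high-probability event with $n$ large, the asymptotic characterization~\eqref{eq:asymp-Wasserstein} of $W_1$ rules out any empty Voronoi cell of $\xi_i^0$ (emptiness would force $W_1 \gtrsim q_i^0 > 0$, contradicting $W_1 \to 0$), and since both measures have exactly $\kappa'$ atoms, every cell then contains exactly one atom of $\widehat{G}_n^{(\kappa')}$. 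Using $q_i^0 > 0$ and~\eqref{eq:asymp-Wasserstein} again gives a permutation $\sigma$ with $|\widehat{p}_{\sigma(i)} - q_i^0| + \|\widehat{\theta}_{\sigma(i)} - \xi_i^0\| \lesssim W_1$. Then the Lipschitz continuity of $(p_1,p_2,\theta_1,\theta_2) \mapsto \tfrac{p_1 p_2}{p_1+p_2}\|\theta_1-\theta_2\|^2$ on the compact set where proportions are bounded below by $\tfrac12\min_i q_i^0$, combined with the $1$-Lipschitzness of the pairwise minimum, yields $|d_n^{(\kappa')} - d_0^{(\kappa')}| \lesssim W_1 \lesssim (\log n/n)^{1/2}$. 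The main subtle point is the bijection step, which implicitly relies on $G_0^{(\kappa')}$ having $\kappa'$ \emph{distinct} atoms with strictly positive weights; positivity is automatic because Algorithm~\ref{alg:merge-atom} only sums weights, and distinctness follows from Proposition~\ref{prop:Wasserstein-variational}, which ensures $G_0^{(\kappa')} \in \Ecal_{\kappa'}$ whenever $G_0 \in \Ecal_{k_0}$.
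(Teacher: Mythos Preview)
Your proposal is correct and follows essentially the same approach as the paper. For the overfitted levels the paper uses the exact identity
\[
p_a\|\theta_a-\theta_{i_0}^0\|^2 + p_b\|\theta_b-\theta_{i_0}^0\|^2 = p_*\|\theta_*-\theta_{i_0}^0\|^2 + \divclus(p_a\delta_{\theta_a}, p_b\delta_{\theta_b})
\]
in place of your two elementary inequalities, and for $\kappa'\le k_0$ it simply points back to the atom-wise Lipschitz bound~\eqref{eq:conv-underfit-levels} derived inside the proof of Theorem~\ref{thm:asymptotic-dendrogram}, which is precisely the bijection-plus-Lipschitz argument you spell out.
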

Hence, as $n\to \infty$, $d_n^{(\kappa)} = O((\log n / n)^{1/2})$ for all $\kappa > k_0$ but $d_n^{(\kappa)}$ tends to $d_0^{(\kappa)} > 0$ for all $\kappa \leq k_0$, with a high probability. An illustration is given by Figure~\ref{fig:tree-demonstration}. We will exploit this asymptotic behaviour as a model selection criterion and will discuss this further in Section~\ref{sec:model-select-strong}. Note that by Proposition~\ref{prop:Wasserstein-variational}, $d_0^{(\kappa)}$ is the squared length of the projection of $G_0^{(\kappa)}$ onto $\Ocal_{\kappa-1}$. So $d_n^{(k_0)}, \dots, d_n^{(k_*)}$ gets smaller as $G_0$ is closer to the subspace of mixing measures with at most $k_* - 1$ atoms, for some $k_* > k_0$, which leads to more difficulties in estimating the true number of components $k_0$. This partially explains the slow minimax rate and convergence rate when we allow the atoms to arbitrarily overlap~\cite{heinrich2018strong,wei2023minimum}.
  
\paragraph{Likelihood of mixing measures in the dendrogram.}
To develop a suitable model selection technique, it is essential to study the behaviour of the likelihood function. Relevant notions include the entropy and relative entropy (a.k.a. the KL divergence), which arise as the expected log-likelihood. For a density $p_{G}$, let its entropy be denoted by $H(p_{G}) = - \Ebb_{X\sim p_{G}} \log p_{G}(X)$,
and the average log-likelihood by \begin{equation}\label{eq:ln-kappa}
\overline{\ell}_n(G) = \dfrac{1}{n}\sum_{i=1}^{n} \log p_{G}(x_i).
\end{equation}
The following theorem establishes the convergence behaviour of the likelihood function $\overline{\ell}_n$. We need a mild technical condition on the relative moments of the model's probability density ratios.
%To show the convergence of $\overline{\ell}_n$, it requires to bound the KL divergence between $p_{G_0}$ and $p_{\widehat{G}_n^{(\kappa)}}$, which can be done using our previous developed result and an upper bound of KL divergence by Hellinger distance~\cite{wong1995probability}. To do so, we introduce an extra assumption on the relative moments of $p_{G_0}/p_{G}$.
\begin{condition}[\textbf{K}.]
There exists $\delta, \epsilon_0, M > 0$ such that $\displaystyle\int \left(\dfrac{p_{G_0}(x)}{p_{G}(x)}\right)^{\delta} p_{G_0}(x) dx \leq M$ for any $G \in \Ocal_k(\Theta)$ satisfying $W_2(G, G_0)\leq \epsilon_0$.
\end{condition}

\begin{theorem}\label{thm:asymptotic-likelihood}
Assume the conditions in Theorem~\ref{thm:asymptotic-dendrogram} and condition (\textbf{K}.). Suppose further that $h(f(\cdot|\theta), f(\cdot, \theta'))\lesssim \norm{\theta-\theta'} \forall \theta, \theta'\in \Theta$. Then, with the same probability as in Theorem~\ref{thm:asymptotic-dendrogram}, 
\begin{equation*}
    \left|\overline{\ell}_n(\widehat{G}_n^{(\kappa)}) + H(p_{G_0})\right| \lesssim \left(\dfrac{\log n}{n}\right)^{1/4},\quad \forall \kappa\in [k_0, k].
\end{equation*}
Assume additionally that there exists a measurable function $m: \Xcal\to \Rbb$ such that $\sup_{G\in \Ocal_{\kappa}} |\log p_{G}(x)| \leq m(x)$ for all $x\in \Xcal$ and $\kappa < k_0$. Then we have
\begin{equation*}
    \overline{\ell}_n(\widehat{G}_n^{(\kappa)}) + H(p_{G_0}) \to - KL(p_{G_0} \| p_{G_{0}^{(\kappa)}}) < 0, \quad \forall \kappa\in [k_0 - 1],
\end{equation*}
in $\Pbb_{p_{G_0}}$-probability as $n\to \infty$.
\end{theorem}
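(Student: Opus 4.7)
The plan is to decompose $\overline{\ell}_n(\widehat{G}_n^{(\kappa)}) + H(p_{G_0})$ into a Kullback--Leibler divergence, an empirical process term, and a CLT remainder. Setting $f_G(x) := \log(p_G(x)/p_{G_0}(x))$, one has
\begin{equation*}
\overline{\ell}_n(\widehat{G}_n^{(\kappa)}) + H(p_{G_0}) \;=\; -KL(p_{G_0}\|p_{\widehat{G}_n^{(\kappa)}}) \;+\; \Bigl[\tfrac{1}{n}\sum_i f_{\widehat{G}_n^{(\kappa)}}(x_i) - \Ebb f_{\widehat{G}_n^{(\kappa)}}(X)\Bigr] \;+\; \Bigl[\tfrac{1}{n}\sum_i \log p_{G_0}(x_i) + H(p_{G_0})\Bigr],
\end{equation*}
where the last bracket is $O_{\Pbb}(n^{-1/2})$ by the CLT (finite variance of $\log p_{G_0}(X)$ is a consequence of (\textbf{K})). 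Everything thus reduces to controlling the KL term and the empirical process term.

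For Part 1 ($\kappa \geq k_0$), I would first upgrade the Wasserstein rate from Theorem~\ref{thm:asymptotic-dendrogram} to a Hellinger rate on densities. The Lipschitz assumption $h(f(\cdot|\theta), f(\cdot|\theta'))\lesssim \|\theta-\theta'\|$ together with Cauchy--Schwarz applied to any $W_2$-coupling $\pi$ of the weights --- which gives $\sqrt{p_G p_{G'}}\geq \sum_{ij}\pi_{ij}\sqrt{f(\cdot|\theta_i)f(\cdot|\theta'_j)}$ pointwise and hence $h^2(p_G, p_{G'})\leq \sum_{ij}\pi_{ij}h^2(f(\cdot|\theta_i), f(\cdot|\theta'_j))$ --- yields $h(p_G, p_{G'}) \lesssim W_2(G,G')$, so that $h(p_{\widehat{G}_n^{(\kappa)}}, p_{G_0})\lesssim (\log n/n)^{1/4}$. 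Condition (\textbf{K}) then upgrades Hellinger to KL via the classical estimate $KL(p\|q)\lesssim h^2(p,q)\log(1/h(p,q))$, giving $KL(p_{G_0}\|p_{\widehat{G}_n^{(\kappa)}}) = O((\log n)^{3/2}/\sqrt{n}) = o((\log n/n)^{1/4})$. For the empirical process term, the bracketing bound (\textbf{B}) transfers to $\Fcal_{\epsilon_n}:=\{f_G: h(p_G, p_{G_0})\leq \epsilon_n\}$ --- via the local estimate $|\log p_G - \log p_{G_0}|\lesssim |p_G - p_{G_0}|/p_{G_0}$ combined with the lower-tail control on $p_{G_0}$ coming from (\textbf{K}) --- and a standard maximal inequality then yields $\sup_{f\in \Fcal_{\epsilon_n}}|(\tfrac{1}{n}\sum_i - \Ebb)f|\lesssim \epsilon_n\sqrt{\log n/n}$ with high probability, which for $\epsilon_n\asymp (\log n/n)^{1/4}$ is again $o((\log n/n)^{1/4})$. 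Assembling the three pieces gives the stated rate.

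For Part 2 ($\kappa < k_0$), Theorem~\ref{thm:asymptotic-dendrogram} supplies $W_1(\widehat{G}_n^{(\kappa)}, G_0^{(\kappa)}) \to 0$ in $\Pbb$-probability. I would (i) argue $KL(p_{G_0}\|p_{\widehat{G}_n^{(\kappa)}}) \to KL(p_{G_0}\|p_{G_0^{(\kappa)}})$ by dominated convergence --- pointwise $p_{\widehat{G}_n^{(\kappa)}}\to p_{G_0^{(\kappa)}}$ from the Lipschitz hypothesis, with dominating envelope $m(x) + |\log p_{G_0}(x)|$ whose $p_{G_0}$-integrability follows from (\textbf{K}) --- and (ii) show the empirical process term vanishes. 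For (ii), the class $\{f_G: G\in \Ocal_\kappa\}$ is dominated by the same envelope and has bracketing entropy controlled by (\textbf{B}), hence is Glivenko--Cantelli, so $\tfrac{1}{n}\sum_i f_{\widehat{G}_n^{(\kappa)}}(x_i) - \Ebb f_{\widehat{G}_n^{(\kappa)}}(X) \to 0$ in probability. Strict positivity of the limit follows from first-order strong identifiability: since $G_0^{(\kappa)}\in \Ecal_\kappa$ with $\kappa < k_0$, it cannot equal $G_0$, and identifiability then forces $p_{G_0^{(\kappa)}}\neq p_{G_0}$.

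The main obstacle is the uniform empirical-process control in Part 1, since $\widehat{G}_n^{(\kappa)}$ is data-dependent and log-ratios $\log(p_G/p_{G_0})$ can be unbounded. A peeling/chaining argument over Hellinger shells, with an Orlicz-type envelope controlled by the $\delta$-moment bound (\textbf{K}), is the natural route; the most delicate piece is transferring the bracketing bound (\textbf{B}) from $\Pcal_k$ to $\Fcal_{\epsilon_n}$, since this pairs the local Lipschitz estimate for $\log$ with the lower-tail control on $p_{G_0}$ provided by (\textbf{K}).
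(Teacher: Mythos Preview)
Your decomposition matches the paper's exactly, and your handling of the KL term (via $h\lesssim W_2$ from the Lipschitz hypothesis, then Wong--Shen under condition (\textbf{K})) and the CLT/Chebyshev remainder is correct and essentially identical to theirs. Part~2 is also fine: the paper invokes a uniform law of large numbers plus dominated convergence, which is equivalent to your Glivenko--Cantelli plus dominated-convergence route.

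The gap is in your empirical-process step for Part~1. You propose to transfer the Hellinger bracketing (\textbf{B}) on $\Pcal_k$ to $L_2(P_{G_0})$-bracketing on $\Fcal_{\epsilon_n}=\{\log(p_G/p_{G_0})\}$ via $|\log p_G-\log p_{G_0}|\lesssim|p_G-p_{G_0}|/p_{G_0}$ together with ``lower-tail control on $p_{G_0}$ coming from (\textbf{K}).'' But (\textbf{K}) bounds $\int(p_{G_0}/p_G)^\delta p_{G_0}$: it controls how small $p_G$ can be \emph{relative to} $p_{G_0}$, not how small $p_{G_0}$ itself can be. It furnishes no pointwise or tail lower bound on $p_{G_0}$, so your transfer mechanism does not go through --- Hellinger brackets of width $u$ for densities do not become $L_2(P_{G_0})$-brackets of comparable width for their log-ratios. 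The paper sidesteps this entirely by invoking van~de~Geer's Theorem~5.11, which controls $\nu_n(G)=\sqrt{n}\int\log(p_G/p_{G_0})\,d(P_n-P_{G_0})$ directly from the Hellinger entropy of $\{p_G:h(p_G,p_{G_0})\le R\}$ alone (i.e.\ only (\textbf{B}) is needed). The device behind that theorem is the convexification $\overline{p}_G=(p_G+p_{G_0})/2$, which makes $\log(\overline{p}_G/p_{G_0})\ge -\log 2$ bounded below and whose Hellinger entropy is dominated by that of $\{p_G\}$; no lower bound on $p_{G_0}$ is ever required. Condition (\textbf{K}) plays no role in this term --- it enters only in the KL-to-Hellinger comparison.
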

This theorem establishes that the average log-likelihood of the model on the dendrogram has the same limit for all levels $\kappa \geq k_0$. Meanwhile, there will be a gap between the exact-fitted ($k_0$) and all the under-fitted levels $\kappa < k_0$. The combined information from the likelihood and dendrogram will prove useful for designing model selection procedures. Note that the uniform bounded condition for the log-likelihood of the model at the under-fitted levels is familiar in the literature (see, e.g., \cite{keener2010theoretical} Chapter 9). It is satisfied for common exponential families with compact parameter spaces, such as Gaussian, Student, Binomial, and Negative Binomial distributions.

\subsection{Model selection via the dendrogram}\label{sec:model-select-strong}
 
We are ready to address the questions of model selection associated with our method for dendrogram construction. In this subsection, we always assume that conditions (\textbf{B}.), condition (\textbf{K}.), strong identifiability, and condition of Theorem~\ref{thm:asymptotic-likelihood} are satisfied.

\paragraph{“Where to cut the dendrogram?"} is a common question in Agglomerative Hierarchical Clustering. Cutting the dendrogram at a height will specify the number of clusters one wants to obtain and interpret. Here, we provide an answer to this question from an asymptotic viewpoint. Choose any sequence $\epsilon_n$ such that $(\log n / n)^{1/2} \ll \epsilon_n \ll 1$. Let $k_n$ be the number of components when we cut the dendrogram at the height $\epsilon_n$, i.e., 
$k_n = \min\left\{\kappa\in [2, k]: \sum_{i = \kappa}^{k} d_n^{(i)} \leq \epsilon_n\right\} - 1.$ The consistency of estimator $k_n$ then follows.
\begin{proposition}\label{prop:cut-dendrogram}
    $k_n \to k_0$ in $\Pbb_{p_{G_0}}$-probability, as $n \to \infty$.
\end{proposition}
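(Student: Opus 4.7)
The plan is to argue that, on the high-probability event guaranteed by Theorem~\ref{thm:asymptotic-height}, the dendrogram's heights split sharply into a ``small'' regime at overfitted levels and a ``bounded below by a positive constant'' regime at levels $\kappa \le k_0$. Since the threshold $\epsilon_n$ sits between these two scales ($(\log n/n)^{1/2} \ll \epsilon_n \ll 1$), the cut at height $\epsilon_n$ must land exactly at $k_0$.

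More concretely, let $E_n$ denote the event from Theorem~\ref{thm:asymptotic-height}, which has probability at least $1 - c_1 n^{-c_2}$. On $E_n$, for every $\kappa \in [k_0+1, k]$ we have $d_n^{(\kappa)} \le C (\log n / n)^{1/2}$ for some constant $C = C(G_0, \Theta, k)$, and for every $\kappa' \in [k_0]$ we have $|d_n^{(\kappa')} - d_0^{(\kappa')}| \le C(\log n / n)^{1/2}$. The first bound gives
\[
\sum_{i=k_0+1}^{k} d_n^{(i)} \le (k - k_0) C (\log n/n)^{1/2},
\]
which is $o(\epsilon_n)$ by assumption, so for all $n$ large enough (and depending on $\epsilon_n$) the sum is $\le \epsilon_n$ on $E_n$. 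This shows that $\kappa = k_0 + 1$ belongs to the set whose minimum defines $k_n$, hence $k_n \le k_0$.

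For the matching lower bound $k_n \ge k_0$, I show that the candidate $\kappa = k_0$ fails the threshold. The tail sum $\sum_{i=k_0}^{k} d_n^{(i)}$ contains the term $d_n^{(k_0)}$, which on $E_n$ satisfies $d_n^{(k_0)} \ge d_0^{(k_0)} - C(\log n/n)^{1/2}$. Because $d_0^{(k_0)} > 0$ is a fixed positive constant (the squared projection length of $G_0$ onto $\mathcal{O}_{k_0-1}$, which is strictly positive since $G_0 \in \mathcal{E}_{k_0}$) and $\epsilon_n \to 0$, we have $d_n^{(k_0)} > \epsilon_n$ on $E_n$ for all sufficiently large $n$. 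The same argument rules out every $\kappa < k_0$ because those tail sums also contain $d_n^{(k_0)}$. Consequently, on $E_n$ and for $n$ large, the minimum $\kappa$ satisfying the threshold is exactly $k_0 + 1$, so $k_n = k_0$.

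Combining the two directions, $\mathbb{P}_{p_{G_0}}(k_n = k_0) \ge \mathbb{P}_{p_{G_0}}(E_n) - o(1) \ge 1 - c_1 n^{-c_2} - o(1) \to 1$, which is the claimed convergence in probability. The only non-routine point is verifying $d_0^{(k_0)} > 0$; this follows from Proposition~\ref{prop:Wasserstein-variational} together with the fact that $G_0 \in \mathcal{E}_{k_0}$ cannot be represented by fewer than $k_0$ atoms, so its Wasserstein projection onto $\mathcal{O}_{k_0 - 1}$ has strictly positive distance. Everything else reduces to combining the deterministic bound on $\epsilon_n$ with the high-probability control of the heights already provided by Theorem~\ref{thm:asymptotic-height}.
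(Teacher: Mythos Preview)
Your proof is correct and follows essentially the same approach as the paper: invoke the high-probability control of the heights from Theorem~\ref{thm:asymptotic-height}, sandwich $\epsilon_n$ between the vanishing scale $(\log n/n)^{1/2}$ of the overfitted heights and the strictly positive limit $d_0^{(k_0)}$, and conclude $k_n = k_0$ on that event for all large $n$. Your version is in fact slightly more careful than the paper's, explicitly handling the sum $\sum_{i=\kappa}^k d_n^{(i)}$ and justifying $d_0^{(k_0)}>0$ via Proposition~\ref{prop:Wasserstein-variational}; the ``$-o(1)$'' in your final probability bound is harmless but unnecessary, since the ``$n$ large enough'' threshold is deterministic.
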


\paragraph{DIC (Dendrogram Information Criterion).} For every $\kappa\in \{2, \dots, k\}$, denote by $d_n^{(\kappa)}$ the length of the $\kappa$-th level of the dendrogram and $\overline{\ell}_n^{(\kappa)}$ the average log-likelihood of the mixture model with mixing measure $G_n^{(\kappa)}$ as in equations~\eqref{eq:dn-kappa} and~\eqref{eq:ln-kappa}. Consider $\text{DIC}_n^{(\kappa)} = -(d_n^{(\kappa)} + \omega_n \overline{\ell}_n^{(\kappa)})$,
where $\omega_n$ is any slowly increasing sequence so that $1 \ll \omega_n \ll (n/ \log n)^{1/4}$. For example, a valid choice is $\omega_n = \log(n)$. We choose $k_n^* = \argmin_{\kappa \in [2, k]} \text{DIC}_n^{(\kappa)}$.
\begin{proposition}\label{prop:DIC-consistency}
Suppose that $k_0\geq 2$, then $k_n^* \to k_0$ in $\Pbb_{p_{G_0}}$-probability as $n\to \infty$.
\end{proposition}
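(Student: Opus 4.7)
The plan is to compare $\text{DIC}_n^{(k_0)}$ with $\text{DIC}_n^{(\kappa)}$ for every $\kappa\in[2,k]\setminus\{k_0\}$ and show that, with probability tending to $1$, the latter is strictly larger. Setting
\[
\Delta_n^{(\kappa)} := \text{DIC}_n^{(\kappa)} - \text{DIC}_n^{(k_0)} = \bigparenth{d_n^{(k_0)} - d_n^{(\kappa)}} + \omega_n\bigparenth{\overline{\ell}_n^{(k_0)} - \overline{\ell}_n^{(\kappa)}},
\]
by a union bound over the finitely many values of $\kappa$, it suffices to show $\Delta_n^{(\kappa)} > 0$ in $\Pbb_{p_{G_0}}$-probability for each fixed $\kappa\neq k_0$. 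I would split the argument into the overfit and underfit regimes.

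For the overfit regime $\kappa > k_0$, Theorem~\ref{thm:asymptotic-height} gives $|d_n^{(k_0)} - d_0^{(k_0)}| \lesssim (\log n/n)^{1/2}$ and $d_n^{(\kappa)} \lesssim (\log n/n)^{1/2}$ on a common high-probability event. The strict positivity $d_0^{(k_0)} > 0$ follows because at the $k_0$-th level of the dendrogram of $G_0$ all $k_0$ atoms of $G_0$ are present, pairwise distinct, and carry positive weights, so every pair has strictly positive dissimilarity $\divclus$. Theorem~\ref{thm:asymptotic-likelihood} yields $|\overline{\ell}_n^{(k_0)} - \overline{\ell}_n^{(\kappa)}| \lesssim (\log n/n)^{1/4}$ on the same event, so
\[
\Delta_n^{(\kappa)} = d_0^{(k_0)} + o(1) + O\bigparenth{\omega_n (\log n/n)^{1/4}} \longrightarrow d_0^{(k_0)} > 0,
\]
where the last term is $o(1)$ by the assumption $\omega_n \ll (n/\log n)^{1/4}$.

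For the underfit regime $\kappa < k_0$, Theorem~\ref{thm:asymptotic-likelihood} gives $\overline{\ell}_n^{(k_0)} - \overline{\ell}_n^{(\kappa)} \to KL(p_{G_0}\|p_{G_0^{(\kappa)}}) > 0$ in $\Pbb_{p_{G_0}}$-probability, where strict positivity of the KL divergence follows from the fact that $G_0^{(\kappa)} \neq G_0$ once the merging has collapsed at least two distinct atoms of $G_0$. Since $\Theta$ is compact and all proportions lie in $[0,1]$, the heights $d_n^{(k_0)}$ and $d_n^{(\kappa)}$ are bounded by a constant $C$ depending only on $\Theta$, so $d_n^{(k_0)} - d_n^{(\kappa)} \geq -C$ deterministically, and
\[
\Delta_n^{(\kappa)} \geq -C + \omega_n\bigparenth{KL(p_{G_0}\|p_{G_0^{(\kappa)}}) + o_{\Pbb}(1)} \longrightarrow +\infty
\]
since $\omega_n\to\infty$.

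The heavy lifting has already been done in Theorems~\ref{thm:asymptotic-height} and~\ref{thm:asymptotic-likelihood}; the only remaining subtlety — and the reason for the specific growth condition $1 \ll \omega_n \ll (n/\log n)^{1/4}$ — is the balancing act between the two regimes. The lower bound $\omega_n \to \infty$ ensures the underfit likelihood gap, which enters $\Delta_n^{(\kappa)}$ with a factor of $\omega_n$, eventually dominates the $O(1)$ difference of heights; the upper bound $\omega_n \ll (n/\log n)^{1/4}$ ensures the amplified log-likelihood fluctuation $\omega_n (\log n/n)^{1/4}$ remains negligible compared to the constant positive gap $d_0^{(k_0)}$ at overfit levels.
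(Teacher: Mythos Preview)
Your proof is correct and takes essentially the same approach as the paper: invoke Theorems~\ref{thm:asymptotic-height} and~\ref{thm:asymptotic-likelihood} to obtain the asymptotic behaviour of the heights and log-likelihoods at each level, then compare $\text{DIC}_n^{(k_0)}$ to $\text{DIC}_n^{(\kappa)}$ separately in the overfit and underfit regimes. The paper organizes this by first tabulating the asymptotic form of $\text{DIC}_n^{(\kappa)}$ in each regime and then reading off the comparison, whereas you work directly with the differences $\Delta_n^{(\kappa)}$; in the underfit case you use a crude deterministic bound on the heights via compactness of $\Theta$ instead of the convergence $d_n^{(\kappa)}\to d_0^{(\kappa)}$, which is a minor simplification but otherwise the arguments coincide.
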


\begin{remark}
    While Akaike Information Criteria (AIC) and Bayesian Information Criteria (BIC) only involve the likelihood of the model and the number of components, DIC further takes the information of mixing measures into account. Recall that the stepwise penalty term $d_{n}^{(\kappa)}$ is the squared length of the projection of $\widehat{G}_n^{(\kappa)}$ onto $\Ocal_{\kappa-1}$, and we partially seek to maximize this number by choosing $k_n^* = \argmin \text{DIC}_n^{(\kappa)}$. Hence, $\text{DIC}_n^{(\kappa)}$ is heavily penalized when $d_{n}^{(\kappa)}$ is small, which is equivalent to the event that there are two nearby atoms in the mixing measure or there is an atom with a small mass. As a result, DIC is more robust to model specification compared to AIC and BIC. We will empirically demonstrate this point in Section~\ref{sec:experiments}.  
\end{remark}

\section{Dendrogram for location-scale Gaussian mixtures}\label{sec:merge-weak}
  
\subsection{Location-scale Gaussian mixtures}

The previous section addresses strongly identifiable mixture models in considerable generality. In this section, we turn to the weakly identifiable setting, i.e., when the second-order strongly identifiable condition is violated. We focus on Gaussian location-scale mixture model, which is the most popular representative of weakly identifiable models known to exhibit very slow parameter estimation behaviour in the overfitted setting \cite{Ho-Nguyen-Ann-16,ho2019singularity}. The location-scale Gaussian kernel is denoted by $f(x | \mu, \Sigma)$. The parameters of interest are $(\mu, \Sigma) \in \Theta \times \Omega$, where $\Theta$ is a compact subset of $\Rbb^{d}$ and $\Omega$ is a compact set containing all symmetric and positive-definite matrices in $\Rbb^{d\times d}$ having bounded eigenvalues (both above and below by positive constants). This kernel is known to violate the strong identifiable condition due to the heat equation:
\begin{equation}\label{eq:heat-eq}
    \dfrac{\partial^2 f(x|\mu, \Sigma)}{\partial \mu\partial \mu^{\top}} = 2\dfrac{\partial f(x|\mu, \Sigma)}{\partial \Sigma}\,\, \forall x, \mu, \Sigma,
\end{equation}
so the asymptotic theory presented earlier does not hold. Given $n$ i.i.d. samples from $p_{G_0} = \sum_{j=1}^{k_0} p_j^0 f(x|\mu_j^0, \Sigma_j^0),$ 
for $G_0 = \sum_{j=1}^{k_0} p_j^0 \delta_{(\mu_j^0, \Sigma_j^0)}$. Let $\Ocal_{k, c}$ denote the space of mixing measures with at most $k$ atoms having masses bounded below by $c > 0$. Due to the singularity structure~\eqref{eq:heat-eq}, \cite{Ho-Nguyen-Ann-16} showed that the convergence rate for the overfitted MLE $\widehat{G}_n$ ranging in $\Ocal_{k, c_0}$ is $\left(\dfrac{\log n}{n}\right)^{1/2 \overline{r}(k - k_0 + 1)}$, for $k > k_0$ and $c_0 \in (0, \min_{j=1}^{k_0} p_j^0)$, where $\overline{r}(k)$ is defined as the smallest integer $r$ such that the system of polynomial equations $\sum_{j=1}^{k} \sum_{\substack{n_1, n_2\in \Nbb\\ n_1 + 2 n_2 = \alpha}} \dfrac{c_j^2 a_j^{n_1} b_j^{n_2}}{n_1 ! n_2 !} = 0$, for $\alpha = 1, \dots, r$,
does not have any nontrivial solution $(a_j, b_j, c_j)_{j=1}^{k} \subset \Rbb$. A set of solutions is considered nontrivial if all variable $c_j$'s are non-zero and at least one of $a_j$'s is non-zero. It was shown that $\overline{r}(2) = 4, \overline{r}(3) = 6$, and $\overline{r}(4) \geq 7$. Therefore, the parameter estimation rate can be as slow as $n^{-1/8}$ when over-fitting by one and $n^{-1/12}$ when over-fitting by two components. 
The slow convergence rate implies that when one overparameterizes the mixture model with redundant components, an excessive amount of data is required to achieve a good recovery error for parameter estimation. An illustration of an over-fitting location-scale Gaussian mixture can be seen in Appendix A.1. 

We now aim to construct a dendrogram of mixing measures in location-scale Gaussian mixtures to achieve a fast estimation rate of order $n^{-1/2}$ for the quantities of interest.
The main insight for establishing the parameter estimation rate of overfitted location-scale Gaussian mixtures comes from the inverse bound $V(p_{G}, p_{G_0})\gtrsim W_{\overline{r}(k - k_0 + 1)}^{\overline{r}(k - k_0 + 1)}(G, G_0)$ for all $G\in \Ocal_{k, c_0}$~\cite{Ho-Nguyen-Ann-16}. 
The essential part of proving this inverse bound is to use the relationship~\eqref{eq:heat-eq} to convert all derivatives (in both $\mu$ and $\Sigma$) of the Taylor expansion of $f(x|\mu, \Sigma) - f(x|\mu^0, \Sigma^0)$ to derivatives with respect to $\mu$ only, then use the strong identifiability of Gaussian mixtures with respect to the location parameter up to order $\overline{r}$. However, similar to the general strong identifiable setting studied in Section~\ref{sec:dendrogram-strong}, here the inverse bound does not take advantage of all linear independence relationships of derivatives of the Gaussian kernel but uses only the zero and $\overline{r}$-th order. By invoking the linear independence more carefully, we can provide a tighter inverse bound, which in turn suggests an effective merging scheme for constructing the dendrogram. For any $G = \sum_{j=1}^{k} p_j \delta_{(\mu_j, \Sigma_j)} \in \Ocal_{k}$, denote:
\setlength{\abovedisplayskip}{0pt}
\begin{align}\label{eq:D-weakly}
    \divergence_{\Gcal}(G, G_0) & = \sum_{i=1}^{k_0} \left(\left|\sum_{j\in V_i} p_j - p_i^0 \right| + \sum_{j\in V_i} p_j\left(\norm{\mu_j - \mu_i^0}^{\overline{r}(|V_i|)} + \norm{\Sigma_j - \Sigma_i^0}^{\overline{r}(|V_i|)/2}\right) \right.\nonumber \\
    & \hspace{-.8cm} \left. + \norm{\sum_{j\in V_i}p_j (\mu_j - \mu_i^0)}  + \norm{\sum_{j\in V_i}p_j \left((\mu_j - \mu_i^0) (\mu_j - \mu_i^0)^{\top} + \Sigma_j - \Sigma_i^0)\right)}\right),
\end{align}
\setlength{\abovedisplayskip}{0pt}
where $V_i \equiv V_i(G) = \{j\in [k]: \|\mu_j - \mu_i^0\|^2 + \|\Sigma_j - \Sigma_i^0\| \leq \|\mu_j - \mu_{i'}^0\|^2 + \|\Sigma_j - \Sigma_{i'}^0\| \forall i'\neq i\}$ contains all indices $j$ such that $(\mu_j, \Sigma_j)$ belongs to the Voronoi cell of $(\mu_i^0, \Sigma_i^0)$ in $\Theta \times \Omega$, for $i \in [k_0]$, the matrices norm is Frobenius, and we take $\overline{r}(1) = 2$ as convention. The following bound plays the central role in our construction of the dendrogram:
\begin{lemma}\label{lem:inv-bound-weak}
    For $G \in \Ocal_{k, c_0}$, as $V(p_{G}, p_{G_0})\to 0$, we have $V(p_G, p_{G_0}) \underset{}{\gtrsim} \divergence_{\Gcal}(G, G_0)$.
\end{lemma}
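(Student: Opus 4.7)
The plan is to establish the bound by contradiction, following the Taylor expansion strategy used in \cite{Ho-Nguyen-Ann-16} but tracking the cancellation structure more carefully. Suppose not: then there exists a sequence $G_n = \sum_j p_j^{(n)} \delta_{(\mu_j^{(n)}, \Sigma_j^{(n)})} \in \mathcal{O}_{k, c_0}$ with $V(p_{G_n}, p_{G_0}) \to 0$ and $V(p_{G_n}, p_{G_0}) / \divergence_{\mathcal{G}}(G_n, G_0) \to 0$. Since $V(p_{G_n}, p_{G_0}) \to 0$ forces $W_1(G_n, G_0) \to 0$ (via the upper bound direction), each atom of $G_n$ must eventually lie in a unique Voronoi cell $V_i(G_n)$ around some $(\mu_i^0, \Sigma_i^0)$, and the sum of masses in $V_i$ tends to $p_i^0 \geq c_0$.

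Next, I would expand $p_{G_n}(x) - p_{G_0}(x)$ via Voronoi decomposition: write it as $\sum_{i=1}^{k_0} \sum_{j \in V_i} p_j^{(n)} [f(x|\mu_j^{(n)}, \Sigma_j^{(n)}) - f(x|\mu_i^0, \Sigma_i^0)] + \sum_{i=1}^{k_0} (\sum_{j \in V_i} p_j^{(n)} - p_i^0) f(x|\mu_i^0, \Sigma_i^0)$. For each $i$ I Taylor-expand $f(x|\mu_j^{(n)}, \Sigma_j^{(n)})$ around $(\mu_i^0, \Sigma_i^0)$ up to order $\overline{r}(|V_i(G_n)|)$, then use the heat equation \eqref{eq:heat-eq} to rewrite every $\Sigma$-derivative as a $\mu$-derivative of twice the order. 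The result is an expansion in terms of $\partial^{|\alpha|} f(x|\mu_i^0, \Sigma_i^0)/\partial \mu^\alpha$ for $|\alpha| \leq 2\overline{r}(|V_i|)$, with coefficients that are precisely the generalized moments appearing in $\divergence_{\mathcal{G}}$:
\begin{itemize}
\item the $|\alpha|=0$ coefficient is $\sum_{j \in V_i} p_j^{(n)} - p_i^0$;
\item the $|\alpha|=1$ coefficient is $\sum_{j\in V_i} p_j^{(n)}(\mu_j^{(n)} - \mu_i^0)$;
\item the $|\alpha|=2$ coefficient is (a symmetric version of) $\sum_{j\in V_i} p_j^{(n)}[(\mu_j - \mu_i^0)(\mu_j - \mu_i^0)^{\top} + \Sigma_j - \Sigma_i^0]$, thanks to the heat-equation substitution;
\item the remaining higher-order terms are controlled by $\sum_{j\in V_i} p_j^{(n)}(\|\mu_j - \mu_i^0\|^{\overline{r}(|V_i|)} + \|\Sigma_j - \Sigma_i^0\|^{\overline{r}(|V_i|)/2})$.
\end{itemize}
After normalizing by $\divergence_{\mathcal{G}}(G_n, G_0)$, each of these coefficient families converges (along a subsequence) to a limit $(A_{i,\alpha})$, and by construction of $\divergence_{\mathcal{G}}$ as a sum of absolute values, at least one $A_{i,\alpha}$ is nonzero. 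The assumption forces $\|p_{G_n} - p_{G_0}\|_1 / \divergence_{\mathcal{G}}(G_n, G_0) \to 0$, so the limiting linear combination $\sum_{i,\alpha} A_{i,\alpha} \partial^{|\alpha|} f(x|\mu_i^0, \Sigma_i^0)/\partial \mu^\alpha$ vanishes almost everywhere. Appealing to the linear independence of derivatives of the Gaussian density in the location parameter up to order $2\overline{r}(k)$ across distinct $(\mu_i^0, \Sigma_i^0)$ (standard Vandermonde-type argument via Fourier transform) then yields $A_{i,\alpha} = 0$ for all $i,\alpha$, the desired contradiction.

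The main obstacle will be the bookkeeping around the subsequence argument, specifically handling the fact that $|V_i(G_n)|$, and hence the truncation order $\overline{r}(|V_i|)$, can differ across $i$ and in principle fluctuate in $n$. I would resolve this by passing to a further subsequence along which $|V_i(G_n)|$ is constant in $n$ for each $i$, so that a single Taylor order applies per cell. A secondary technical point is verifying that the residual higher-order terms are genuinely $o(\divergence_{\mathcal{G}}(G_n, G_0))$ in the $L^1$ norm; this follows from the integrability of Gaussian derivatives and the fact that the coefficients $\sum_{j\in V_i} p_j^{(n)}\|\mu_j-\mu_i^0\|^{\overline{r}}$ etc.\ appear explicitly in $\divergence_{\mathcal{G}}$, so after the normalization they remain bounded and the leftover Taylor remainders are $o(1)$.
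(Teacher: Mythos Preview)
Your approach is essentially the paper's: contradiction, Voronoi grouping, Taylor-expand to order $\overline{r}(|V_i|)$ per cell, convert $\Sigma$-derivatives to $\mu$-derivatives via the heat equation, normalize, pass to a subsequence, and contradict the linear independence of $\mu$-derivatives of Gaussians at distinct parameters. The subsequence bookkeeping you flag (fixing $|V_i(G_n)|$) is also how the paper proceeds.

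There is, however, a genuine gap at the step ``by construction of $\divergence_{\mathcal{G}}$ as a sum of absolute values, at least one $A_{i,\alpha}$ is nonzero.'' This is immediate only for the first three summands of $\divergence_{\mathcal{G}}$, which coincide with the $|\alpha|=0,1,2$ Taylor coefficients. The fourth summand, $\sum_{j\in V_i} p_j\bigl(\|\mu_j-\mu_i^0\|^{\overline{r}(|V_i|)}+\|\Sigma_j-\Sigma_i^0\|^{\overline{r}(|V_i|)/2}\bigr)$, is \emph{not} any single Taylor coefficient $a_{i,\tau}$; it is a sum of nonnegative moments, whereas the higher-order $a_{i,\tau}$ are signed sums over $j\in V_i$ that could in principle all be $o(\cdot)$ of this quantity through cancellation. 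Ruling out such cancellation is precisely the role of the definition of $\overline{r}(|V_i|)$ via the polynomial system~\eqref{eq:system-algebraic-eq}: if every $a_{i,\tau}$ with $1\le|\tau|\le \overline{r}(|V_i|)$ were $o$ of the moment term, rescaling and passing to the limit would produce a nontrivial solution of that system, contradicting the minimality of $\overline{r}(|V_i|)$. This algebraic step---showing $\max_\tau |a_{i,\tau}|\gtrsim D_n$---is the substance of the paper's Step~3 and is the reason the bound holds with exponent $\overline{r}(|V_i|)$ rather than a smaller one; your proposal treats it as automatic, and without it the contradiction does not close.
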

Letting $\overline{r}(G) = \max_{i\in [k_0]} \overline{r}(|V_i|)$, it holds that $\divergence_{\Gcal}(G, G_0)\gtrsim W_{\overline{r}(G)}^{\overline{r}(G)}(G, G_0) \gtrsim  W_{\overline{r}(k_0 - k+1)}^{\overline{r}((k_0 - k+1)}(G, G_0)$. Hence, this inverse bound is stronger than the existing results in literature~\cite{Ho-Nguyen-Ann-16, manole2022refined}. Moreover, $\divergence_{\Gcal}$ can capture both the slow rate of order $1 / (2\overline{r}(G))$ for the overfitted mixing measure and the fast rate of order $1/2$ for the merged mixing measure at the exact-fitted level.
\begin{algorithm}
\caption{Merging of atoms for the mixture of location-scale Gaussians}\label{alg:merge-atom-weak}
\begin{algorithmic}[1]
\Require A mixing measure $G^{(k)} = \sum_{i=1}^{k} p_i \delta_{(\mu_i, \Sigma_i)}$.
\State Choose $i, j = \argmin_{k_1 \neq k_2 \in [k]} \divclus(p_{k_1} \delta_{(\mu_{k_1}, \Sigma_{k_1})}, p_{k_2} \delta_{(\mu_{k_2}, \Sigma_{k_2})}).$
\State Compute merged atoms' parameters by 
$p_{*} = p_{i} + p_{j}, \mu_{*} = \dfrac{p_{i}}{p_{*}} \mu_{i} + \dfrac{p_{j}}{p_{*}} \mu_{j},$
and
\begin{equation}\label{eq:merge-variance}
    \Sigma_* = \dfrac{p_{i}}{p_{*}} \left(\Sigma_{i} + (\mu_i - \mu_*)(\mu_i - \mu_*)^{\top} \right) + \dfrac{p_{j}}{p_{*}} \left(\Sigma_{j} + (\mu_j - \mu_*)(\mu_j - \mu_*)^{\top} \right).
\end{equation}
\State \textbf{return} Mixing measure $G^{(k-1)} = p_{*} \delta_{(\mu_*, \Sigma_*)} + \sum_{r\neq i, j} p_{r} \delta_{(\mu_r, \Sigma_r)}$ with $k-1$ atoms.
\end{algorithmic}
\end{algorithm}

\paragraph{Dendrogram for the mixture of location-scale Gaussians.} 
The strengthened inverse bound and the form of $\divergence_{\Gcal}$ in equation~\eqref{eq:D-weakly} motivate a merging scheme that preserves the first-order terms with respect to $\mu$ and $\Sigma$. Define the dissimilarity between two atoms to be: $\divclus(p_i \delta_{(\mu_i, \Sigma_i)}, p_j \delta_{(\mu_j, \Sigma_j)}) = \dfrac{1}{p_i^{-1} + p_j^{-1}} \left(\norm{\mu_i - \mu_j}^2 + \norm{\Sigma_i - \Sigma_j}\right).$
The one-step merging scheme for location-scale Gaussian mixtures is presented in Algorithm~\ref{alg:merge-atom-weak}. The whole dendrogram is then constructed sequentially via Algorithm~\ref{alg:phylo-tree}. Note the fundamental difference compared to the strongly identifiable case (Algorithm~\ref{alg:merge-atom}) when it comes to merging the covariance parameter: Here, there is a dependence of the covariance parameter $\Sigma$ in the mean parameter $\mu$ when merging takes place. Intuitively, there is a linear dependence between the first-order term of $\Sigma$ and the second-order term of $\mu$ as manifested by the structural equation~\eqref{eq:heat-eq}. This dependence is “resolved" by our updating of the merging covariance with the “variance of the merged means", as in equation~\eqref{eq:merge-variance}. A more detailed discussion of the choice of $\divclus$ is presented in Section~\ref{subsec:discussion-dendrogram-locationscale}.

% \begin{remark}
%     If all the true mean parameters are distinct, then it suffices to choose $\divclus(p_i \delta_{(\mu_i, \Sigma_i)}, p_j \delta_{(\mu_j, \Sigma_j)}) = \dfrac{1}{p_i^{-1} + p_j^{-1}} \norm{\mu_i - \mu_j}^2$. In that case, all the upcoming results in this section still go through, and we observe that the method is more robust with this choice of $\divclus$. 
% \end{remark}
  
\subsection{Asymptotic theory and model selection}
 
\paragraph{Asymptotic behaviour of the dendrogram.} We now present theoretical results regarding the convergence rate of mixing measures, heights, and the likelihood of each level in the dendrogram derived from the mixture of location-scale Gaussians. Suppose there are $n$ i.i.d. observation generated from a mixture of Gaussians $p_{G_0}$, with true latent mixing measure $G_0 = \sum_{i=1}^{k_0} p_i^0 \delta_{(\mu_i^0, \Sigma_i^0)} \in \Ecal_{k_0}$, and we obtain the MLE mixing measure $\widehat{G}_n = \sum_{i=1}^{k} p_i^{n} \delta_{(\mu_i^n, \Sigma_i^n)} \in \Ocal_{k, c_0}$, where $k \geq k_0$ and lower bound $c_0$ on the mixing proportion satisfies $c_0 \in (0, \min_{i\in [k_0]} p_i^0)$. We still denote $\widehat{G}_n^{(\kappa)}$ and $d_{n}^{(\kappa)}$ (resp., ${G}_0^{(\kappa)}$ and $d_{0}^{(\kappa)}$) the latent mixing measure and the height of the level $\kappa$ in the dendrogram of $\widehat{G}_n$ (resp., $G_0$).
\begin{theorem}\label{thm:asymp-theory-gaussian}
    There exist universal constants $c_1$ and $c_2$ such that for probability at least $1 - c_1 n^{-c_2}$, we have all the inequalities below for $\kappa\in [k_0 + 1, k]$ and $\kappa'\in [k_0]$, of which the multiplicative constants depend on $G_0$, $\Theta \times \Omega$, and $k$. Firstly, for the convergence of mixing measures, we have
    \begin{equation}\label{eq:weak-conv-mixing-measure}
        W_{\overline{r}(\widehat{G}_n)}(\widehat{G}_{n}^{(\kappa)}, G_0) \lesssim \left(\dfrac{\log n}{n} \right)^{1/2\overline{r}(\widehat{G}_n)},\quad \text{and}\quad
        W_{1}(\widehat{G}_{n}^{(\kappa')}, G_0^{(\kappa')}) \lesssim \left(\dfrac{\log n}{n} \right)^{1/2}.
    \end{equation}
    For the heights of the dendrogram, we have
    \begin{equation}\label{eq:weak-conv-heights}
        d_n^{(\kappa)}  \lesssim \left(\dfrac{\log n}{n}\right)^{1/\overline{r}(\widehat{G}_n)},\quad \text{and}\quad\left|d_n^{(\kappa')} - d_0^{(\kappa')} \right| \lesssim \left(\dfrac{\log n}{n}\right)^{1/2}.
    \end{equation}
    Finally, for the average log-likelihood, $\left|\overline{\ell}_n(\widehat{G}_n^{(\kappa)}) + H(p_{G_0}) \right| \lesssim \left(\dfrac{\log n}{n}\right)^{1/2\overline{r}(\widehat{G}_n)}$,
    and
    \begin{equation*}\label{eq:weak-conv-likelihood}
\overline{\ell}_n(\widehat{G}_n^{(\kappa')}) + H(p_{G_0}) \to - KL(p_{G_0} \| p_{G_{0}^{(\kappa')}}) < 0,
\end{equation*}
in $\Pbb_{p_{G_0}}$-probability as $n\to \infty$.
\end{theorem}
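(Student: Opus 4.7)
The plan is to mirror the three-stage strategy used to prove Theorems~\ref{thm:asymptotic-dendrogram}--\ref{thm:asymptotic-likelihood}, but with the divergence $\divergence$ replaced by $\divergence_{\Gcal}$ from~\eqref{eq:D-weakly} and with the heat-equation-adapted merging rule of Algorithm~\ref{alg:merge-atom-weak}. Four ingredients must be established: (a) a root-$n$ bound on $\divergence_{\Gcal}(\widehat{G}_n, G_0)$ for the overfitted MLE; (b) propagation of this bound to every dendrogram level via a $\divergence_{\Gcal}$-monotonicity lemma; (c) translation of $\divergence_{\Gcal}$-convergence into the stated Wasserstein and height rates; and (d) transfer of all this to the log-likelihood.

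Stage (a) is routine. Condition (\textbf{B}.) holds for location-scale Gaussians on a compact parameter space with eigenvalue-bounded covariances, so Proposition~\ref{prop:density-rate} supplies $h(p_{\widehat{G}_n}, p_{G_0}) \lesssim (\log n/n)^{1/2}$ on an event of probability at least $1-c_1 n^{-c_2}$. Chaining this with $V\le h$ and Lemma~\ref{lem:inv-bound-weak} produces $\divergence_{\Gcal}(\widehat{G}_n, G_0) \lesssim (\log n/n)^{1/2}$ on the same event.

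Stage (b) is the main technical obstacle and requires a weakly-identifiable analogue of Lemma~\ref{lem:order-mix-measure} stating
\[
\divergence_{\Gcal}(\widehat{G}_n^{(k)}, G_0) \gtrsim \divergence_{\Gcal}(\widehat{G}_n^{(k-1)}, G_0) \gtrsim \cdots \gtrsim \divergence_{\Gcal}(\widehat{G}_n^{(k_0)}, G_0),
\]
once the leftmost term is sufficiently small. Two observations drive the argument. First, once $\divergence_{\Gcal}(\widehat{G}_n, G_0)$ is $o(1)$, intra-cell atoms of $\widehat{G}_n$ collapse at a polynomial rate while atom-pairs from distinct true Voronoi cells stay $\Omega(1)$ apart in the dissimilarity $\divclus$, so at every step below the exact-fitted level Algorithm~\ref{alg:merge-atom-weak} necessarily merges two atoms lying inside a common cell $V_i$. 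Second, the updates $\mu_* = (p_i\mu_i + p_j\mu_j)/p_*$ together with~\eqref{eq:merge-variance} satisfy the moment identities
\[
p_*\mu_* = p_i\mu_i + p_j\mu_j, \qquad p_*(\Sigma_* + \mu_*\mu_*^{\top}) = p_i(\Sigma_i + \mu_i\mu_i^{\top}) + p_j(\Sigma_j + \mu_j\mu_j^{\top}),
\]
which show that a within-cell merge \emph{exactly} conserves the two first-order Voronoi sums $\bigl\|\sum_{j\in V_i}p_j(\mu_j-\mu_i^0)\bigr\|$ and $\bigl\|\sum_{j\in V_i}p_j\bigl((\mu_j-\mu_i^0)(\mu_j-\mu_i^0)^{\top}+\Sigma_j-\Sigma_i^0\bigr)\bigr\|$ appearing in~\eqref{eq:D-weakly}. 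This preservation is exactly why~\eqref{eq:merge-variance} is defined with the ``variance-of-merged-means'' correction --- without it, merging would manufacture a spurious first-order $\Sigma$-discrepancy that $\divergence_{\Gcal}$ would see. The residual polynomial sum $\sum_{j\in V_i} p_j(\|\mu_j-\mu_i^0\|^{\overline{r}(|V_i|)} + \|\Sigma_j-\Sigma_i^0\|^{\overline{r}(|V_i|)/2})$ can only shrink by convexity of $\|\cdot\|^{\overline{r}}$, using that $|V_i|$ drops by one (so $\overline{r}(|V_i|)$ does not increase). The delicate bookkeeping here --- tracking how the indices $\overline{r}(|V_i|)$ evolve as the dendrogram proceeds and which cell loses an atom at each step --- is the main obstacle, and is handled by induction downward from level $k$ to $k_0$.

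Stages (c) and (d) are then essentially parallel to the strongly identifiable case. From $\divergence_{\Gcal}(\widehat{G}_n^{(\kappa)}, G_0)\lesssim(\log n/n)^{1/2}$ for all $\kappa\in[k_0,k]$, the overfitted rate~\eqref{eq:weak-conv-mixing-measure} follows from the asymptotic lower bound $\divergence_{\Gcal}(G, G_0)\gtrsim W_{\overline{r}(G)}^{\overline{r}(G)}(G, G_0)$ coming from the last summand in~\eqref{eq:D-weakly} read through the identity~\eqref{eq:asymp-Wasserstein}. At level $\kappa' = k_0$ the Voronoi cells are singletons, so $\divergence_{\Gcal}(\widehat{G}_n^{(k_0)}, G_0)$ collapses to a first-order quantity equivalent to $W_1(\widehat{G}_n^{(k_0)}, G_0)$, yielding the root-$n$ rate; for $\kappa'<k_0$ this extends by local Lipschitz continuity of the dendrogram map near $G_0$ in $W_1$. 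The height bounds~\eqref{eq:weak-conv-heights} are direct corollaries: for $\kappa>k_0$, picking any two atoms of $\widehat{G}_n^{(\kappa)}$ inside a common true Voronoi cell bounds $d_n^{(\kappa)}$ by the product of the atom masses and the square of the atom-level parameter rate, giving $(\log n/n)^{1/\overline{r}(\widehat{G}_n)}$; for $\kappa'\le k_0$, Lipschitzness of the height map under $W_1$ at $G_0$ transfers the root-$n$ rate. Finally, the log-likelihood claims reproduce the proof of Theorem~\ref{thm:asymptotic-likelihood} upon writing $\overline{\ell}_n(\widehat{G}_n^{(\kappa)}) + H(p_{G_0}) = \bigl[\overline{\ell}_n(\widehat{G}_n^{(\kappa)}) - \Ebb_{p_{G_0}}\log p_{\widehat{G}_n^{(\kappa)}}\bigr] - KL(p_{G_0}\|p_{\widehat{G}_n^{(\kappa)}})$, controlling the stochastic term by an empirical-process bound based on (\textbf{B}.) and the KL term by condition (\textbf{K}.) combined with the assumed kernel-Lipschitz inequality, which together produce the rate $(\log n/n)^{1/2\overline{r}(\widehat{G}_n)}$ at overfitted levels and the strictly negative KL-limit at under-fitted levels via a standard uniform-integrability argument justified by the compactness of $\Theta\times\Omega$ and the light-tailed Gaussian density.
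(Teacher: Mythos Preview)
Your four-stage plan matches the paper's proof exactly: it too invokes Proposition~\ref{prop:density-rate} and Lemma~\ref{lem:inv-bound-weak} for stage~(a), proves a $\divergence_{\Gcal}$-monotonicity lemma along the dendrogram (the paper's Lemma~\ref{lem:order-mix-measure-weak}) for stage~(b) via the same moment-preservation identities you write down, and then parallels Theorems~\ref{thm:asymptotic-height}--\ref{thm:asymptotic-likelihood} for stages~(c) and~(d).

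There is one concrete gap in stage~(b). Your claim that the residual polynomial sum ``can only shrink by convexity of $\|\cdot\|^{\overline r}$'' is correct for the $\mu$-part (since $\mu_*$ is a genuine convex combination of $\mu_i,\mu_j$) but \emph{fails} for the $\Sigma$-part: by~\eqref{eq:merge-variance} the merged covariance $\Sigma_*$ is \emph{not} a convex combination of $\Sigma_i$ and $\Sigma_j$ --- it carries precisely the variance-of-merged-means correction you yourself flagged as essential. Jensen's inequality therefore does not give $p_*\|\Sigma_*-\Sigma_i^0\|^{\overline r/2}\le p_i\|\Sigma_i-\Sigma_i^0\|^{\overline r/2}+p_j\|\Sigma_j-\Sigma_i^0\|^{\overline r/2}$. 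The paper's proof of Lemma~\ref{lem:order-mix-measure-weak} repairs this by first splitting off the extra rank-one term via the triangle inequality, then absorbing it into the $\mu$-polynomial sum through a harmonic-mean chain that critically exploits $\overline r\ge 4$:
\[
p_i\|\mu_i-\mu_i^0\|^{\overline r}+p_j\|\mu_j-\mu_i^0\|^{\overline r}\;\gtrsim\;\frac{1}{p_i^{-1}+p_j^{-1}}\|\mu_i-\mu_j\|^{\overline r}\;\ge\; p_*\Bigl(\frac{1}{p_i^{-1}+p_j^{-1}}\Bigr)^{\overline r/2}\|\mu_i-\mu_j\|^{\overline r}.
\]
Without this step the $\Sigma$-monotonicity is unproven and the whole chain in stage~(b) is incomplete. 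A secondary point: your parenthetical ``so $\overline r(|V_i|)$ does not increase'' is aimed in the wrong direction --- a smaller exponent on a quantity in $(0,1)$ makes the contribution \emph{larger}, not smaller, so this observation does not support the shrinkage claim; the paper sidesteps this by proving the merge inequalities for any fixed $\overline r\ge 4$.
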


Note the fast rate $(\log n/n)^{1/2}$ of the mixing measures $\widehat{G}_{n}^{(\kappa')}$ associated with every level $\kappa' \leq k_0$. This is meaningful because $k_0$ can be estimated consistently as follows.

\paragraph{DIC for mixtures of location-scale Gaussians.} Similar to the strong identifiability case, we consider the statistics $\text{DIC}_n^{(\kappa)} = -(d_n^{(\kappa)} + \omega_n \overline{\ell}_n^{(\kappa)})$,
where $\omega_n$ is any slowly increasing sequence so that $1 \ll \omega_n \ll (n/ \log n)^{1/2\overline{r}(\widehat{G}_n)}$. Although we do not know $\overline{r}(\widehat{G}_n)$ in general, a sufficiently good choice is $\omega_n = \log(n)$, which increases to infinity but is slower than any polynomial. Choose $k_n^* = \argmin_{\kappa \in [k]} \text{DIC}_n^{(\kappa)}$. We have the following result of consistency.
\begin{proposition}\label{prop:DIC-consistency-gaussian}
    $k_n^* \to k_0$ in $\Pbb_{p_{G_0}}$-probability as $n\to \infty$.
\end{proposition}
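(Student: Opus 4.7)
The plan is to mirror the strategy used for Proposition~\ref{prop:DIC-consistency} in the strongly identifiable setting, substituting the sharper rates from Theorem~\ref{thm:asymp-theory-gaussian}. It suffices to show that, with probability tending to one, $\text{DIC}_n^{(k_0)} < \text{DIC}_n^{(\kappa)}$ for every $\kappa \in [k]\setminus\{k_0\}$; a union bound over this finite index set then yields the claim. I will split into the overfitted regime ($\kappa > k_0$) and the underfitted regime ($\kappa < k_0$), analyzing $\text{DIC}_n^{(\kappa)} - \text{DIC}_n^{(k_0)}$ in each case.

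For the overfitted regime, Theorem~\ref{thm:asymp-theory-gaussian} supplies $d_n^{(\kappa)}\lesssim (\log n/n)^{1/\overline{r}(\widehat{G}_n)}$ for $\kappa > k_0$, $d_n^{(k_0)} = d_0^{(k_0)} + O((\log n/n)^{1/2})$, and $|\overline{\ell}_n^{(\kappa)} + H(p_{G_0})| \lesssim (\log n/n)^{1/2\overline{r}(\widehat{G}_n)}$ for every $\kappa \in [k_0, k]$. Subtracting,
\begin{equation*}
\text{DIC}_n^{(\kappa)} - \text{DIC}_n^{(k_0)} \;=\; \bigl(d_n^{(k_0)} - d_n^{(\kappa)}\bigr) + \omega_n\bigl(\overline{\ell}_n^{(k_0)} - \overline{\ell}_n^{(\kappa)}\bigr) \;=\; d_0^{(k_0)} + o(1) + O\!\Bigl(\omega_n (\log n/n)^{1/2\overline{r}(\widehat{G}_n)}\Bigr).
\end{equation*}
The assumed growth condition $\omega_n \ll (n/\log n)^{1/2\overline{r}(\widehat{G}_n)}$ makes the last term $o_P(1)$, and since $d_0^{(k_0)}$ is the minimal pairwise dissimilarity among the $k_0$ distinct atoms of $G_0$ and is therefore strictly positive, the difference converges to $d_0^{(k_0)}>0$ in probability.

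For the underfitted regime, Theorem~\ref{thm:asymp-theory-gaussian} yields $d_n^{(\kappa)} \to d_0^{(\kappa)}$ and $\overline{\ell}_n^{(\kappa)} + H(p_{G_0}) \to -KL(p_{G_0} \| p_{G_0^{(\kappa)}})$ in probability for $\kappa < k_0$, while the $\kappa = k_0$ terms behave as above, so
\begin{equation*}
\text{DIC}_n^{(\kappa)} - \text{DIC}_n^{(k_0)} = \omega_n\, KL(p_{G_0}\|p_{G_0^{(\kappa)}}) + O_P(1) + o_P(\omega_n).
\end{equation*}
Because location-scale Gaussian mixtures are identifiable (Teicher), $\kappa < k_0$ forces $p_{G_0^{(\kappa)}} \ne p_{G_0}$ and hence $KL(p_{G_0}\|p_{G_0^{(\kappa)}}) > 0$; since $\omega_n \to \infty$, the right-hand side diverges to $+\infty$ in probability, beating the $O_P(1)$ gap.

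I expect the main obstacle in fleshing this sketch into a proof to be handling the randomness of $\overline{r}(\widehat{G}_n)$, which depends on the random Voronoi partition $(V_i(\widehat{G}_n))_{i=1}^{k_0}$ and therefore a priori varies with $n$. This is resolved by noting the deterministic bound $\overline{r}(\widehat{G}_n) \le \overline{r}(k - k_0 + 1)$, so the conservative choice $\omega_n = \log n$ satisfies $\omega_n \ll (n/\log n)^{1/2\overline{r}(\widehat{G}_n)}$ uniformly on the high-probability event of Theorem~\ref{thm:asymp-theory-gaussian}. A smaller technical point, needed for the $\kappa < k_0$ case, is the uniform log-likelihood envelope condition underlying the KL-limit in Theorem~\ref{thm:asymp-theory-gaussian}; this is inherited from the compact parameter space and the lower-bounded covariance eigenvalues assumed throughout Section~\ref{sec:merge-weak}.
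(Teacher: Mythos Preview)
Your proposal is correct and follows exactly the approach the paper takes: the paper's proof of Proposition~\ref{prop:DIC-consistency-gaussian} is a one-liner stating that it proceeds in the same manner as Proposition~\ref{prop:DIC-consistency} but with the rates of Theorem~\ref{thm:asymp-theory-gaussian} in place of Theorems~\ref{thm:asymptotic-height}--\ref{thm:asymptotic-likelihood}, which is precisely what you have spelled out. Your observation about handling the randomness of $\overline{r}(\widehat{G}_n)$ via the deterministic bound $\overline{r}(\widehat{G}_n)\le \overline{r}(k-k_0+1)$ is a useful elaboration that the paper leaves implicit.
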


\section{Simulation studies and real data illustrations}\label{sec:experiments}
  
\subsection{Synthetic data}
 
\subsubsection{Fast parameter estimation via the dendrogram}\label{subsubsec:rate-dendrogram}
We first illustrate that the merged mixing measure in the dendrogram can achieve the fast 
convergence rate to the true mixing measure, although it is constructed from a slowly converging overfitted mixing measure (estimated from empirical data). Here, we consider both strong and weak identifiability settings. In each setting, the data is generated from a mixture of 3 two-dimensional Gaussian distributions, with uniform mixing proportion $(p^{0}_{1}, p^{0}_{2}, p^{0}_{3}) = (1/3, 1/3, 1/3)$ and true mean parameters being $\theta_1^{0} = (2, 1), \theta_2^{0} = (0, 6)$ and $\theta_3^{0} = (-2, 1)$, respectively. In the former setting, covariance matrices of all three true components are the identity matrix and are known, while $G_0 = \frac{1}{3}\delta_{\theta_1^0} + \frac{1}{3}\delta_{\theta_2^0} + \frac{1}{3}\delta_{\theta_3^0}$ is to be estimated. In the latter setting, the true covariance matrices are $\Sigma_1^0 = ((.5, .5), (.5, .1)), \Sigma_2^0 = ((.5, -.1), (-.1, .1))$ and $\Sigma_3^0 = ((.25, .5), (.5, 2.))$, and $G_0 = \frac{1}{3}\delta_{(\theta_1^0, \Sigma_1^0)} + \frac{1}{3}\delta_{(\theta_2^0, \Sigma_2^0)} + \frac{1}{3}\delta_{(\theta_3^0, \Sigma_3^0)}$ is to be estimated. For each setting, we consider the logarithm of the sample sizes $\log_{10}(n)$ ranging from $2$ to $4$ (so that $n$ ranges from $100$ to $10,000$) and generate $n$ samples from the true distribution. The \textbf{e}xact-fitted MLE $\widehat{G}_n^{e}\in \Ecal_{3}$ and \textbf{o}ver-fitted MLE $\widehat{G}_n^{o}\in \Ocal_{5}$ are then learned from data using the EM algorithm. We use Algorithm~\ref{alg:phylo-tree} to merge the overfitted measure $\widehat{G}_n^{o}$ to get the \textbf{m}erged measure $\widehat{G}_{n}^{m}\in \Ecal_{3}$ in the dendrogram, then measure the error of all estimators $\widehat{G}_{n}^{o}, \widehat{G}_{n}^{e}, \widehat{G}_{n}^{m}$ to the true mixing measure $G_0$ in the Wasserstein distances. Each experiment is repeated 64 times, and we plot the average and quartile bar of the logarithm of the error in Figure~\ref{fig:rates-convergence} (best to see with colour). 

It can be seen that the merged algorithm improves the convergence rate of the overfitted mixing measures from $n^{-1/4}$ (in Wasserstein-2 distance) to $n^{-1/2}$ (in Wasserstein-1 distance) in the strongly identifiable setting and from $n^{-1/12}$ (in Wasserstein-6 distance) to $n^{-1/2}$ (in Wasserstein-1 distance) in the weakly identifiable setting, which is comparable with the exact-fitted mixing measures. Recall that we use Wasserstein-6 distance due to the solution structure of the system of polynomial equations (Section~\ref{sec:merge-weak}) and $\overline{r}(5-3+1) = \overline{r}(3)=6$. The convergence rate of overfitted mixing measures in Wasserstein-1 distance may vary anywhere from $n^{-1/4}$ to $n^{-1/2}$ in the strongly identifiable setting and from $n^{-1/12}$ to $n^{-1/2}$ in the weakly identifiable setting, depending on the relationship between overfitted components; however, the rate is typically not the fast root-$n$ rate. Notably, in the location-scale Gaussian experiments, the merged measures are mostly similar to the exact-fitted mixing measures so that they almost overlap. It can be partially explained by investigating the EM algorithm and merging scheme in Algorithm~\ref{alg:merge-atom-weak}. Hence, the mixing measure in the exact-fitted level of the dendrogram has a root-$n$ convergence rate to the true parameter. It is remarkable that this measure is directly constructed from the slowly convergent overfitted mixing measure without re-fitting the model with data.
\begin{figure}[t!]
      \centering
      \subcaptionbox*{\scriptsize (a) Strongly identifiable setting \par}{\includegraphics[width = 0.48\textwidth]{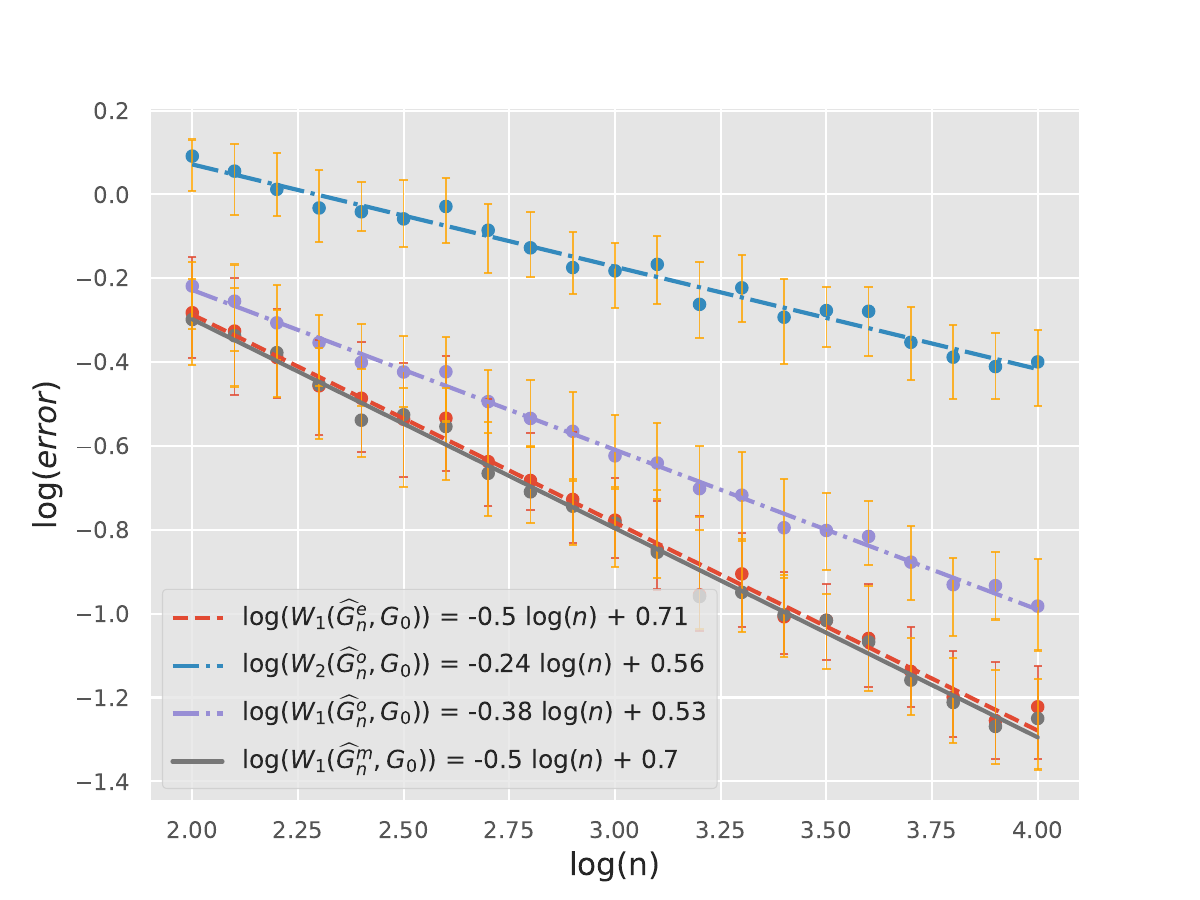}}
      \subcaptionbox*{\scriptsize 
        \centering (b) Weakly identifiable setting \par}{\includegraphics[width = 0.48\textwidth]{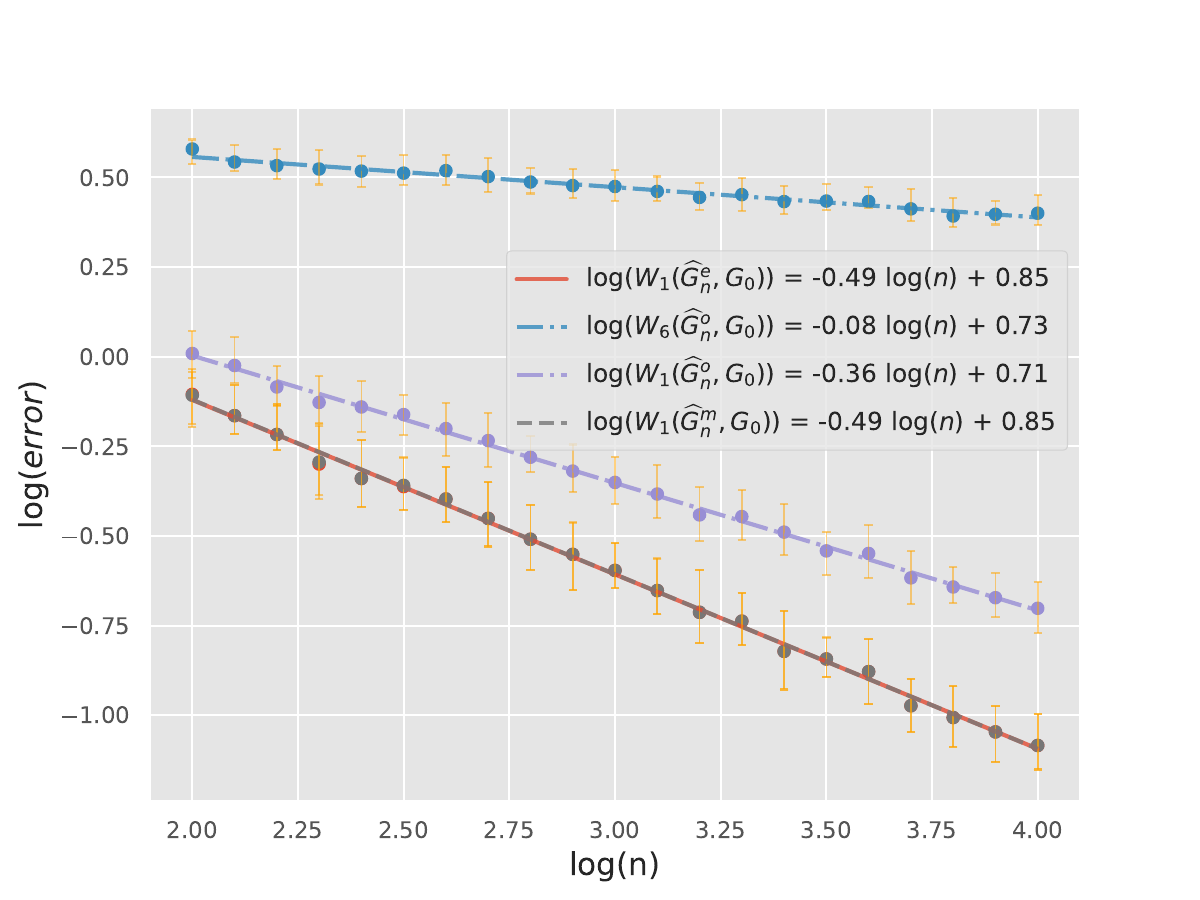}}
      \caption{\centering Rates of convergence of overfitted, exact-fitted, and merged mixing measures}\label{fig:rates-convergence}
\end{figure}

\subsubsection{Model selection with DIC}\label{subsubsec:DIC}
Next, we illustrate the property of the model selection method using DIC while comparing it with Akaike Information Criteria (AIC) and Bayesian Information Criteria (BIC). It is worth recalling that the power of the dendrogram is the capacity to represent hierarchical structures in subpopulations, whereas the various Information Criteria report a single number that can be used to select a reasonable number of subpopulations. With the dendrogram, we can additionally illustrate the merging process and hierarchy of atoms representing the data population's heterogeneity.
  
\paragraph{Simulation setup.} We focus on the model selection for the widely applied location-scale Gaussians. In each simulation setting and sample size $n$, $n$ i.i.d. observations were generated according to the specified setting. The sample sizes range from 10 to 10,000. We performed model selection methods with the upper bound number of components $k = 10$, replicated the experiments 100 times and recorded the number of times the information criterion chose the correct number of components and the average number of components. For AIC and BIC, this procedure contains fitting the mixture of $\kappa$ location-scale Gaussians to the data, for $\kappa = 1, \dots, 10$, then calculating the AIC and BIC scores to find the minimizer. For our method, only a single fit with $k=10$ components is needed, then the dendrogram is obtained from the inferred mixing measure, and the DIC is calculated accordingly. Unless stated otherwise, all the penalization scales $\omega_n$ in DIC are set to $\log(n)$. 
  
\paragraph{Well-specified regime.} First, we consider the setting where the true data-generating model is also a location-scale mixture of Gaussians. For simplicity, we choose it to be the distribution described in Section~\ref{subsubsec:rate-dendrogram}, where three Gaussian components are fairly separated with a uniform mixing proportion. It is noted that all methods handle the model selection in this setting similarly and well, as the true model belongs to the family of fitting models. Due to space constraints, the illustration of this setting is put in Appendix A.1.
 \begin{figure}[t!]
      \centering
      \subcaptionbox*{\scriptsize (a) Histogram with true density \par}{\includegraphics[width = 0.495\textwidth]{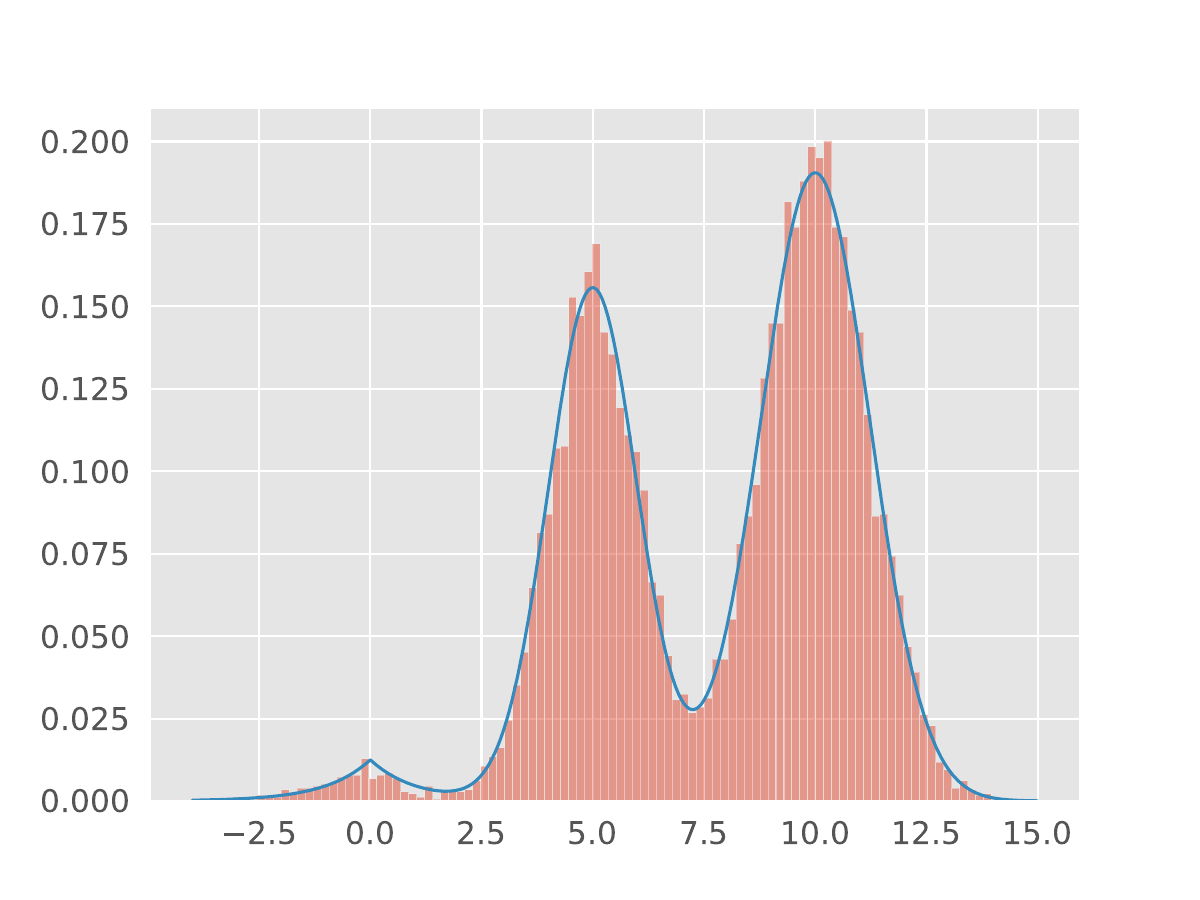}}
      \subcaptionbox*{\scriptsize (b) Dendrogram of mixing measure with 10 atoms\par}{\includegraphics[width = 0.495\textwidth]{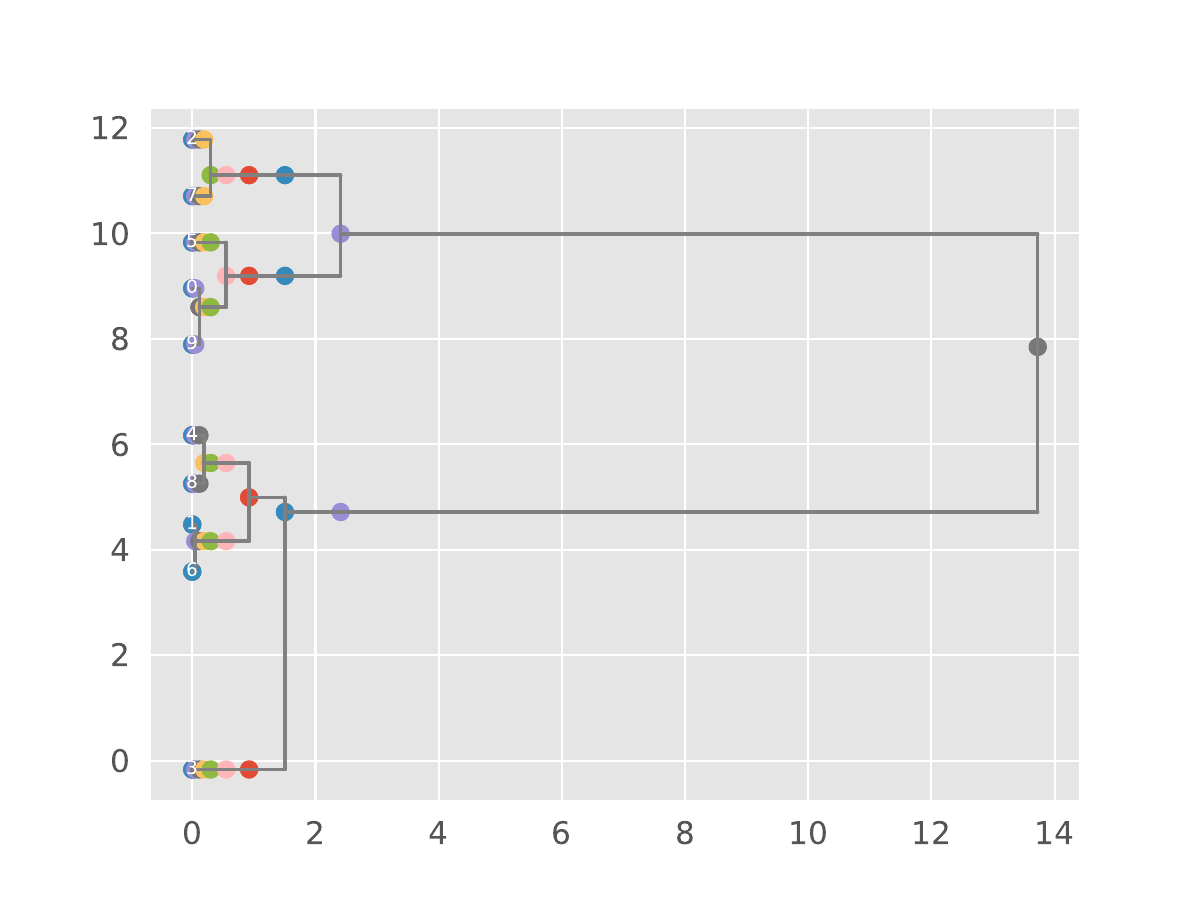}}
      \subcaptionbox*{\scriptsize (c) Proportion of choosing $k_0 = 2$ \par}{\includegraphics[width = 0.48\textwidth]{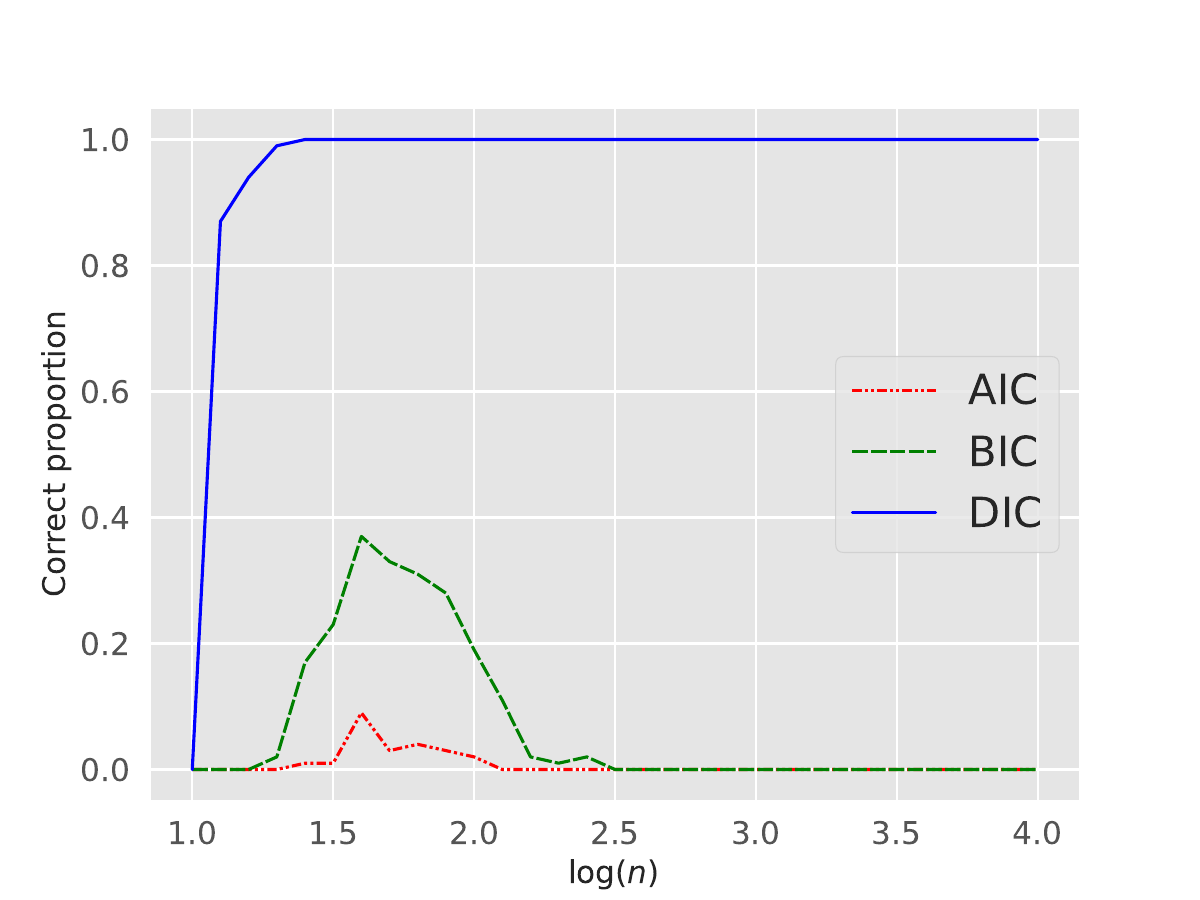}}
      \subcaptionbox*{\scriptsize (d) Average choices number of components \par}{\includegraphics[width = 0.48\textwidth]{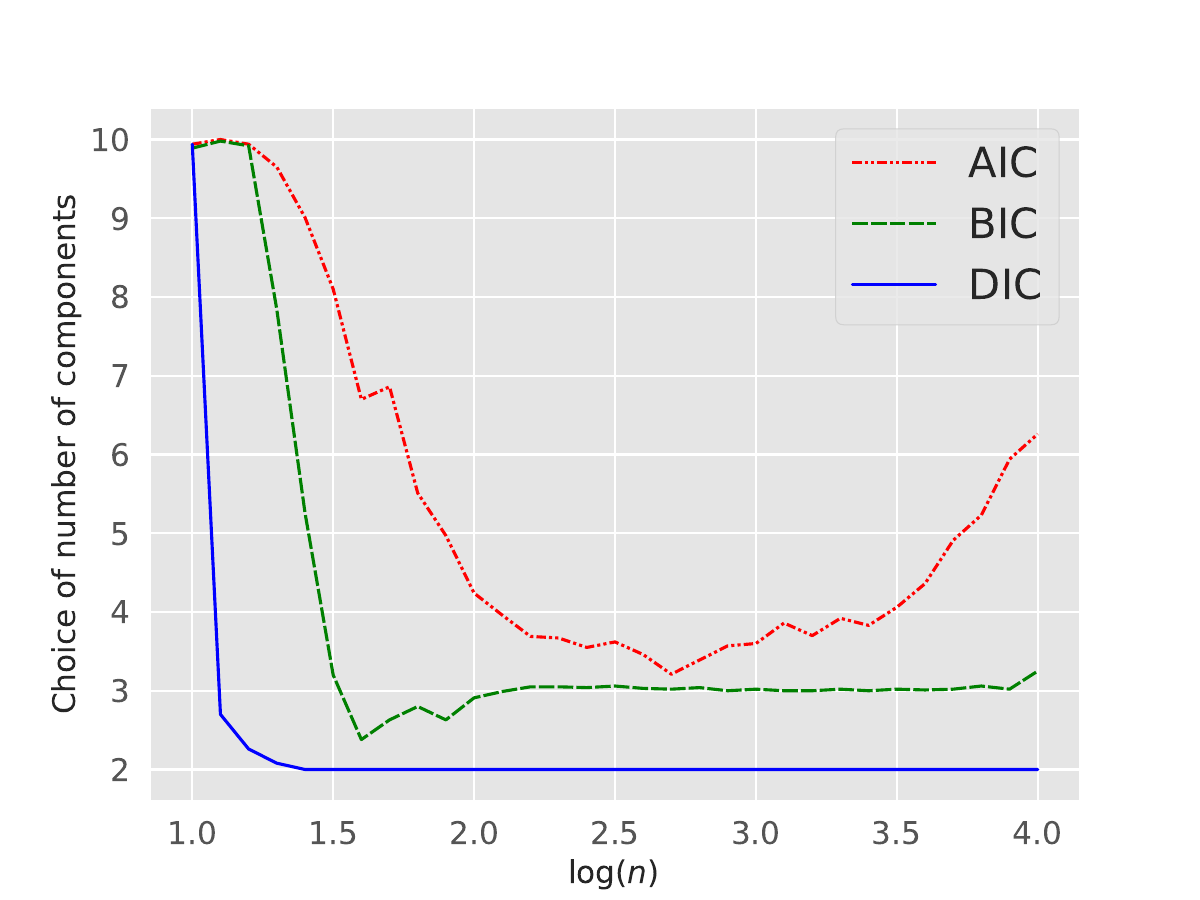}}
      \caption{\centering Experiments with $\epsilon$-contamination data}\label{fig:eps-contamination}
\end{figure}
  
\paragraph{Misspecified regime 1: $\epsilon$-contamination.} Next, we consider the setting where the true data-generating model is a mixture of Gaussians contaminated by a component from a different family of densities, where there are noticeable performances between AIC, BIC, and DIC. We choose the true model exactly the same as in \cite{cai2021finite}, where they show that model selection using Mixture of Finite Mixture (MFM) does not reliably specify the correct number of components under small contamination. The true generative distribution is $p_0 = (1-\epsilon) p_{G_0} + \epsilon q$, where $\epsilon > 0$ is the contaminated level, $p_{G_0}$ is a mixture of two location-scale Gaussians with $G_0 = p_1^0 \delta_{(\mu_1^0, (\sigma_1^0)^2)} + p_2^0 \delta_{(\mu_2^0, (\sigma_2^0)^2)} = 0.4 \delta_{(5, 1)} + 0.6 \delta_{(10, 1.5)}$, and $q$ is a Laplace distribution with location 0 and scale 1. The contaminated level is $\epsilon = 0.01$, which is relatively small. Thus, we wish to be able to detect that $k_0 = 2$ in this task. 

The density function and a histogram of data with $n=2000$ can be seen in Figure~\ref{fig:eps-contamination}(a). The proportion of picking the correct number of components $k_0 = 2$ and the average numbers of components are plotted in Figure~\ref{fig:eps-contamination}(c,d). It shows the similar behaviour of AIC and BIC: they can only detect $k_0 = 2$ with sample sizes that are not too small or large, which also aligns with the behaviour of MFM in~\cite{cai2021finite}. The DIC prefers atoms with significantly large proportions, so it tends to pick $k_0 = 2$ and is more robust to contamination. As the sample size gets large, BIC interprets the data as a mixture of three components, while AIC chooses the model with many more components. Although DIC detects $k_0=2$, when we look into the dendrogram (Figure~\ref{fig:eps-contamination}(b)), we can still see there is a small portion of the Laplace component, which is soon merged into the closest normal component. This demonstrates that the hierarchical tree of mixing measures is considerably more informative than an estimate of the number of components when the model is misspecified.
   
\paragraph{Misspecified regime 2: Skew-normal mixtures.} In this case, the true data generating model is a mixture of two skew-normal distributions $p_{0}(x) = 0.4 * \text{Skew}(5, 1, 20) +  0.6 * \text{Skew}(8, 1.5,20)$. Recall that the density function of the skew-normal distribution with parameters $\mu$ (location), $\sigma^2$ (scale) and $\alpha$ (skewness) is $f(x|\mu, \sigma^2, \alpha) = 2\phi(x) \Phi(\alpha x)$, where $\phi$ and $\Phi$ are the pdf and cdf of the normal distribution with mean $\mu$ and variance $\sigma^2$. The skewness levels $\alpha$ of the two components in the true distribution are equal to 20, which indicates that each component is skewed to the right. The dendrogram with model selection results can be seen in Figure~\ref{fig:skew}. Because the true data distribution is not a mixture of location-scale Gaussians, the number of components chosen by AIC and BIC seems to diverge to the upper bound of $10$ as the sample size increases. For DIC, the term regarding the dendrogram's heights, which is the squared length of the Wasserstein projection (Proposition~\ref{prop:Wasserstein-variational}), puts more penalization on its score when the mixing measure is close to a mixing measure with fewer atoms. Hence, it actively penalizes redundant atoms and excess masses. The dendrogram of the mixing measure of mean parameters in Figure~\ref{fig:skew}(b) indicates that there are two large subpopulations of atoms.

\begin{figure}[t!]
      \centering
      \subcaptionbox*{\scriptsize (a) Histogram with true density \par}{\includegraphics[width = 0.495\textwidth]{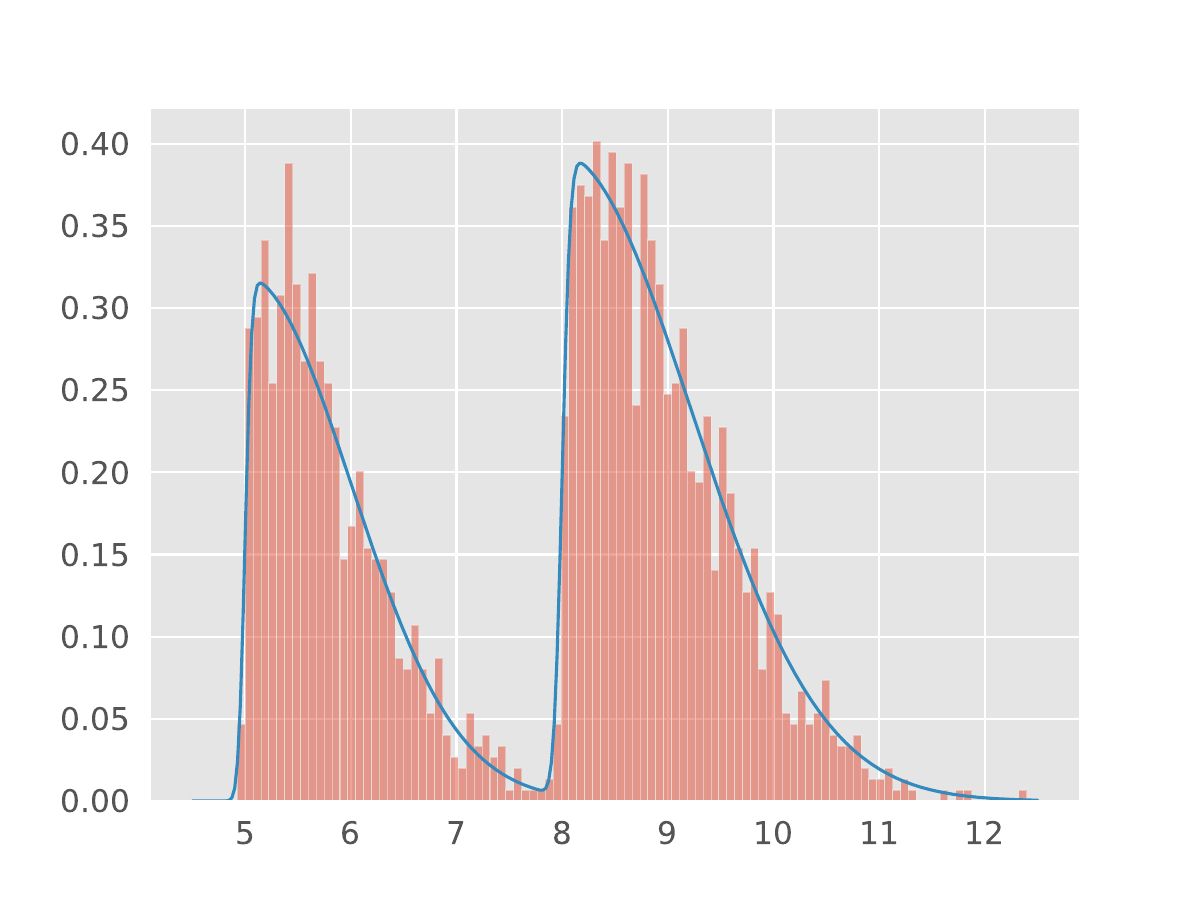}}
      \subcaptionbox*{\scriptsize (b) Dendrogram of mixing measure with 10 atoms\par}{\includegraphics[width = 0.495\textwidth]{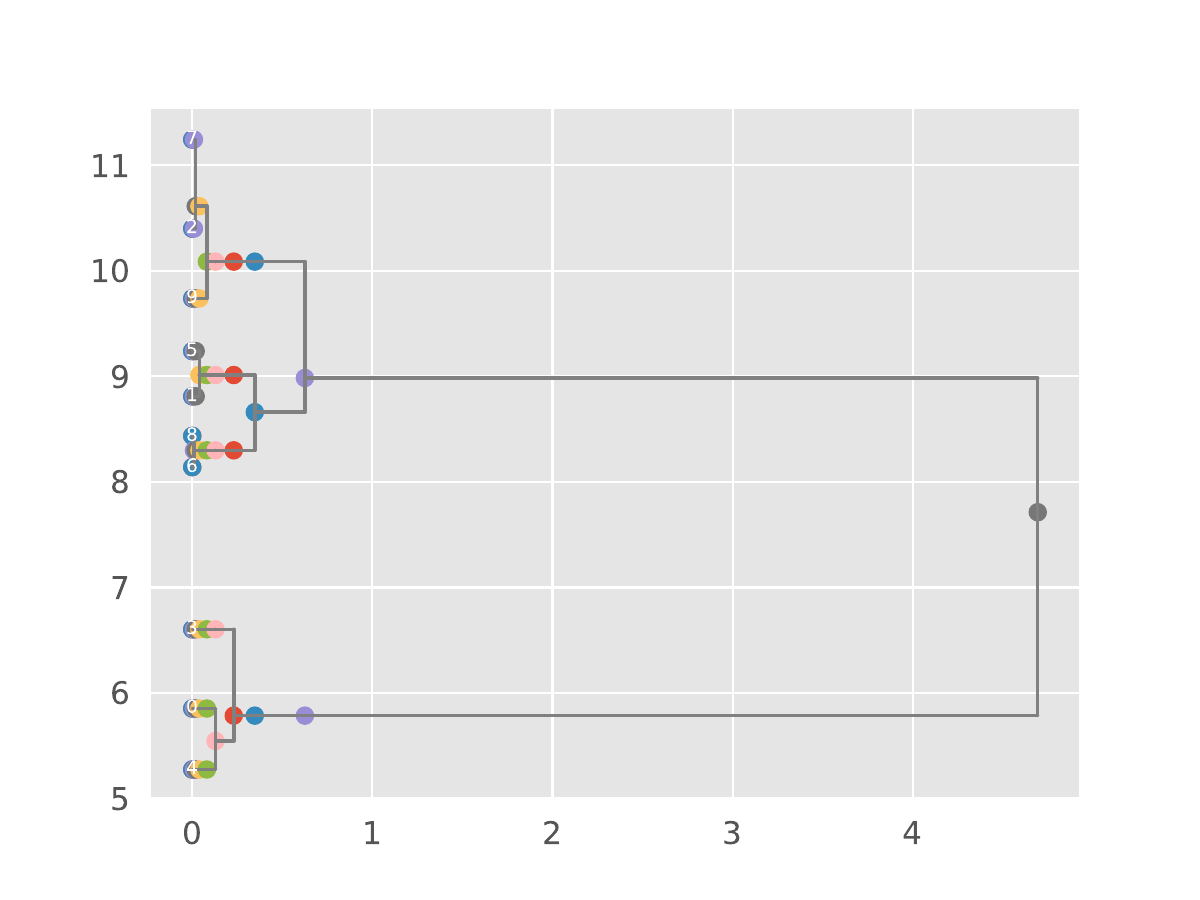}}
      \subcaptionbox*{\scriptsize (c) Proportion of choosing $k_0 = 2$ \par}{\includegraphics[width = 0.48\textwidth]{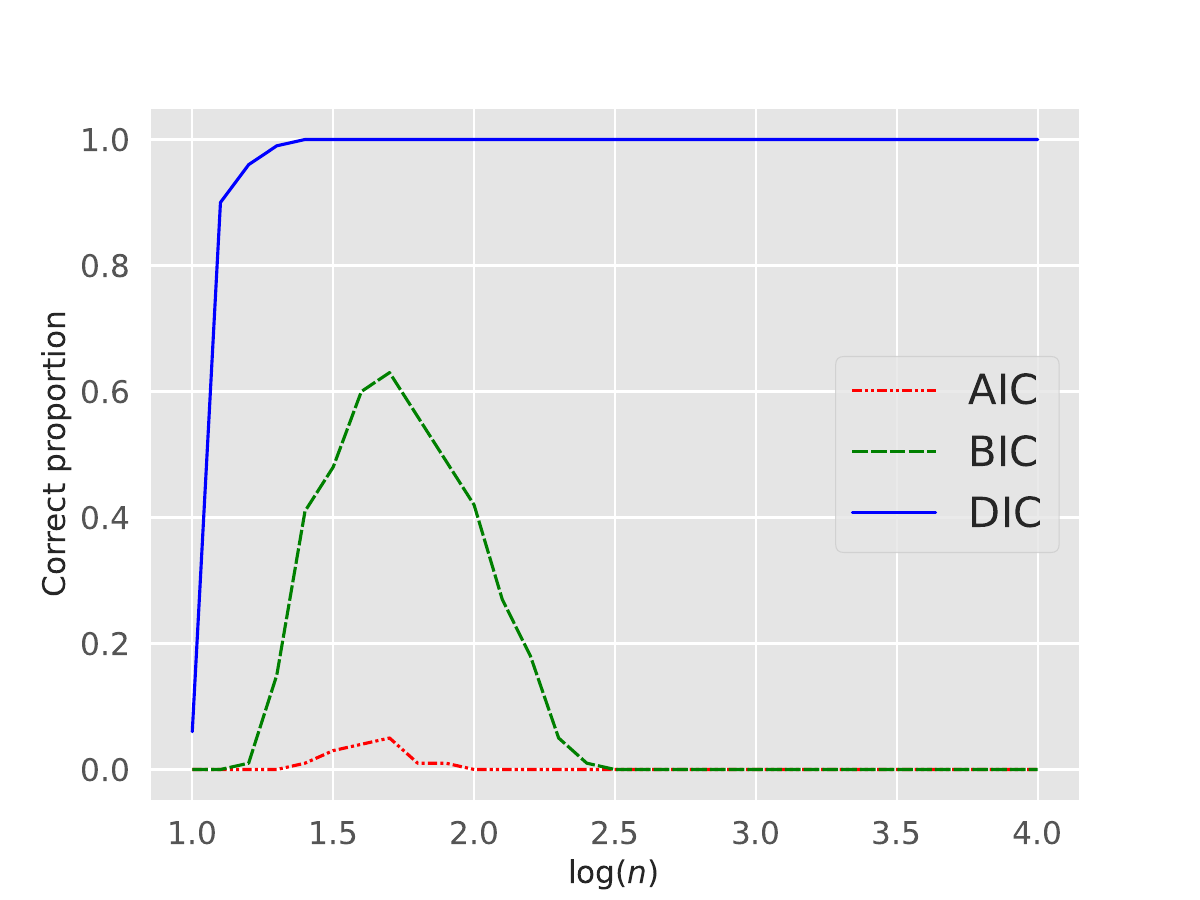}}
      \subcaptionbox*{\scriptsize (d) Average choices number of components \par}{\includegraphics[width = 0.48\textwidth]{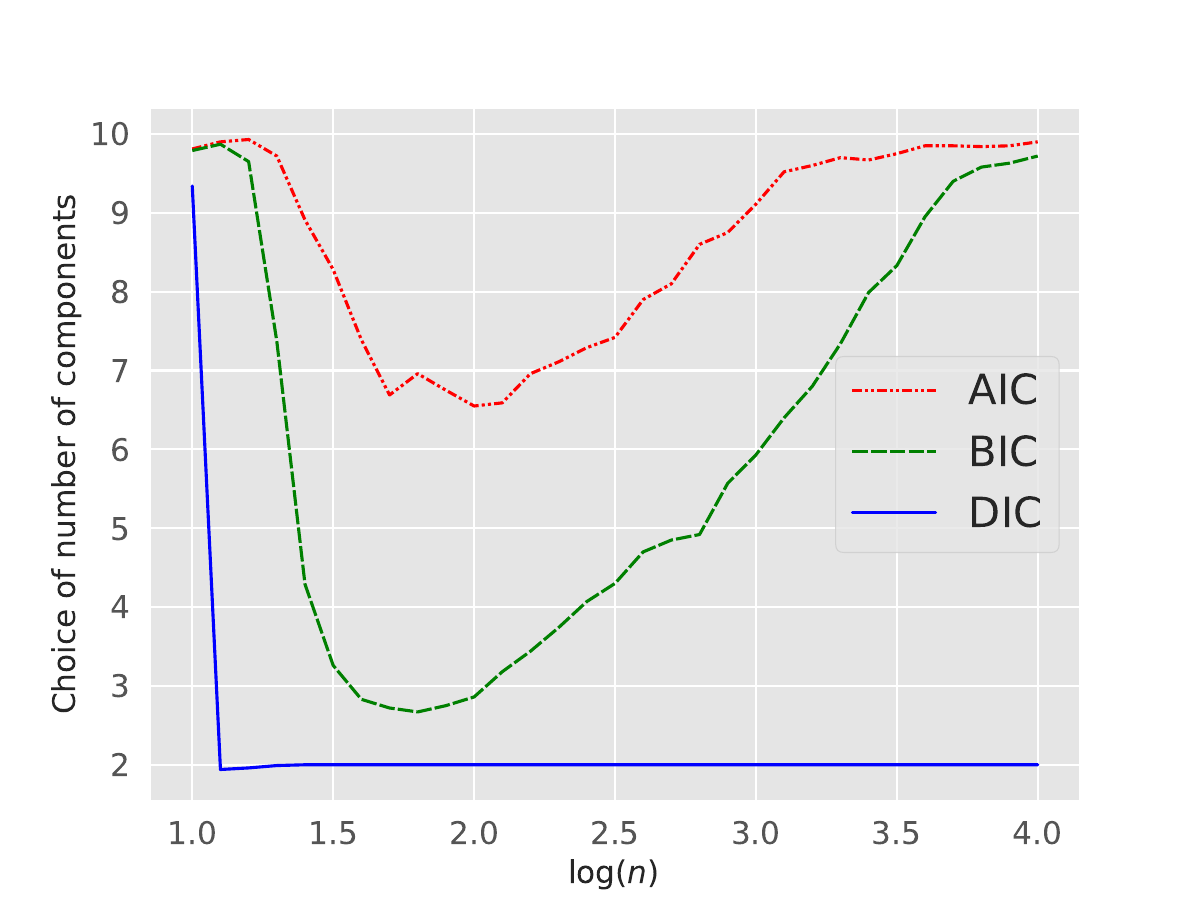}}
      \caption{\centering Experiments with the mixture of skew-normal data}\label{fig:skew}
\end{figure}
% \vspace{-2cm}

% \subsubsection{Dendrogram with different levels of over-fitting}
% I decided to do this with real data instead of simulation.
% - How robust is the method with different choice of upper bound $k$: Maybe $k$ can not be too large. 

% The same data: different $k$: How different is the tree? (should be a subsection.)

% {\color{blue} with real data: Try with different $k$ and see how the trees are related.

% Try different $k$: same answers, then happy. If not, then either the model is incorrect, or we still underfit the model.}
  
\subsection{Real data illustrations}

\begin{figure}[t!]
      \centering
      \subcaptionbox*{\scriptsize (a) Dendrogram in the first PC \par}{\includegraphics[width = 0.495\textwidth]{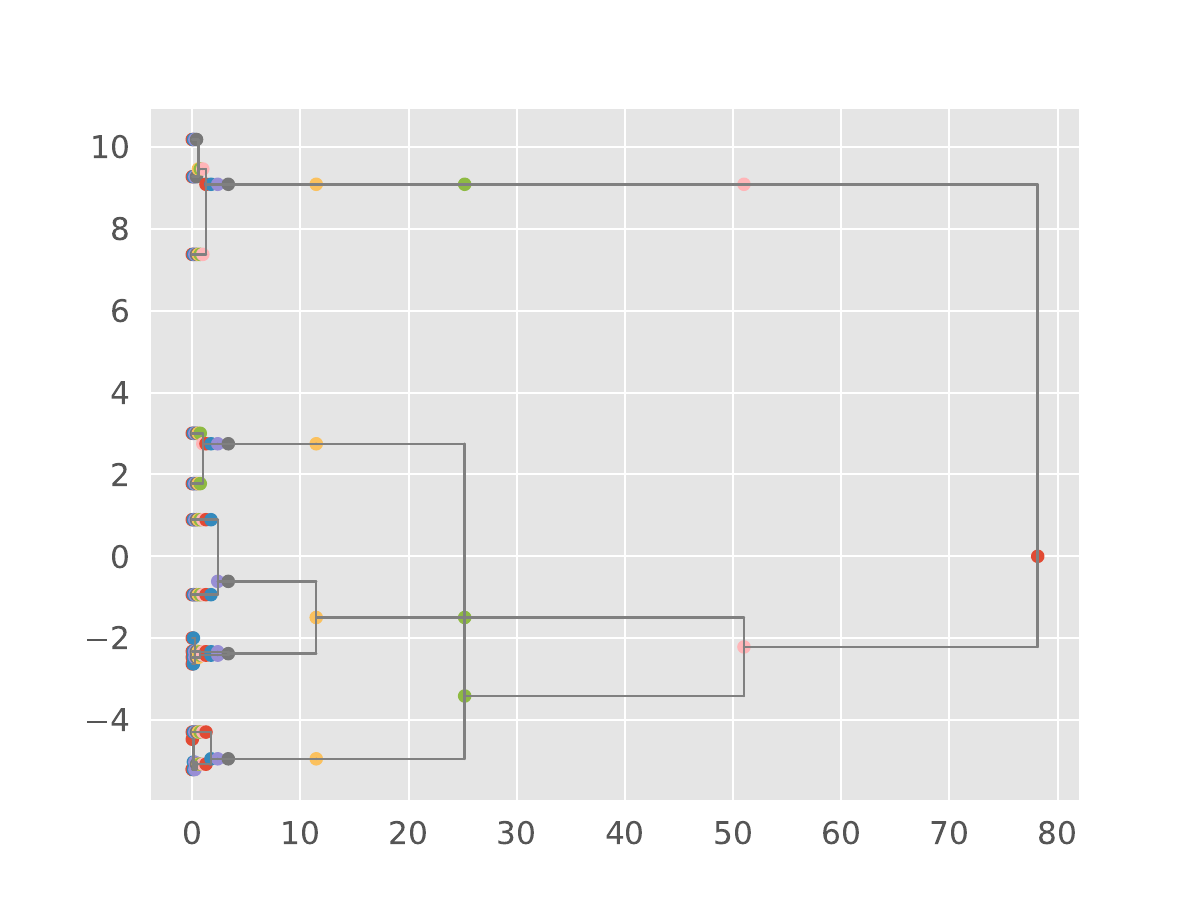}}
      \subcaptionbox*{\scriptsize (b) Dendrogram in the second PC\par}{\includegraphics[width = 0.495\textwidth]{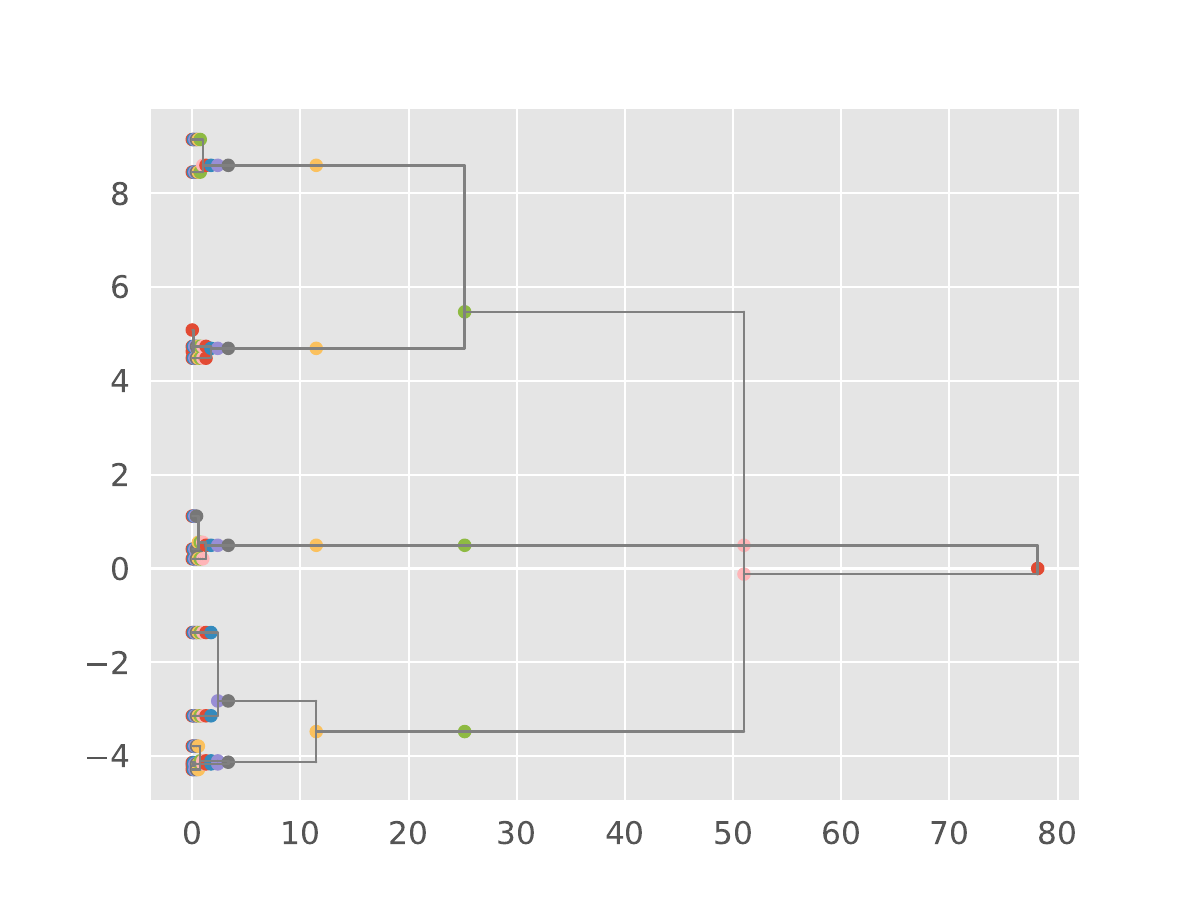}}
      \subcaptionbox*{\scriptsize (c) Heights between levels in dendrogram\par}{\includegraphics[width = 0.48\textwidth]{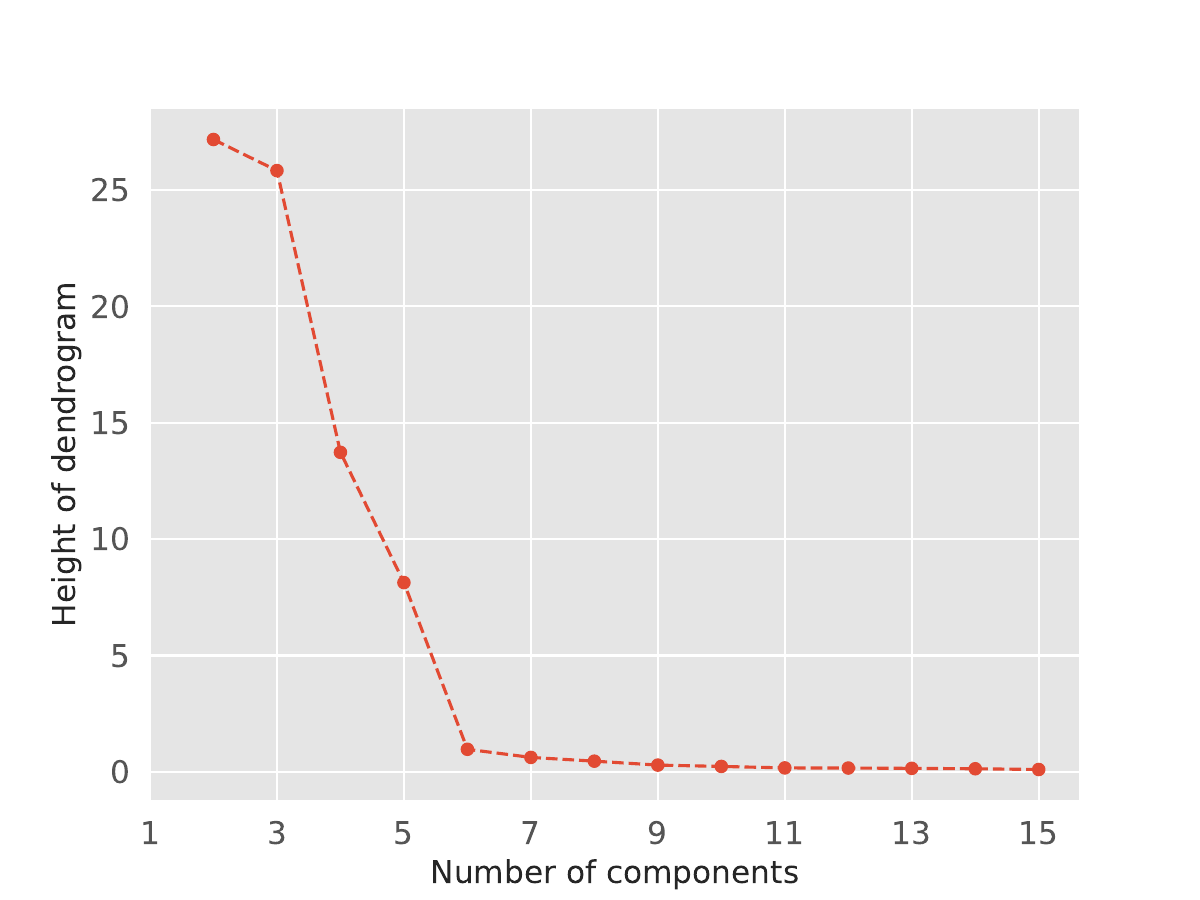}}
      \subcaptionbox*{\scriptsize (d) Likelihood of levels in dendrogram \par}{\includegraphics[width = 0.48\textwidth]{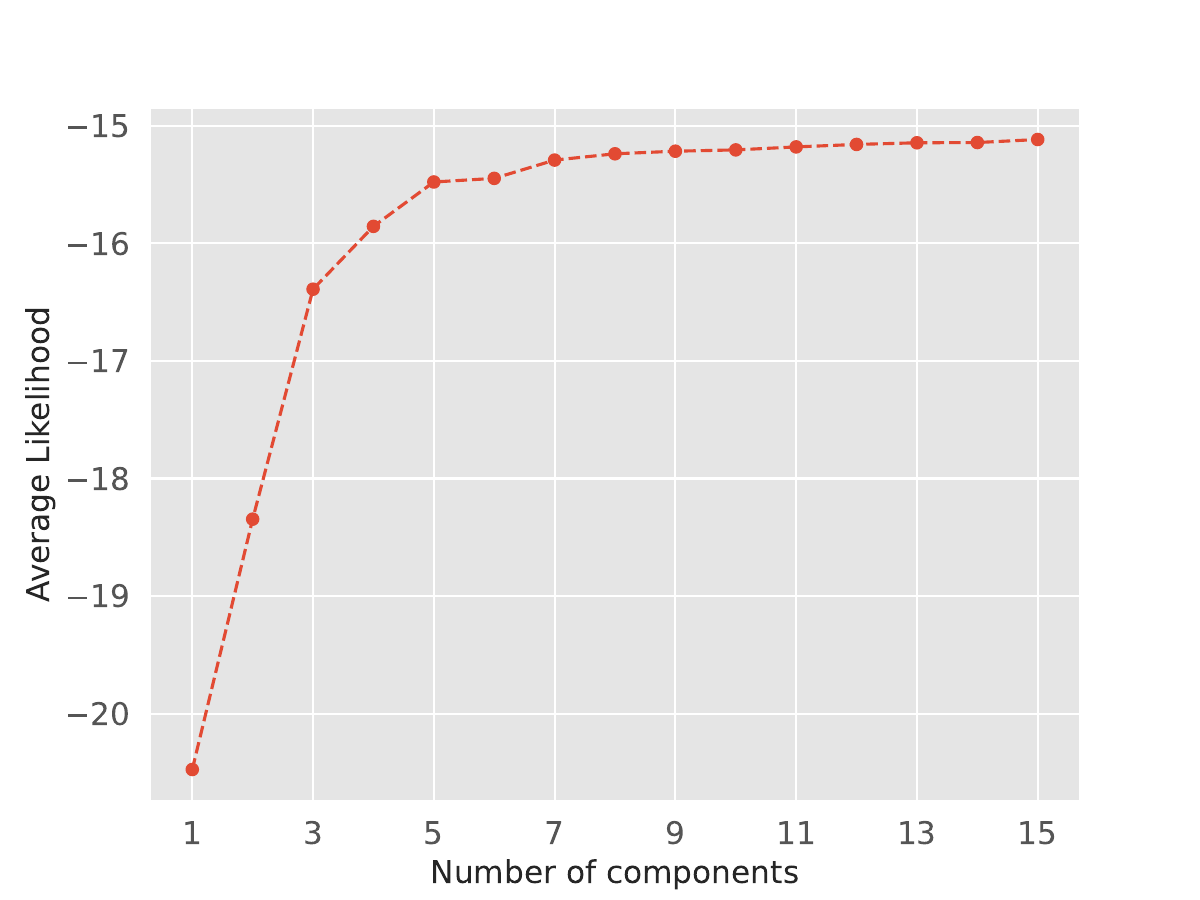}}
      \caption{\centering Dendrogram of mixing measures inferred from Single-cell data}\label{fig:sce}
\end{figure}

We consider an application of the dendrogram of mixing measures in Single-cell RNA-sequencing data presented in \cite{zheng2017massively}, where they collected 68,000 peripheral blood mononuclear cells and labelled them based on expression profiles of 11 reference transcriptomes from known cell types. There are five cell types in the dataset: memory T cells, B cells, naive T cells, natural killer cells, and monocytes. After cleaning the data (dropping low-count cells and performing log pseudo-count transform), there are 41,159 cells left. 

We project the unlabeled data onto the first 10 Principal Component (PC) spaces and fit with a mixture of location-scale Gaussians with 15 (mixture) components. The dendrograms of the fitted mixing measure in the first two Principle Components are plotted in Figure~\ref{fig:sce}(a,b), while the corresponding heights and likelihood of each level in the dendrogram can be seen in (c\&d). The theory informs us that the heights of the overfitted mixing measures tend to 0, and the likelihoods at exact-fitted and overfitted levels have the same limit as the sample size gets large. We can see the heights are approximately 0, and the likelihood does not increase much after the fifth level. Hence, by inspecting both the plots of the heights and likelihood, we conclude that there should be five cell types in the data. 

For this data, BIC chooses 13 components, and DIC chooses 3. We notice that DIC can prefer under-fitted models sometimes due to the magnitude mismatch between likelihoods and parameters. To overcome this issue, one can plot the height and the likelihood along the dendrogram separately and perform model selection heuristically using those information. Otherwise, one can eliminate all orders that he believes are under-fitted before using DIC. Indeed, when starting from 5 to 15 components, DIC chooses 5 as the best fit. Finally, by inspecting the dendrogram and using nearest neighborhood classification, we see that the two merging components of the mixing measures with 5 components correspond to the class of memory T cells and naive T cells, which are most similar to each other compared to other cell types. Hence, not only the has the dendrogram efficiently performed model selection, it has also revealed hierarchy in complex data, thus enhancing interpretability of the mixture model parameter estimates.

\section{Conclusion and future investigation}
  
We proposed a method for dendrogram construction in mixture modelling-based inference, established the asymptotic properties of our method, and demonstrated its usefulness for addressing heterogeneous data. 
%To this end, we would like to recall its philosophical reasoning and generalization. Indeed, there is a need to choose the model complexity in many unsupervised learning methods, either directly via the number of parameters or indirectly via penalization. The usual practice is to choose a seemingly reasonable complexity level and report the model parameter. 
Our method can also be viewed as providing the statistical foundation for a class of data-driven and pragmatic hierarchical clustering algorithms widely employed in practice. Our work shows that mixture models continue to be useful as a tool for unravelling the complexity underlying heterogeneous data, and one can do that by producing a nested class of latent structures induced by the model (namely, the dendrogram of mixing measures) rather than relying on a single point estimate of the original (mixture) model. The outcome of our estimation procedure is more informative and amenable to robust inference. We established this in theory, e.g., the dendrogram can enable a fast parameter estimation rate even for weakly identifiable models. We also demonstrated in practice the robustness of the dendrogram and the learning of clustering even when the model is misspecified.
%we recommend summarizing the nested in a tree-based manner so that we can capture the hierarchical structure in data and get the most out of the method before making inferences. 
Possible extensions include hierarchical models satisfying identifiability conditions, such as Hidden Markov Model~\cite{gassiat2014posterior} and admixture model~\cite{nguyen2015}. Furthermore, the asymptotic behavior of the dendrogram under the model misspecification is worth studying. Finally, different settings may open up the possibility of choosing different dissimilarities and linkages in constructing the dendrogram for inference.

\section*{Acknowledgement}
This research was supported in part by the National Science Foundation (via grants DMS-2015361 (XLN), DMS-2052653 (JT), the National Institute of General Medical Sciences of the NIH under award number R35GM151145, and a research gift from Wells Fargo (XLN). LD and SAM are supported by the NSF-Simons Southeast Center for Mathematics and Biology (SCMB) through the grants from NSF DMS-1764406 and Simons Foundation/SFARI 594594. The content is solely the responsibility of the authors and does not necessarily represent the official views of the NIH.
\bibliography{dat}
\bibliographystyle{apalike}

\newpage
\appendix
\begin{center}
{\bf{\LARGE{Supplements to "Dendrogram of mixing
measures: Learning latent hierarchy and model
selection for finite mixture models"}}}
\end{center}

\section{Additional experiments}
\subsection{Dendrogram in well-specified setting}
Here, we provide an illustration and model selection results for the data in Section~\ref{subsubsec:rate-dendrogram} and Section~\ref{subsubsec:DIC}, the well-specified setting. A simulated data of 300 observations with contour plots of 3 Gaussian components can be seen in Figure~\ref{fig:merge_demo}(a). We started by overfitting the data with a mixture of 10 components. The results contour plots can be seen in Figure~\ref{fig:merge_demo}(b). Because location-scale Gaussian components spread and fit into several parts in the data, the convergence of parameter estimation is predictably bad, as described in Section~\ref{sec:merge-weak}. However, as those components merge together using the described algorithm, the merged parameters become increasingly concentrated around the true parameters. At the exact-fitted level, the parameter estimates nearly coincide the true parameters. The simulation in Section~\ref{subsubsec:rate-dendrogram} shows that the convergence rate of the merged mixing measures is comparable with that of the exact-fitted mixing measure estimates. Next, we compared DIC with AIC and BIC for model selection for this data. For each sample size $n$ (ranging from 10 to 10,000), we generate $n$ i.i.d. observation from this distribution, fit Gaussian mixture models with up to ten components, and record the optimal number of components chosen by AIC, BIC, and DIC. The penalization scale is chosen to be $\omega_n = 0.5 \log(n)$ for DIC. The experiment is replicated 100 times, and the percentage of replications that each model selection method picks correct $k_0 = 3$ is plotted in Figure~\ref{fig:well-specified}. It is observed that all methods have similar performance in this setting because the true generative model lies in the space of fitted models. 

\begin{figure}[t!]
      \centering
      \subcaptionbox*{\scriptsize (a) Data with true contour plot \par}{\includegraphics[width = 0.32\textwidth]{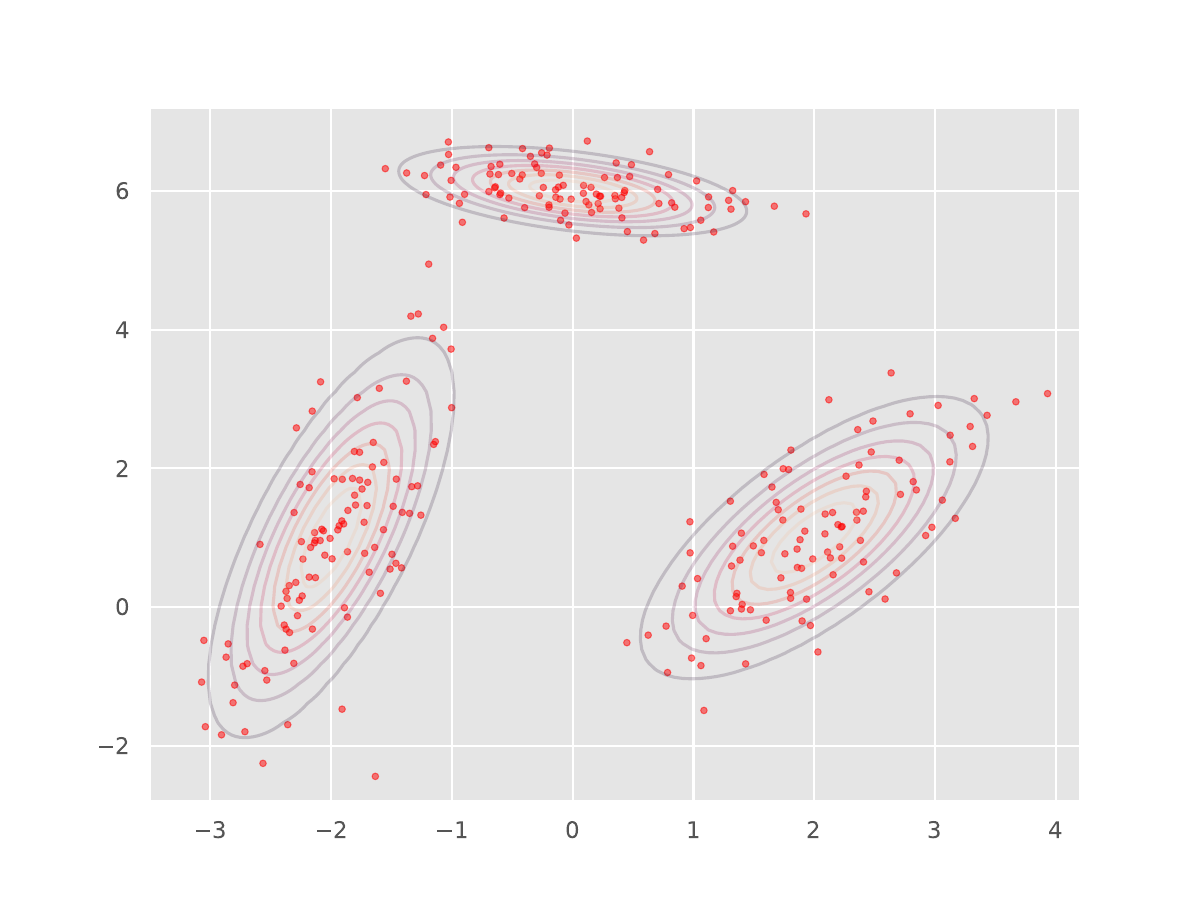}}
      \subcaptionbox*{\scriptsize (b) Over-fitting $k=10$ \par}{\includegraphics[width = 0.32\textwidth]{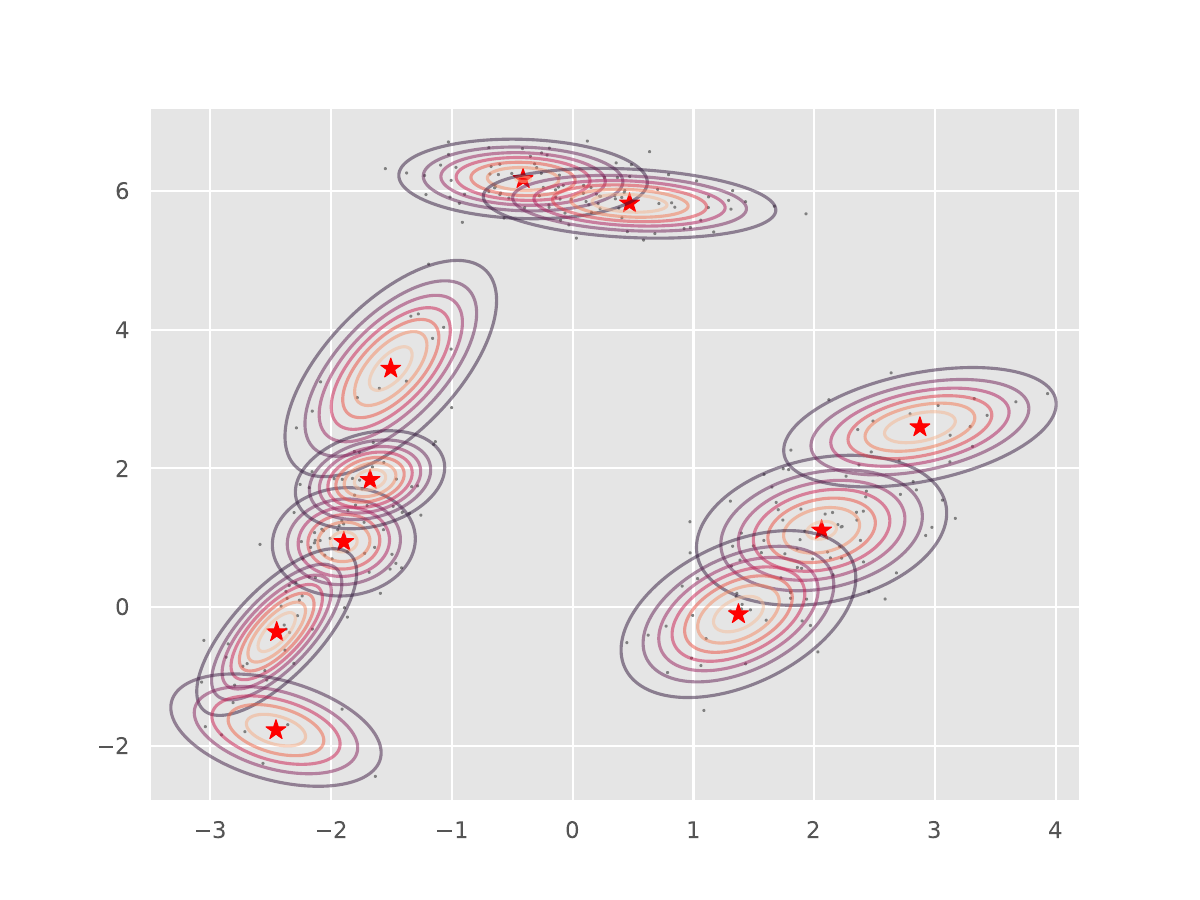}}
      \subcaptionbox*{\scriptsize (c) Merge $\kappa = 8$ \par}{\includegraphics[width = 0.32\textwidth]{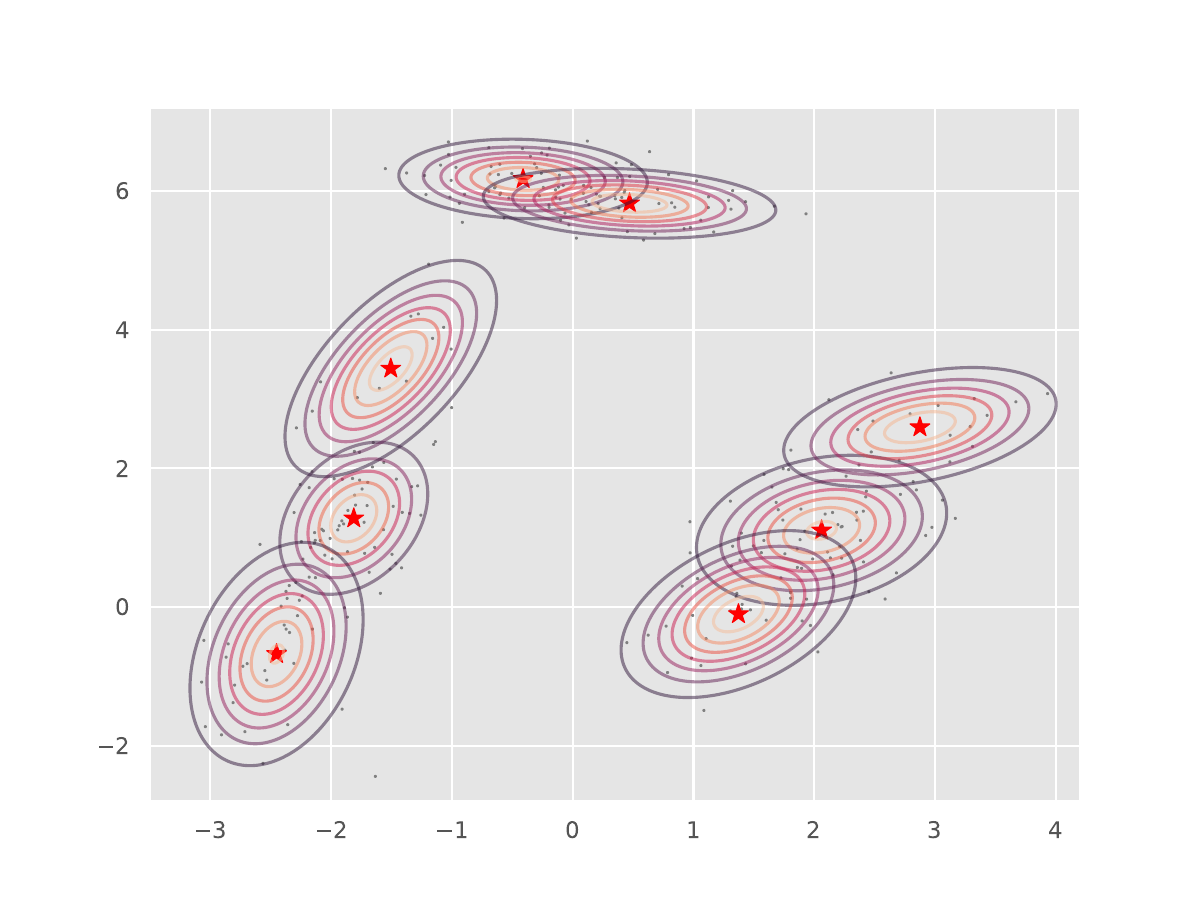}}
      \subcaptionbox*{\scriptsize (f) Merge $\kappa = 3$ \par}{\includegraphics[width = 0.32\textwidth]{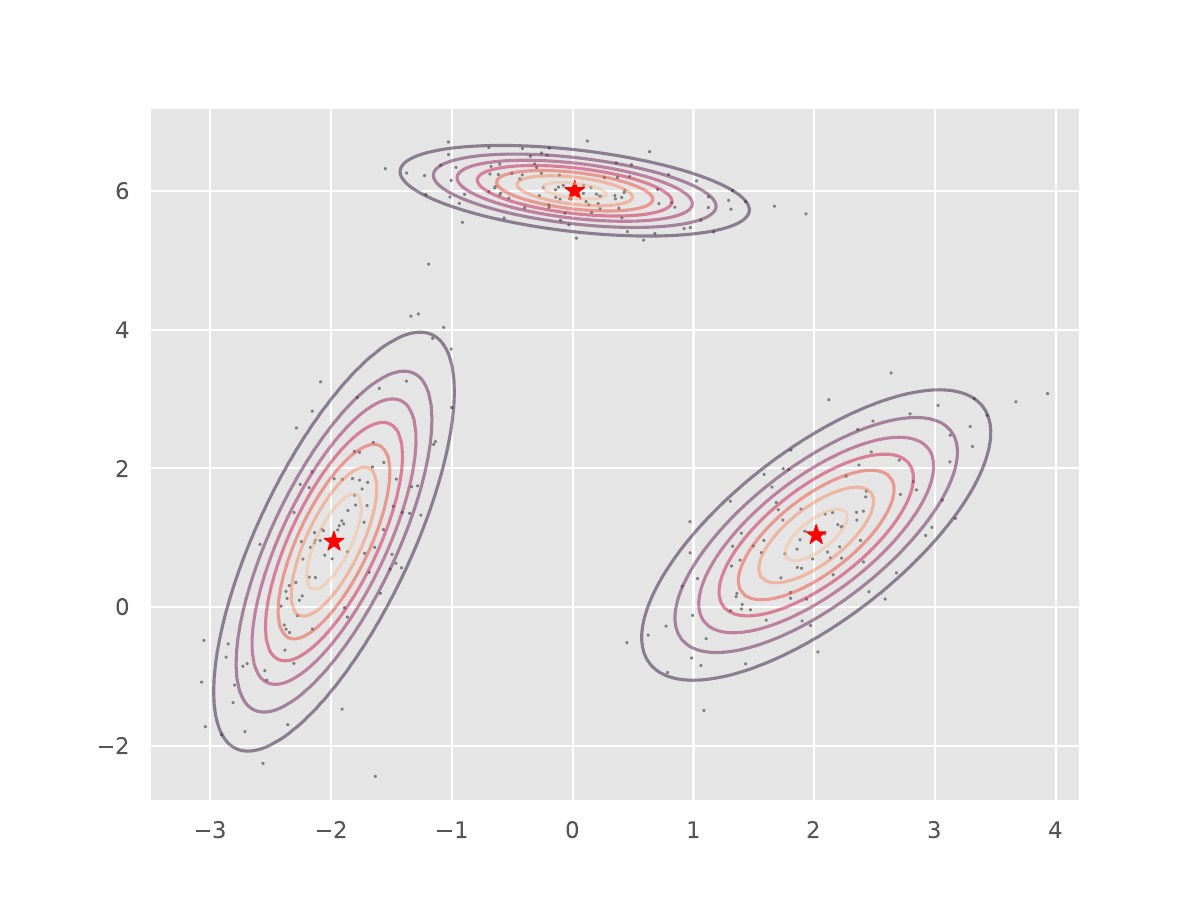}}
      \subcaptionbox*{\scriptsize (e) Merge $\kappa = 4$ \par}{\includegraphics[width = 0.32\textwidth]{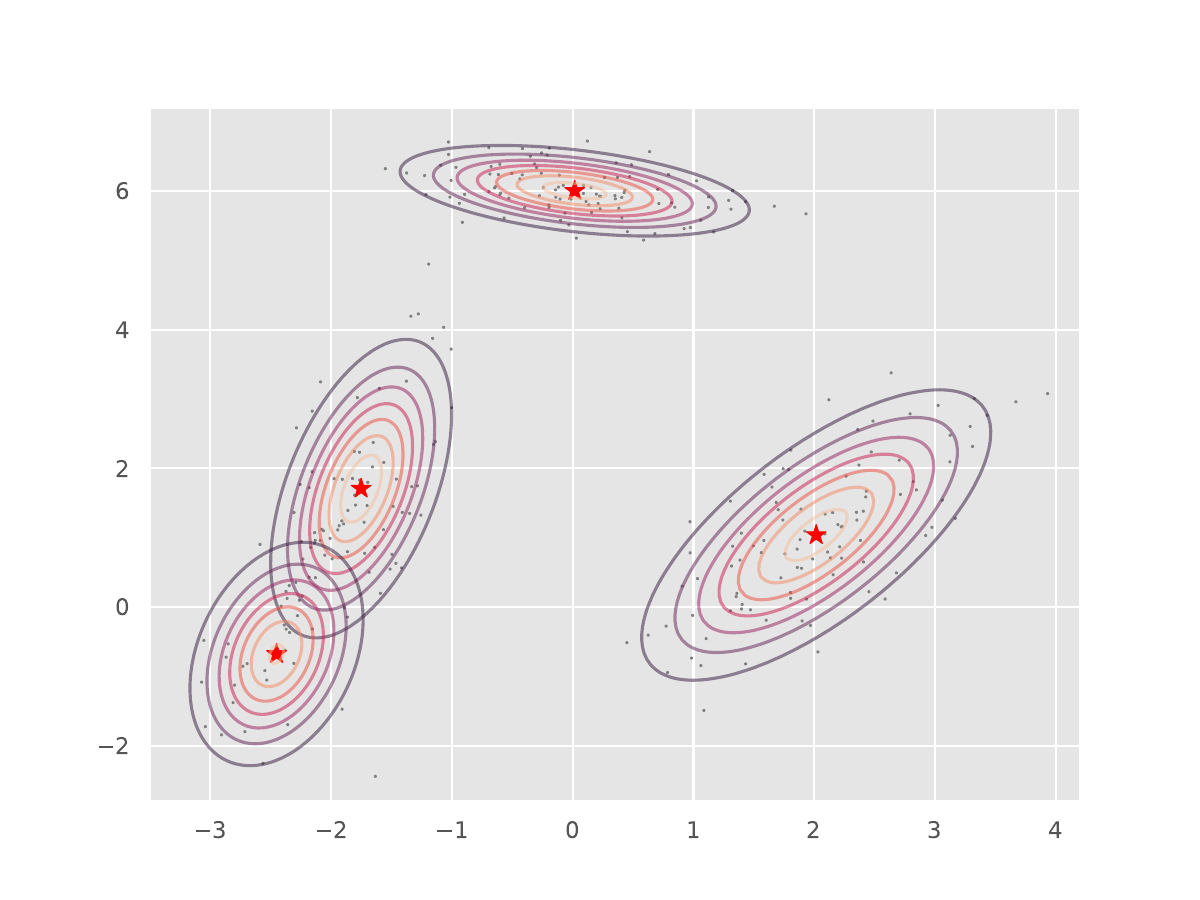}} 
      \subcaptionbox*{\scriptsize (d) Merge $\kappa = 6$ \par}{\includegraphics[width = 0.32\textwidth]{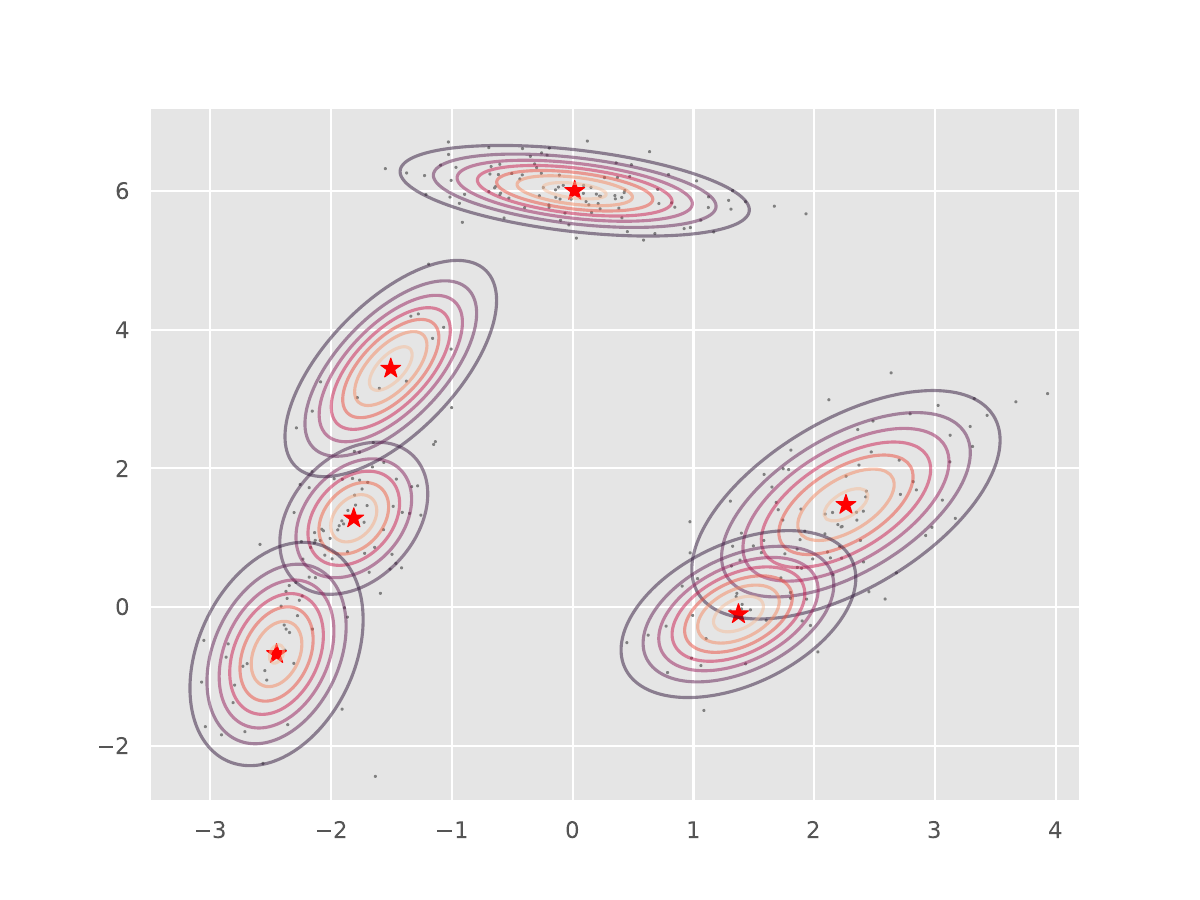}}  
      \caption{\centering Mixing measures on the dendrogram with $k=10$ and $k_0=3$ (displaying in the clockwise order for ease of comparing between the merged measure and the truth)} \label{fig:merge_demo}
\end{figure}
\begin{figure}[t!]
    \centering
    \includegraphics[width=0.6\textwidth]{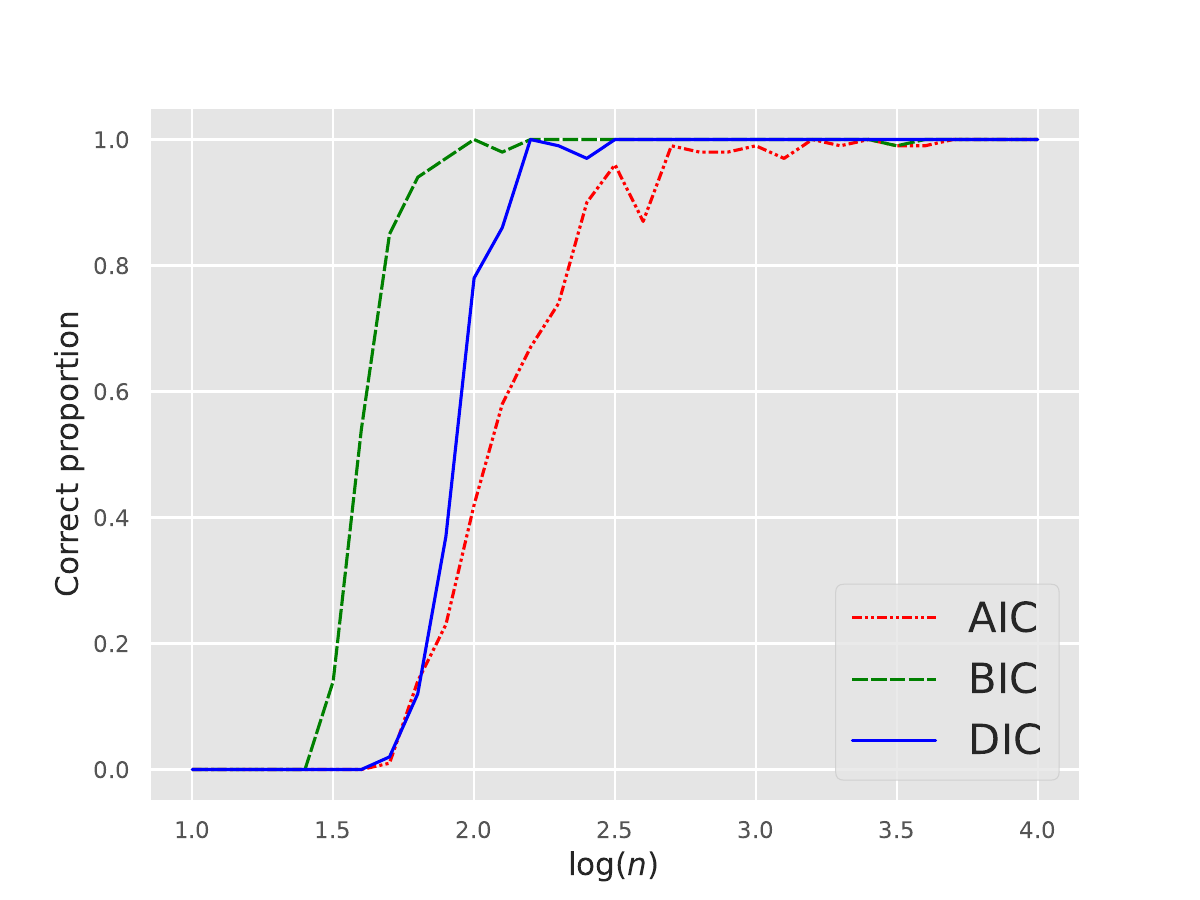}
    \caption{\centering Proportion of choosing correct $k_0 = 3$ in the well-specified setting}
    \label{fig:well-specified}
\end{figure}

\subsection{Popular datasets in mixture modelling}
In this section, we will consider the application of dendrogram for some famous datasets for benchmarking mixture modelling methods \cite{32Zoe}: Acidity, Enzyme, and Galaxy data. The first case study is the Acidity dataset \cite{acid}, which consists of the log acidity index for 155 lakes in the North-Eastern United States. The Enzyme data \cite{enzyme} consists of measurements of enzymatic activity in blood for an enzyme involved in the metabolism of carcinogenic substances (velocity and substrate concentration) for a group of 245 unrelated
individuals. The Galaxy data \cite{galaxy} is a small dataset of 82 measurements of galaxy speeds
from 6 segments of the sky. For each data, we plot the histogram with kernel density estimation. We then fit a location-scale Gaussian mixture model with 10 components and plot the dendrogram with the heights and likelihoods at different levels. Based on the theory, we want to choose the number of components so that the heights of all levels after this number are approximately 0. For the Acidity data (Figure~\ref{fig:acidity}), using two components make a good fit, and indeed both AIC and DIC choose $k=2$. The Enzyme data (Figure~\ref{fig:enzyme}) has a heavy right tail. Hence, by looking at the heights of the dendrogram (Figure~\ref{fig:enzyme}(e)), we can choose between 2 or 3 components. Both AIC and DIC choose 2. The dendrogram plot in Figure~\ref{fig:enzyme}(b) is informative as it describes that two components corresponding to the right tail in level $\kappa=3$ will merge to get the two chosen components. For Galaxy data, its histogram shows that there are two noticeable modes and one small mode in the right tail of the data. DIC chooses 2 components, which is similar to the Zmix procedure in~\cite{32Zoe}. Upon investigating the heights, we can argue that the number of components can reasonably be anywhere from 2 to 6. Inspecting the deep levels of the dendrogram in Figure~\ref{fig:galaxy}, we can see that there are three subpopulations of atoms varying around 10, 20, and 33, respectively. Hence, the dendrogram gives us a more detailed illustration and interpretation of mixing measures inferred from mixture models compared to choosing a single number of components to describe the heterogeneous data population.

\begin{figure}[t!]
      \centering
      \subcaptionbox*{\scriptsize (a) Histogram with kernel density estimation \par}{\includegraphics[width = 0.495\textwidth]{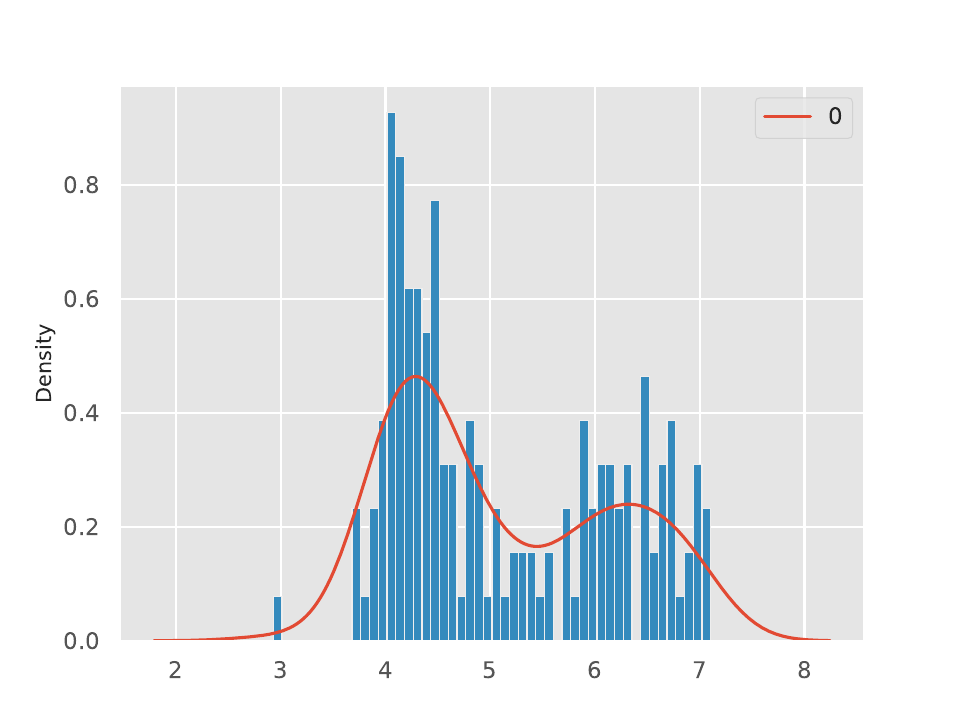}}
      \subcaptionbox*{\scriptsize (b) Dendrogram\par}{\includegraphics[width = 0.495\textwidth]{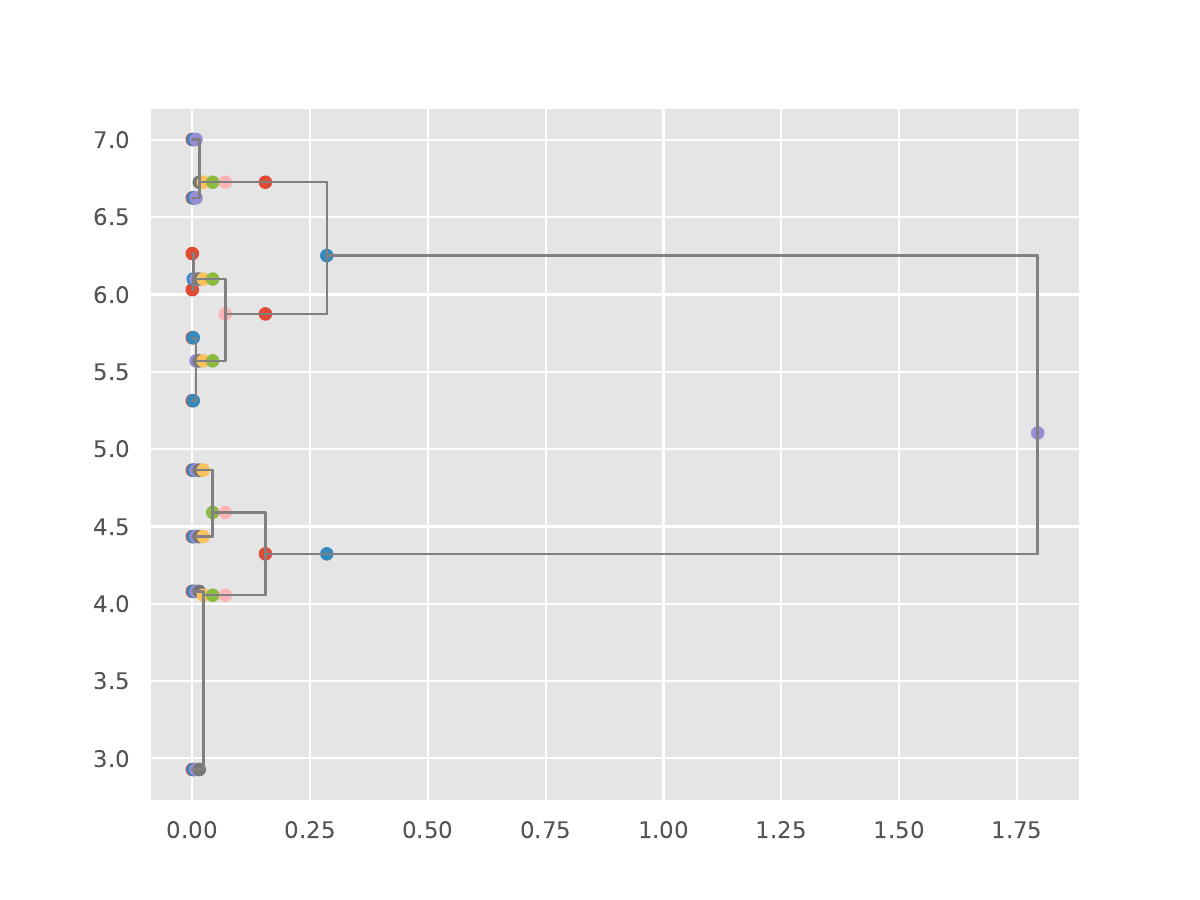}}
      \subcaptionbox*{\scriptsize (c) AIC and BIC\par}{\includegraphics[width = 0.48\textwidth]{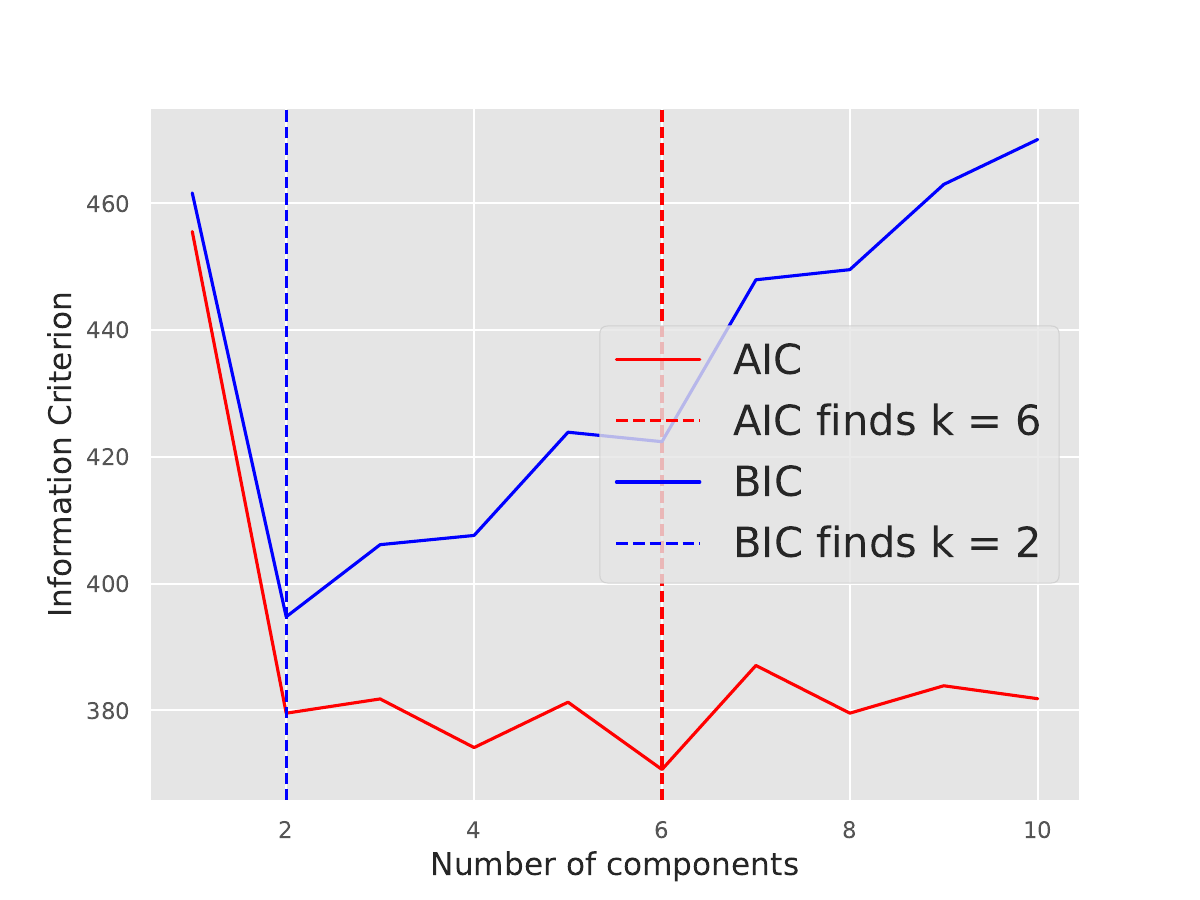}}
      \subcaptionbox*{\scriptsize (d) DIC \par}{\includegraphics[width = 0.48\textwidth]{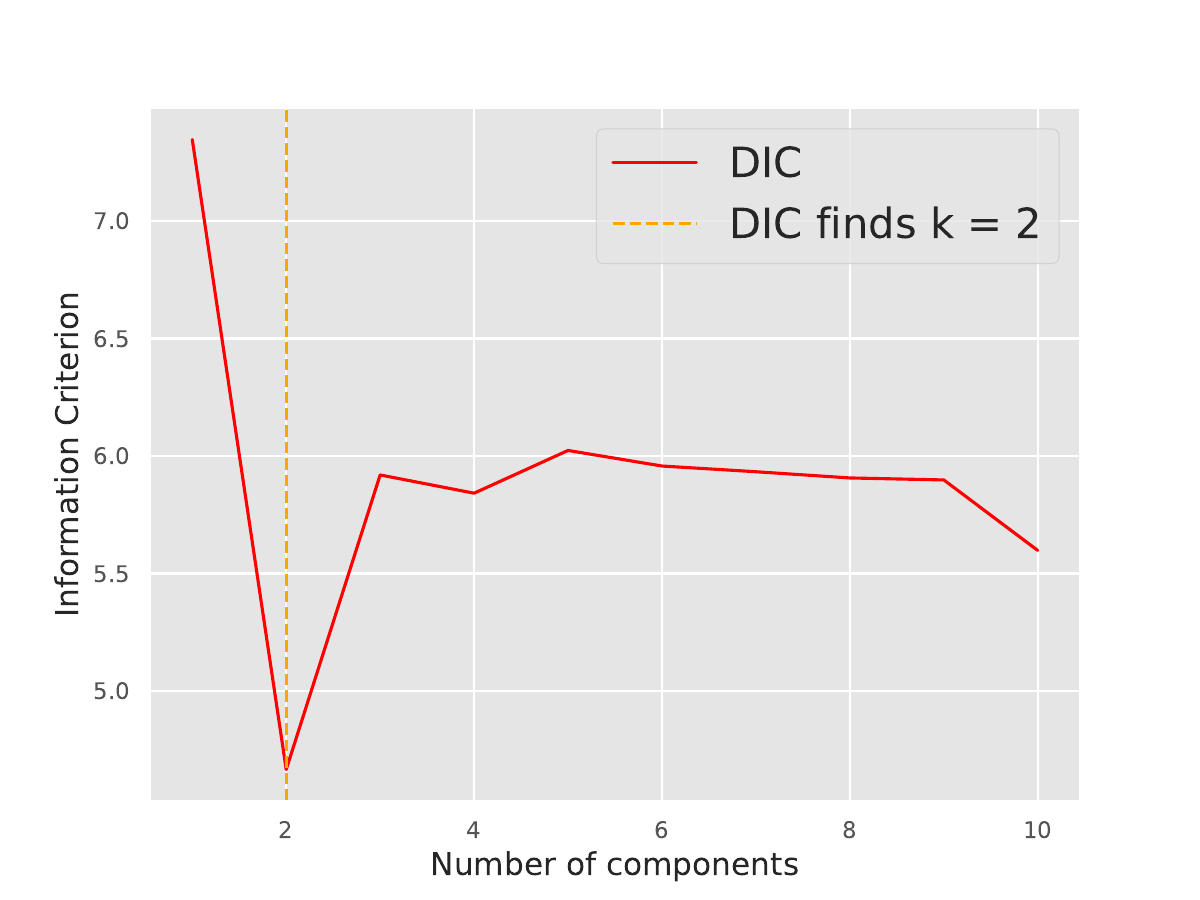}}
      \subcaptionbox*{\scriptsize (e) Heights of levels in dendrogram\par}{\includegraphics[width = 0.48\textwidth]{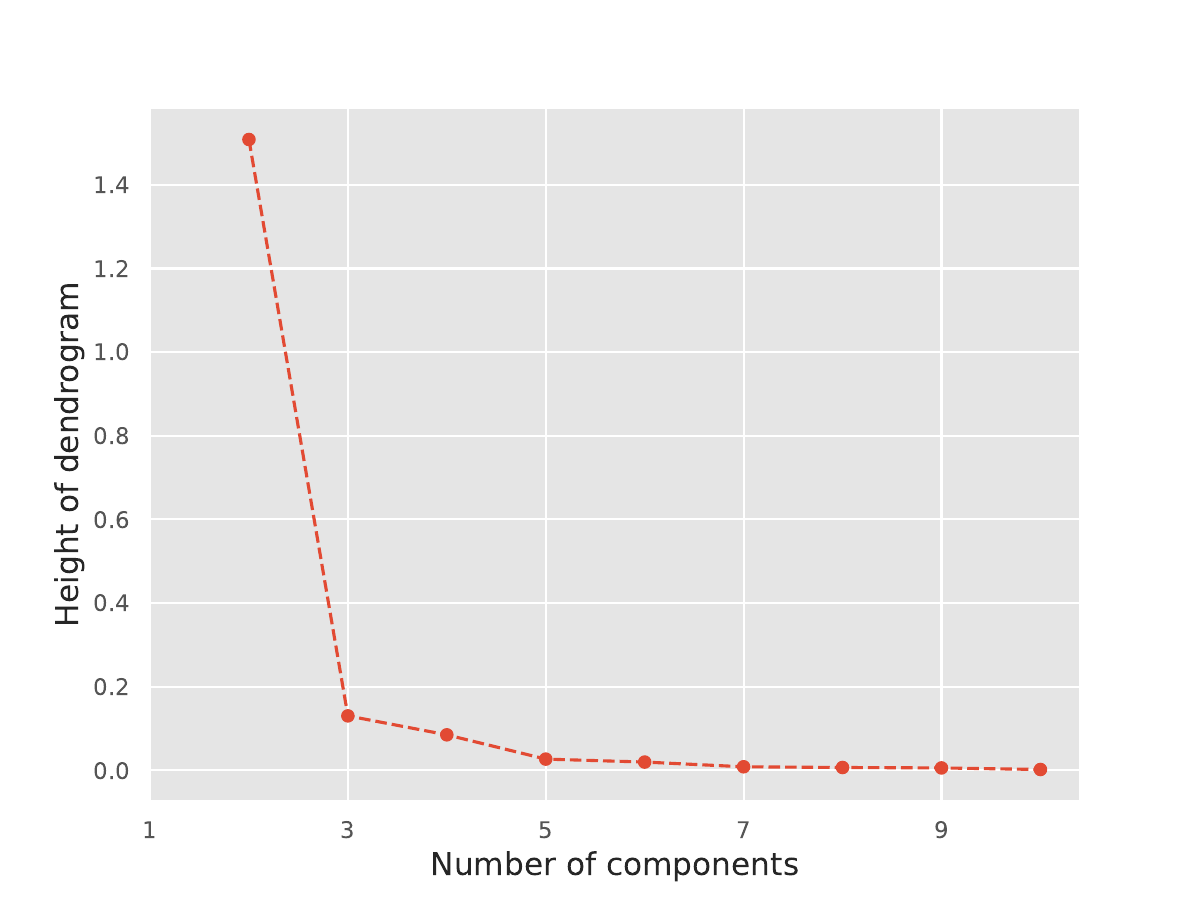}}
      \subcaptionbox*{\scriptsize (f) Likelihood of levels in dendrogram \par}{\includegraphics[width = 0.48\textwidth]{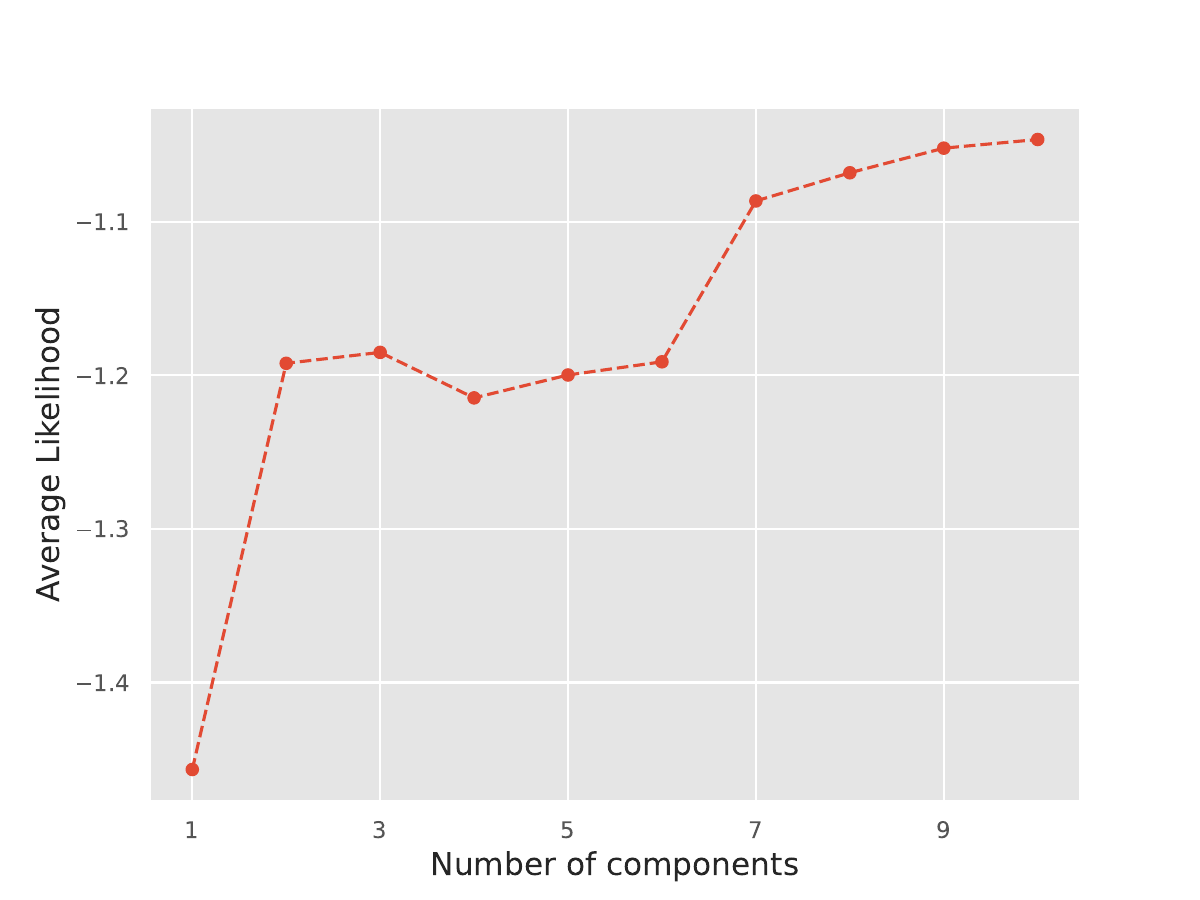}}
      \caption{\centering Dendrogram of mixing measures inferred from Acidity data}\label{fig:acidity}
\end{figure}

\begin{figure}[t!]
      \centering
      \subcaptionbox*{\scriptsize (a) Histogram with kernel density estimation \par}{\includegraphics[width = 0.495\textwidth]{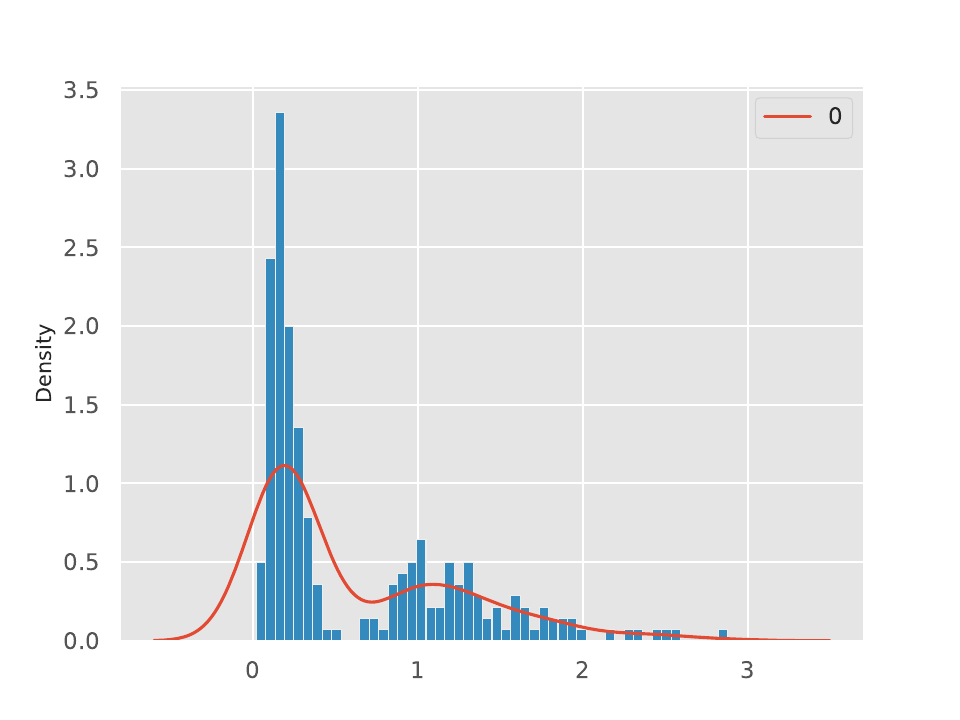}}
      \subcaptionbox*{\scriptsize (b) Dendrogram\par}{\includegraphics[width = 0.495\textwidth]{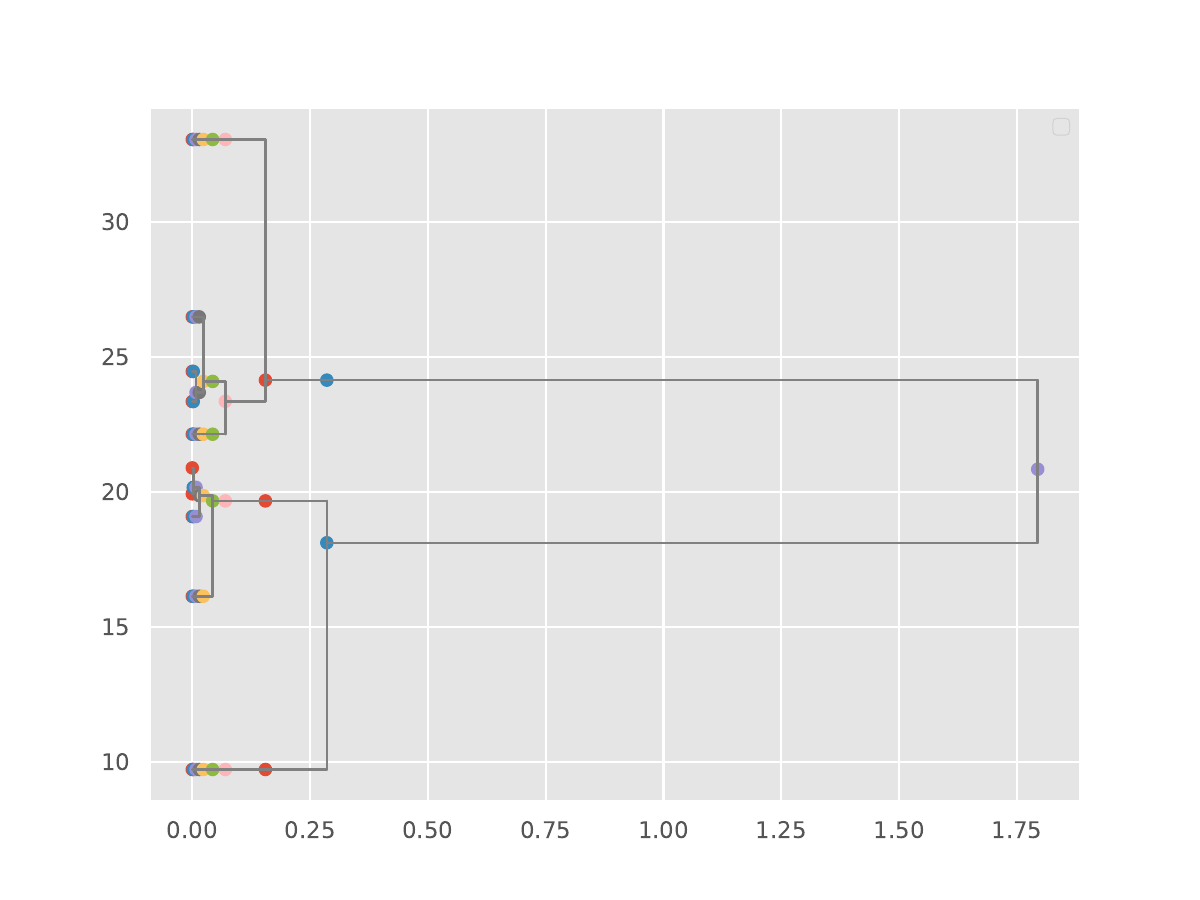}}
      \subcaptionbox*{\scriptsize (c) AIC and BIC\par}{\includegraphics[width = 0.48\textwidth]{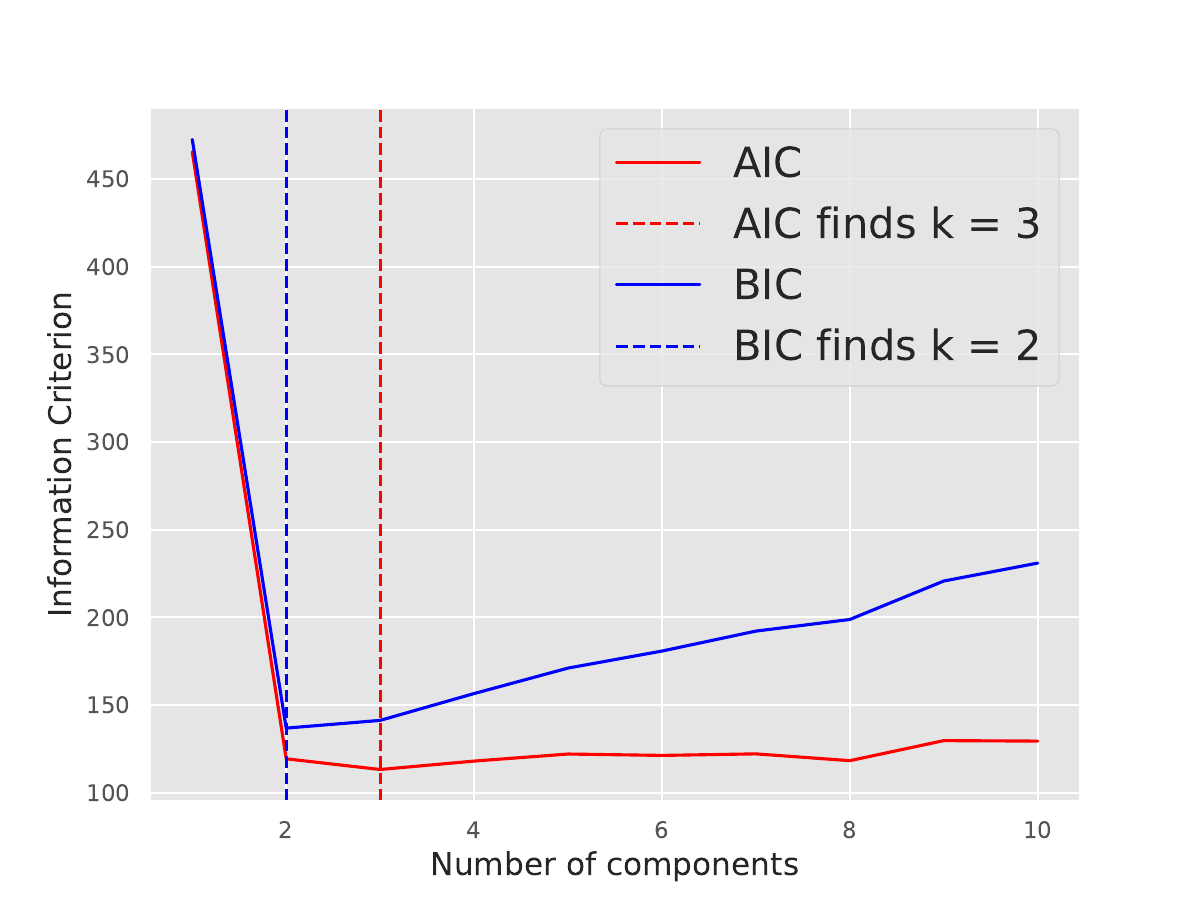}}
      \subcaptionbox*{\scriptsize (d) DIC \par}{\includegraphics[width = 0.48\textwidth]{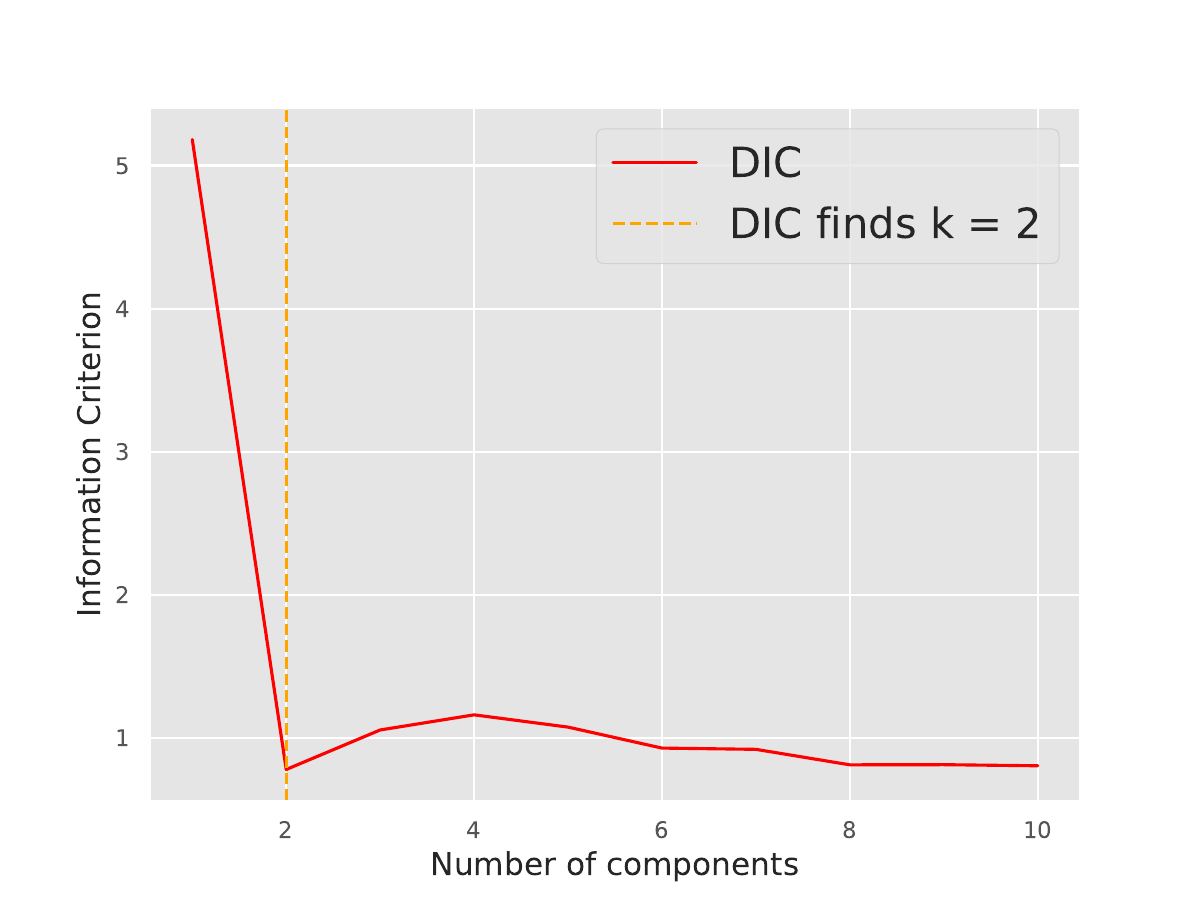}}
      \subcaptionbox*{\scriptsize (e) Heights of levels in dendrogram\par}{\includegraphics[width = 0.48\textwidth]{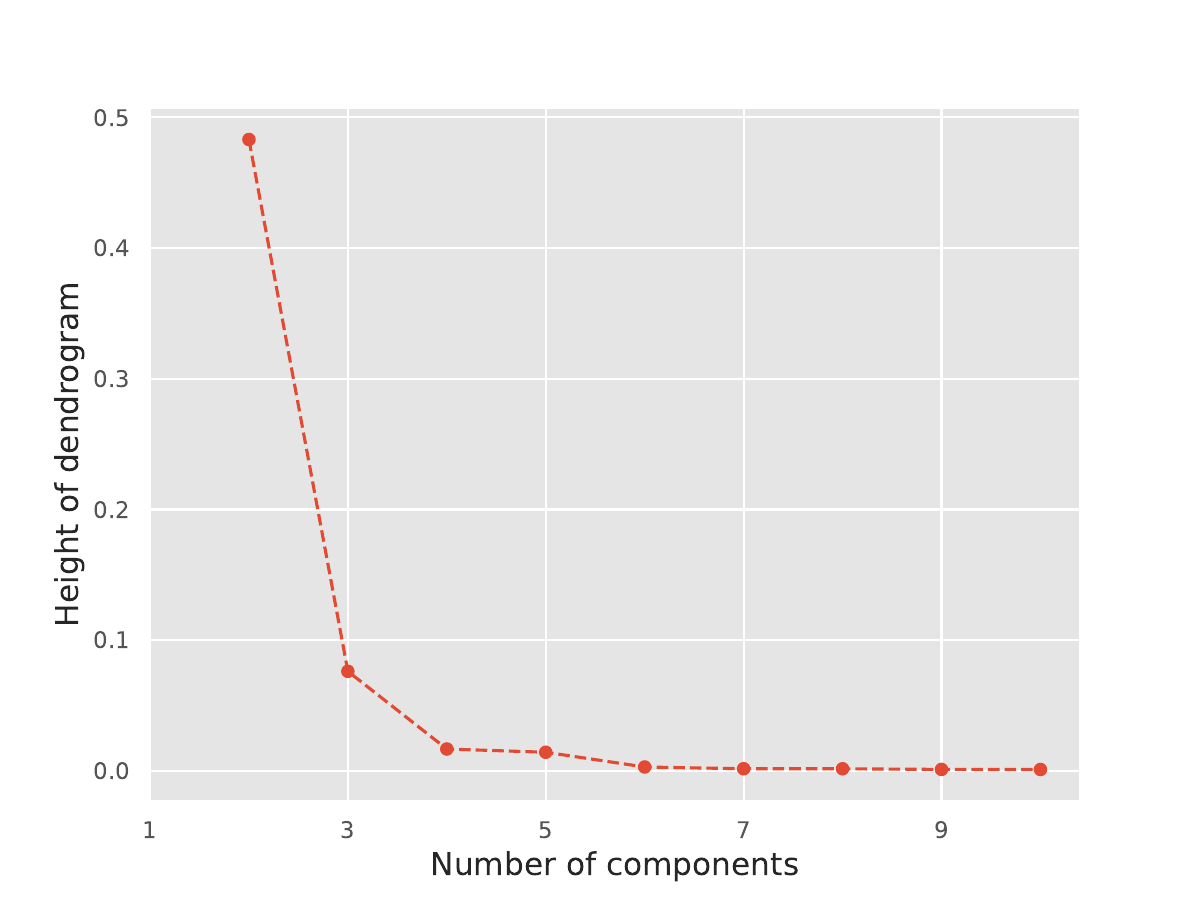}}
      \subcaptionbox*{\scriptsize (f) Likelihood of levels in dendrogram \par}{\includegraphics[width = 0.48\textwidth]{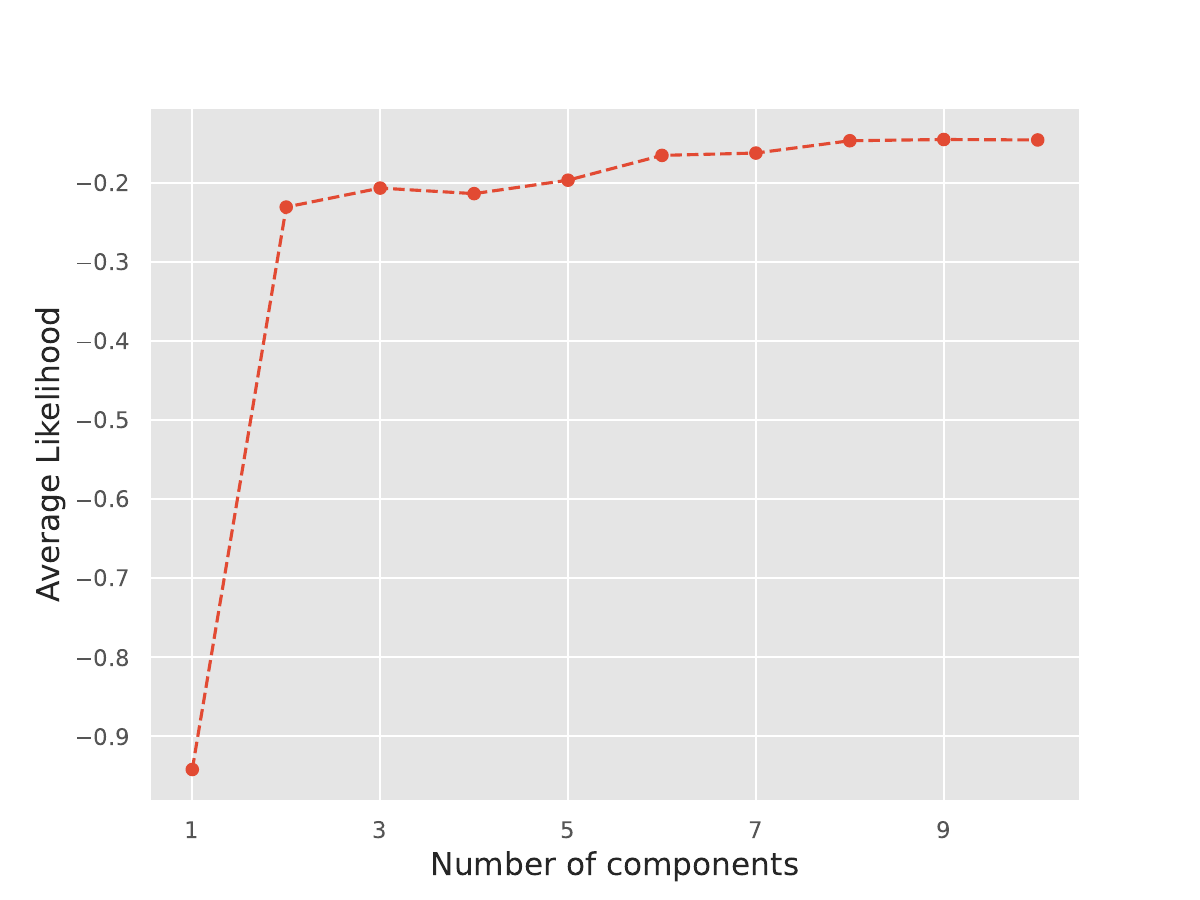}}
      \caption{\centering Dendrogram of mixing measures inferred from Enzyme data}\label{fig:enzyme}
\end{figure}

\begin{figure}[t!]
      \centering
      \subcaptionbox*{\scriptsize (a) Histogram with kernel density estimation \par}{\includegraphics[width = 0.495\textwidth]{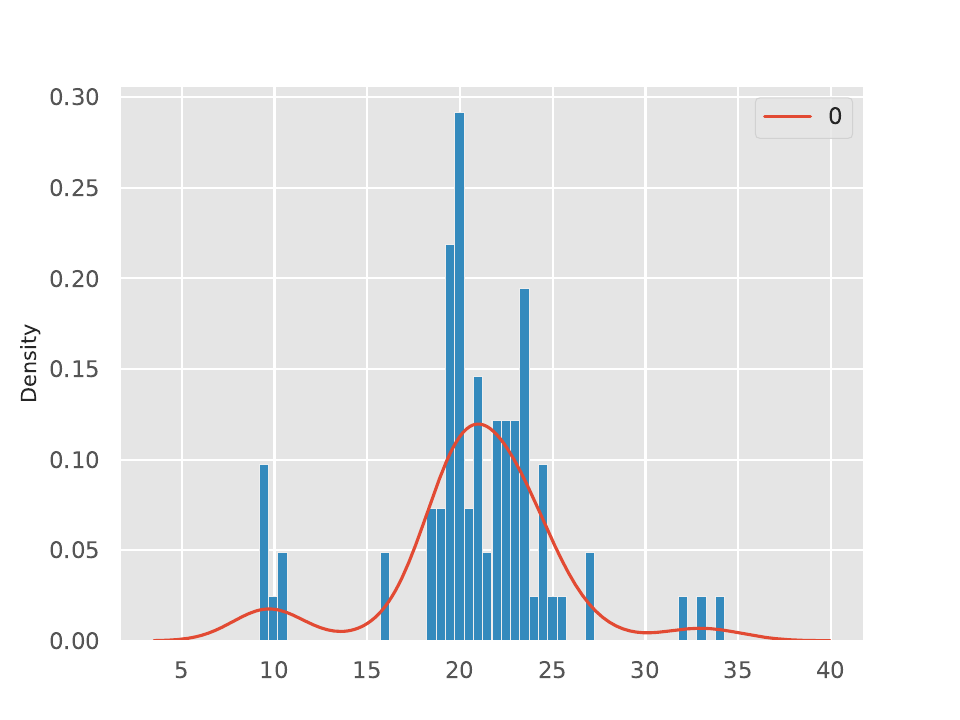}}
      \subcaptionbox*{\scriptsize (b) Dendrogram\par}{\includegraphics[width = 0.495\textwidth]{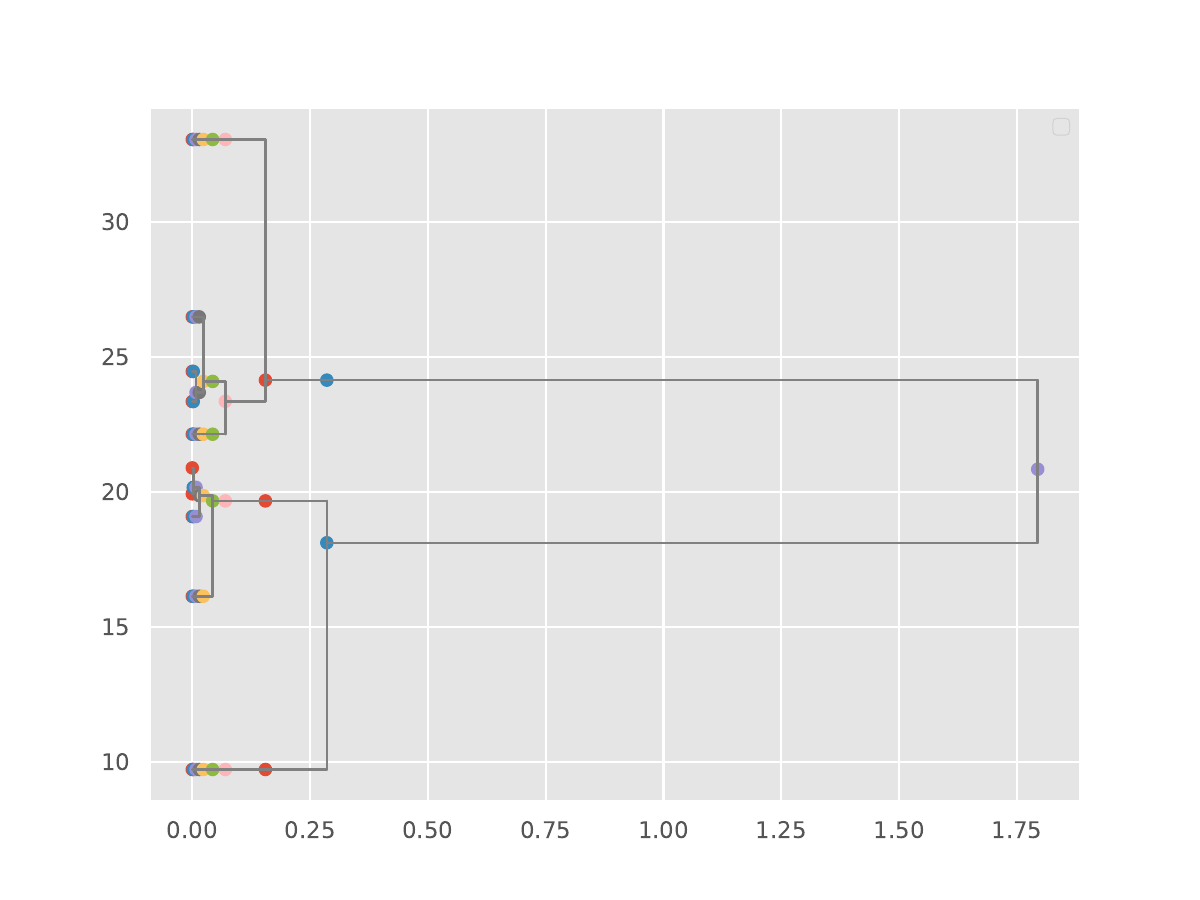}}
      \subcaptionbox*{\scriptsize (c) AIC and BIC\par}{\includegraphics[width = 0.48\textwidth]{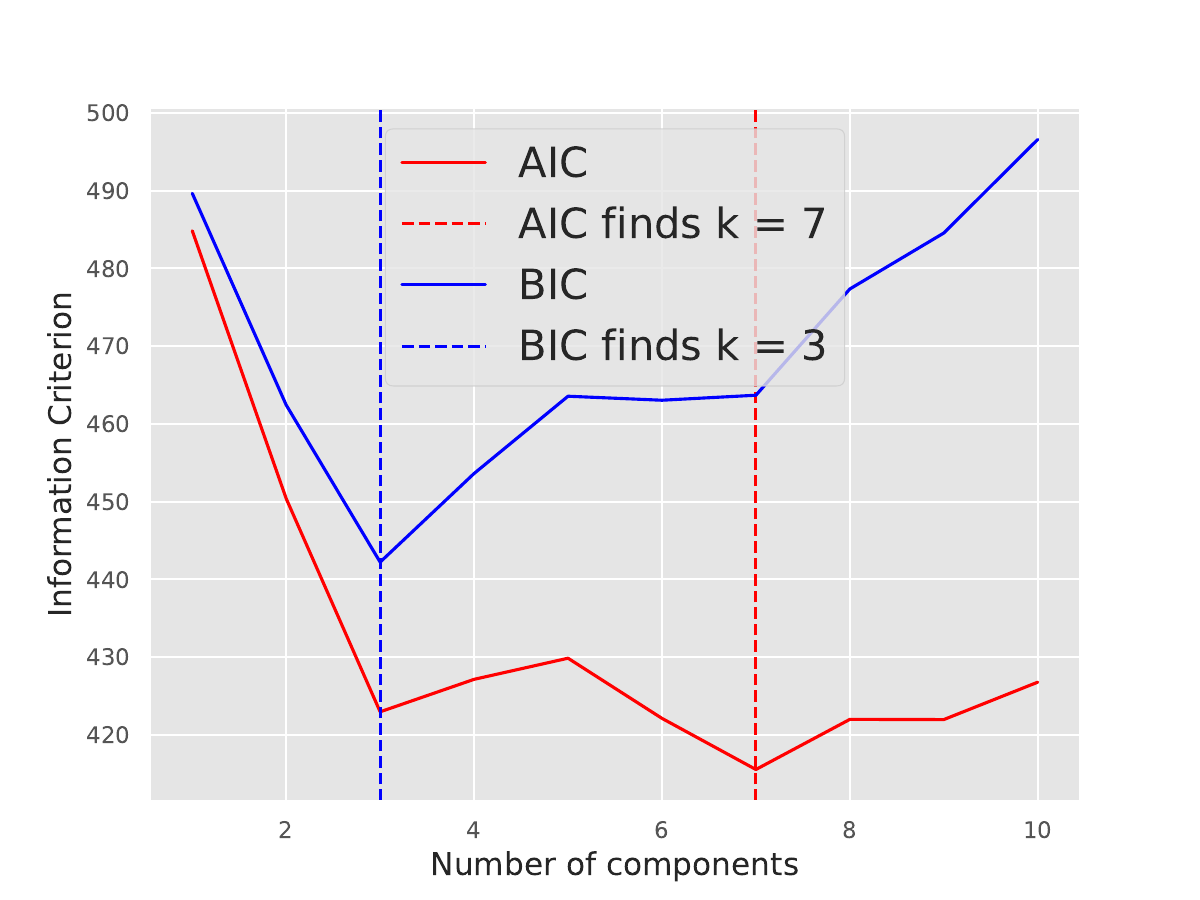}}
      \subcaptionbox*{\scriptsize (d) DIC \par}{\includegraphics[width = 0.48\textwidth]{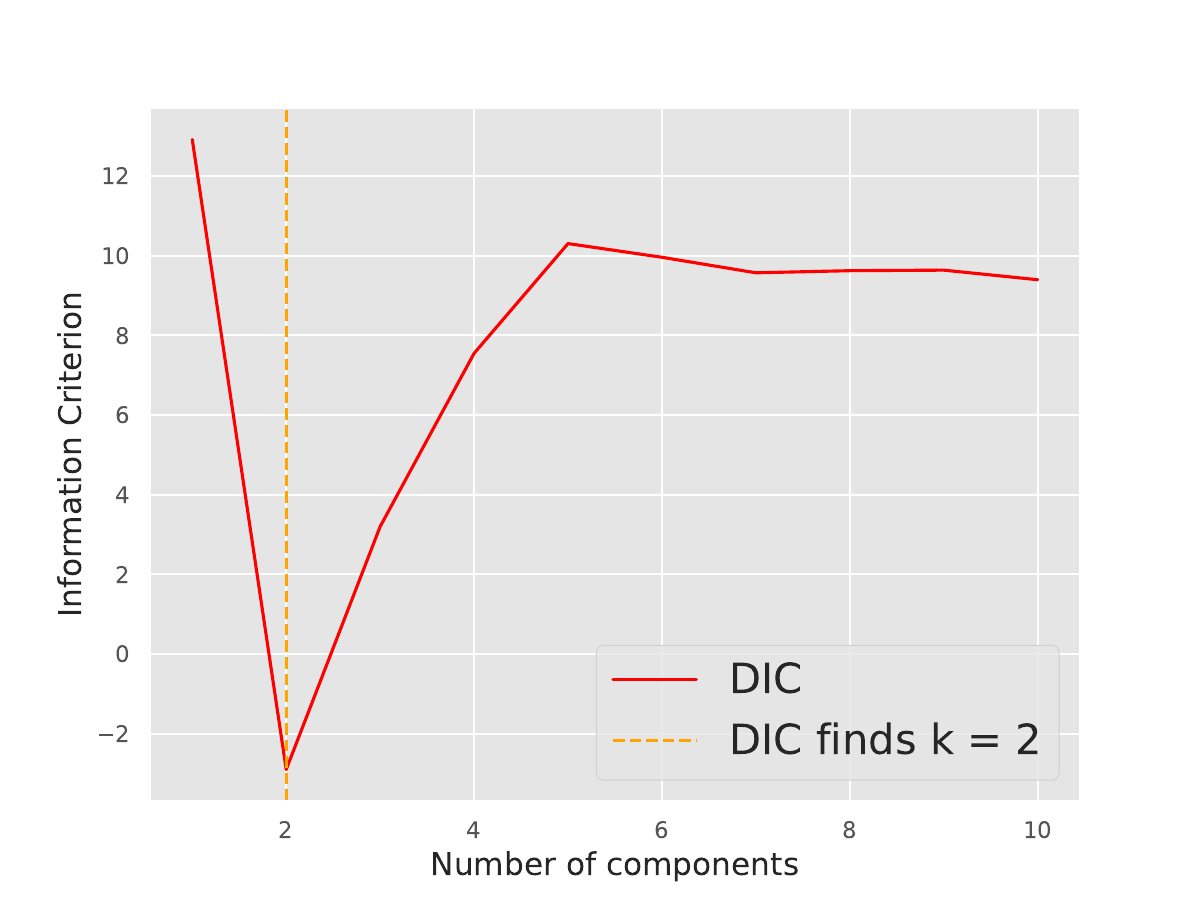}}
      \subcaptionbox*{\scriptsize (e) Heights of levels in dendrogram\par}{\includegraphics[width = 0.48\textwidth]{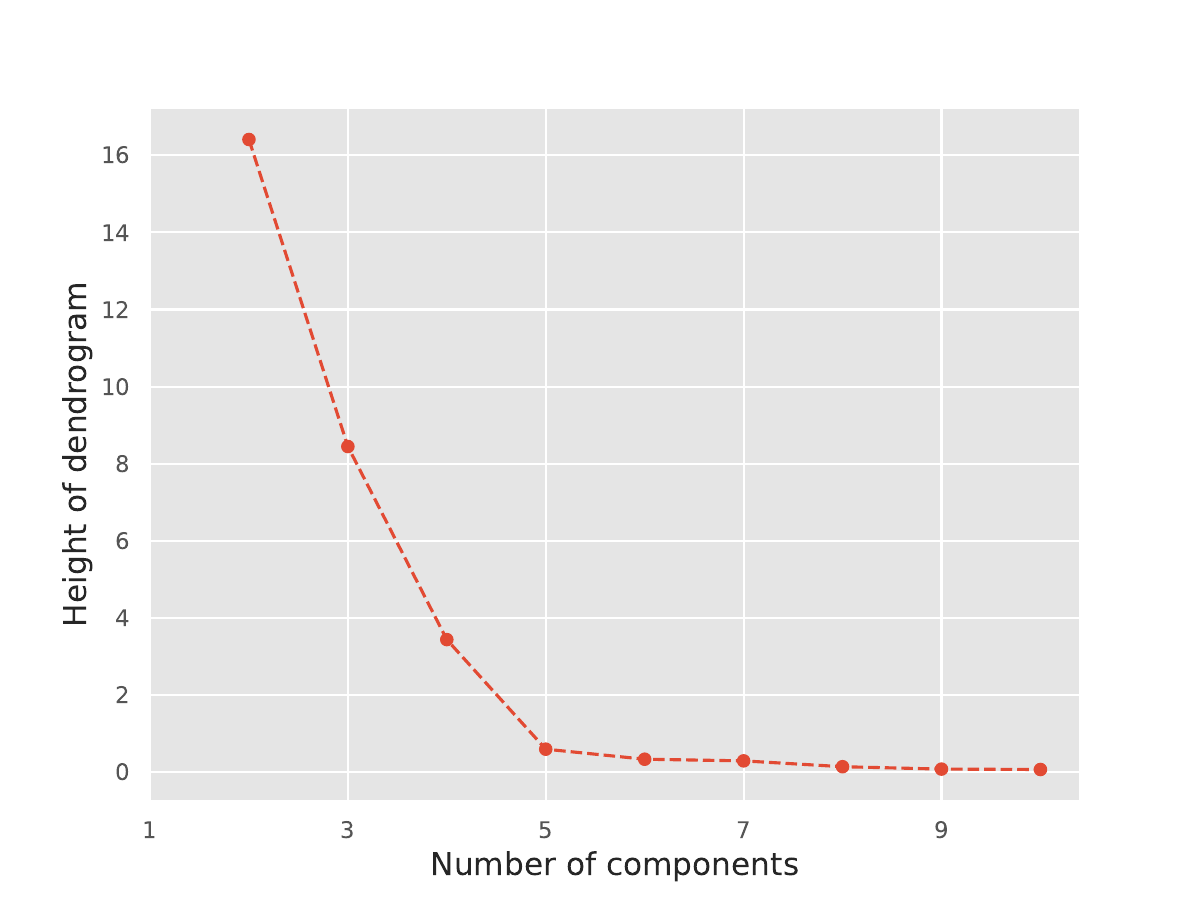}}
      \subcaptionbox*{\scriptsize (f) Likelihood of levels in dendrogram \par}{\includegraphics[width = 0.48\textwidth]{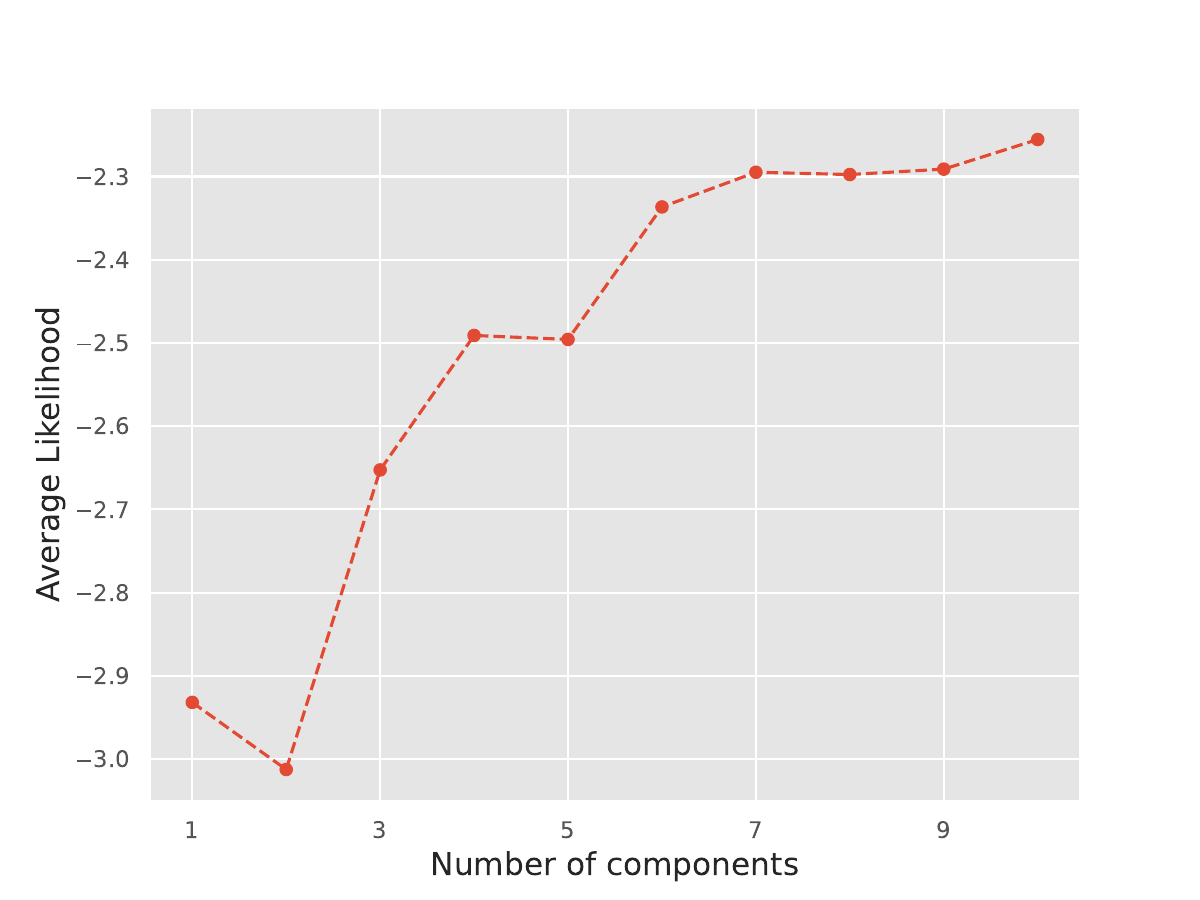}}
      \caption{\centering Dendrogram of mixing measures inferred from Galaxy data}\label{fig:galaxy}
\end{figure}

\section{Proofs of Section~\ref{sec:prelim}}
\subsection{Proof of Proposition~\ref{prop:density-rate}}
Proposition~\ref{prop:density-rate} is a direct consequence of the application of empirical process in M-estimation \cite{31Svan} (Chapter 7). We recall some main ingredients therein, then proceed to the proof of Proposition~\ref{prop:density-rate}. A similar procedure has been applied to prove the density estimation rate for mixtures (see, e.g., ~\cite{13Ho, 12Ho, 21Tudor}). 

For the true mixing measure $G_0$ and for any mixing measure $G \in \Ocal_{k}$, denote $\overline{G} = (G + G_0)/2$ and $\overline{p}_{G} = p_{\overline{G}} = (p_G + p_{G_0})/2$. Let $H_B(u, \Pcal, \nu)$ be the $u$-bracketing entropy number of some space $\Pcal$ with respect to $L^2(\nu)$, where $\nu$ is the dominating measure. Denote the empirical process for $G$:
$$\nu(G) = \sqrt{n} \int \log \dfrac{p_{G}}{p_{G_0}} d(P_n - P_{G_0}). $$
The following result plays the main role in uniformly bounding the empirical process.
\begin{theorem}[Theorem 5.11 in \cite{31Svan}]\label{thm:uniform-bound-empirical-process}
    Let positive numbers $R, C, C_1, a$ satisfy:
    \begin{equation}\label{eq:upper-bound-a}
        a \leq C_1 \sqrt{n} R^2 \wedge 8\sqrt{n} R,
    \end{equation}
    and
    \begin{equation}\label{eq:lower-bound-a}
        a \geq \sqrt{C^2(C_1 + 1)} \left(\int_{a/(2^6\sqrt{n})}^{R} H_B^{1/2} \left(\dfrac{u}{\sqrt{2}}, \left\{p_{G} : G\in \Ocal_{k}, h(p_{G}, p_{G_0})\leq R \right\}, \nu\right) du \vee R \right),
    \end{equation}
    then \begin{equation}\label{eq:uniform-bound-empirical-process}
        \Pbb_{G_0} \left(\sup_{\substack{G \in \Ocal_{k}\\ h(p_{G}, p_{G_0}) \leq R}} |\nu_n(G)| \geq a\right) \leq C \exp\left(-\dfrac{a^2}{C^2(C_1 + 1) R^2} \right).
    \end{equation}
\end{theorem}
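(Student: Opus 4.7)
The plan is to prove this tail bound by the classical chaining-over-brackets argument à la van de Geer (Chapter 5), combining Dudley-type entropy control with Bernstein-type concentration at each chaining level.

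First, I would pass from the potentially unbounded log-ratios $\log(p_G/p_{G_0})$ to convexified surrogates via the standard trick $\overline{p}_G := (p_G + p_{G_0})/2$: the function $g_G := \tfrac{1}{2}\log(\overline{p}_G/p_{G_0})$ is bounded above by $\tfrac{1}{2}\log 2$ almost surely, and satisfies $\|g_G\|_{L^2(P_{G_0})}^2 \lesssim h^2(p_G, p_{G_0}) \leq R^2$. Moreover, $|\nu_n(G)|$ is controlled, up to universal constants, by the empirical process indexed by the surrogates $g_G$, and a $u$-bracket for densities in $L^2(\nu)$ translates into a $u/\sqrt{2}$-bracket for $\sqrt{p_G}$ in $L^2(\nu)$, which in turn yields brackets for the surrogate class. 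This is the origin of the $1/\sqrt{2}$ scaling inside the entropy integrand in condition~(\ref{eq:lower-bound-a}).

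Second, I would apply dyadic chaining. Set $R_j = R/2^j$ for $j = 0, 1, \dots, J$, with $J$ chosen so that $R_J \asymp a/(2^6 \sqrt{n})$. At each scale $R_j$, take a minimal $R_j/\sqrt{2}$-bracketing cover of the surrogate class, and for each $g_G$ designate an approximant $g_G^{(j)}$ in this cover. Telescoping $g_G = g_G^{(J)} + \sum_{j=1}^J \Delta_j(g_G)$ with $\Delta_j(g_G) := g_G^{(j-1)} - g_G^{(j)}$, each increment has $L^2(P_{G_0})$-norm at most $2 R_{j-1}$ and is uniformly bounded. Bernstein's inequality applied to $\nu_n(\Delta_j(g_G))$, together with a union bound over the $\exp\bigl(H_B(R_j/\sqrt{2}, \Pcal_k, \nu)\bigr)$ possible values of $\Delta_j$, yields a tail of order $\exp\bigl(H_B(R_j/\sqrt{2}, \Pcal_k, \nu) - c\,a_j^2/(R_{j-1}^2 + R_{j-1}\,a_j/\sqrt{n})\bigr)$ for the event $\{|\nu_n(\Delta_j(g))| \geq a_j \text{ for some } g\}$. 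Choosing $a_j$ proportional to $R_{j-1}\,H_B^{1/2}(R_j/\sqrt{2}, \Pcal_k, \nu)$ balances entropy and deviation, and the resulting $\sum_j a_j$ is dominated by the entropy integral in~(\ref{eq:lower-bound-a}), giving $\sum_j a_j \leq a$.

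The main technical obstacle is converting Bernstein's mixed-exponent bound into the clean sub-Gaussian tail $\exp(-a^2/(C^2(C_1+1)R^2))$. This is exactly where hypothesis~(\ref{eq:upper-bound-a}) enters: it guarantees $a/\sqrt{n} \lesssim R$, so the linear-in-$a_j$ contribution $R_{j-1}\,a_j/\sqrt{n}$ in the Bernstein denominator is absorbed into the variance term $R_{j-1}^2$, losing only the multiplicative factor $C_1 + 1$. The remaining bookkeeping — distributing the probability budget across scales so that $\sum_j a_j \leq a$ in a way that matches the right-hand side of~(\ref{eq:lower-bound-a}), and truncating the chaining at $R_J \asymp a/(2^6\sqrt{n})$ (below which further refinement is superfluous by the $\vee R$ appearing in~(\ref{eq:lower-bound-a})) — is delicate but mechanical once the chaining scaffold and the sub-Gaussian absorption lemma are in place.
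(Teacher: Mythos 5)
The paper does not prove this statement: it is quoted verbatim (up to notation) as Theorem 5.11 of van de Geer's \emph{Empirical Processes in M-estimation}, so there is no internal proof to compare against. Your sketch is a faithful reconstruction of the argument van de Geer actually gives for that theorem — dyadic chaining over bracketing nets at scales $R_j = R/2^j$ truncated at $a/(2^6\sqrt{n})$, a Bernstein-type bound plus union bound at each scale with the deviation levels $a_j$ chosen to balance entropy against variance, the entropy integral in~\eqref{eq:lower-bound-a} absorbing $\sum_j a_j$, and condition~\eqref{eq:upper-bound-a} used to fold the linear Bernstein term into the quadratic one so that the final tail is sub-Gaussian in $a/R$. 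The role you assign to the convexified densities $\overline{p}_G = (p_G + p_{G_0})/2$ also matches the remark the paper makes immediately after the theorem statement.

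One correction to the probabilistic core of your step two: $\log(\overline{p}_G/p_{G_0}) = \log\bigl(\tfrac12(p_G/p_{G_0}+1)\bigr)$ is bounded \emph{below} by $-\log 2$, not above by $\log 2$; it is generally unbounded above wherever $p_G \gg p_{G_0}$. Consequently you cannot invoke the bounded-variable form of Bernstein's inequality at each chaining level. The correct substitute is the generalized Bernstein inequality for functions with one-sided boundedness and controlled exponential moments (van de Geer's Lemma 5.8 ff.), which bounds the relevant Bernstein norm of $g_G = \tfrac12\log(\overline{p}_G/p_{G_0})$ by a constant multiple of $h(\overline{p}_G, p_{G_0})$, using $\int (\overline{p}_G/p_{G_0})\,p_{G_0}\,d\nu \le 1$. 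With that replacement your chaining scaffold goes through unchanged; without it, the union-bound step over brackets of the surrogate class has no valid per-bracket tail bound, since the increments $\Delta_j(g_G)$ inherit the unboundedness from above. The rest of the bookkeeping you describe (the $1/\sqrt 2$ rescaling of brackets, the $\vee R$ term, the absorption of $C_1+1$) is as in the source.
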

Note that $h(p_{\overline{G}}, p_{G_0}) \leq h(p_G, p_{G_0})\leq 4 h(p_{\overline{G}}, p_{G_0})$ holds for all $G$, the conclusion of Theorem~\ref{thm:uniform-bound-empirical-process} still holds when changing the inequality constraints in~\eqref{eq:lower-bound-a} and~\eqref{eq:uniform-bound-empirical-process} for $G$ to $\overline{G}$, where all the constants only need to be adjusted by a universal constant. The advantage of working with $\overline{G}$ instead of $G$ is that $\dfrac{p_{\overline{G}}}{p_{G_0}}$ is always bounded below by $1/2$, so that $\nu(\overline{G})$ does not blow up for all $p_G$. We further define
$$\overline{\Pcal}_{k}^{1/2}(\Theta, \epsilon) = \{\overline{p}_G^{1/2} : G\in \Ocal_{k}, h(\overline{p}_{G}, p_{G_0})\leq \epsilon \},$$
and the bracketing entropy integral:
$$\mathcal{J}(\epsilon, \overline{P}^{1/2}(\Theta, \epsilon), \nu) = \int_{\epsilon^2 / 2^{13}}^{\epsilon} H_B^{1/2}(u, \overline{\Pcal}^{1/2}(\Theta, \epsilon), \nu) du \vee \epsilon.$$
\begin{theorem}[Theorem 7.4 in \cite{31Svan}]\label{thm:geer-density-rate}
    Take $\Psi(\epsilon) \geq \mathcal{J}(\epsilon, \overline{P}^{1/2}(\Theta, \epsilon), \nu)$ such that $\Psi(\epsilon) / \epsilon^2$ is a non-increasing function of $\epsilon$. Then for a universal constant $c$, and for 
    $$c \Psi(\epsilon_n) \leq \sqrt{n} \epsilon_n^2, $$
    we have that for all $\epsilon \geq \epsilon_n$
    $$\Pbb_{G_0} (h(p_{\widehat{G}_n}, p_{G_0})\geq \epsilon) \leq c \exp(n\epsilon^2/c^2). $$
\end{theorem}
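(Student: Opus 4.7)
The plan is to derive this classical bracketing-entropy rate theorem of van de Geer by combining the basic inequality for the MLE with a dyadic peeling argument that invokes the uniform empirical-process bound of Theorem~\ref{thm:uniform-bound-empirical-process}. First, I would establish the fundamental inequality
\[
2\, h^2(\overline{p}_{\widehat{G}_n}, p_{G_0}) \;\leq\; \int \log\!\frac{\overline{p}_{\widehat{G}_n}}{p_{G_0}}\, d(P_n - P_{G_0}) \;=\; \frac{\nu_n(\overline{G}_n)}{\sqrt{n}}.
\]
The derivation starts from the defining property $\sum_i \log p_{\widehat{G}_n}(x_i) \geq \sum_i \log p_{G_0}(x_i)$, uses concavity of the logarithm (so that $\log \overline{p}_G \geq \tfrac12(\log p_G + \log p_{G_0})$, implying $\int \log(\overline{p}_{\widehat{G}_n}/p_{G_0})\, dP_n \geq 0$), and combines it with the standard bound $-\Ebb_{p_{G_0}}\!\left[\log(\overline{p}_G/p_{G_0})\right] \geq 2 h^2(\overline{p}_G, p_{G_0})$, which follows from $-\log \int \sqrt{\overline{p}_G\, p_{G_0}}\, d\nu \geq h^2(\overline{p}_G, p_{G_0})$ and Jensen's inequality. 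The use of $\overline{p}_G$ rather than $p_G$ is essential: the ratio $\overline{p}_G/p_{G_0} \geq 1/2$ is bounded below, so $\log(\overline{p}_G/p_{G_0})$ has a finite lower envelope, enabling Bernstein-type tail control of $\nu_n$.

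Next, I would implement a dyadic peeling device. For fixed $\epsilon \geq \epsilon_n$, decompose
\[
\{h(p_{\widehat{G}_n}, p_{G_0}) \geq \epsilon\} \;\subseteq\; \bigcup_{j \geq 0} \{\widehat{G}_n \in S_j\}, \qquad S_j = \{G \in \Ocal_k : 2^{j-1}\epsilon \leq h(\overline{p}_G, p_{G_0}) \leq 2^j \epsilon\},
\]
using the relation $h(p_G, p_{G_0}) \leq 4\, h(\overline{p}_G, p_{G_0})$ to convert between the two Hellinger radii. On each shell the basic inequality from the previous paragraph forces $|\nu_n(\overline{G}_n)| \geq 2\sqrt{n}\,(2^{j-1}\epsilon)^2$, so $\Pbb_{G_0}(\widehat{G}_n \in S_j) \leq \Pbb_{G_0}\big(\sup_{G \in S_j} |\nu_n(\overline{G})| \geq \sqrt{n}\, 2^{2j-1}\epsilon^2\big)$. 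Then on shell $j$ apply Theorem~\ref{thm:uniform-bound-empirical-process} with $R = 2^j\epsilon$ and $a \asymp \sqrt{n}\, R^2$, and sum the resulting tail bounds geometrically in $j$.

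The crucial step is verifying the entropy lower-bound condition~\eqref{eq:lower-bound-a} uniformly in $j$. Here the hypothesis $\Psi(\epsilon) \geq \mathcal{J}(\epsilon, \overline{\Pcal}^{1/2}(\Theta, \epsilon), \nu)$ together with the monotonicity assumption that $\Psi(\epsilon)/\epsilon^2$ is non-increasing implies, for $R = 2^j \epsilon \geq \epsilon_n$,
\[
\mathcal{J}(R, \overline{\Pcal}^{1/2}(\Theta, R), \nu) \;\leq\; \Psi(R) \;\leq\; \Psi(\epsilon_n) \cdot \frac{R^2}{\epsilon_n^2} \;\leq\; \frac{\sqrt{n}\, \epsilon_n^2}{c} \cdot \frac{R^2}{\epsilon_n^2} \;=\; \frac{\sqrt{n}\, R^2}{c},
\]
which matches the chosen $a \asymp \sqrt{n}\, R^2$ up to universal factors and validates~\eqref{eq:lower-bound-a}; the upper bound~\eqref{eq:upper-bound-a} is automatic because $a \leq \tfrac12 \sqrt{n} R^2$. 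Theorem~\ref{thm:uniform-bound-empirical-process} then yields $\Pbb_{G_0}(\widehat{G}_n \in S_j) \leq C \exp(-c'\, n\, 4^j\, \epsilon^2)$. Summing the geometric series over $j \geq 0$ and renaming constants gives the stated tail bound $\Pbb_{G_0}(h(p_{\widehat{G}_n}, p_{G_0}) \geq \epsilon) \leq c \exp(-n\epsilon^2/c^2)$.

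The main obstacle is precisely the uniform verification of condition~\eqref{eq:lower-bound-a} across all shells: the monotonicity of $\Psi(\epsilon)/\epsilon^2$ is the quantitative hypothesis that translates a single entropy-integral bound at scale $\epsilon_n$ into a compatible bound at every larger scale $R = 2^j \epsilon$, keeping the tail exponents geometric rather than drifting. A secondary subtlety is that the peeling must be performed with respect to $\overline{p}_G$ rather than $p_G$ itself --- otherwise $\log(\overline{p}_G/p_{G_0})$ need not be bounded below and the empirical-process supremum in Theorem~\ref{thm:uniform-bound-empirical-process} cannot be controlled by Bernstein--chaining arguments. The conversion back to $h(p_{\widehat{G}_n}, p_{G_0})$ via $h(p_G, p_{G_0}) \leq 4 h(\overline{p}_G, p_{G_0})$ at the end of the argument is what ultimately preserves the stated conclusion in the original Hellinger distance.
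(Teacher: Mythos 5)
This statement is not proved in the paper at all: it is imported verbatim as Theorem~7.4 of van de Geer's \emph{Empirical Processes in M-estimation} and used as a black box in the proof of Proposition~\ref{prop:density-rate}. What you have written is therefore a reconstruction of the textbook proof rather than an alternative to anything in the paper, and your reconstruction is essentially the canonical one: the basic inequality $2h^2(\overline{p}_{\widehat{G}_n},p_{G_0})\leq \int\log(\overline{p}_{\widehat{G}_n}/p_{G_0})\,d(P_n-P_{G_0})$ obtained from the MLE defining property, concavity of $\log$, and $KL\geq 2h^2$; the pass to the convex combination $\overline{p}_G=(p_G+p_{G_0})/2$ so that the log-ratio is bounded below by $-\log 2$; peeling over Hellinger shells; and the uniform bound of Theorem~\ref{thm:uniform-bound-empirical-process} on each shell, with the monotonicity of $\Psi(\epsilon)/\epsilon^2$ doing exactly the work you identify of propagating the entropy condition~\eqref{eq:lower-bound-a} from scale $\epsilon_n$ to every larger scale $R=2^j\epsilon$. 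Two minor points of bookkeeping: since $h(p_G,p_{G_0})\leq 4\,h(\overline{p}_G,p_{G_0})$, the event $\{h(p_{\widehat{G}_n},p_{G_0})\geq\epsilon\}$ only forces $h(\overline{p}_{\widehat{G}_n},p_{G_0})\geq\epsilon/4$, so your shells should begin at radius $\epsilon/4$ (i.e.\ at $j=-1$ in your indexing) rather than $\epsilon/2$ --- a harmless constant shift; and the displayed conclusion in the paper's statement is missing a minus sign in the exponent (it should read $c\exp(-n\epsilon^2/c^2)$), which your final bound correctly restores. With those adjustments your argument is complete and matches the cited source.
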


\begin{proof}[Proof of Proposition~\ref{prop:density-rate}] 
Firstly, we notice that
\begin{align*}
    H_B(u, \overline{\Pcal}_k^{1/2}(\Theta, u), \nu) & \leq H_B(u, \overline{\Pcal}_k^{1/2}(\Theta), \nu)\\
    & = H_B(u / \sqrt{2}, \overline{\Pcal}_k(\Theta, u), h)\\
    & \leq H_B(u, {\Pcal}_k(\Theta, u), h),
\end{align*}
where the first inequality is obvious, the equality comes from the identity between $L^2(\nu)$ and Hellinger distance, and the second inequality comes from the fact that $h(\overline{p}_{G}, \overline{p}_{G'})\leq h(p_G, p_{G'}) / \sqrt{2}$. Hence, from the condition (\textbf{B}.), we have
\begin{align*}
    \mathcal{J}(\epsilon, \overline{\Pcal}_k^{1/2}(\Theta, \epsilon), \nu) & \leq \int_{\epsilon^2/2^{13}}^{\epsilon} (J\log(1/u))^{1/2} du\\
    &\leq J^{1/2} \epsilon \left(\log\dfrac{1}{ (\epsilon^2/2^{13})}\right)^{1/2}\\
    &\leq J' \epsilon (\log(1/\epsilon))^{1/2},
\end{align*}
for all $\epsilon$ small enough, for some constant $J'$ only depends on $\Theta$ and $k$. Hence, for $\Psi(\epsilon) = J' \epsilon (\log(1/\epsilon))^{1/2}$, we have $\Psi(\epsilon)$ is a non-increasing function. Let $\epsilon_n = \max\{ 1, c J'\} (\log n / n)^{1/2}$, we have
$$c \Psi(\epsilon_n) \leq c J'\epsilon_n (\log (1/\epsilon_n))^{1/2} \leq \epsilon_n \times (cJ'(\log(n))^{1/2})\leq  \epsilon_n^2 \sqrt{n}.$$
Substitute $\epsilon = \epsilon_n$ to the conclusion of Theorem~\ref{thm:geer-density-rate}, we have 
\begin{align*}    \Pbb_{G_0}\left(h(p_{\widehat{G}_n}, p_{G_0}) \geq \max\{ 1, c J'\} \left(\dfrac{\log n}{n}\right)^{1/2}\right)
&\leq c_1 \exp(-c_2\max\{ 1, c J'\}\log(n))\\
&\leq c_1 n^{-c_2},
\end{align*}
where $J'$ depends on $\Theta$ and $k$ only, and $c_1$ and $c_2$ are universal constants.
\end{proof}

\subsection{Proof of Proposition~\ref{prop:Bk-verify}}
\begin{proof}[Proof of Proposition~\ref{prop:Bk-verify}]
We aim to show that
\begin{equation}\label{eq:entropy-hellinger-bound}
    H_B (\epsilon, \Pcal_k, h) \lesssim \log(1/\epsilon).
\end{equation}
To do it, we first prove that
\begin{equation}\label{eq:covering-supnorm-bound}
    \log N (\epsilon, \Pcal_{k}, \norm{\cdot}_{\infty})  \lesssim \log(1/\epsilon),
\end{equation}
then show that~\eqref{eq:covering-supnorm-bound} implies~\eqref{eq:entropy-hellinger-bound}.

\paragraph{Proof of bound~\eqref{eq:covering-supnorm-bound}} Because the parameter space $\Theta$ is a compact set of $\Rbb^{d}$, we can choose an $\epsilon$-net $\mathcal{S}_1$ for it with the cardinality no more than $O\left(\dfrac{1}{\epsilon^{d}}\right)$. Besides, because of the space of weights $p_i$'s is $\Delta^{k-1}$, it also possible to choose an $\epsilon$-net $\mathcal{S}_2$ with cardinality no more than $O(1/\epsilon^{k-1})$. Let $\mathcal{S} = \mathcal{S}_1^{k} \times \mathcal{S}_2$. We have $\log |\mathcal{S}|\lesssim \log(1/\epsilon)$ and for every $G = \sum_{i=1}^{k} p_i \delta_{\theta_i}$, there exists $G' = \sum_{i=1}^{k} p_i' \delta_{\theta_i'}$, for $\theta_1', \dots, \theta_k'\in \mathcal{S}_1$ and $(p_1', \dots, p_k')\in \mathcal{S}_2$ such that $|p_i - p_i'|$ and $\norm{\theta_i - \theta_i'}\leq \epsilon$ for all $i\in [k]$. By triangle inequalities,
$$\norm{p_G - p_{G'}}_{\infty} \leq \sum_{i=1}^{k} |p_i - p_i'| \norm{f}_{\infty} + p_i \sup_{x} |f(x|\theta_i) - f(x|\theta_i')|\lesssim \epsilon,$$
thanks to the uniform bounded and Lipchitz with respect to sup norm assumptions. Hence,
\begin{equation*}
    \log N(\epsilon, \mathcal{P}_{k}, \norm{\cdot}_{\infty}) \lesssim \log(1/\epsilon).
\end{equation*} 

\paragraph{Proof of claim~\eqref{eq:entropy-hellinger-bound}} Now, from the entropy number with respect to the supremum norm, we are going to bound the bracketing number with respect to Hellinger distance. Let $\eta \leq \epsilon$, which will be chosen later. Let $p_{G_1}, \dots, p_{G_N}$ be an $\eta$-net for $\Pcal_{k}$, and
\begin{equation}
    H(x) = \begin{cases}
    d_1\exp(-d_2\norm{x}^{d_3}), & \quad \norm{x} \geq D,\\
    \sup_{\theta} \norm{f(\cdot|\theta)}_{\infty} , &\quad \text{otherwise}
    \end{cases}
\end{equation}
be an envelop for $f(x|\theta)$. We can construct brackets $[p_i^L, p_i^U]$ as follows. 
\begin{align*}
p_i^L(x) & = \max\{p_{G_i}(x) - \eta, 0 \},\\
p_i^U(x) & = \min\{p_{G_i}(x) + \eta, H(x) \}.
\end{align*}
Because for each $p\in \Pcal_{k}$, there is $p_{G_i}$ such that $\norm{p_i - p_{G_i}}_{\infty} < \eta$, we have $ p_i^{L}\leq p\leq p_i^{U}$. Moreover, for any $\overline{D}\geq D$,
\begin{align}
    \int_{\mathbb{R}^{d}} (p_i^{U} - p_i^{L}) dx & \leq \int_{\norm{x} \leq \overline{D}} 2\eta dx + \int_{\norm{x} \geq \overline{D}} H(x) dx \nonumber \\
    &\lesssim \eta \overline{D}^{d} + \overline{D}^{d}\exp\left( - b_2\overline{D}^{d_3}\right)\label{bracket-radi},
\end{align}
where we use spherical coordinates to have
\begin{equation*}
	 \int_{\norm{x} \leq \overline{D}}  dx = \dfrac{\pi^{d/2}}{\Gamma(d/2 + 1)} \overline{D}^d \lesssim \overline{D}^d,
\end{equation*}
and
\begin{align*}
	\int_{\norm{x} \geq \overline{D}}\exp\left(-d_2\norm{x}^{d_3} \right) & \lesssim \int_{r \geq \overline{D}} r^{d-1} \exp \left(-d_2 r^{d_3} \right) dr \\
	& = \dfrac{1}{d_3 d_2^{1/d_3}}\int_{\overline{D}^{d_3}}^{\infty} u^{d/d_3 - 1} \exp(-u) du \quad (\text{with } u = d_2 r^{d_3})\\
	& \leq \dfrac{1}{d_3 d_2^{1/d_3}} \overline{B}^{d-d_3} \exp(-\overline{D}^{d_3}).
\end{align*} 
Hence, in \eqref{bracket-radi}, choosing $\overline{D} = D(\log(1/\eta))^{1/d_3}$ gives
\begin{equation}\label{bracket-bound}
	\int_{\mathbb{R}^{d}} (p_i^{U} - p_i^{L}) dx \lesssim \eta \left(\log\left(\dfrac{1}{ \eta}\right)\right)^{d/d_3}.
\end{equation}
Therefore, there exists a positive constant $c$ which does not depend on $\eta$ such that 
\begin{equation*}
    H_B(c\eta \log(1/\eta)^{d/d_3}, \mathcal{P}_k, \norm{\cdot}_1)
    \lesssim \log(1/\eta).
\end{equation*}
Let $\epsilon = c \eta (\log(1/\eta))^{d/d_3}$, we have $\log(1/\epsilon) \asymp \log(1/\eta)$, which combines with inequality $\norm{\cdot}_1\leq h^2$ leads to 
\begin{equation*}
    H_B(\epsilon, \mathcal{P}_k, h)\leq H_B(\epsilon^2, \mathcal{P}_k, \norm{\cdot}_1)\lesssim \log(1/\epsilon^2) \lesssim \log(1/\epsilon).
\end{equation*}
Thus, bound~\eqref{eq:entropy-hellinger-bound} is proved.
\end{proof}

\section{Proofs of Section~\ref{sec:dendrogram-strong}}
\subsection{Proof of Proposition~\ref{prop:Wasserstein-variational}: Variational characterization of the dendrogram}
\begin{proof}[Proof of Proposition~\ref{prop:Wasserstein-variational}] The proof is organized into several small steps.

\paragraph{Step 1: Representation of Wasserstein projection.}
For $G^{(k)} = \sum_{i=1}^{k} p_i \delta_{\theta_i} \in \Ecal_{k}$, let $G^{*} = \sum_{j=1}^{k-1} p_j^* \delta_{\theta_j^*} \in \Ocal_{k-1}$ satisfy
    \begin{equation}\label{eq:G-k-1-proof-variational}
        G^{*} = \arg\inf_{G\in \Ocal_{k-1}} W_2^2(G, G^{(k)}).
    \end{equation}
    We proceed to show that $G^{*}$ has the representation described in Algorithm~\ref{alg:merge-atom}, i.e., $G^* = G^{(k-1)}$. Firstly, because $\Theta$ is assumed to be compact, we also have $\Ocal_{k-1}$ being compact. So there exists a minimizer $G^*$ to the infimum problem~\eqref{eq:G-k-1-proof-variational}. Recall that 
    \begin{equation*}
        W_2^2(G^{*}, G^{(k)}) = \min_{q \in \Pi(p, p^*)} \sum_{i, j=1}^{k, k-1} q_{ij} \norm{\theta_i - \theta_j^*}^2,
    \end{equation*}
    where $\Pi(p, p^*) = \{q\in \Rbb_{\geq 0}^{k \times (k-1)} : \sum_{i=1}^{k} q_{ij} = p_j^*, \sum_{j=1}^{k-1} q_{ij} = p_i\}$. 
    Therefore, the minimization problem~\eqref{eq:G-k-1-proof-variational} is equivalent to:
    \begin{equation*}
        \min_{\theta_1^*, \dots, \theta_{k-1}^*\in \Theta} \min_{q\in \Pi(p)}  \sum_{i, j=1}^{k, k-1} q_{ij} \norm{\theta_i - \theta_j^*}^2,
    \end{equation*}
    where $\Pi(p) = \{q\in \Rbb_{\geq 0}^{k \times (k-1)} : \sum_{j=1}^{k-1} q_{ij} = p_i\forall i \in [k]\}$. 
    
    \paragraph{Step 2. Optimal weights of $G^*$ given its atoms:}
    Now, consider any fixed $\theta_1^*, \dots, \theta_{k-1}^*$, for any $i\in [k]$, denote by $\tau(i) \in [k-1]$ the index satisfying:
    $$\tau(i) = \argmin_{j \in [k-1]} \norm{\theta_i - \theta_{j}^*}.$$
    In plain words, $\theta_{\tau(i)}^*$ is the atom $\theta_j^*$ nearest to $\theta_i$. If there is more than one such index, we can just pick one. Because
    \begin{equation*}
        \sum_{i, j=1}^{k, k-1} q_{ij} \norm{\theta_i - \theta_j^*}^2 \geq \sum_{i, j=1}^{k, k-1} q_{ij} \norm{\theta_i - \theta_{\tau(i)}^*}^2 = \sum_{i=1}^{k} p_i \norm{\theta_i - \theta_{\tau(i)}^*}^2,
    \end{equation*}
    we have that the optimal coupling $q_{ij}$ must put all mass on the pairs $(\theta_i, \theta_{\tau(i)}^{*})$ for $i\in [k]$, i.e., 
    \begin{equation*}
        q_{ij} = \begin{cases}
            p_i & \text{if } j = \tau(i) \\
            0 & \text{otherwise}.
        \end{cases}
    \end{equation*}
    Hence, the minimization problem~\eqref{eq:G-k-1-proof-variational} becomes:
    \begin{equation}\label{eq:G-k-1-proof-variational-2}
        \min_{\theta_1^*, \dots, \theta_{k-1}^*\in \Theta} \sum_{i=1}^{k} p_{i} \norm{\theta_i - \theta_{\tau(i)}^*}^2.
    \end{equation}
    \paragraph{Step 3. Optimal atoms of $G^*$:}
    Now, we claim that $\theta_1^*, \dots, \theta_{k-1}^*\in \Theta$ must be chosen so that $|\tau([k])| = k-1$. Because otherwise, $|\tau([k])|\leq k-2$, and there will have at least one index $j^*\in [k-1]$ such that $\theta_{j^*}^{*}$ does not appear in the minimization of~\eqref{eq:G-k-1-proof-variational-2}, and at least two indices $i_1, i_2\in [k]$ such that $\tau(i_1) = \tau(i_2)$. Because $\theta_{i_1} \neq \theta_{i_2}$, at least one of them must be different from $\theta_{\tau(i_1)}^*$. Assume that $\theta_{i_1} \neq \theta_{\tau(i_1)}^*$, then we can choose $\theta_{j^*}^* = \theta_{i_1}$ and further reduce the objective of~\eqref{eq:G-k-1-proof-variational-2}, which is a contradiction. Thus, the claim is proved.

    Hence, we showed that $G^{*}$ has exactly $k-1$ atoms. Because  $|\tau([k])| = k-1$, we can assume that there are two indices $i\neq i'$ such that $\tau(i) = \tau(i')$, while $\tau(1),\dots, \tau(i-1), \tau(i+1), \dots, \tau(k)$ are distinct. The problem~\eqref{eq:G-k-1-proof-variational-2} becomes:
    \begin{equation}\label{eq:G-k-1-proof-variational-3}
        \min_{\theta_1^*, \dots, \theta_{k-1}^*\in \Theta} \left(\sum_{j\neq i, i'} p_{j} \norm{\theta_j - \theta_{\tau(j)}^*}^2 + p_{i} \norm{\theta_{i} - \theta_{\tau(i)}^*}^2 + p_{i'} \norm{\theta_{i'} - \theta_{\tau(i)}^*}^2\right).
    \end{equation}
    Solving this problem yields $\theta_{\tau(j)}^* = \theta_{j}$ for all $j\neq i, i'$ and 
    $$\theta_{\tau(i)}^* = \dfrac{p_i}{p_i + p_{i'}} \theta_{i} +  \dfrac{p_{i'}}{p_i + p_{i'}} \theta_{i'}.$$
    Hence, $G^{*} = \sum_{j\neq i, i'} p_j \delta_{\theta_j} + \overline{p} \delta_{\overline{\theta}}$, where $\overline{p} = p_i + p_{i'}$ and $\overline{\theta} = \theta_{\tau(i)}^*$ as defined above. Finally, to find the optimal $i, i'\in [k]$, we notice that with this choice of $G^{*}$, 
    \begin{equation*}
        W_2^2(G^{*}, G^{(k)}) = p_i \norm{\theta_i - \overline{\theta}}^2 + p_{i'} \norm{\theta_{i'} - \overline{\theta}}^2 = \dfrac{1}{p_i^{-1} + p_{i'}^{-1}} \norm{\theta_i - \theta_{i'}}^2.
    \end{equation*}
    Therefore, $i$ and $i'$ are chosen to minimize $\divclus(p_i \delta_{\theta_i}, p_{i'} \delta_{\theta_{i'}})$, which is exactly the merging rule in Algorithm~\ref{alg:merge-atom}. Hence, the output of the Algorithm~\ref{alg:merge-atom} satisfies the variational characterization~\eqref{eq:G-k-1-proof-variational}.
\end{proof}

\subsection{Proof of Lemma~\ref{lem:inv-bound}: Inverse bound for strongly identifiable mixtures}
The proof proceeds by using contradictions, an usual technique for proving inverse bounds~\cite{3Chen, 23Nguyen, 21Tudor}. By identifiability, a sequence of 
overfitted mixing measures $G_n = \sum p_i^n \delta_{\theta_i^n} \to G_0$ can have (i) redundant atoms $\theta_i^n, \theta_j^n$ that converges to the same $\theta_i^0$, and (ii) some weights $p_j^n$ vanishing fast so $\theta_j^n$ can vary anywhere. Note that for any varying sequence $(\theta_i^{n})_{n=1}^{\infty}$, because $\Theta$ is compact, we can extract a subsequence that is convergent. In the following proof, we characterize the presentation of the sequence $G_n$ to capture both scenarios. 

The main novelty of this proof compared to existing work is that we fully utilize the second-order strong identifiability condition, which concern all partial derivatives up to the second order, while other proofs only use the linear independence of the zero and second-order derivatives of $f$. This technique yields a stronger inverse bound.

\begin{proof}[Proof of Lemma~\ref{lem:inv-bound}] The proof is divided into several small steps.

\paragraph{Step 1: Proving by contradiction and setup.} Suppose that the claim of the lemma is not true. Then there exists a sequence $G_n \in \Ocal_{k}(\Theta)$ such that $V(p_{G_n}, p_{G_0})\to 0$ and $\dfrac{V(p_{G_n}, p_{G_0})}{\divergence(G_n, G_0)} \to 0$. Because $\Theta$ is compact, by extracting a subsequence if needed, we can assume that $G_n \to G_{*} \in \Ocal_{k}(\Theta)$. Hence,
$$V(p_{G_*}, p_{G_0}) = \lim_{n\to \infty} V(p_{G_n}, p_{G_0}) = 0.$$
By the identifiability of $f$, we have that $G_* = G_0$. Thus $\lim_{n\to \infty} G_n = G_0$ in $W_1$. Therefore, we also have $\divergence(G_n, G_0) \to 0$ as $n\to \infty$. Without loss of generality, we can assume that the sequence $G_n$ has the following representation: 
$$G_n = \sum_{i=1}^{k_0}\sum_{j=1}^{s_i} p_{ij}^{n} \delta_{\theta_{ij}^{n}} + \sum_{i=1}^{k_0}\sum_{j=1}^{\tilde{s}_i}\sum_{t=1}^{\tilde{r}_{ij}} \tilde{p}_{ijt}^{n} \delta_{\tilde{\theta}_{ijt}^{n}} \in \Ecal_{k_*} \quad \forall n\in \Nbb,$$
where
$$\sum_{j=1}^{s_i} p_{ij}^{n}\to p_{i}^{0}, \quad \theta_{ij}^{n}\to \theta_i^0,\quad \forall i\in [k_0], $$
and
$$\tilde{p}^{n}_{ijt} \to 0, \quad \tilde{\theta}^{n}_{ijt} \to \tilde{\theta}_{ij}^0 \quad \forall t\in [\tilde{r}_{ij}], j\in [\tilde{s}_i], i\in [k_0],$$
where $\sum_{i=1}^{k_0} s_i + \sum_{i,j} \tilde{r}_{ij}\leq k$, $(\theta_i^0, \tilde{\theta}_{ij}^0)_{i, j}$ are distinct, and $\theta_{ij}^0\in V_i$ for all $j\in [\tilde{s}_i], i\in [k_0]$. Hence, for all $n$ large enough, we have
\begin{align*}
    \divergence(G_n, G_0) & = \sum_{i=1}^{k_0} \left|\sum_{j=1}^{s_i} p_{ij}^{n} + \sum_{j, t} \tilde{p}_{ijt}^{n} - p_i^0 \right| + \left\|\sum_{j=1}^{s_i} p_{ij}^{n} (\theta_{ij}^{n} -\theta_i^0) + \sum_{j, t} \tilde{p}_{ijt}^{n} (\tilde{\theta}^{n}_{ijt} - \theta_i^0) \right\| \\ 
    & + \sum_{j=1}^{s_i} p_{ij}^{n} \norm{\theta_{ij}^{n} - \theta_i^0}^2 + \sum_{j=1}^{\tilde{s}_i} \sum_{t=1}^{\tilde{r}_{ij}} \tilde{p}_{ijt}^{n} \norm{\tilde{\theta}_{ijt}^{n} - \theta_i^0}^2\\
    &\hspace{-.7cm} \lesssim \sum_{i=1}^{k_0} \left|\sum_{j=1}^{s_i} p_{ij}^{n} - p_i^0 \right| + \left\|\sum_{j=1}^{s_i} p_{ij}^{n} (\theta_{ij}^{n} -\theta_i^0)\right\| + \sum_{j=1}^{s_i} p_{ij}^{n} \norm{\theta_{ij}^{n} - \theta_i^0}^2 + \sum_{j=1}^{\tilde{s}_i} \sum_{t=1}^{\tilde{r}_{ij}} \tilde{p}_{ijt}^{n},
\end{align*}
where the multiplicative constant in this inequality only depends on $\Theta$, by the application of triangle inequalities.

\paragraph{Step 2: Taylor expansion.} We have the Taylor expansion of $f(x | \theta)$ around $\theta_i^0$:
\begin{align*}
    f(x|\theta_{ij}^{n}) - f(x|\theta_i^0) & = (\theta_{ij}^{n} - \theta_i^0)^{\top} \dfrac{\partial f(x | \theta_{i}^0)}{\partial \theta} + (\theta_{ij}^{n} - \theta_i^0)^{\top} \dfrac{\partial^2 f(x | \theta_{i}^0)}{\partial \theta^2} (\theta_{ij}^{n} - \theta_i^0) + R^{n}_{ij}(x),
\end{align*}
where $R^{n}_{ij}(x) = o\left(\norm{\theta_{ij}^{n} - \theta_i^0}^2\right)$ for all $x$. Therefore,
\begin{align*}
    & p_{G_n}(x) - p_{G_0}(x)  = \sum_{i=1}^{k_0} \sum_{j=1}^{s_i} p_{ij}^{n} f(x|\theta_{ij}^n) - \sum_{i=1}^{k_0} p_{i}^{0} f(x|\theta_{i}^0) + \sum_{i, t, j} \tilde{p}^{n}_{ijt} f(x|\tilde{\theta}_{ijt}^{n}) \\
    & = \sum_{i=1}^{k_0} \left(\sum_{j=1}^{s_i} p_{ij}^{n} - p_i^0 \right)f(x|\theta_i^0) + \sum_{i, t, j} \tilde{p}^{n}_{ijt} f(x|\tilde{\theta}_{ijt}^{n}) + \sum_{i=1}^{k_0} \sum_{j=1}^{s_i} p_{ij}^{n} \left( f(x|\theta_{ij}^n) - f(x|\theta_i^0)\right) \\
    & = \sum_{i=1}^{k_0} \left(\sum_{j=1}^{s_i} p_{ij}^{n} - p_i^0 \right)f(x|\theta_i^0) + \sum_{i, t, j} \tilde{p}^{n}_{ijt} f(x|\tilde{\theta}_{ijt}^{n}) + R^{n}(x)\\
    & + \sum_{i=1}^{k_0}\left(\sum_{j=1}^{s_i} p_{ij}^{n}(\theta_{ij}^{n} - \theta_i^0)\right)^{\top} \dfrac{\partial f(x | \theta_{i}^0)}{\partial \theta} + \sum_{i=1}^{k_0} \sum_{j=1}^{s_i} p_{ij}^{n} (\theta_{ij}^{n} - \theta_i^0)^{\top} \dfrac{\partial^2 f(x | \theta_{i}^0)}{\partial \theta^2} (\theta_{ij}^{n} - \theta_i^0) ,
\end{align*}
where $R^{n}(x) = o\left(\sum_{i=1}^{k_0}\sum_{j=1}^{s_i} p_{ij}^{n}\norm{\theta_{ij}^{n} - \theta_i^0}^2\right) = o(D_n)$ for all $x$, where $D_n := \divergence(G_n, G_0)$.

\paragraph{Step 3: Non-vanishing coefficients.} For each $i \in [k_0]$ and $u, v \in [d]$ (recall that $d$ is the dimension of $\theta$), let $a^{n}_i$ be the coefficient of $f(x|\theta_i^0)$, $b^{n}_{iu}$ the coefficient of $\partial f(x|\theta_i^0)/ \partial \theta^{(u)}$, and $c^{n}_{iuv}$ the coefficient of $\partial^2 f(x|\theta_i^0)/ \partial \theta^{(u)} \partial \theta^{(v)}$ in the display above, where $\theta^{(u)}$ is generally denoted the $u$-th dimension of $\theta$. We have 
\begin{align*}
    \sum_{i=1}^{k_0} \sum_{r=1}^{d} \left( |a_i| + |b_{iu}| + |c_{iuu}|\right) + \sum_{i, j, t } \tilde{p}_{ijt}^{n}\\
    \geq  \sum_{i=1}^{k_0} \left(\left|\sum_{j=1}^{s_i} p_{ij}^n - p_{i}^0 \right| + \left\|\sum_{j=1}^{s_i} p_{ij}^{n} (\theta_{ij}^n - \theta_i^0) \right\| + \sum_{j=1}^{s_i} p_{ij}^{n} \norm{\theta_{ij}^{n} - \theta_i^0}^2\right)  + \sum_{i, j, t } \tilde{p}_{ijt}^{n} \gtrsim D_n.
\end{align*}
Hence, when denote $M_n = \max_{i, u, v, j, t} \{|a^{n}_i|, |b^{n}_{iu}|, |c^{n}_{i u v}|, |\tilde{p}_{ijt}^{n}|
\}$, we have $M_n \gtrsim D_n$, which implies $R^{n} = o(M_n)$, and $\dfrac{V(p_{G_n}, p_{G_0})}{M_n} = 0$. Moreover, because $\dfrac{a_i^{n}}{M_n}$ is a sequence in a compact set $[-1, 1]$, we can WLOG assume that $\dfrac{a_i^{n}}{M_n} \to \alpha_i \in [-1, 1]$. Similarly, 
$$\dfrac{b^{n}_{iu}}{M_n} \to \beta_{iu}\in [-1, 1], \quad \dfrac{c^{n}_{i u v}}{M_n} \to \gamma_{iu v} \in [-1, 1], \quad \dfrac{\tilde{p}_{ijt}^{n}}{M_n} \to \pi_{ijt}\in [0, 1]\quad \forall i, u, v, j, t.$$
Besides, at least a coefficient in $(\alpha_i, \beta_{iu}, \gamma_{iuv}, \pi_{ijt})_{i, u, v, j, t}$ is 1. We also note that $\gamma_{i u v} = \gamma_{i v u}$ as $c_{iuv}^{n} = c_{ivu}^{n}$ for all $n$.

\paragraph{Step 4: Derive contradiction using the strong identifiability condition.}
According to Fatou's lemma,
\begin{align*}
    & \lim_{n\to \infty}\dfrac{2V(p_{G_n}, p_{G_0})}{M_n} = \lim_{n\to \infty} \int_{\mathcal{X}} \left|\dfrac{p_{G_n}(x) - p_{G_0}(x)}{M_n} \right| dx \\
    & \geq \int_{\mathcal{X}} \left|\lim_{n\to \infty} \dfrac{p_{G_n}(x) - p_{G_0}(x)}{M_n} \right| dx\\
    & \vspace{-1cm} = \int_{\mathcal{X}} \left|\sum_{i=1}^{k_0} \alpha_i f(x|\theta_i^0) + \sum_{j, t} \pi_{ijt} f(x|\theta_{ij}^0) + \sum_u \beta_{iu} \dfrac{f(x|\theta_i^0)}{\partial \theta^{(u)}} + \sum_{u, v} \gamma_{iuv} \dfrac{\partial^2 f(x|\theta_i^0)}{\partial \theta^{(u)} \partial \theta^{(v)}} \right| dx.
\end{align*}
This implies:
$$\sum_{i=1}^{k_0} \alpha_i f(x|\theta_i^0) + \sum_{j, t} \pi_{ijt} f(x|\theta_{ij}^0) + \sum_{u} \beta_{iu} \dfrac{f(x|\theta_i^0)}{\partial \theta^{(u)}} + \sum_{u, v} \gamma_{iuv} \dfrac{\partial^2 f(x|\theta_i^0)}{\partial \theta^{(u)} \partial \theta^{(v)}} = 0, \,\,\text{a.s. } x\in \mathcal{X},$$
which is contradictory to the strong identifiability condition, given that at least a coefficient in $(\alpha_i, \beta_{iu}, \gamma_{iuv}, \sum_{t} \pi_{ijt})_{i, u, v, j}$ is greater or equal to 1. Hence, the inverse bound is correct.
\end{proof}

\subsection{Proof of Lemma~\ref{lem:order-mix-measure}}
\begin{proof}[Proof of Lemma~\ref{lem:order-mix-measure}] We only need to show that for all $k \geq k_0 + 1$ and sequence $G_n \in \Ecal_{k}$ such that $\divergence(G_n, G_0)\to 0$, then 
\begin{equation}\label{eq:1-step-order-mix-measure}
   \divergence(G_n, G_0) \gtrsim \divergence(G_n^{(k-1)}, G_0). 
\end{equation}
The rest follows from the induction argument. To show inequality~\eqref{eq:1-step-order-mix-measure}, we use the usual technique of proving by contradiction, which is similar to Lemma~\ref{lem:inv-bound}. Assume that~\eqref{eq:1-step-order-mix-measure} is not correct, we have a sequence of $G_n \in \Ecal_{k}$ such that
\begin{equation}\label{eq:order-mix-measure-contradiction}
    \divergence(G_n, G_0)\to 0 \quad \text{and} \quad \dfrac{\divergence(G_n, G_0)}{\divergence(G_n^{(k-1)}, G_0)}\to 0,
\end{equation}
as $n\to \infty$. Without loss of generality, we can assume that the sequence $G_n$ has the following representation: 
$$G_n = \sum_{i=1}^{k_0}\sum_{j=1}^{s_i} p_{ij}^{n} \delta_{\theta_{ij}^{n}} + \sum_{i=1}^{k_0}\sum_{j=1}^{\tilde{s}_i}\sum_{t=1}^{\tilde{r}_{ij}} \tilde{p}_{ijt}^{n} \delta_{\tilde{\theta}_{ijt}^{n}} \in \Ecal_{k_*} \quad \forall n\in \Nbb,$$
where
$$\sum_{j=1}^{s_i} p_{ij}^{n}\to p_{i}^{0}, \quad \theta_{ij}^{n}\to \theta_i^0,\quad \forall i\in [k_0], $$
and
$$\tilde{p}^{n}_{ijt} \to 0, \quad \tilde{\theta}^{n}_{ijt} \to \tilde{\theta}_{ij}^0 \quad \forall t\in [\tilde{r}_{ij}], j\in [\tilde{s}_i], i\in [k_0],$$
where $\sum_{i=1}^{k_0} s_i + \sum_{i,j} \tilde{r}_{ij}\leq k$, $(\theta_i^0, \tilde{\theta}_{ij}^0)_{i, j}$ are distinct, and $\theta_{ij}^0\in V_i$ for all $j\in [\tilde{s}_i], i\in [k_0]$. For all $n$ large enough, we have
\begin{align*}
    \divergence(G_n, G_0) & = \sum_{i=1}^{k_0} \left|\sum_{j=1}^{s_i} p_{ij}^{n} + \sum_{j, t} \tilde{p}_{ijt}^{n} - p_i^0 \right| + \left\|\sum_{j=1}^{s_i} p_{ij}^{n} (\theta_{ij}^{n} -\theta_i^0) + \sum_{j, t} \tilde{p}_{ijt}^{n} (\tilde{\theta}^{n}_{ijt} - \theta_i^0) \right\| \\ 
    & + \sum_{j=1}^{s_i} p_{ij}^{n} \norm{\theta_{ij}^{n} - \theta_i^0}^2 + \sum_{j=1}^{\tilde{s}_i} \sum_{t=1}^{\tilde{r}_{ij}} \tilde{p}_{ijt}^{n} \norm{\tilde{\theta}_{ijt}^{n} - \theta_i^0}^2\\
    &\hspace{-.7cm} \lesssim \sum_{i=1}^{k_0} \left|\sum_{j=1}^{s_i} p_{ij}^{n} - p_i^0 \right| + \left\|\sum_{j=1}^{s_i} p_{ij}^{n} (\theta_{ij}^{n} -\theta_i^0)\right\| + \sum_{j=1}^{s_i} p_{ij}^{n} \norm{\theta_{ij}^{n} - \theta_i^0}^2 + \sum_{j=1}^{\tilde{s}_i} \sum_{t=1}^{\tilde{r}_{ij}} \tilde{p}_{ijt}^{n},
\end{align*}
where the multiplicative constant in this inequality only depends on $\Theta$, by the application of triangle inequalities. Besides, we also see that $\divergence(G_n, G_0) \gtrsim \tilde{p}_{ijt}^{n}$ for all $i, j, t$ due to the fact that $\tilde{\theta}_{ijt}^{n} \to \tilde{\theta}_{ij}^0 \neq \theta_i^0.$ Now consider different cases of merging atoms in $G_n$. Because there are a finite number of options to merge, by extracting a subsequence if needed, we can assume one of those cases happens for all $n\in \Nbb$. Denote $\overline{p}^{n} \delta_{\overline{\theta}^{n}}$ by the merged atom.

\paragraph{Case 1: Merge $p_{uv}^{n} \delta_{\theta_{uv}^{n}}$ and $p_{u v'}^{n} \delta_{\theta_{u v'}^{n}}$ for some $v\neq v'$ and common $u$.} Because of the convexity of $V_{u}$, we have $\overline{\theta}^{n}$ also belongs to $V_{u}$. Notice that $\overline{p}^{n} = p_{u v}^{n} + p_{u v'}^{n}$ and $\overline{p}^{n} \overline{\theta}^{n} = p_{u v}^{n} \theta_{u v}^{n} + p_{u v'}^{n} \theta_{u v'}^{n}$, we have
\begin{align*}
    \divergence(G_n, G_0) - \divergence(G_n^{(k-1)}, G_0) & = {p}_{u v}^{n} \norm{{\theta}_{u v}^{n} - \theta_i^0}^2 + {p}_{u v'}^{n} \norm{{\theta}_{u v'}^{n} - \theta_i^0}^2 - \overline{p}^{n} \norm{\overline{\theta}^{n} - \theta_i^0}^2\\
    &\geq 0,
\end{align*}
where the inequality follows by the convexity of $\norm{\cdot}^2$. Hence, 
$$\dfrac{\divergence(G_n, G_0)}{\divergence(G_n^{(k-1)}, G_0)} \geq 1 \not\to 0, $$
as $n\to \infty$, which is contradictory to~\eqref{eq:order-mix-measure-contradiction}.

\paragraph{Case 2: Merge $p_{u v}^{n} \delta_{\theta_{u v}^{n}}$ and $p_{u' v'}^{n} \delta_{\theta_{u' v'}^{n}}$ for $u\neq u'$.} Firstly, because $k \geq k_0 + 1$, we are still in the overfitted regime. Therefore, $\divclus(p_{u v}^{n} \delta_{\theta_{u v}^{n}}, p_{u' v'}^{n} \delta_{\theta_{u' v'}^{n}}) \to 0$ as $n\to\infty$. Consider three smaller cases.

\paragraph{Case 2.1: $s_u = 1$ and $s_v = 1$.} We have $p_{uv}^{n}\to p_u^0 > 0, p_{u'v'}^{n}\to p_{u'}^0 $ and $\norm{\theta_{uv}^n - \theta_{u'v'}^{n}}\to \norm{\theta_u^0 - \theta_v^0} > 0$ as $n\to \infty$. This implies
$$\dfrac{1}{(p_{uv}^{n})^{-1} + (p_{u'v'}^{n})^{-1}} \norm{\theta_{uv}^{n} - \theta_{u'v'}^{n}}^2 \not \to 0,$$
as $n \to \infty$, which is a contradiction.

\paragraph{Case 2.2: $s_u = 1, s_v > 1$ or $s_v = 1, s_u > 1$.} WLOG, we assume $s_u > 1$ and $s_{u'} = 1$. In this case, $p_{u'v'}^{n}\to p_{u'}^0 > 0$ and $\norm{\theta_{uv}^n - \theta_{u'v'}^{n}}\to \norm{\theta_u^0 - \theta_{u'}^0} > 0$ as $n\to \infty$. This implies
$$\dfrac{1}{(p_{uv}^{n})^{-1} + (p_{u'v'}^{n})^{-1}} \norm{\theta_{uv}^{n} - \theta_{u'v'}^{n}}^2 \asymp p^{n}_{uv},$$
as $n \to \infty$. However, as $s_u > 1$, there exists another atom $\theta_{uv''}^{n} \to \theta_{u}^0$. Hence,
$$\dfrac{1}{(p_{uv}^{n})^{-1} + (p_{u v''}^{n})^{-1}} \norm{\theta_{uv}^{n} - \theta_{u v''}^{n}}^2 \ll p_{uv}^{n}.$$
This also leads to a contradiction.

\paragraph{Case 2.3: $s_u > 1$ and $s_v > 1$.} WLOG, we can assume that $p_{uv}^{n} < p_{u'v'}^{n}$. Similar to the previous case, we also have
$$\dfrac{1}{(p_{uv}^{n})^{-1} + (p_{u'v'}^{n})^{-1}} \norm{\theta_{uv}^{n} - \theta_{u'v'}^{n}}^2 \asymp p^{n}_{uv}.$$
Then, we can use the same argument as the previous case to derive the contradiction. Hence, in other words, Case 2 eventually does not happen.

\paragraph{Case 3: Merge $p_{uv}^{n} \delta_{\theta_{uv}^{n}}$ with $p_{u v' t}^{n} \delta_{\theta_{uv't}^{n}}$.} The contradiction proceeds similarly to Case 1 because both atoms belong to the same Voronoi cell $V_u$.

\paragraph{Case 4: Merge $p_{uv}^{n} \delta_{\theta_{uv}^{n}}$ with $p_{u' v' t}^{n} \delta_{\theta_{u'v't}^{n}}$.} The contradiction proceeds similarly to Case 2, where we can show that there is always a better pair of atoms to merge.
\end{proof}

\subsection{Proof of Theorem~\ref{thm:asymptotic-dendrogram}: Asymptotic behavior of the mixing measures in the dendrogram}
\begin{proof}[Proof of Theorem~\ref{thm:asymptotic-dendrogram}]
    We divide the proof into two parts: overfitted levels and under-fitted levels.
    
    \paragraph{Part 1: Convergence rate on overfitted levels.}
    From Proposition~\ref{prop:density-rate}, there exists a constant $c$ depending on $\Theta$ and $k$ so that on an event $A_n$ with probability of at least $1- c_1 n^{-c_2}$, we have
    \begin{equation*}
        V\left(p_{\widehat{G}_n}, p_{G_0}\right) \leq \sqrt{2} h\left(p_{\widehat{G}_n}, p_{G_0}\right)\leq c \left(\dfrac{\log n}{n}\right)^{1/2}.
    \end{equation*}
    Besides, from Lemma~\ref{lem:inv-bound} and Lemma~\ref{lem:order-mix-measure}, there exists constants $w_k, \dots, w_{k_0} > 0$ depending on $G_0, \Theta$, and $k$ such that for all $G\in \Ocal_{k}$ so that $V(p_{G}, p_{G_0})$ small enough, we have
    \begin{equation*}
        V\left(p_{G}, p_{G_0}\right) \geq w_{\kappa} \divergence(G^{(\kappa)}, G_0), \quad\forall \kappa\in [k_0, k].
    \end{equation*}
    Apply this inequality for $G = \widehat{G}_n \in \Ocal_{k}$, we have that for all $n$ large enough, on event $A_n$,
    \begin{equation*}
        \divergence(\widehat{G}_n^{(\kappa)}, G_0) \leq \dfrac{1}{w_{\kappa}} V\left(p_{\widehat{G}_n^{(\kappa)}}, p_{G_0}\right)\leq \dfrac{c}{w_{\kappa}}  \left(\dfrac{\log n}{n}\right)^{1/2}.
    \end{equation*}
    Because $\divergence(\widehat{G}_n^{(\kappa)}, G_0) \gtrsim W_2^2(\widehat{G}_n^{(\kappa)}, G_0)$ for every $\kappa \geq k_0$ and $\divergence(\widehat{G}_n^{(k_0)}, G_0) \asymp W_1(\widehat{G}_n^{(k_0)}, G_0)$, we get the convergence rate for all level $\kappa\geq k_0$:
    \begin{equation*}
        W_2^2(\widehat{G}_n^{(\kappa)}, G_0)\lesssim \left(\dfrac{\log n}{n}\right)^{1/4},
    \end{equation*}
    and
    \begin{equation*}
        W_1(\widehat{G}_n^{(k_0)}, G_0)\lesssim \left(\dfrac{\log n}{n}\right)^{1/2},
    \end{equation*}
    where all constants only depends on $G_0, \Theta$, and $k$. 
    
    \paragraph{Part 2: Convergence rate on under-fitted levels.} To show the convergence rate for under-fitted levels, we further denote $\widehat{G}_n^{(k_0)} = \sum_{i=1}^{k_0} p^{n}_i\delta_{\theta_i^{n}}$. Rearranging the index of $\widehat{G}_n^{(k_0)}$ so that on $A_n$,
    \begin{equation}
        |p_i^{n} - p_i^0| \leq C\left(\dfrac{\log n}{n}\right)^{1/2}, \quad \norm{\theta_i^n - \theta_i^0} \lesssim\left(\dfrac{\log n}{n}\right)^{1/2},\quad \forall i\in [k_0].
    \end{equation}
    It is straightforward to show that for every pair $(i, j)\in [k_0]^2$, we have
    \begin{equation}\label{eq:conv-underfit-levels}
    \left|\dfrac{1}{(p_i^n)^{-1} + (p_j^n)^{-1}} \norm{\theta_i^n - \theta_j^n}^2 - \dfrac{1}{(p_i^0)^{-1} + (p_j^0)^{-1}} \norm{\theta_i^0 - \theta_j^0}^2\right|\lesssim \left(\dfrac{\log n}{n}\right)^{1/2}.
    \end{equation}
    Hence, on $A_n$, the optimal choice of indices $(i, j)$ to merge for $\widehat{G}_n^{(k_0)}$ will be the same as $G_0$ for every $n$ large enough. After merging, we also have 
    $$\left|(p_{i}^{n} + p_{j}^{n}) - (p_{i}^{0} + p_{j}^{0})\right| \lesssim \left(\dfrac{\log n}{n}\right)^{1/2},$$
    and
    $$\norm{\left(\dfrac{p_i^n}{p_i^{n}+p_j^{n}}\theta_i^{n} + \dfrac{p_j^n}{p_i^{n}+p_j^{n}}\theta_j^{n}\right) - \left(\dfrac{p_i^0}{p_i^{0}+p_j^{0}}\theta_i^{0} + \dfrac{p_j^0}{p_i^{0}+p_j^{0}}\theta_j^{0}\right)} \lesssim \left(\dfrac{\log n}{n}\right)^{1/2}.$$
    Hence, $W_1\left(\widehat{G}_n^{(k_0 - 1)}, G_0^{(k_0-1)} \right)\lesssim \left(\dfrac{\log n}{n}\right)^{1/2}$. The rest of the proof follows by means of induction. 
\end{proof}

\subsection{Proof of Theorem~\ref{thm:asymptotic-height}: Asymptotic behavior of the heights}
\begin{proof}[Proof of Theorem~\ref{thm:asymptotic-height}] Continue from the proof of Theorem~\ref{thm:asymptotic-dendrogram}, on $A_n$, for every $\kappa \geq k_0 + 1$, there exists two overfitted atoms $\theta_{k_1}^{n}$ and $\theta_{k_2}^{n}$ of $\widehat{G}_n^{(\kappa)}$ belongs to the same Voronoi cell $V_i$, so that
$$p_{k_1}^{n} \norm{\theta_{k_1}^{n} - \theta_{i}^0}^2 + p_{k_2}^{n} \norm{\theta_{k_2}^{n} - \theta_{i}^0}^2 \lesssim \left(\dfrac{\log n}{n}\right)^{1/2}.$$
Let $p^{n}_{*} = p^{n}_{k_1} + p^{n}_{k_2}$ and $\theta^{n}_{*} = \dfrac{p^{n}_{k_1}}{p_{*}}\theta_{k_1}^{n} + \dfrac{p^{n}_{k_2}}{p_{*}}\theta_{k_2}^{n}$, we can readily check the following equation:
\begin{equation}
    p_{k_1}^{n} \norm{\theta_{k_1}^{n} - \theta_{i}^0}^2 + p_{k_2}^{n} \norm{\theta_{k_2}^{n} - \theta_{i}^0}^2 = p_{*}^{n} \norm{\theta_{*}^{n} - \theta_{i}^0}^2 + \dfrac{1}{(p_{k_1}^{n})^{-1}+(p_{k_2}^{n})^{-1}} \norm{\theta_{k_1}^{n}-\theta_{k_2}^{n}}^2.
\end{equation}
Hence,
$$d_n^{(\kappa)}\leq  \dfrac{1}{(p_{k_1}^{n})^{-1}+(p_{k_2}^{n})^{-1}} \norm{\theta_{k_1}^{n}-\theta_{k_2}^{n}}^2 \lesssim \left(\dfrac{\log n}{n}\right)^{1/2}, \quad \forall \kappa \geq k_0 + 1.$$
When $\kappa \leq k_0$, the conclusion follows from inequality~\eqref{eq:conv-underfit-levels} in the proof of Theorem~\ref{thm:asymptotic-dendrogram}.
\end{proof}

\subsection{Proof of Theorem~\ref{thm:asymptotic-likelihood}: Asymptotic behavior of the likelihood}
Before proving this result, we would like to recall a known result used to bound the KL divergence by Hellinger distance (\cite{34Wong}, Theorem 5).
\begin{theorem}\label{thm:wong-shen-95}
    Let $p, q$ be two densities with $h(p, q)\leq \epsilon$. Suppose that $M_{\delta}^2 = \int_{\{p/q > e^{1/\delta}\}} p (p / q)^{\delta} < \infty$ for some $\delta \in (0, 1]$. Then for all $\epsilon \leq (1 - e^{-1})/\sqrt{2}$, we have
    \begin{equation}\label{eq:wong-shen-1}
        KL(p \| q) \leq \left[6 + \dfrac{2\log 2}{(1-e^{-1})^2} + \dfrac{8}{\delta} \max\{1, \log(M_{\delta} / \epsilon)\}\right] \epsilon^2,
    \end{equation}
    and
    \begin{equation}\label{eq:wong-shen-2}
        \int_{\mathcal{X}} p(x) \left(\log \dfrac{p(x)}{q(x)}\right)^2 dx \leq 5 \dfrac{1}{\delta} \max\{1, \log(M_{\delta} / \epsilon)\}^2 \epsilon^2,
    \end{equation}
\end{theorem}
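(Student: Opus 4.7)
My plan follows the classical Wong--Shen region-splitting argument. Let $L(x) := \log(p(x)/q(x))$ and fix a threshold $T$ to be chosen; the decomposition
\begin{equation*}
KL(p\|q) \;=\; \underbrace{\int_{L\leq T} p L\,dx}_{I_1} \;+\; \underbrace{\int_{L>T} p L\,dx}_{I_2}
\end{equation*}
will be analyzed by controlling $I_1$ using the Hellinger hypothesis $h^2(p,q) \leq \epsilon^2$, and $I_2$ using the moment hypothesis $M_{\delta}^2$. Inequality~\eqref{eq:wong-shen-2} follows by applying the same recipe to $\int p L^2\,dx$.

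For the tail $I_2$, choose $T \geq 1/\delta$ so that $\{L > T\} \subset \{p/q > e^{1/\delta}\}$, making the moment hypothesis $\int_{L > T} p (p/q)^{\delta} \leq M_{\delta}^2$ applicable. Markov's inequality yields $\Pbb_p(L > T) \leq M_{\delta}^2 e^{-\delta T}$, and the elementary inequality $x \leq \delta^{-1} e^{\delta x}$ for $x \geq 0$ applied to $(L-T)\mathbf{1}_{L>T}$ gives
\begin{equation*}
\int_{L>T} p(L-T)\,dx \;\leq\; \delta^{-1} e^{-\delta T} \int_{L>T} p (p/q)^{\delta}\,dx \;\leq\; \delta^{-1} M_{\delta}^2 e^{-\delta T}.
\end{equation*}
Writing $L = T + (L - T)$ and summing, $I_2 \leq (T + \delta^{-1}) M_{\delta}^2 e^{-\delta T}$. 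Choosing $T = (2/\delta)\max\{1, \log(M_{\delta}/\epsilon)\}$ makes $M_{\delta}^2 e^{-\delta T} \leq \epsilon^2$, producing a contribution of order $(8/\delta)\max\{1,\log(M_\delta/\epsilon)\}\,\epsilon^2$.

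For the central $I_1$, set $r := \sqrt{p/q}$ so that $pL = 2 q r^2 \log r$, and recall the identities $\int q(r-1)^2 = 2h^2(p,q) \leq 2\epsilon^2$ and $\int q(r-1) = \int\sqrt{pq}-1 = -h^2(p,q)$. The key pointwise estimate, verified by direct calculus, is
\begin{equation*}
2 r^2 \log r \;\leq\; 2(r-1) + C\,(r-1)^2 \quad \text{for } r \in [0, R_0],
\end{equation*}
with $R_0 = (1 - e^{-1})^{-1}$ and $C = 3 + (\log 2)/(1-e^{-1})^2$. Integrating against $q$ gives $I_1 \leq -2 h^2 + 2 C h^2 \leq (2C - 2)\epsilon^2 = \bigl(6 + 2\log 2/(1-e^{-1})^2\bigr)\epsilon^2$, modulo a contribution from $\{r > R_0\}$ where the pointwise inequality fails. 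Here the hypothesis $\epsilon \leq (1-e^{-1})/\sqrt{2}$ enters decisively: by Chebyshev, the $q$-mass of $\{r > R_0\}$ is $O(\epsilon^2)$, so the residual contribution can be absorbed into $I_2$'s constant via the moment bound. Summing $I_1$ and $I_2$ yields~\eqref{eq:wong-shen-1}.

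For~\eqref{eq:wong-shen-2}, the same decomposition applies to $\int p L^2\,dx$. On the central region, the analogous calculus inequality $2 r^2 (\log r)^2 \leq C'(r-1)^2$ on $[0,R_0]$ produces an $O(\epsilon^2)$ contribution. On the tail, use $L^2 \leq (2/\delta)^2 e^{\delta L}$ combined with $M_{\delta}^2$ and the same choice of $T$ to produce a $T^2 \epsilon^2/\delta$-type term; consolidating constants gives the $5\delta^{-1}\max\{1,\log(M_\delta/\epsilon)\}^2 \epsilon^2$ prefactor. The main obstacle throughout is the tight constant tracking: getting the precise $6$, $2\log 2/(1-e^{-1})^2$, and $8/\delta$ (rather than mere $O(1)$ and $O(1/\delta)$ factors) requires optimizing the Taylor coefficient $C$, tuning $T$, and carefully executing the Chebyshev absorption of $\{r > R_0\}$, with the hypothesis $\epsilon \leq (1-e^{-1})/\sqrt{2}$ being precisely what makes this absorption yield the stated constants.
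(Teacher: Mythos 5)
The paper does not prove this statement: it is quoted verbatim as Theorem~5 of Wong and Shen (1995) (reference \cite{34Wong}) and used as a black box in the proof of Theorem~\ref{thm:asymptotic-likelihood}, so there is no internal proof to compare your argument against. Judged on its own terms, your region-splitting plan is the right classical strategy (control the near-diagonal part by the Hellinger assumption, the far tail by the $\delta$-moment assumption, and tune the threshold $T$ so that $M_\delta^2 e^{-\delta T}\leq \epsilon^2$), and your tail estimate $I_2 \le (T+\delta^{-1})M_\delta^2 e^{-\delta T}$ is sound.

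The genuine gap is in the intermediate region $\{r > R_0\}\cap\{L\le T\}$, where the pointwise inequality $2r^2\log r\le 2(r-1)+C(r-1)^2$ no longer holds. You propose to absorb this region by noting that the $q$-mass of $\{r>R_0\}$ is $O(\epsilon^2)$ by Chebyshev; but the integrand there is $pL = q r^2 L$ with $r^2$ as large as $e^{T}$ and $L$ as large as $T$, so a bound on the $q$-mass alone only yields $T e^{T}\cdot O(\epsilon^2)$, which is useless because $e^{T}\asymp (M_\delta/\epsilon)^{2/\delta}$. What is actually needed is a bound on the $p$-mass: since $r^2/(r-1)^2 \le \left(R_0/(R_0-1)\right)^2$ for all $r\ge R_0>1$, one gets $\Pbb(r>R_0)=\int_{r>R_0} q r^2 \le \left(R_0/(R_0-1)\right)^2\int q(r-1)^2 \lesssim \epsilon^2$ under the $p$-measure, whence $\int_{R_0<r,\,L\le T} pL \le T\cdot O(\epsilon^2)$, which is of the correct order and can be folded into the $\delta^{-1}\max\{1,\log(M_\delta/\epsilon)\}\epsilon^2$ term. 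With that repair the scheme closes, though your constants do not come out as stated (for instance $2C-2$ with your $C=3+\log 2/(1-e^{-1})^2$ gives $4+2\log 2/(1-e^{-1})^2$ rather than $6+2\log 2/(1-e^{-1})^2$, and the intermediate-region contribution must be accounted for in the $8/\delta$ coefficient); recovering the exact constants of the quoted theorem would require following Wong and Shen's own bookkeeping rather than this looser variant.
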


\begin{proof}[Proof of Theorem~\ref{thm:asymptotic-likelihood}]
    We divide the proof into two parts: $\kappa \geq k_0$ (exact- and over-fitting) and $\kappa < k_0$ (under-fitting).
    
\paragraph{Part one: $\kappa \geq k_0$.} The empirical measure is denoted by $P_n = \dfrac{1}{n} \sum_{i=1}^{n} \delta_{x_i}$. For any $G$, we also denote $P_G$ by the distribution of $p_{G}$. We aim to provide the convergence rate for 
$$\left|\ell_n(\widehat{G}_n^{(\kappa)}) + H(p_{G_0})\right| = \left|\int \log p_{\widehat{G}_n^{(\kappa)}} dP_n - \int \log p_{G_0} dP_{G_0}\right|.$$
Denote by $D$ by the multiplicative constant in Theorem~\ref{thm:asymptotic-dendrogram}, i.e., we have that
$$\Pbb_{G_0}(W_2(\widehat{G}_n^{(\kappa)}, G_0)\leq D (\log n / n)^{1/4}) \geq 1 - c_1 n^{-c_2} \quad \forall \kappa \in [k_0, k],$$
for two universal constants $c_1, c_2 > 0$. Combining with the fact that $h \lesssim W_2$~\cite{nguyen2013convergence}, we also have 
$$\Pbb_{G_0}(h(p_{\widehat{G}_n^{(\kappa)}}, p_{G_0})\leq D (\log n / n)^{1/4}) \geq 1 - c_1 n^{-c_2} \quad \forall \kappa \in [k_0, k].$$ 
For a mixing measure $G$, we denote the increments of the empirical process by:
\begin{equation*}
    \nu_n(G) = \sqrt{n} \int_{\{p_{G_0} > 0\}} \log \dfrac{p_G}{p_{G_0}} d(P_n - P_{G_0}).
\end{equation*}
Using the triangle inequality, we have
\begin{align}\label{eq:bound-likelihood}
    \left|\int\log p_{\widehat{G}^{(\kappa)}_n} dP_n - \int \log p_{G_0} dP_{G_0}\right| & \overset{}{\leq} \left| \int\log \dfrac{p_{\widehat{G}^{(\kappa)}_n}}{p_{G_0}} d(P_n - P_{G_0})\right| \nonumber\\
    & + KL(p_{G_0} || p_{ \widehat{G}_n^{(\kappa)}}) +  \left|\int \log p_{G_0} (dP_n - dP_{G_0})\right|.
\end{align}
We will bound all three terms respectively. The first term can be bounded using Theorem~\ref{thm:uniform-bound-empirical-process}. Indeed, substitute $R = D \left(\dfrac{\log n}{n}\right)^{1/4}, a = D \dfrac{\log^{3/4}(n)}{n^{1/4}}$ in the result of that theorem, we have $a\leq \sqrt{n} R^2 \leq \sqrt{n} R$, and
$$a\geq R \left(\log\left(\dfrac{2^6\sqrt{n}}{a}\right)\right)^{1/2} \geq  \int_{a/(2^6\sqrt{n})}^{R} H_B^{1/2} \left(\dfrac{u}{\sqrt{2}}, \left\{p_{G} : G\in \Ocal_{k}, h(p_{G}, p_{G_0})\leq R \right\}, \nu\right) du.$$
Hence,
$$\Pbb_{G_0}\left(\sup_{h(p_G, p_{G_0})\leq D(\log n/n)^{1/4}} \left|\sqrt{n} \int\log \dfrac{p_{{G}}}{p_{G_0}} d(P_n - P_{G_0})\right|\geq D \dfrac{\log^{3/4}(n)}{n^{1/4}} \right)\leq n^{-c_2},$$
for some universal constant $c_2$. Combining with the bound on the Hellinger distance, we have
\begin{align*}
&\Pbb_{G_0}\left(\left|\int\log \dfrac{p_{\widehat{G}^{(\kappa)}_n}}{p_{G_0}} d(P_n - P_{G_0})\right|\geq D \dfrac{\log^{3/4}(n)}{n^{3/4}} \right)\leq \Pbb_{G_0}\left(h(p_{\widehat{G}_n^{(\kappa)}}, p_{G_0})\geq D (\log n / n)^{1/4}\right)\\
&+ \Pbb_{G_0}\left(\left|\int\log \dfrac{p_{\widehat{G}^{(\kappa)}_n}}{p_{G_0}} d(P_n - P_{G_0})\right|\geq D \dfrac{\log^{3/4}(n)}{n^{3/4}}, h(p_{\widehat{G}_n^{(\kappa)}}, p_{G_0})\leq D (\log n / n)^{1/4} \right)\\
&\leq c_1 n^{-c_2},
\end{align*}
for two universal constants $c_1$ and $c_2$. 

Next, we use Theorem~\ref{thm:wong-shen-95} to have that on the event $\{h(p_{\widehat{G}_n^{(\kappa)}}, p_{G_0}) \leq D (\log n / n)^{1/4}\}$, we also have $$KL(p_{G_0}||p_{\widehat{G}_n})\leq D c_3\log((\log n / n)^{-1/4})  (\log n / n)^{1/2}\leq D c_3 \dfrac{\log^{3/2}(n)}{n^{1/2}},$$
where $c_3$ is a universal constant. Hence, we also have $$KL(p_{G_0}||p_{\widehat{G}_n})\leq  \dfrac{\log^{3/2}(n)}{n^{1/2}}$$
with the probability at most $c_1 n^{-c_2}$.
To bound the last term, we can simply use the Chebyshev inequality,
$$\Pbb_{G_0}\left(\left|\int \log p_{G_0} d(P_n - P_{G_0})\right| \geq t \right)\leq \dfrac{\text{Var}(\log p_{G_0})}{nt^2}.$$
By choosing $t \gtrsim n^{-1/2}$, we can also bound this term. For example, we can choose $t = (\log n / n)^{1/4}$ and yield  
$$\Pbb_{G_0}\left(\left|\int \log p_{G_0} d(P_n - P_{G_0})\right| \geq (\log n / n)^{1/4} \right)\leq c_1 n^{-c_2}.$$
By combining those inequalities together, we have
$$\Pbb_{G_0}\left(\left|\int\log p_{\widehat{G}^{(\kappa)}_n} dP_n - \int \log p_{G_0} dP_{G_0}\right| \geq (\log n / n)^{1/4}\right)\leq c_1 n^{-c_2},$$
for some universal constants $c_1$ and $c_2$.

\paragraph{Part two: $\kappa < k_0$.} Because $|\log p_{G}(x)| \leq m(x)$ for a measurable function $m(x)$ for all $G\in \Ocal_{\kappa}$, we can use uniform law of large number (see, e.g., \cite{17Robert} Theorem 9.2) to have that
\begin{equation*}
    \sup_{G\in \Ocal_{\kappa}} \left|\ell_n(G) - \Ebb_{X\sim p_{G_0}} \log p_G(X) \right|\xrightarrow{P} 0,
\end{equation*}
where $\xrightarrow{P}$ means convergence in probability. Therefore,
\begin{equation*}
    \left|\ell_n(\widehat{G}_n^{(\kappa)}) - \Ebb_{X\sim p_{G_0}} \log p_{\widehat{G}_n^{(\kappa)}}(X) \right|\xrightarrow{P} 0,
\end{equation*}
Besides, by Theorem~\ref{thm:asymptotic-dendrogram}, we have $\log p_{\widehat{G}_n^{(\kappa)}} \to \log p_{\widehat{G}_0^{(\kappa)}}$ in probability, an application of Dominating Convergence Theorem yields:
\begin{equation*}
    \Ebb_{X\sim p_{G_0}} \log p_{\widehat{G}_n^{(\kappa)}}(X) \xrightarrow{P} \Ebb_{X\sim p_{G_0}} \log p_{\widehat{G}_0^{(\kappa)}}(X).
\end{equation*}
Combining the limits above together, we have
\begin{equation}
    \ell_n(\widehat{G}_n^{(\kappa)}) \xrightarrow{P} \Ebb_{p_{G_0}} \log p_{G_0^{(\kappa)}}(X).
\end{equation}
This, in turn, implies the conclusion of this theorem:
\begin{equation}
    \left|\ell_n(\widehat{G}_n^{(\kappa)}) + H(p_{G_0})\right| \xrightarrow{P} KL(p_{G_0}\| p_{G_0^{(\kappa)}}).
\end{equation}
\end{proof}

\subsection{Proof of Proposition~\ref{prop:cut-dendrogram}: Cuts of the dendrogram}
\begin{proof}[Proof of Proposition~\ref{prop:cut-dendrogram}]
    From Theorem~\ref{thm:asymptotic-height}, there exist a constant $C = C(G_0, \Theta, k)$ and universal constants $c_1, c_2$, such that for every $n$, the event
    $$A_n = \left\{d_n^{(\kappa)} \leq C \left(\dfrac{\log n}{n}\right)^{1/2} \forall \kappa \in [k_0 + 1, k] \right\} \cap \left\{d_n^{(k_0)} \geq d_0^{(k_0)} - C \left(\dfrac{\log n}{n}\right)^{1/2} \right\}$$
    has $\Pbb_{p_{G_0}}(A_n) \geq 1 - c_1 n^{-c_2}$. Because $d_0^{(k_0)}$ is a fixed and positive number, for every $n$ large enough, we have
    $$C\left(\dfrac{\log n}{n}\right)^{1/2} < \epsilon_n <  d_0^{(k_0)} - C \left(\dfrac{\log n}{n}\right)^{1/2},$$
    Therefore, $k_n = k_0$ on $A_n$. This implies $\Pbb_{p_{G_0}}(k_n = k_0)\geq 1 - c_1 n^{-c_2}\to 1$ as $n\to \infty$. Hence, $k_n \to k_0$ in probability. 
\end{proof}

\subsection{Proof of of Proposition~\ref{prop:DIC-consistency}: Consistency of DIC}
\begin{proof}[Proof of Proposition~\ref{prop:DIC-consistency}]
By combining Theorem~\ref{thm:asymptotic-height} and~\ref{thm:asymptotic-likelihood}, there exists a set $A_n$ with $\Pbb_{p_{G_0}}(A_n) \to 1$ such that on $A_n$, we have
\begin{equation}
d_n^{(\kappa)} = \begin{cases}
        O((\log n / n)^{1/2}) & \text{if } \kappa > k_0 \\
        d_0^{(\kappa)}+O((\log n / n)^{1/2}) & \text{if } \kappa \leq k_0,
    \end{cases}
\end{equation}
and
\begin{equation}
    \overline{\ell}_n^{(\kappa)} = \begin{cases}
        - H(p_{G_0}) + O((\log n / n)^{1/4}) & \text{if } \kappa \geq k_0 \\
        - H(p_{G_0}) - L^{(\kappa)} + o(1) & \text{if } \kappa < k_0,
    \end{cases}
\end{equation}
where $L^{(\kappa)} = KL(p_{G_0} \| p_{G_{0}^{(\kappa)}}) > 0$ for all $\kappa < k_0$, and the constants in the big $O$ and small $o$ notions only depends on $G_0, \Theta$, and $k$. 
Hence, we have
\begin{equation}
    \text{DIC}_n^{(\kappa)} = \begin{cases}
        \omega_n H(p_{G_0}) + O(\omega_n(\log n / n)^{1/4}) & \text{if } \kappa > k_0, \\
        \omega_n H(p_{G_0}) - d_0^{(k_0)} + O(\omega_n(\log n / n)^{1/4}) & \text{if } \kappa = k_0,\\
        \omega_n H(p_{G_0}) + \omega_n L^{(\kappa)} - d_0^{(\kappa)} + o(\omega_n) & \text{if } \kappa < k_0.
    \end{cases}
\end{equation}
Because $\omega_n\to \infty$, $\omega_n (\log n / n)^{1/4}\to 0$, and $L^{(\kappa)}$ is strictly positive, this implies that $\text{DIC}^{(k_0)}$ is the smallest number for all $n$ large enough. Hence, $\Pbb_{p_{G_0}}(k_n^* = k_0)\geq \Pbb_{p_{G_0}}(A_n)\to 1$, which means $k_n^* \to k_0$ in probability.
\end{proof}

\section{Proof of Section~\ref{sec:merge-weak}}\label{sec:proof-merge-weak}
\subsection{Proof of Lemma~\ref{lem:inv-bound-weak}: Inverse bound for location-scale Gaussian mixtures}
Recall that for any $k$, $\overline{r}(k)$ is defined as the smallest integer $r$ such that the system of polynomial equations
\begin{equation}\label{eq:system-algebraic-eq}
    \sum_{j=1}^{k} \sum_{\substack{n_1, n_2\in \Nbb\\ n_1 + 2 n_2 = \alpha}} \dfrac{c_j^2 a_j^{n_1} b_j^{n_2}}{n_1 ! n_2 !} = 0, \quad \forall \alpha = 1, \dots, r
\end{equation}
does not have any nontrivial solution $(a_j, b_j, c_j)_{j=1}^{k} \subset \Rbb$, where a set of solutions is considered nontrivial if all variable $c_j$'s are non-zero and at least one of $a_j$'s is non-zero.

This proof proceeds in the same way as of Lemma~\ref{lem:inv-bound}; see also~\cite{12Ho, 21Tudor}. The main difference compared to the proof of Lemma~\ref{lem:inv-bound} is the lack of strong identifiability for the location-scale Gaussian mixtures, so we need to use the heat equation~\eqref{eq:heat-eq} to transform all derivatives with respect to $(\mu, \Sigma)$ of the Taylor expansion around $p_{G_0}$ to derivative with respect to $\mu$ only. The contradiction is then derived from the fact that mixtures of location-scale Gaussian are strongly identifiable with respect to mean parameters up to any order. The set of algebraic equations~\eqref{eq:system-algebraic-eq} is used to guarantee that not all coefficients in the identifiability equation vanish. 

Similar to the previous inverse bound, we also would like to highlight that our result is tighter than the existing result in the literature due to the fact that we make use of more coefficients in the Taylor expansion. In particular, not only do the added first and second-order terms help the inverse bound to become stronger, but they also lead to a good merging scheme that can recover the root-$n$ convergence rate for parameter estimation in the overfitted location-scale Gaussian mixtures.

\begin{proof}[Proof of Lemma~\ref{lem:inv-bound-weak}]
The proof is divided into some small steps.

\paragraph{Step 1: Proving by contradiction and setup.} Similar to Step 1 of the proof of Lemma~\ref{lem:inv-bound}, we assume that the inverse bound is incorrect, which means there exists a sequence $G_n \in \Ocal_{k, c_0}$ such that 
\begin{equation}\label{eq:contradiction-location-scale}
V(p_{G_n}, p_{G_0})\to 0 \quad \text{and}\quad \dfrac{V(p_{G_n}, p_{G_0})}{\divergence_{\Gcal}(G_n, G_0)}\to 0.
\end{equation}
Due to the (zero-order) identifiability of location-scale Gaussian mixtures, we have 
$G_n\to G_0$ in $W_1$, so that we can represent
$$G_n = \sum_{i=1}^{k_0}\sum_{j=1}^{s_i} p_{ij}^{n} \delta_{(\mu_{ij}^{n}, \Sigma_{ij}^{n})} \in \Ecal_{k_*} \quad \forall n\in \Nbb,$$
where
$$\sum_{j=1}^{s_i} p_{ij}^{n}\to p_{i}^{0}, \quad \mu_{ij}^{n}\to \mu_i^0, \quad \Sigma_{ij}^{n}\to \Sigma_i^0,\quad \forall j\in [s_i], i\in [k_0],$$
and $\sum_{i=1}^{k_0} s_i = k_* \leq k$. Notice that we assume in this lemma that all mixing proportions $p_{ij}^{n}$ of $G_n$ are bounded below by a positive constant $c_0$. Therefore, we do not need to consider vanishing atoms. For all $n$ large enough, we have $(\mu_{ij}^{n}, \Sigma_{ij}^{n}) \in V_i$ for all $i, j$, so that $s_i = |V_i|$, and
\begin{align*}
    \divergence_{\Gcal}(G_n, G_0) & = \sum_{i=1}^{k_0} \left|\sum_{j=1}^{s_i} p_{ij}^{n} - p_i^0 \right| + \left\|\sum_{j=1}^{s_i} p_{ij}^{n} (\mu_{ij}^{n} - \mu_i^0)\right\| \\
    & + \left\|\sum_{j=1}^{s_i} p_{ij}^{n} \left(\Sigma_{ij}^{n} - \Sigma_{i}^{0} + (\mu_{ij}^{n} - \mu_i^0)(\mu_{ij}^{n} - \mu_i^0)^{\top}\right)\right\|\\
    & + \sum_{j=1}^{s_i} p_{ij}^{n} \left(\norm{\mu_{ij}^{n} - \mu_{i}^{0}}^{\overline{r}(s_i)} + \norm{\Sigma_{ij}^{n} - \Sigma_{i}^{0}}^{\overline{r}(s_i)/2} \right).
\end{align*}

\paragraph{Step 2: Taylor expansion.} The estimation of $p_{G_n}$ to $p_{G_0}$ can be written as following:
\begin{align}\label{eq:expansion-Gaussian}
    & p_{G_n}(x) - p_{G_0}(x)  = \sum_{i=1}^{k_0} \sum_{j=1}^{s_i} p_{ij}^{n} f(x|\mu_{ij}^{n}, \Sigma_{ij}^{n}) - \sum_{i=1}^{k_0} p_i^0 f(x|\mu_{i}^0, \Sigma_i^0) \nonumber\\
    & \hspace{1cm} = \sum_{i=1}^{k_0} \left(\sum_{j=1}^{s_i} p_{ij}^{n} - p_i^{0} \right) f(x|\mu_i^0, \Sigma_i^0) + \sum_{j=1}^{s_i} p_{ij}^{n} \left(f(x|\mu_{ij}^{n}, \Sigma_{ij}^{n}) - f(x|\mu_i^0, \Sigma_i^0)\right).
\end{align}
By means of Taylor expansion up to $\overline{r}(s_i)$-th order, we have
\begin{align}\label{eq:Taylor-for-one-Gaussian}
    f(x|\mu_{ij}^{n}, \Sigma_{ij}^{n}) - f(x|\mu_i^0, \Sigma_i^0) & = \sum_{\alpha, \beta} \dfrac{1}{\alpha! \beta!} (\mu_{ij}^{n} - \mu_i^0)^{\alpha} (\Sigma_{ij}^{n} - \Sigma_i^{0})^{\beta} \dfrac{\partial^{|\alpha| + |\beta|}}{\partial \mu^{\alpha} \partial \Sigma^{\beta}} f(x|\mu_i^0, \Sigma_i^0)\nonumber \\
    & + R^{n}_{ij}(x),
\end{align}
where the sum is over all the multi-indices $\alpha \in \Nbb^{d}, \beta \in \Nbb^{d\times d}$ such that $1\leq |\alpha| + |\beta|\leq \overline{r}(s_i)$, and $R^n_{ij}(x) = o\left(\norm{\mu_{ij}^{n} - \mu_i^0}^{\overline{r}(s_i)} + \norm{\Sigma_{ij}^{n} - \Sigma_i^0}^{\overline{r}(s_i)}\right)$, uniformly in $x$. In the display above, we use the usual notation for calculation with multi-indices, that is, for $\alpha = (\alpha_1, \dots, \alpha_d) \in \Nbb^{d}$, $\mu\in \Rbb^{d}$ and a function $f$ of $\mu$, we denote
$$|\alpha| = \sum_{u=1}^{d} \alpha_u, \quad \alpha! = \prod_{u=1}^{d} \alpha_u!, \quad \mu^{\alpha} = \prod_{u=1}^{d} \mu_u^{\alpha_u}, \quad \dfrac{\partial^{|\alpha|}}{\partial^{\alpha} \mu} f(\mu) = \dfrac{\partial^{|\alpha|}}{\partial \mu_1^{\alpha_1}\cdots \partial \mu_d^{\alpha_d}} f(\mu).$$
Similar notations apply for multi-index $\beta \in \Nbb^{d\times d}$ and $\Sigma\in \Rbb^{d\times d}$. By applying the heat equation~\eqref{eq:heat-eq}, we can write
\begin{align*}
    \dfrac{\partial^{|\alpha| + |\beta|}}{\partial \mu^{\alpha} \partial \Sigma^{\beta}} f(x | \mu, \Sigma) = \dfrac{1}{2^{|\beta|}} \dfrac{\partial^{|\alpha| + 2|\beta|}}{\partial \mu^{\tau_0(\alpha, \beta)}} f(x | \mu, \Sigma), \quad\forall \mu, \Sigma,
\end{align*}
where $\tau_{0}(\alpha, \beta) = (\alpha_u + \sum_{v} \beta_{uv} + \beta_{vu})_{u=1}^{d} \in \Nbb^{d}$ for any multi-indices $\alpha\in \Nbb^{d}$ and $\beta\in \Nbb^{d\times d}$. 
By combining this with equations~\eqref{eq:expansion-Gaussian} and~\eqref{eq:Taylor-for-one-Gaussian}, we have
$$p_{G_n}(x) - p_{G_0}(x) = \sum_{i=1}^{k_0} \sum_{|\tau| = 0}^{2\overline{r}(s_i)} a_{i, \tau} \dfrac{\partial^{|\tau|}}{\partial \mu^{\tau}} f(x|\mu_i^0, \Sigma_i^0) + R_n(x),$$
where $a_{i, \tau} = \left(\sum_{j=1}^{s_i} p_{ij}^{n} - p_i^0\right)$ when $\tau = 0 \in \Nbb^{d}$,
$$a_{i, \tau} = 
    \sum_{j=1}^{s_i} \sum_{\substack{\alpha\in \Nbb^{d}, \beta\in \Nbb^{d\times d} \\ 1 \leq |\alpha| + |\beta|\leq \overline{r}(s_i)\\ \tau_0(\alpha, \beta) = \tau}} \dfrac{1}{2^{|\beta|}\alpha! \beta!} p_{ij}^{n} (\mu_{ij}^{n} - \mu_i^0)^{\alpha} (\Sigma_{ij}^{n} - \Sigma_i^0)^{\beta}, \quad \forall |\tau| \geq 1, $$
and 
$$R_n(x) = o\left(\sum_{i, j}p_{ij}^{n}\left(\norm{\mu_{ij}^{n} - \mu_i^0}^{\overline{r}(s_i)} + \norm{\Sigma_{ij}^{n} - \Sigma_i^0}^{\overline{r}(s_i)}\right)\right) = o(D_n),$$
uniformly in $x\in \Rbb^{d}$, for $D_n = \divergence_{\Gcal}(G_n, G_0)$. Especially, we have that
$$\sum_{|\tau| = 1} |a_{i, \tau}| = \sum_{u=1}^{d} \left|\sum_{j=1}^{s_i} p_{ij}^n (\mu_{iju}^{n} - \mu_{iu}^0) \right| \asymp \left\|\sum_{j=1}^{s_i} p_{ij}^n (\mu_{ij}^{n} - \mu_{i}^0) \right\|,$$
and
\begin{align*}\sum_{|\tau| = 2} |a_{i, \tau}| &= \dfrac{1}{2} \sum_{u, v=1}^{d} \left|\sum_{j=1}^{s_i} p_{ij}^n (\mu_{iju}^{n} - \mu_{iu}^0)(\mu_{ijv}^{n} - \mu_{iv}^0)  + (\Sigma_{ijuv}^{n} - \Sigma_{iuv}^{0}) \right| \\
&\asymp \left\|\sum_{j=1}^{s_i} p_{ij}^n \left((\mu_{ij}^{n} - \mu_{i}^0)(\mu_{ij}^{n} - \mu_{i}^0)^{\top} + \Sigma_{ij}^{n} - \Sigma_i^0\right) \right\|.
\end{align*}

\paragraph{Step 3: Non-vanishing coefficient.} Our goal in this step is to show that $\max_{i, \tau} |a_{i, \tau}| \gtrsim D_n$ as $n\to \infty$. Because $D_n = \sum_{i=1}^{k_0} D_{n, i}$, where
\begin{align*}
    D_{ni} & = \left|\sum_{j=1}^{s_i} p_{ij}^{n} - p_i^0 \right| + \left\|\sum_{j=1}^{s_i} p_{ij}^{n} (\mu_{ij}^{n} - \mu_i^0)\right\| \\
    & + \left\|\sum_{j=1}^{s_i} p_{ij}^{n} \left(\Sigma_{ij}^{n} - \Sigma_{i}^{0} + (\mu_{ij}^{n} - \mu_i^0)(\mu_{ij}^{n} - \mu_i^0)^{\top}\right)\right\|\\
    & + \sum_{j=1}^{s_i} p_{ij}^{n} \left(\norm{\mu_{ij}^{n} - \mu_{i}^{0}}^{\overline{r}(s_i)} + \norm{\Sigma_{ij}^{n} - \Sigma_{i}^{0}}^{\overline{r}(s_i)/2} \right),    
\end{align*}
    it suffices to prove for every single $i\in [k_0]$ that $\max_{\tau} |a_{i, \tau}| \gtrsim D_{ni}$ as $n\to \infty$. Further decompose $D_{ni} \asymp D_{ni}^{(0)} + D_{ni}^{(1)} + D_{ni}^{(2)} + \sum_{u=1}^{d} D_{niu} + \sum_{u\neq v}^{d} D_{niuv}$, where $D_{ni}^{(0)} = \left|\sum_{j=1}^{s} p_{ij}^n - p_i^0\right|$, $D_{ni}^{(1)} = \norm{\sum_{j=1}^{s} p_j^n (\mu_j^n - \mu^0)}$,  $D_{ni}^{(2)} = \norm{\sum_{j=1}^{s} p_j^n \left(\Sigma_j^n - \Sigma^0 + (\mu_j^n - \mu^0)(\mu_j^n - \mu^0)^{\top}\right)}$, $D_n^{u} = \sum_{j=1}^{s} p_j^{n} |\mu_{ju}^{n} - \mu_u^{0}|^{\overline{r}(s)}$, and $D_u^{n} = \sum_{j=1}^{s} p_j^{n} |\mu_{ju}^{n} - \mu_u^{0}|^{\overline{r}(s)} + |\Sigma_{j uu}^{n} - \Sigma_{uu}^0|^{\overline{r}(s)/2}$, and $D_{uv}^{n} = \sum_{j=1}^{s} p_j^{n} |\Sigma_{j uv}^{n} - \Sigma_{uv}^0|^{\overline{r}(s)/2}$ for indices $u\neq v\in [d]$. We can see that
$$|a_0| = D_n^{(0)}, \quad \sum_{|\tau| = 1} |a_{\tau}| \asymp D_n^{(1)}, \quad \sum_{|\tau| = 2} |a_{\tau}| \asymp D_n^{(2)}.$$
Hence, it suffices to prove that $\max_{\tau} |a_{i, \tau}| \gtrsim D^n_{niu}, D^n_{niuv}$ for all $u\neq v\in [d]$. The claim of this step then proceeds similarly to Step 3 of Lemma 10 in~\cite{21Tudor}, via the definition of $\overline{r}$ and the system of equations~\eqref{eq:system-algebraic-eq}.

\paragraph{Step 4: Deriving a contradiction using strong identifiability of mixture of location Gaussians.} Set $m_n = \max_{i, \tau} \dfrac{|a_{i, \tau}|}{D_n} \not \to 0$ as $n\to \infty$. Because $\dfrac{|a_{i, \tau} / D_n|}{m_n} \leq 1$, by extracting a subsequence if needed, by the compactness of $[-1, 1]$, we have $\dfrac{|a_{i, \tau} / D_n|}{m_n} \to \gamma_{i, \tau} \in [-1, 1]$ for all $i, \tau$, and at least one of such coefficient is 1. By means of Fatou's lemma,
\begin{align*}
    0 = \lim_{n\to \infty}\dfrac{V(p_{G_n}, p_{G_0})}{D_n m_n} & \geq \int \left|\liminf_{n\to \infty} \sum_{i=1}^{k_0}\sum_{|\tau|=0}^{2\overline{r}(s_i)} \dfrac{a_{i, \tau}}{D_n m_n} \dfrac{\partial^{|\tau|}}{\partial \mu^{\tau}} f(x|\mu_i^0, \Sigma_i^0)\right|dx \\
    & = \left|\sum_{i=1}^{k_0}\sum_{|\tau|=0}^{2\overline{r}(s_i)} \gamma_{i, \tau}\dfrac{\partial^{|\tau|}}{\partial \mu^{\tau}} f(x|\mu_i^0, \Sigma_i^0)\right|dx.
\end{align*}
Hence, we have 
$$\sum_{i=1}^{k_0}\sum_{|\tau|=0}^{2\overline{r}(s_i)} \gamma_{i, \tau}\dfrac{\partial^{|\tau|}}{\partial \mu^{\tau}} f(x|\mu_i^0, \Sigma_i^0) = 0,$$
almost surely in $x$, with respect to the Lebesgue measure. By the identifiability of the location Gaussian mixture, all coefficients $\gamma_{i, \tau} = 0$, but this is contradictory with the claim that at least one of them is 1 at the beginning of this step. Hence, the assumption~\eqref{eq:contradiction-location-scale} is false. The inverse bound is proved by means of proving by contradiction.
\end{proof}
\subsection{Evolution of Gaussian divergences on the dendrogram}
Before establishing the asymptotic results for the dendrogram of mixtures of location-scale Gaussians, it is helpful to have a similar result as in Lemma~\ref{lem:order-mix-measure} so that the convergence rate goes through for all latent mixing measures on the dendrogram. Recall that for a latent mixing measure $G=\sum_{j=1}^{k} p_j \delta_{(\mu_j, \Sigma_j)}$, we denote $G = G^{(k)}, G^{(k-1)}, \dots, G^{(1)}$ the latent mixing measures on the dendrogram of $G$. 
\begin{lemma}\label{lem:order-mix-measure-weak}
    For $G^{(k)}\in \Ecal_{k, c_0}$, as $\divergence(G^{(k)}, G_0) \to 0$, we have
    \begin{equation}\label{eq:ineq-weak-order}
        \divergence(G^{(k)}, G_0) \gtrsim \divergence(G^{(k-1)}, G_0) \gtrsim \dots \gtrsim \divergence(G^{(k_0)}, G_0),
    \end{equation}
    where the multiplicative constants in these inequalities only depend on $G_0, \Theta$, and $k$.
\end{lemma}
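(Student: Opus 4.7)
The plan is to mimic the strongly identifiable proof of Lemma~\ref{lem:order-mix-measure}, substituting $\divergence_{\Gcal}$ for $\divergence$ and Algorithm~\ref{alg:merge-atom-weak} for Algorithm~\ref{alg:merge-atom}. I argue by contradiction: suppose there exists a sequence $G_n^{(k)}\in\Ecal_{k,c_0}$ with $\divergence_{\Gcal}(G_n^{(k)},G_0)\to 0$ but $\divergence_{\Gcal}(G_n^{(k-1)},G_0)/\divergence_{\Gcal}(G_n^{(k)},G_0)\to\infty$. Identifiability of location-scale Gaussians forces $G_n^{(k)}\to G_0$ in $W_1$, and after a subsequence we can write $G_n^{(k)}=\sum_{i=1}^{k_0}\sum_{j=1}^{s_i}p_{ij}^n\delta_{(\mu_{ij}^n,\Sigma_{ij}^n)}$ with $(\mu_{ij}^n,\Sigma_{ij}^n)\to(\mu_i^0,\Sigma_i^0)$, $p_{ij}^n\ge c_0$, and $\sum_{j=1}^{s_i}p_{ij}^n\to p_i^0$. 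Because $p_{ij}^n\ge c_0$, the harmonic-mean factor in $\divclus$ stays bounded above and below, so $\divclus\asymp\|\mu_a-\mu_b\|^2+\|\Sigma_a-\Sigma_b\|$. This quantity is bounded below across distinct Voronoi cells of $G_0$ but vanishes for within-cell pairs, and such pairs exist because $k>k_0$. Hence for $n$ large the merged pair $(a,b)$ sits inside a common cell $V_u$, and the merged atom $p_*\delta_{(\mu_*,\Sigma_*)}$ remains in $V_u$ by convexity, so only the $V_u$-block of $\divergence_{\Gcal}$ is affected.

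The heart of the proof is that Algorithm~\ref{alg:merge-atom-weak} is engineered so that the merge exactly preserves the zeroth-, first-, and second-order Voronoi-moment terms in the $V_u$-block. Writing $a=\mu_a-\mu_u^0$, $b=\mu_b-\mu_u^0$, and using $\mu_*=(p_a\mu_a+p_b\mu_b)/p_*$ together with~\eqref{eq:merge-variance}, direct expansion of
\begin{equation*}
\Sigma_*-\Sigma_u^0=\tfrac{p_a}{p_*}(\Sigma_a-\Sigma_u^0)+\tfrac{p_b}{p_*}(\Sigma_b-\Sigma_u^0)+\tfrac{p_a p_b}{p_*^2}(a-b)(a-b)^\top
\end{equation*}
yields the three identities $p_a+p_b=p_*$, $p_a a+p_b b=p_*(\mu_*-\mu_u^0)$, and $p_a(aa^\top+\Sigma_a-\Sigma_u^0)+p_b(bb^\top+\Sigma_b-\Sigma_u^0)=p_*[(\mu_*-\mu_u^0)(\mu_*-\mu_u^0)^\top+\Sigma_*-\Sigma_u^0]$. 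These three invariants mirror the heat-equation relation~\eqref{eq:heat-eq} coupling first-order $\Sigma$ to second-order $\mu$, and reduce the problem to controlling the $\overline{r}$-tail $\sum_{j\in V_u}p_j(\|\mu_j-\mu_u^0\|^{\overline{r}(|V_u|)}+\|\Sigma_j-\Sigma_u^0\|^{\overline{r}(|V_u|)/2})$.

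For the tail, convexity gives $p_*\|\mu_*-\mu_u^0\|^r\le p_a\|\mu_a-\mu_u^0\|^r+p_b\|\mu_b-\mu_u^0\|^r$ for any $r\ge 1$, and an analogous bound for the covariance tail with the cross term $\tfrac{p_ap_b}{p_*^2}(a-b)(a-b)^\top$ absorbed via $\|a-b\|^r\le 2^{r-1}(\|a\|^r+\|b\|^r)$ and the mass bound $c_0$. The main obstacle is that the exponent drops from $r=\overline{r}(|V_u|)$ to $r'=\overline{r}(|V_u|-1)\le r$ after merging: since $x^{r'}\ge x^r$ for $x\le 1$, the merged tail can a priori exceed the pre-merge tail. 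Reconciling this exponent mismatch is the key difficulty and will be handled by combining the compactness of $\Theta\times\Omega$ (which turns the exponent drop into a bounded multiplicative constant on quantities of order unity) with the minimum-dissimilarity selection (which bounds $\|\mu_*-\mu_u^0\|$ and $\|\Sigma_*-\Sigma_u^0\|$ by the radii of the un-merged atoms of the cell plus the already-invariant second-moment block). Summing cell contributions then yields $\divergence_{\Gcal}(G_n^{(k-1)},G_0)\le C\,\divergence_{\Gcal}(G_n^{(k)},G_0)$ for some $C=C(G_0,\Theta\times\Omega,k)$, contradicting the ratio blow-up. The full chain~\eqref{eq:ineq-weak-order} follows by induction, since merged atoms only accumulate more mass so membership in $\Ecal_{\cdot,c_0}$ is preserved.
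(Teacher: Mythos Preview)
Your overall structure matches the paper's: the lower bound $p_j\ge c_0$ forces all atoms into their correct Voronoi cells and the merged pair into a common cell $V_u$; the zeroth, first, and second Voronoi-moment blocks of $\divergence_{\Gcal}$ are exactly preserved by Algorithm~\ref{alg:merge-atom-weak} (your three identities are correct and are the same ones the paper writes out); only the $\overline r$-tail needs work. One minor stylistic difference: the paper argues this lemma \emph{directly} rather than by contradiction, even though Lemma~\ref{lem:order-mix-measure} was proved by contradiction.

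Where you and the paper diverge is the tail term. The paper does not engage with the exponent change at all: it proves, via convexity of $\|\cdot\|^{\overline r}$ and a short H\"older chain, the same-exponent inequalities
\[
p_a\|\mu_a-\mu_u^0\|^{\overline r}+p_b\|\mu_b-\mu_u^0\|^{\overline r}\;\gtrsim\; p_*\|\mu_*-\mu_u^0\|^{\overline r}
\]
(and the analogous covariance bound absorbing the cross term from~\eqref{eq:merge-variance}) for any fixed $\overline r\ge 4$, and declares this sufficient. You instead isolate the drop from $\overline r(|V_u|)$ to $\overline r(|V_u|-1)$ as the crux. Your proposed remedy, however---that compactness of $\Theta\times\Omega$ ``turns the exponent drop into a bounded multiplicative constant on quantities of order unity''---cannot work as written: the relevant norms $\|\mu_j-\mu_u^0\|$ tend to $0$, so $\|\cdot\|^{r'}/\|\cdot\|^{r}$ is unbounded and compactness gives nothing. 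Concretely, take $k_0=1$, $G_0=\delta_{(0,1)}$ in one dimension, and $G^{(3)}$ with atoms $(a,1),(-a,1),(0,1-a^2)$ and weights $\tfrac14,\tfrac14,\tfrac12$ (all $\ge c_0=\tfrac14$). Then the zeroth, first, and second moment blocks all vanish identically, $\divergence_{\Gcal}(G^{(3)},G_0)=a^6$, but Algorithm~\ref{alg:merge-atom-weak} yields $\divergence_{\Gcal}(G^{(2)},G_0)\asymp a^4$. So the very step $\divergence_{\Gcal}(G^{(k)},G_0)\gtrsim\divergence_{\Gcal}(G^{(k-1)},G_0)$ fails along this sequence, and neither your compactness/argmin sketch nor the paper's same-$\overline r$ inequalities resolve it. The exponent-drop obstruction you flagged is real; it is not the kind of detail that can be absorbed into a constant depending only on $G_0,\Theta,k$.
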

\begin{proof}
    It suffices to prove the inequality $\divergence(G^{(k)}, G_0) \gtrsim \divergence(G^{(k-1)}, G_0)$ and the rest are similar. Suppose 
    $G_n = \sum_{i=1}^{k} p_i^n \delta_{(\mu_i^n, \Sigma_i^n)}\in \Ecal_{k, c_0}$ varies so that
    \begin{align*}
        \divergence_{\Gcal}(G, G_0) & = \sum_{i=1}^{k_0} \left(\left|\sum_{j\in V_i} p_j - p_i^0 \right| + \sum_{j\in V_i} p_j\left(\norm{\mu_j - \mu_i^0}^{\overline{r}(|V_i|)} + \norm{\Sigma_j - \Sigma_i^0}^{\overline{r}(|V_i|)/2}\right) \right.\nonumber \\
    & \hspace{-.7cm} \left. + \norm{\sum_{j\in V_i}p_j (\mu_j - \mu_i^0)}  + \norm{\sum_{j\in V_i}p_j \left((\mu_j - \mu_i^0) (\mu_j - \mu_i^0)^{\top} + \Sigma_j - \Sigma_i^0)\right)}\right)\to 0.
    \end{align*}
    Because all $p_j$'s are lower bounded by $c_0$, we have $\mu_j \to \mu_i^0, \Sigma_j\to \Sigma_i^{0}$ for all $j\in V_i$. Hence, $G\to G_0$, the merging pair of indices $(j_1, j_2)$ must belong to a common $V_{i}$ that WLOG we assume $j_1 = 1, j_2 = 2, i = 1$. Let the merged atom be $p_{*} \delta_{(\mu_{*}, \Sigma_{*})}$, i.e.,
    $$p_{*} = p_{1} + p_{2}, \quad \mu_{*} = \dfrac{p_{1}}{p_{*}} \mu_{1} + \dfrac{p_{2}}{p_{*}} \mu_{2},$$
    and
    \begin{align}\label{eq:Sigma-merge-equivalent}
        \Sigma_* & = \dfrac{p_{1}}{p_*} \left(\Sigma_1 + (\mu_1 - \mu_*)(\mu_1 - \mu_*)^{\top}\right) + \dfrac{p_{2}}{p_*} \left(\Sigma_2 +(\mu_2 - \mu_*)(\mu_2 - \mu_*)^{\top}\right) \nonumber\\
        & = \dfrac{p_{1}}{p_*} \Sigma_1 + \dfrac{p_{2}}{p_*}\Sigma_2 + \dfrac{1}{(p_{1})^{-1} + (p_{2})^{-1}} \left(\mu_1 - \mu_2\right)\left(\mu_1 - \mu_2\right)^{\top}
    \end{align}
    Hence, we have that
    $$\left|\sum_{j\in V_1} p_j - p_1^0\right| = \left|\sum_{j\in V_1, j\not \in \{1, 2\}} p_j + p_{*} - p_1^0\right|,$$
    $$\left\|\sum_{j\in V_1} p_j (\mu_j - p_1^0)\right\| = \left\|\sum_{j\in V_1, j\not \in \{1, 2\}} p_j(\mu_j - \mu_1^0) + p_*(\mu_* - \mu_1^0)\right\|,$$
    and
    \begin{align*}
        & \norm{\sum_{j\in V_i}p_j \left((\mu_j - \mu_i^0) (\mu_j - \mu_i^0)^{\top} + \Sigma_j - \Sigma_i^0)\right)} \\
        & = \Bigg\|\sum_{j\in V_i, j\not \in \{1, 2\}}p_j \left((\mu_j - \mu_i^0) (\mu_j - \mu_i^0)^{\top} + \Sigma_j - \Sigma_i^0) \right) \\
        & \hspace{2cm} + p_* \left((\mu_* - \mu_i^0) (\mu_* - \mu_i^0)^{\top} + \Sigma_* - \Sigma_i^0)\right)\Bigg\|
    \end{align*}
    To this end, because $\overline{r}(\kappa)\geq 4$ for all $\kappa\geq 2$, it suffices to show that for any ${\overline{r}}\geq 4$,
    \begin{equation}\label{eq:evol-weak-ineq-1}
        p_1\norm{\mu_{1}-\mu_1^0}^{\overline{r}} + p_2\norm{\mu_{2}-\mu_1^0}^{\overline{r}} \gtrsim (p_1+p_2)\norm{\mu_* - \mu_1^0}^{\overline{r}},
    \end{equation}
    and 
    \begin{align}\label{eq:evol-weak-ineq-2}
        p_1\norm{\mu_{1}-\mu_1^0}^{\overline{r}} + p_2\norm{\mu_{2}-\mu_1^0}^{\overline{r}} & + p_1\norm{\Sigma_{1}-\Sigma_1^0}^{\overline{r}/2} + p_2\norm{\Sigma_{2}-\Sigma_1^0}^{\overline{r}/2} \nonumber\\
        & \gtrsim (p_1+p_2)\norm{\Sigma_* - \Sigma_1^0}^{\overline{r}/2}.
    \end{align}
    The inequality~\eqref{eq:evol-weak-ineq-1} is true because of $\norm{\cdot}^{\overline{r}}$ is a convex function (as $\overline{r} \geq 4$). To prove~\eqref{eq:evol-weak-ineq-2}, we first use H\"older's inequality combining with~\eqref{eq:Sigma-merge-equivalent} to get
    $$p_*\norm{\Sigma_* - \Sigma_1^0}^{\overline{r}/2} \leq p_*\norm{\dfrac{p_{1}}{p_*} \Sigma_1 + \dfrac{p_{2}}{p_*}\Sigma_2 - \Sigma_1^0}^{\overline{r}/2} + p_*\left(\dfrac{1}{p_1^{-1} + p_2^{-1}}\right)^{\overline{r}/2}\norm{\mu_1 - \mu_2}^{\overline{r}}.$$
    The first term on the RHS can be bounded by $p_1\norm{\Sigma_{1}-\Sigma_1^0}^{\overline{r}/2} + p_2\norm{\Sigma_{2}-\Sigma_1^0}^{\overline{r}/2}$ (because of the convexity of norm). The second term can be bounded as follows.
    \begin{align*}
        p_1\norm{\mu_{1}-\mu_1^0}^{\overline{r}} + p_2\norm{\mu_{2}-\mu_1^0}^{\overline{r}}& \geq \min\{p_1, p_2\} \left(\norm{\mu_{1}-\mu_1^0}^{\overline{r}} + \norm{\mu_{2}-\mu_1^0}^{\overline{r}}\right)\\
        & \gtrsim \dfrac{1}{p_1^{-1} + p_2^{-1}} \norm{\mu_1 - \mu_2}^{\overline{r}}\\
        &\geq  (p_1+p_2)\left(\dfrac{1}{p_1^{-1} + p_2^{-1}}\right)^{2} \norm{\mu_1 - \mu_2}^{\overline{r}}\\
        &\geq  (p_1+p_2)\left(\dfrac{1}{p_1^{-1} + p_2^{-1}}\right)^{\overline{r}/2} \norm{\mu_1 - \mu_2}^{\overline{r}},
    \end{align*}
    where the first inequality is obvious, the second uses H\"older's inequality combined with the fact that the minimum of two numbers is no less than their harmonic mean, the third one is equivalent to $p_1p_2\leq 1$ (which is correct), and the last one use the condition that $\overline{r} \geq 4$. Hence, the inequality~\eqref{eq:ineq-weak-order} is proved.
\end{proof}

\subsection{Proof of Theorem~\ref{thm:asymp-theory-gaussian}}
\begin{proof}[Proof of Theorem~\ref{thm:asymp-theory-gaussian}]
Firstly, the proof of parameter estimation rate proceeds similarly to that of Theorem~\ref{thm:asymptotic-dendrogram}, but we use Lemma~\ref{lem:inv-bound-weak} and~\ref{lem:order-mix-measure-weak} in the place of Lemma~\ref{lem:inv-bound} and~\ref{lem:order-mix-measure}. 

For the convergence rate of the height at all levels $\kappa \geq k_0 + 1$, from the previous part combined with Lemma~\ref{lem:order-mix-measure}, we have
$$\divergence_{\Gcal}(\widehat{G}_n^{(\kappa)}, G_0) \lesssim \left(\dfrac{\log n}{n}\right)^{1/2}.$$
Because $\kappa \geq k_0 + 1$, by the pigeonhole principle, we have that there exist at least two indices $i, j\in [\kappa]$ such that two atoms $p_{i}^{n} \delta_{(\mu_i^{n}, \Sigma_i^{n})}$ and $p_{j}^{n} \delta_{(\mu_j^{n}, \Sigma_j^{n})}$ belongs to a common Voronoi cell of some $(\mu_t^0, \Sigma_t^0)$ (we suppress the dependence of $i$, $j$, and $V_t$ on $n$ for ease of notation). Hence, 
\begin{align*}
& p_i^{n} \left(\norm{\mu_i^{n} - \mu_t^0}^{\overline{r}(|V_t|)} + \norm{\Sigma_i^{n} - \Sigma_t^0}^{\overline{r}(|V_t|)/2}\right) \\
& + p_j^{n} \left(\norm{\mu_j^{n} - \mu_t^0}^{\overline{r}(|V_t|)} + \norm{\Sigma_j - \Sigma_t^0}^{\overline{r}(|V_t|)/2} \right) \lesssim \left(\dfrac{\log n}{n}\right)^{1/2}.
\end{align*}
Using the fact that $\min\{p_i^n, p_j^n\}\geq \dfrac{1}{(p_i^{n})^{-1} + (p_j^{n})^{-1}}$, $\overline{r}(\widehat{G}_n)\geq \overline{r}(|V_t|)\geq \overline{r}(2) = 4$, and combining with the H\"older's inequality, we have
\begin{align*}
& p_i^{n} \left(\norm{\mu_i^{n} - \mu_t^0}^{\overline{r}(|V_t|)} + \norm{\Sigma_i^{n} - \Sigma_t^0}^{\overline{r}(|V_t|)/2}\right) + p_j^{n} \left(\norm{\mu_j^{n} - \mu_t^0}^{\overline{r}(|V_t|)} + \norm{\Sigma_j - \Sigma_t^0}^{\overline{r}(|V_t|)/2} \right)\\
&\geq \dfrac{1}{(p_i^{n})^{-1} + (p_j^{n})^{-1}} \left(\norm{\mu_i^{n} - \mu_t^0}^{\overline{r}(|V_t|)} + \norm{\mu_j^{n} - \mu_t^0}^{\overline{r}(|V_t|)} \right.\\
& \hspace{3cm} \left.+ \norm{\Sigma_i^n - \Sigma_j^n}^{\overline{r}(|V_t|)/2} + \norm{\Sigma_j^n - \Sigma_j^n}^{\overline{r}(|V_t|)/2} \right)\\
& \gtrsim  \dfrac{1}{(p_i^{n})^{-1} + (p_j^{n})^{-1}} \left(\norm{\mu_i^{n} - \mu_j^n}^{\overline{r}(|V_t|)} + \norm{\Sigma_i^n - \Sigma_j^n}^{\overline{r}(|V_t|)/2} \right) \\
& \gtrsim \left(\dfrac{1}{(p_i^{n})^{-1} + (p_j^{n})^{-1}} \left(\norm{\mu_i^{n} - \mu_j^n}^{2} + \norm{\Sigma_i^n - \Sigma_j^n} \right)\right)^{\overline{r}(\widehat{G}_n)/2}.
\end{align*}
Because the height of the dendrogram is the minimum of dissimilarity $\divclus$ over all pairs $i, j$, we have that 
$$d_n^{(\kappa)} \lesssim \left(\dfrac{1}{(p_i^{n})^{-1} + (p_j^{n})^{-1}} \left(\norm{\mu_i^{n} - \mu_j^n}^{2} + \norm{\Sigma_i^n - \Sigma_j^n} \right)\right) \lesssim \left(\dfrac{\log n}{n}\right)^{1/\overline{r}(\widehat{G}_n)},$$
for all $\kappa\geq k_0 + 1$. The convergence of the height from the exact-fitted level $k_0$ proceeds similarly to the strong identifiability setting (Theorem~\ref{thm:asymptotic-height}), given that $W_1(\widehat{G}_n^{(k_0)}, G_0)\lesssim (\log n/ n)^{1/2}$.

The convergence of the average log-likelihood proceeds the same way as that of Theorem~\ref{thm:asymptotic-likelihood}. All the conditions required in the theorem are satisfied because we assumed $\Theta$ is compact and $\Omega$ is a compact subspace of positive definite matrices with bounded eigenvalues (both above and below by positive constants).
\end{proof}
\subsection{Proof of Proposition~\ref{prop:DIC-consistency-gaussian}}
\begin{proof}[Proof of Proposition~\ref{prop:DIC-consistency-gaussian}]
    This proof of this proposition proceeds in the same manner as that of Proposition~\ref{prop:DIC-consistency}, but replaces all the convergence rate results of the strongly identifiable setting with the weakly identifiable setting (Theorem~\ref{thm:asymp-theory-gaussian}). 
\end{proof}

\subsection{Discussion on the Wasserstein geometry of parameters in location-scale Gaussian mixtures}\label{subsec:discussion-dendrogram-locationscale}
It is desirable to obtain a nice geometric interpretation of the construction of the dendrogram as in Propostion~\ref{prop:Wasserstein-variational}. We now discuss the difficulty in achieving this mathematical perspective for the location-scale Gaussian mixtures. Recall that for the Gaussian kernel family $\{f(x | \mu, \Sigma) |  \mu\in \Theta, \Sigma\in \Omega\}$, we have the singular structure caused by the heat equation:
$$\dfrac{\partial^2}{\partial\mu \partial\mu^{\top}} f(x | \mu, \Sigma) = 2\dfrac{\partial}{\partial\Sigma} f(x | \mu, \Sigma) \quad \forall x\in \Rbb^{d},$$
so that the results for strongly identifiable mixtures do not hold. 
Similar to Proposition~\ref{prop:Wasserstein-variational}, we want to design a cost function $c: (\Theta, \Omega)\times (\Theta, \Omega)\to \Rbb$ and the corresponding Kantorovich formulation for optimal transport:
$$W_{c, 2}^{2} := \inf_{\pi \in \Pi(G, G')} \int_{(\Theta, \Omega)\times (\Theta, \Omega)} c^2((\mu, \Sigma), (\mu', \Sigma')) d\pi((\mu, \Sigma), (\mu', \Sigma')),$$
for all $G, G'\in \cup_k \Ocal_{k}(\Theta\times \Omega)$, satisfying:
\begin{enumerate}
    \item[(i)] $c$ is a distance metric on $\Theta\times \Omega$ (i.e., satisfy three axioms of a metric) so that the Kantorovich formulation $W_{c, 2}$ is a metric on $\cup_k \Ocal_{k}(\Theta\times \Omega)$. 
    \item[(ii)] For $G = G^{(k)} = \sum_{i=1}^{k} p_i \delta_{(\mu_i, \Sigma_i)}\in \Ecal_{k}(\Theta\times \Omega)$, the solution to the optimization problem:
    $$G^{(k-1)} = \argmin_{G\in \Ocal_{k-1}(\Theta\times \Omega)} W_{c, 2}^2(G, G^{(k)})$$
    is as described in Algorithm~\ref{alg:merge-atom-weak} so that the convergence rate for the merged mixing measure can be the fast root-$n$ rate (due to the inverse bound);
    \item[(iii)] The rate of convergence of the height $W_{c, 2}^2(G^{(k)}, G^{(k-1)})$ for overfitted mixing measure can easily derived from the inverse bound.
\end{enumerate}
Let us explicate the point (ii) further. Proceeding as in the proof of Proposition~\ref{prop:Wasserstein-variational}, we arrive at the minimization problem for two merged atoms $p_i \delta_{(\mu_i, \Sigma_i)}$ and $p_j \delta_{(\mu_j, \Sigma_j)}$:
\begin{equation}\label{eq:discuss-optimization-gaussian-merge}
    (\mu_*, \Sigma_*) = \argmin_{(\mu, \Sigma)\in \Theta\times \Sigma} p_i c^2\left((\mu_i, \Sigma_i), ({\mu}, {\Sigma})\right) + p_j c^2\left((\mu_j, \Sigma_j), ({\mu}, {\Sigma})\right),
\end{equation}
and we want this problem to have the following solution:
\begin{equation}\label{eq:discuss-optimization-gaussian-merge-mu}
    \mu_* = \dfrac{p_i}{p_i + p_j} \mu_i + \dfrac{p_j}{p_i + p_j} \mu_j, 
\end{equation}
and
\begin{equation}\label{eq:discuss-optimization-gaussian-merge-Sigma}
    {\Sigma}_* = \dfrac{p_i}{p_i + p_j} \left( \Sigma_i +  (\mu_i - \overline{\mu})(\mu_i - \overline{\mu})^{\top} \right) + \dfrac{p_j}{p_i + p_j} \left( \Sigma_j +  (\mu_j - \overline{\mu})(\mu_j - \overline{\mu})^{\top} \right).
\end{equation}
From here, there are two ways to construct $c$:
\begin{enumerate}
    \item[(1)] Because we are trying to merge two Gaussian distributions into one, this problem can naturally be cast as an MLE problem, and therefore $c$ can be defined as:
    \begin{align*}
        c^2((\mu', \Sigma'), (\mu, \Sigma)) & = \Ebb_{X\sim N(\mu, \Sigma)} -\log f(X | \mu', \Sigma') \\
        & = \dfrac{1}{2} \log(\det(\Sigma)) + \dfrac{1}{2} (\mu' - \mu)^{\top} \Sigma^{-1} (\mu'-\mu)
    \end{align*}
    Using this definition, we can derive the optimal $(\mu_*, \Sigma_*)$ of the minimization problem~\eqref{eq:discuss-optimization-gaussian-merge} to exactly be~\eqref{eq:discuss-optimization-gaussian-merge-mu} and~\eqref{eq:discuss-optimization-gaussian-merge-Sigma}. However, it is very difficult to make this $c$ to be a metric because it is not symmetric and does not satisfy the triangle inequality.
    
    \item[(2)] Choosing $c$ simple so that it satisfies (i) and (iii) and somewhat reflects (ii). A simple candidate is to put
    $$c^2((\mu', \Sigma'), (\mu, \Sigma)) = \norm{\mu' - \mu}^2 + \norm{\Sigma' - \Sigma}. $$
    Because of the heat equation~\eqref{eq:heat-eq}, it is reasonable to put a square in the distance between $\mu$'s but not between $\Sigma$'s. It is also relevant to the convergence rate of overfitted mixing measures: The convergence rate for $\Sigma$ is twice as fast as $\mu$ \cite{ho2019singularity}. Because of its simple form, both (i) and (iii) can be shown as in the previous section. For (ii), we can check that:
    $$W_{c, 2}^2(G^{(k-1)}, G^{(k)}) = \dfrac{1}{p_i^{-1} + p_j^{-1}}\left(\norm{\mu_i-\mu_j}^2 + 2 \norm{\Sigma_i -\Sigma_j} \right),$$
    so that the height is quite similar to the height obtained in the strongly identifiable setting, and it reflects the similarity between two merged atoms.
\end{enumerate}

\subsection{Discussion on merging covariance matrices}
A major difference in the merging scheme of location-scale Gaussian mixtures (Algorithm~\ref{alg:merge-atom-weak}) compared to strongly identifiable mixtures is the merging of covariance matrices. We recall that in equation~\eqref{eq:merge-variance}, the merged covariance is defined by
\begin{equation}
    \Sigma_* = \dfrac{p_{i}}{p_{*}} \left(\Sigma_{i} + (\mu_i - \mu_*)(\mu_i - \mu_*)^{\top} \right) + \dfrac{p_{j}}{p_{*}} \left(\Sigma_{j} + (\mu_j - \mu_*)(\mu_j - \mu_*)^{\top} \right),
\end{equation}
where the mean parameter $\mu$'s appears in the formulation. It is natural to ask if we naively merge $\Sigma$'s without $\mu$'s in this formulation (i.e., exactly the same as the strongly identifiable case), what would happen? Here we demonstrate the inefficiency of that strategy compared to Algorithm~\ref{alg:merge-atom-weak} by using the simulation described in Section~\ref{subsubsec:rate-dendrogram}. The true data distribution is a mixture of 3 location-scale Gaussians and the data is fitted by a mixture of 10 location-scale Gaussians. The merged results can be seen in Figure~\ref{fig:merge_demo_diff_cov}, which is quite off compared to the truth and is worse than the demonstration in Figure~\ref{fig:merge_demo}. An intuitive explanation is that the distribution of the mean parameters of over-fitted components also partially explains the direction of the covariance matrix. Indeed, when carefully inspecting the Gaussian component on the left of Figure~\ref{fig:merge_demo_diff_cov}(b), we can see that the mean parameters of 5 over-fitted components are aligned along the dominant direction of the covariance matrix of the true Gaussian. Hence, it is essential to take care of the "covariance of the means" as in equation~\eqref{eq:merge-variance} when working with this weakly identifiable mixture family. 
\begin{figure}[t!]
      \centering
      \subcaptionbox*{\scriptsize (a) Data with true contour plot \par}{\includegraphics[width = 0.32\textwidth]{figures/simulation-true-data-dist.pdf}}
      \subcaptionbox*{\scriptsize (b) Over-fitting $k=10$ \par}{\includegraphics[width = 0.32\textwidth]{figures/simulation-10.pdf}}
      \subcaptionbox*{\scriptsize (c) Merge $\kappa = 8$ \par}{\includegraphics[width = 0.32\textwidth]{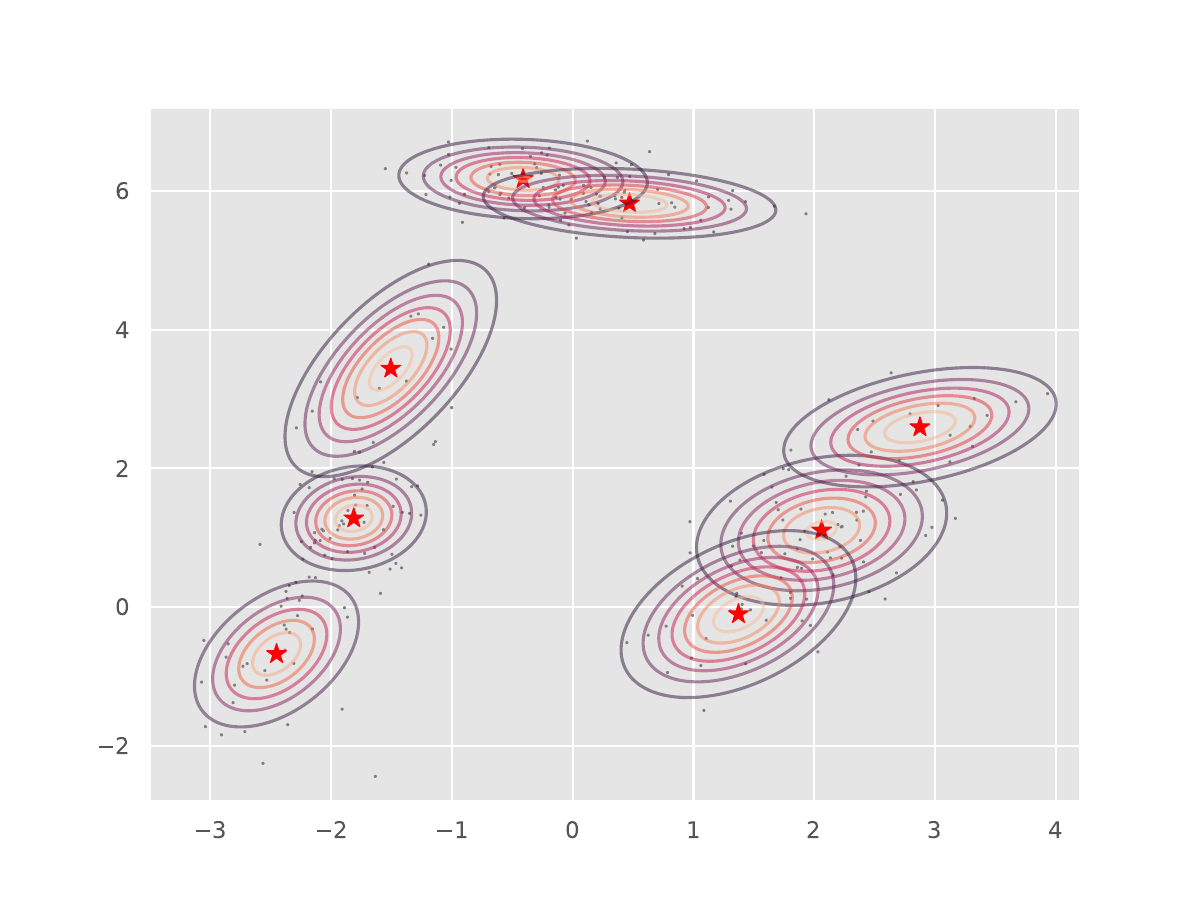}}
      \subcaptionbox*{\scriptsize (f) Merge $\kappa = 3$ \par}{\includegraphics[width = 0.32\textwidth]{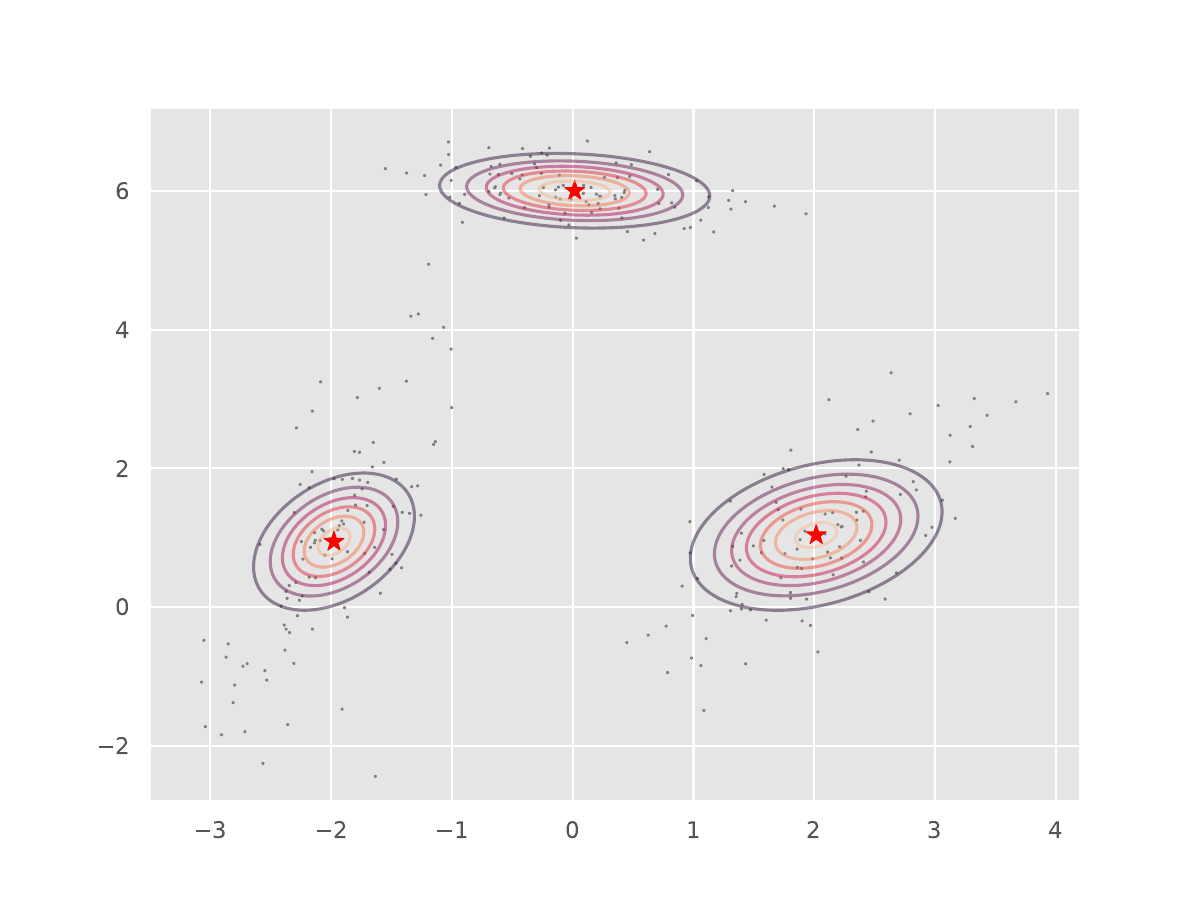}}
      \subcaptionbox*{\scriptsize (e) Merge $\kappa = 4$ \par}{\includegraphics[width = 0.32\textwidth]{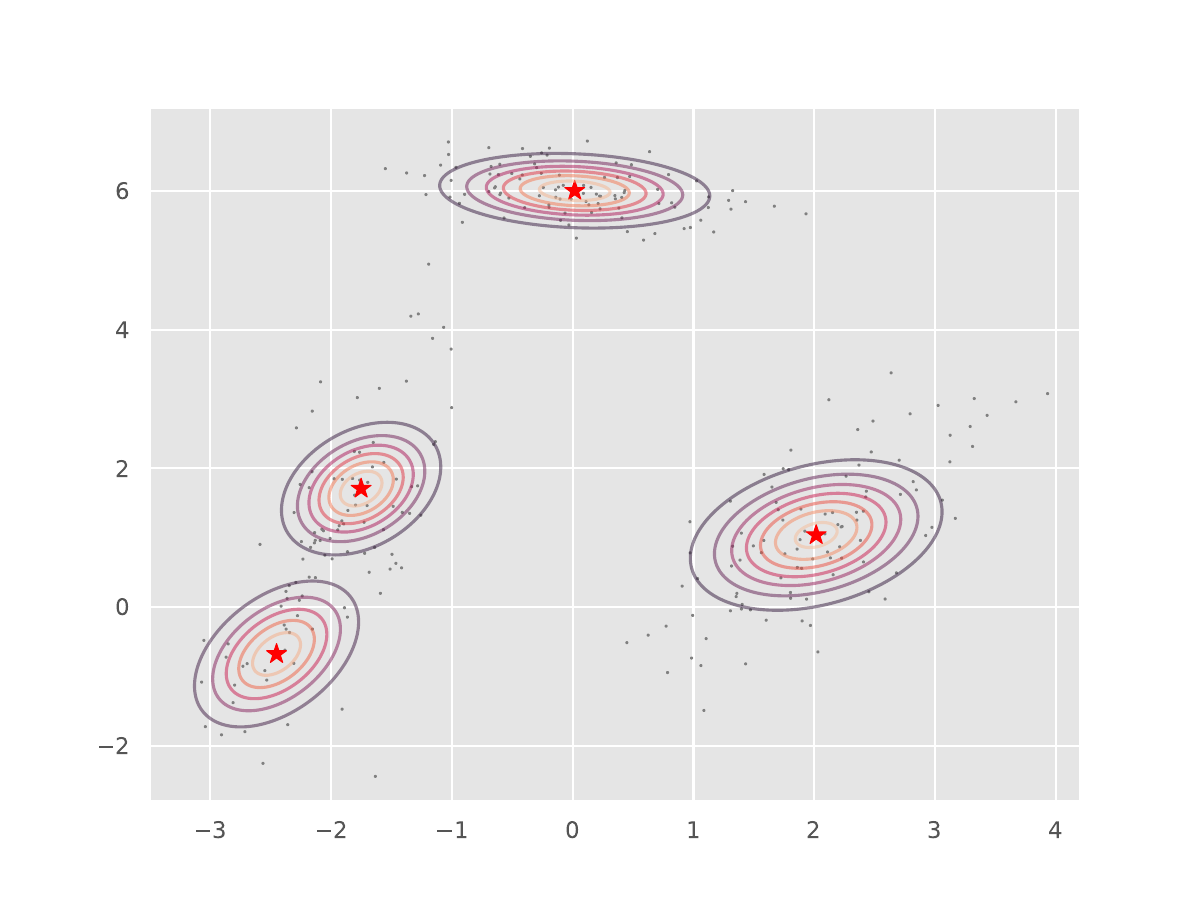}} 
      \subcaptionbox*{\scriptsize (d) Merge $\kappa = 6$ \par}{\includegraphics[width = 0.32\textwidth]{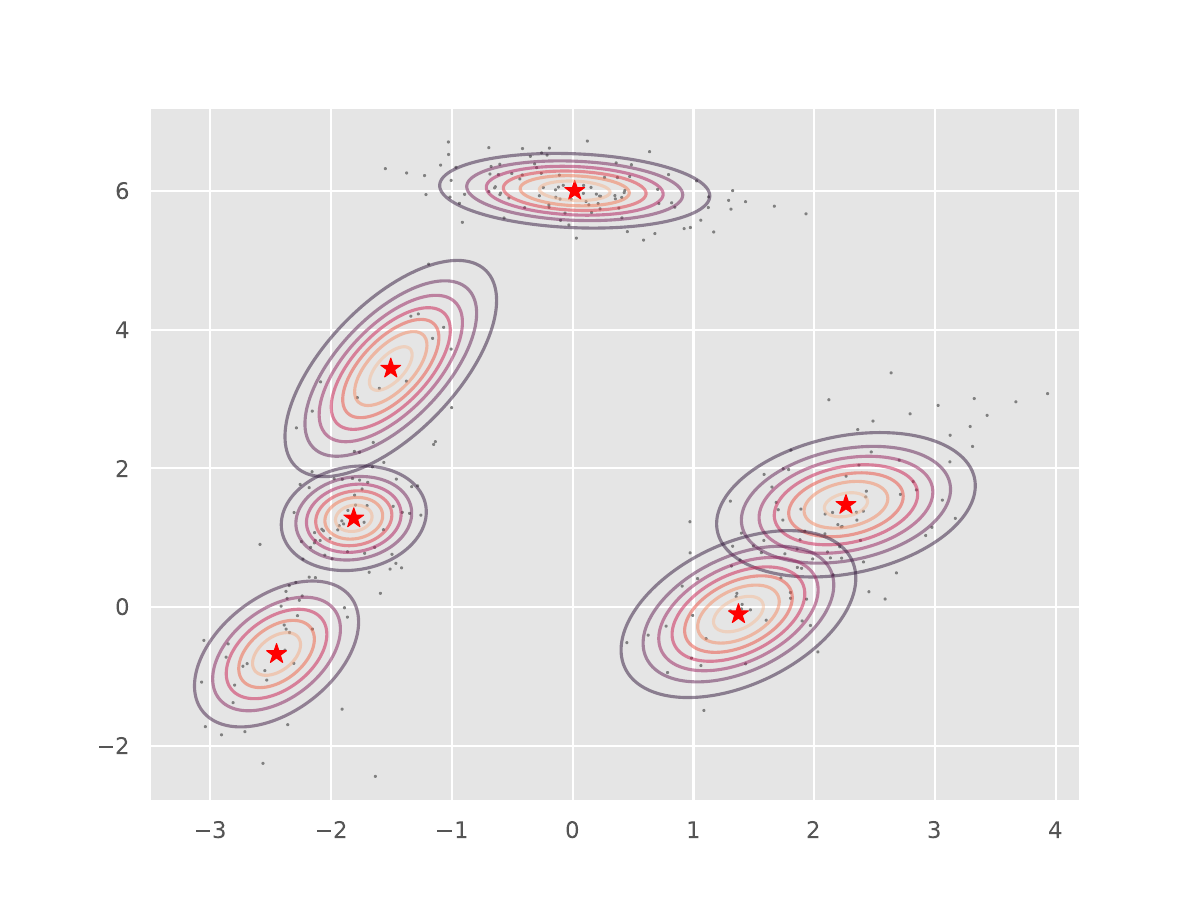}}  
      \caption{\centering Mixing measures on the dendrogram with $k=10$ and $k_0=3$ when naively merging covariances} \label{fig:merge_demo_diff_cov}
\end{figure}

\section{Dendrogram of mixing measures with single linkage}\label{sec:single-linkage}
In this section, we discuss the asymptotic behaviour of the dendrogram with a single linkage. To streamline the argument, we choose to describe it in the strong identifiability setting (as in Section~\ref{sec:dendrogram-strong}). Analogous to the single linkage of agglomeration hierarchical clustering, one can suggest the single linkage method of mixing measures as follows. Start with $G = G^{(k)} = \sum_{i=1}^{k} p_i \delta_{\theta_i} \in \Ecal_{k}$, we create a partition $\mathcal{I}^{(k)} = \{\{1\}, \dots, \{k\} \}$ and assign each atom with its corresponding index in this partition. We then find two atoms $p_i \delta_{\theta_i}$ and $p_j \delta_{\theta_j}$ that minimize the dissimilarity $\divclus$ and join them together to get the new atom $p_{ij} \delta_{\theta_{ij}}$ with 
$$p_{ij} := p_i + p_j, \quad \theta_{ij} := \dfrac{p_i}{p_{ij}} \theta_i + \dfrac{p_j}{p_{ij}}\theta_j.$$
We also merge the set $\{i\}$ and $\{j\}$ together to get a new partition $\mathcal{I}^{(k-1)}$ containing $(k-1)$ sets. In the $\kappa$-th step, we have partition $\mathcal{I}^{(\kappa)}$ of $[k]$ containing $k - \kappa$ sets $S_1, \dots, S_{k-\kappa}$ and corresponding to each set $S$ is the atom $p_{S} \delta_{\theta_{S}}$ being
\begin{equation}\label{eq:partition-to-atom}
p_{S} := \sum_{i\in S} p_{i}, \quad \theta_{S} := \dfrac{1}{p_{S}} \sum_{i\in S} p_i \theta_i.
\end{equation}

Choosing two atoms $p_i \delta_{\theta_i}$ and $p_j \delta_{\theta_j}$ belonging to two different sets $S_{\kappa_1}$ and $S_{\kappa_2}$ such that they minimize $\divclus$. Joining $S_{\kappa_1}$ and $S_{\kappa_2}$ while leaving other sets unchanged to get the new partition $\mathcal{I}^{(\kappa-1)}$. This new partition also corresponds to a mixing measure having $(\kappa-1)$ atoms specified by equation~\eqref{eq:partition-to-atom}. 

Recall that when we overfit an $n$ i.i.d. samples of a mixture of $k_0$ components by a mixture of $k$ components, where $k>k_0$, to get an estimator $G = \sum_{j=1}^{k} p_j \delta_{\theta_j}$, we have
$$\sum_{i=1}^{k_0} \sum_{j\in V_i} p_j \norm{\theta_j - \theta_i^0}^2 \lesssim\left(\dfrac{\log n}{n}\right)^{1/2}.$$
There may be the case where there exists an excess mass $p_*$ vanishing with the fast rate $\dfrac{1}{\sqrt{n}}$ (up to a logarithmic term), whereas ther merged atom $\theta_{*}$ can vary anywhere in the parameter space. Such a behaviour was found, e.g., in a Bayesian procedure~\cite{28Judith}. In this case, the dissimilarity of $p_* \delta_{\theta_*}$ to all other atoms will be of order $\dfrac{1}{\sqrt{n}}$ (up to a logarithmic term. Hence, it quickly merges into all other atoms and makes the height of the whole dendrogram very small. Merging using a single linkage criterion in this case might be not very useful and yield misleading inferences.
\begin{algorithm}[t]
\caption{Dendrogram of mixing measures with single linkage}\label{alg:single-linkage}
\begin{algorithmic}[1]
\Require A mixing measure $G^{(k)} = \sum_{i=1}^{k} p_i \delta_{\theta_i}$. 
\State Initiate $\Tcal(G) = (V, E, d)$, where the $k$-th level of $V$ contains all atoms of $G^{(k)}$, $E = \varnothing$, and $d = (d^{(k)}, d^{(k-1)}, \dots, d^{(2)})$ is an array of length $k-1$. $\mathcal{I}^{(k)} = \{\{1\}, \dots, \{k\} \}$
\For{$\kappa$ from $k$ to $2$} 
    \State Starting with the partition $\mathcal{I}^{(\kappa)}$ of $[k]$, find two atoms $p_i \delta_{\theta_i}$ and $p_j \delta_{\theta_j}$ belonging to two different sets $S_{\kappa_1}$ and $S_{\kappa_2}$ of $\mathcal{I}^{(\kappa)}$ such that they minimize $\divclus$.
    \State Merge $S_{\kappa_1}$ and $S_{\kappa_2}$, and keep other sets to build a new partition $\mathcal{I}^{(k-1)}$. The merge set $S = S_{\kappa_1} \cup S_{\kappa_2}$ corresponds to the atom
    $$p_{S} = \sum_{i\in S} p_{i}, \quad \theta_{S} = \dfrac{1}{p_{S}} \sum_{i\in S} p_i \theta_i.$$
    Hence, this new partition corresponds to a mixing measure $G^{(\kappa-1)}$ having $\kappa-1$ atoms.
    \State Add all atoms of $G^{(\kappa-1)}$ as vertices of the $(\kappa-1)$-th level of $V$;
    \State Add two edges to $E$ specifying which two atoms of $G^{(\kappa)}$ merge into atom of $G^{(\kappa-1)}$;
    \State Record $d^{(\kappa)} = \divclus(p \delta_{\theta}, \pi \delta_{\nu})$, where $p \delta_{\theta}$ and $\pi \delta_{\nu}$ are two merged atoms.
\EndFor
\State \textbf{return} $\Tcal(G) = (V, E, d)$.
\end{algorithmic}
\end{algorithm}

\newpage

\end{document}